\newcommand{\set}{\cat{Set}}
\newcommand{\eq}[2]{\mathsf{Eq}\left(#1, #2\right)}
\newtheorem{prop}{Proposition}[section]
\newtheorem{thm}[prop]{Theorem}
\newtheorem{lem}[prop]{Lemma}
\newtheorem{cor}[prop]{Corollary}
\theoremstyle{definition}
\newtheorem{defn}[prop]{Definition}
\theoremstyle{remark}
\newtheorem{example}[prop]{Example}
\newtheorem{remark}[prop]{Remark}
\crefname{prop}{proposition}{propositions}
\crefname{thm}{theorem}{theorems}
\crefname{lem}{lemma}{lemmas}
\crefname{cor}{corollary}{corollaries}
\crefname{defn}{definition}{definitions}
\crefname{remark}{remark}{remarks}
\crefname{equation}{\!}{\!}
\newcommand{\A}{\mathcal{A}}
\newcommand{\B}{\mathcal{B}}
\newcommand{\C}{\mathcal{C}}
\newcommand{\D}{\mathcal{D}}
\newcommand{\E}{\mathcal{E}}
\newcommand{\I}{\mathcal{I}}
\renewcommand{\S}{\mathcal{S}}
\newcommand{\T}{\mathcal{T}}
\newcommand{\Ty}{\mathrm{Ty}}
\newcommand{\bTy}{\mathbf{Ty}}
\newcommand{\Tm}{\mathrm{Tm}}
\newcommand{\ext}{\mathsf{ext}}
\newcommand{\U}{\mathcal{U}}
\newcommand{\El}{\mathsf{El}}
\newcommand{\el}{\mathsf{el}}
\newcommand{\V}{\mathcal{V}}
\newcommand{\cat}[1]{\mathsf{#1}}
\newcommand{\id}{\operatorname{id}}
\newcommand{\Id}{\mathsf{Id}}
\newcommand{\op}[1]{#1^{\mathrm{op}}}
\newcommand{\refl}{\mathsf{refl}}
\newcommand{\term}{1}
\renewcommand{\prod}{\origprod\limits}
\renewcommand{\set}{\mathsf{Set}}
\newcommand{\sset}{\mathsf{sSet}}
\newcommand{\seq}{\underset{s}{=}}
\newcommand{\nat}{\mathbb{N}}
\newcommand{\strict}[1]{#1^\mathrm{s}}
\newcommand{\fibrant}[1]{#1^\mathrm{f}}
\newcommand{\hott}{\textsc{HoTT}}
\newcommand{\hotts}{\strict\hott}
\newcommand{\fin}[1]{\mathsf{Fin}_{#1}}
\newcommand{\TYPE}{\mathsf{TYPE}}
\newcommand{\facemap}{\sigma}
\newcommand{\lface}{\mathsf{line}}
\newcommand{\Eq}{\mathsf{Eq}}
\renewcommand{\eq}{\mathsf{eq}}
\newcommand{\interp}[1]{\llbracket #1 \rrbracket}
\newcommand{\presheaf}[1]{\widehat{#1}}
\newcommand{\deltplus}{\Delta_+}
\newcommand{\deltop}{\op{\deltplus}}
\newcommand{\RF}{RF}
\newcommand{\isequiv}{\mathsf{isEquiv}}
\newcommand{\Equiv}{\mathsf{Equiv}}
\newcommand{\ua}{\mathsf{ua}}
\newcommand{\coerce}{\mathsf{coerce}}
\renewcommand{\b}[1]{\mathbf{#1}}
\newenvironment{morphism}%
  {\begin{equation}\begin{cases}}
  {\end{cases}\end{equation}}
\newenvironment{morphism*}%
  {\begin{equation*}\begin{cases}}
  {\end{cases}\end{equation*}}
\newcommand{\prd}[1]{\Pi_{#1}\mkern1mu}
\newcommand{\sm}[1]{\Sigma \left(#1\right).\,}
\newcommand{\smsimple}[1]{\Sigma_{#1}}
\newcommand{\trans}[2]{\ensuremath{{#1}_{*}\mathopen{}\left({#2}\right)\mathclose{}}\xspace}
\newcommand{\transf}[1]{\ensuremath{{#1}_{*}}\xspace} 
\newcommand{\unit}{1}
\newcommand{\sst}{\mathsf{SST}}
\newcommand{\ssk}{\mathsf{SSK}}
\newcommand{\ordinal}[1]{[\mathsf{#1}]}
\begin{document}

\maketitle

\frontmatter

\begin{Abstract}
This thesis introduces the idea of \emph{two-level type theory}, an extension of Martin L\"{o}f type theory~\cite{martin-lof:type-theory} that adds a notion of \emph{strict equality} as an internal primitive.

A type theory with a strict equality alongside the more conventional form of equality, the latter being of fundamental importance for the recent innovation of \emph{homotopy type theory} ($\hott$), was first proposed by Voevodsky~\cite{hts}, and is usually referred to as HTS.

Here, we generalise and expand this idea, by developing a semantic framework that gives a systematic account of type formers for two-level systems, and proving a conservativity result relating back to a conventional type theory like $\hott$.

Finally, we show how a two-level theory can be used to provide partial solutions to open problems in $\hott$. In particular, we use it to construct semi-simplicial types, and lay out the foundations of an internal theory of $(\infty,1)$-categories.
\end{Abstract}

\begin{Acknowledgements}
I would like to thank the members of the Functional Programming Laboratory of the University of Nottingham, and in particular my supervisor Venanzio Capretta, for their advice and support, and for providing an excellent and stimulating research environment.

Thorsten Altenkirch, Nicolai Kraus and Christian Sattler deserve particular thanks for countless discussions, type theory meetings, and reading groups, all of which made the time of my PhD extremely enjoyable and fruitful.

And most of all, I would like to express my gratitude to my wife, Elisa, without whose support and encouragement, I would have certainly not made it to this point. Thank you!
\end{Acknowledgements}

\tableofcontents

\mainmatter

\chapter{Introduction}\label{chap:introduction}

\emph{Type theory} is a foundational framework for mathematics which can also be regarded as a programming language.  The central concept of type theory is of course that of a \emph{type}: an entity that plays the double role of a logical statement (a ``proposition'') and of a collection (a ``set'').

The sort of type theory developed in this thesis is more specifically referred to as \emph{Martin-Löf dependent type theory}~\cite{martin-lof:type-theory}, because it is based on the idea that types can depend on values.  This, together with a few basic primitives, makes the corresponding calculus powerful enough to express most fundamental mathematical ideas, including universal and existential quantification, functions, ordered pairs, etc.

In recent years, a new branch of type theory, called \emph{homotopy type theory} (HoTT) (\cref{sec:hott}) has arisen. The main revolution of HoTT consists in embracing the higher dimensional structure of equality, and using it to interpret types not just as sets, but as \emph{topological spaces} up to homotopy equivalence.

This has made it possible to formalise classical results of homotopy theory \emph{synthetically}, that is, without reference to the underlying representation of topological spaces - be it as sets equipped with a collection of open subsets, or any other formulation, possibly more well-behaved in a constructive setting.  Instead, spaces are studied abstractly, their features and properties derived simply from those of the types that represent them.  This has made the formulations and proofs of homotopical facts extremely elegant and streamlined, cast new light on seemingly well-understood results, and suggested new directions of research.

The interest of homotopy type theory lies in the fact that, by using its underlying type theoretic language, one is restricted to constructions that are \emph{automatically} homotopy-invariant: any concept, or definition, or result, by the mere fact of having been expressed ``internally'', is guaranteed to remain valid when spaces are replaced with equivalent ones.

This fundamental homotopy-invariance property crystallises into the principle of \emph{univalence}, probably the most important technical innovation of HoTT, which roughly states that (homotopy) equivalent types are \emph{equal}. Equality here is not meant in a \emph{strict} sense (i.e. equal types will not be interpreted as the same object in a model), but rather as the existence of some kind of ``path'' in the universe connecting the two types.

If follows that equality, despite still adhering to its defining property of being preserved by \emph{all} constructions, is not a \emph{mere proposition} anymore: it may possess non-trivial structure.  Paths themselves form a type, and their notion of equality is also subject to the same considerations.  Here we see how directly some of the most familiar constructions in classical homotopy theory, such as \emph{homotopy groups}, arise internally in the language of HoTT.

Unfortunately, HoTT, and its homotopy invariant nature, impose some fundamental constraints on the kind of constructions that we are allowed to perform internally.  Any classical definition whose (possibly ultimately irrelevant) details depend on more than the homotopy type of the spaces involved, needs to be reworked to fit into the framework of HoTT.

Sometimes, of course, this is not possible, as not all classical results hold in HoTT (a famous example is Whitehead's theorem, that only holds for truncated types in HoTT \cite{hott-book}).  Other times, it looks as though it \emph{should} be possible to provide an internal analogue of a classical notion, but all naive attempts fail.

The most prominent example of such a notion is that of \emph{semi-simplicial types}, which we explain in \cref{sec:infinite-structures}.  Giving a satisfactory account of this and similar ``infinite coherence'' problems is the main motivation behind this thesis.

\section{Overview}\label{sec:background}

Type theory, especially if directly introduced as a mathematical foundation, is usually presented as a collection of \emph{inference rules}.  Where the usual foundations of mathematics are based on some (often not clearly specified) form of first order logic, on top of which the well-known axioms of Zermelo-Fraenkel set theory~\cite{zf} are laid out, the rules of type theory form a single corpus that describes both the logical and the set-theoretical aspects of mathematics (and much more, as will be clear later when we will describe homotopy type theory).

In this thesis, we follow a slightly unconventional path: we define an algebraic notion of \emph{model of type theory}, as a category equipped with the logical structure necessary to talk about types.  The \emph{syntax} of type theory, then, instead of being implicitly defined by a set of rules, is taken to be the \emph{initial} model in our setting.

The advantage of our approach is that we do not have to deal with all the syntactical complications of name binding, type derivations or congruence rules for definitional equality (see for example \cite{hofmann:syntax-and-semantics}).
In fact, the initial model (provided it exists), is in particular \emph{a} model, hence it comes equipped with all the structure and satisfies all the axioms that we require.  Furthermore, and even more obviously, there is no ``initiality theorem''~\cite{streicher:semantics} to be proved, as the syntax is initial \emph{by definition}.

The disadvantage is that, since the syntax doesn't natively possess a notion of name binding, writing out terms explicitly in the language of the model can be cumbersome, and it makes for expressions that are extremely hard to read.  We will subvert this issue by devising a number of notational conventions (\cref{sec:notation,sec:notational-conventions}) that will make it possible to work in models of type theory \emph{as if} they had name binding, making constructions in a generic model essentially indistinguishable from their completely syntactical counterparts.

The thesis consists of an introductory chapter (\cref{chap:introduction}), followed by three main chapters.
In \cref{chap:models} we lay out our algebraic approach to models of type theory.
In \cref{chap:two-levels}, we extend our framework to cover \emph{two-level} models and prove a conservativity result.
Finally, in \cref{chap:type-theory}, we fix a particular instance of two-level type theory, and give some examples of what can be achieved by working internally in such a theory.

Our definition of model of type theory is based on \emph{categories with families} (CwF, \cite{cwf}), although our definition differs slightly from the original in non-essential ways (\cref{sec:cwf}).  We introduce and motivate a number of \emph{basic type formers} (\cref{sec:basic-type-formers}) using presheaves, and the fact that presheaf categories have a natural CwF structure (\cref{sec:presheaves}).

Once we have enough basic type formers under our belt, we will give a general definition of ``type former'' (\cref{sec:rule-framework}), and show how the basic ones defined previously can also be regarded as instances of the general definition.

We will then introduce two-level models of type theory, where two different type theories are combined in a single system.  This kind of structure naturally arises when studying certain homotopical models of type theory: types can be divided into \emph{fibrant} and \emph{strict}, resulting in two ``parallel'' type theories, with possibly different sets of type formers.

Perhaps surprisingly, with enough assumptions on the type formers involved, a two-level type theory is \emph{conservative} over its fibrant fragment (\cref{sec:conservativity}), meaning that proofs and constructions using the full two-level theory can always be reworked so that they only use the fibrant fragment, as long as the end result is itself fibrant. The proof mimics that of a similar result on the conservativity of the Logical Framework \cite{hofmann:syntax-and-semantics}.

The idea of the conservativity proof is straightforward, but is unfortunately made complicated by issues of strictness of coercion of fibrant types into strict types (\cref{sec:two-level-presheaves}).  We work around the strictness issues by defining the notion of \emph{regularity} for models (\cref{sec:regularisation}).

Finally, we will move completely inside a two-level model, and work in the internal language of the corresponding type theory (\cref{chap:type-theory}), in the style of \cite{hott-book}.  We choose a two-level type theory inspired by Voevodsky's HTS \cite{hts}, but more minimalistic (\cref{sec:hts}).

In our flavour of two-level type theory, we develop the notion of Reedy fibrant diagram, and show how they can be classified by fibrant types.  In particular, this yields a definition of \emph{semi-simplicial type}, a notion that has so far eluded all attempts at formalisation in conventional HoTT.

Our construction resembles the one in \cite{herbelin:semi-simplicial}, however, in the latter, a specific consequence of the existence of strict equality has to be assumed in order for the construction to go through.  We, instead, build on the general idea of Reedy fibrancy, and make no ad-hoc assumption beyond the general setup of two-level type theory.

From that, we lay out the foundations of an internal development of higher category theory, starting from the definition of \emph{complete semi-Segal type} (\cref{def:complete-semi-segal-type}), and showing why this is a good candidate for a notion of category that is powerful enough to include all the reasonable ``categorical'' structures present in HoTT, while at the same time allowing all the familiar categorical constructions to be performed within the constraints of type theory.

Most of the mathematical content of this thesis is based on a constructive meta-theory.  We do not make use of classical principles like the law of excluded middle or the axiom of choice.  One exception is the overview of the simplicial model of HoTT given in \cref{sec:simplicial-model}, since the construction referenced in \cite{simplicial-model} is explicitly non-constructive.
\footnote{There do exist attempts at building models of HoTT in a constructive setting \cite{sattler-gambino:frobenius} \cite{coquand:cubical}, but they are still relatively incomplete and poorly understood, hence we do not rely on them in this thesis.}

\section{Contributions}\label{sec:contributions}

The main contributions of this thesis are as follows:
\begin{itemize}
\item We develop a systematic and generic theory of \emph{type formers}: a single notion that can be instantiated to cover all known examples of what are usually referred to as type formers.  This is inspired by the ideas of the \emph{Logical Framework} \cite{logical-framework}, but our presentation is completely semantic in nature, and can be used to state and prove metatheoretical results about models of type theory without fixing a particular set of type formers in advance.
\item We define the notion of \emph{two-level type theory}, making precise and generalising the ideas underlying the HTS theory proposed by Voevodsky \cite{hts}.  We prove a conservativity result, which implies, among other things, that two-level type theory can be used as a ``schematic'' language for working with infinite families of objects in a conventional type theory.
\item We show how a particular minimalistic flavour of two-level type theory, similar to HTS, can be used to give partial solutions to some of the most pressing open problems in HoTT. In particular, we give a definition of semi-simplicial type, and use it to lay out the foundation of an internal theory of $(\infty,1)$-categories in type theory.
\end{itemize}

In particular, this thesis contains proofs of the following results:
\begin{itemize}
\item \cref{thm:fibrant-lift}, showing that any type former on a CwF can be lifted to the \emph{fibrant universe} of its presheaf category;
\item \cref{thm:regular-yoneda,thm:two-level-presheaves}, drawing a correspondence between a \emph{regular} model of type theory and the two-level model given by its presheaf category;
\item \cref{thm:conservativity}, providing a way to prove statements in HoTT using a two-level system;
\item \cref{thm:reedy-fibrant-replacement}, showing how to construct a Reedy fibrant replacement for any inverse diagram in a two-level system;
\item \cref{thm:fibrant-replacement-inconsistent}, exhibiting an inconsistency of a general \emph{fibrant replacement} operator in a two-level system with non-0-truncated fibrant types (see \cref{sec:hott}).
\end{itemize}

\subsection{Declaration of authorship}

\Cref{sec:fundaments} and \cref{sec:cwf} contain background material about semantic models of type theory.  Most of the definitions and results of these sections can be found in the literature, but their presentation has been reworked to fit with the constructions introduced later.

Most of the material of \cref{chap:type-theory} is joint work with Thorsten Altenkirch and Nicolai Kraus.  The definition of semi-simplicial types and, more generally, Reedy fibrant diagrams, and most of the preliminary content leading up to that, including parts of \cref{sec:infinite-structures}, have been published in \cite{two-level-type-theory}.

The rest of the thesis is original work of the present author.

\section{Related Work}\label{sec:related-work}

The main ideas of this thesis are inspired by Voevodsky's proposal of a \emph{homotopy type system} (HTS), which can be found in~\cite{hts}.

In~\cite{boulier:hts}, the authors present a version of a two-level type theory with a fibrant replacement operator, which would be inconsistent in the formulation of this thesis (\cref{thm:fibrant-replacement-inconsistent}), to derive a model structure on the universe of strict types.

A two-level type theory is developed in \cite{maietti-sambin:two-levels}.  Their motivation, however, is substantially different, hence the resulting theory has little resemblance with
the two-level type theory developed in this thesis.

A lot of work from several authors has recently gone into trying to develop a systematic and rigorous framework for working with models of type theory. \Cref{chap:models} contains one such (partial) attempt.  Similar work going in the same general direction can be found in~\cite{ahrens:monad-modules,ahrens:categorical-structures,palmgren:folds,bsystems,voevodsky:lawvere}.

\section{Fundaments of type theory}\label{sec:fundaments}

To motivate the definitions of \cref{chap:models} we will begin by exploring the basic concepts of intuitive type theory, and show how their desired properties translate directly into categorical structures.

\subsection{Contexts}

The fundamental notion of type theory is that of \emph{dependent type}.  For the idea of dependent type to even make sense, however, we first need to state what it is exactly that a type can depend on.  This is how we arrive to the notion of \emph{context}.

A context represents a list of assumptions, each assumption being essentially made up of variable name and a type.  Every theorem is always stated and proven relatively to some context.

Whenever, in informal mathematics, we say something like \emph{``let $n$ be a natural number, $R$ a commutative ring, and $M$ a free $R$-module of rank $n$''}, we are effectively defining a context $\Gamma$ containing the three variables $n$, $R$, and $M$, having the stated types.

This simple example already shows one important characteristic of contexts: the type of a variable is allowed to depend on previously introduced variables. That is, of course, essential if we want to model the idea of \emph{dependent} types.

Despite the intuition of contexts being essentially lists of pairs, in the following we will take a more axiomatic approach: we will take a collection of contexts $\C$ as given, and work out the structure that this collection ought to possess in order to model the intuitive idea described above.

\subsection{Morphisms}

It is natural to require that contexts form a category.

In fact, assumptions can intuitively be instantiated in the context given by some other assumptions.  For example, if $\Gamma$ denotes the context defined above, with variables $n$, $R$, $M$, and $\Delta$ is the context in which we have a natural number $m$, and field $k$, we can ``interpret'' $\Gamma$ into $\Delta$ by setting, for example,

\begin{morphism}\label{example:morphism}
n \mapsto m \\
R \mapsto k \\
M \mapsto k^m
\end{morphism}

This would define a morphism from $\Delta$ to $\Gamma$ in the category $\C$.  It will be clear in \cref{chap:models}, once we have a complete definition of CwF, how to make morphism definitions like \cref{example:morphism} precise.

The category $\C$ should have a (distinguished) terminal object $\term$.  We call $\term$ the \emph{unit context}, and think of it as the context where no assumptions have been made.  This is consistent with our interpretation, as there should be a unique way to instantiate the unit context in any other context.

\subsection{Types}

Now we can finally move to the central concept: types.  Given a context $\Gamma$, a type $A$ over $\Gamma$ should be defined as something that allows one to talk about:
\begin{itemize}
\item the \emph{context extension} $\Gamma.A$, which is to be thought of as the result of adding a new variable of type $A$ to the existing context $\Gamma$
\item the \emph{display map} $p_A : \Gamma.A \to \Gamma$, which is the interpretation of the extended context into the original one obtained by simply ``forgetting'' about the extra variable.
\end{itemize}

Note that the above data is exactly what is required to give an object of the slice category $\C / \Gamma$.  Therefore, any type should determine such an object.

This will be made precise in \cref{chap:models} in the context of a CwF. However, to motivate the general definition, we will first leave things at an intuitive level, assume that we have a way to map types over $\Gamma$ (whatever they are) to objects in $\C / \Gamma$, and investigate the structure and properties that this mapping should have.

\subsection{Terms}

Given a type $A$ over the context $\Gamma$, a \emph{term} $a$ of type $A$ is a morphism
$$
a : \Gamma \to \Gamma.A
$$
that is a section of the display map $p_A$, i.e. such that $p_A \circ a = \id$.

The idea of this definition is that a term of type $A$ is defined to be exactly what is required to give an interpretation of the extended context $\Gamma.A$ in the context $\Gamma$.  The property of being a section says that the interpretation does not touch any of the other assumptions.

To express the fact that $a$ is term of type $A$ over the context $\Gamma$, we will write the judgement \[ \Gamma \vdash a : A \] or simply $a : A$, when the context is clear.

For technical reasons, although terms can be regarded as a defined notion, we will take them as primitive in \cref{def:cwf} below.  Of course, the characterisation as sections is still valid, and will be proved as \cref{prop:tm-sections}.

\subsection{Substitutions}

Given a morphism $\sigma : \Delta \to \Gamma$, which we regard as a way to interpret the assumptions in $\Gamma$ in terms of the assumptions in $\Delta$, there should be a way to transport types and terms over $\Gamma$ to, respectively, types and terms over $\Delta$. In fact, if the context $\Gamma$ can be interpreted in $\Delta$, then everything we can state and prove in $\Gamma$ should make sense in $\Delta$ as well.

In particular, given a type $A$ over $\Gamma$, there should exist a type $A[\sigma]$ over $\Delta$, and a morphism $\sigma^+ : \Delta.A[\sigma] \to \Gamma.A$, which we refer to as $\sigma$ \emph{extended with} $A$.

The property of being able to transport terms of type $A$ to terms of type $A[\sigma]$ can be expressed concisely by requiring that the following square
\begin{equation} \label{eq:ext_pullback}
\xymatrix{
\Delta.f^* A \ar[r]^-{\sigma^+} \ar[d]_{p_{A[\sigma]}} &
\Gamma.A \ar[d]^{p_A} \\
\Delta \ar[r]^-\sigma &
\Gamma }
\end{equation}
be a pullback.

In fact, the commutativity property states that the extended morphism behaves like $\sigma$ on the assumptions in $\Delta$, while the universal property of the pullback is equivalent to saying that terms $a$ of type $A$ can be uniquely transported to terms of type $A[\sigma]$ in a way that is compatible with the extended morphism $\sigma^+$.

If $\C$ has (distinguished) pullbacks, every $\sigma : \Delta \to \Gamma$ determines a functor $-[\sigma] : \C / \Gamma \to \C / \Delta$, so the condition above can be expressed in any such category.  We refer to $-[\sigma]$ as the \emph{substitution} (or \emph{pullback}, or \emph{reindexing}) functor.

\subsection{Dependent products}\label{sec:intro-pi}

In order to define a notion of ``function'' internal to our system, we need to be able, given types $A$ and $B$ over some context $\Gamma$, to define a type $A \to B$, whose terms can be thought of as functions from $A$ to $B$.

More generally, given a type $A$ over $\Gamma$, and a type $B$ over $\Gamma.A$, we want to define a type of \emph{dependent functions} from $A$ to $B$, the so called \emph{dependent product} of $A$ and $B$, which we denote by $\Pi_A B$.

Terms of $\Pi_A B$ can be thought of as functions whose result \emph{type} depends on the argument.  Alternatively, one can think of $\Pi_A B$ as an internalised form of the categorical product of a family of types.

We define dependent products rigorously in \cref{def:pi-structure}, but for now, we can think of $\Pi_A B$ as defined by the fact its terms are in natural bijective correspondence with terms of type $B$ in the context $\Gamma.A$.  This expresses the idea that a function is completely characterised by its value on a ``generic'' element of its domain.

\subsection{Dependent sums}\label{sec:intro-sigma}

The idea of \emph{dependent sums} generalises the notion of binary product.

Given a type $A$ over $\Gamma$, and a type $B$ over $\Gamma.A$, the dependent sum of $A$ and $B$, denote $\Sigma_A B$, intuitively represents the type of all pairs of terms $a$ and $b$, where $a : A$ and $b : B[a]$.  Dually to dependent products, dependent sums can be thought of as an internal version of the coproduct of a family of types.

Again, we will later give a precise definition of $\Sigma$ (\cref{def:sigma-structure}), but for now, we can think of $\Sigma_A B$ as a type characterised by the fact that its terms are in bijective correspondence with pairs of terms as above.

\subsection{Equality}\label{sec:intro-equality}

The final essential idea that we will require in order to replicate basic logic and set theoretical constructions in our system is that of \emph{equality}.

The ``structural'' nature of the kind of system that we are set to create implies that we should only be allowed to consider equality between terms of the \emph{same} type.

Given a type $A$ in the context $\Gamma$, and terms $a, b : A$, we can then form an \emph{equality type} $a = b$.  This is the first point in our development where the type-theoretic incarnation of a concept differs substantially with its conventional set-theoretic counterpart.

In the usual classical foundations of mathematics (e.g. ZFC over some form of first-order logic), equality of sets is not itself a set, but a meta-theoretic entity. In other words, equality of mathematical objects is not itself a mathematical object.

One can of course remedy this somewhat by reifying equality into a set as follows: define the \emph{equality set} $[a=b]$ of $a$ and $b$ as the equaliser
\footnote{When working in a non-constructive meta-theory like ZF, the above definition can be simplified as follows: $[a,b]$ is defined to be 1 if $a = b$, and the empty set otherwise.}
$$
\xymatrix{
[a=b] \ar[r] & 1 \ar@<-.5ex>[r]_b \ar@<.5ex>[r]^a & X
}
$$
regarded as a subobject of a canonically specified terminal object 1 in $\set$ (e.g. the ordinal 1).

This is indeed one way to interpret type theoretic equality in terms of sets, but the advantage (or the curse, depending on how one looks at it) of the type-theoretic account is that it is much more general, and the notion of equality set described above is but one of many possible interpretations.

We will define one version of equality precisely in \cref{def:equality-structure}, and another one later in \cref{sec:rf-examples}. Again, for this introductory discussion, we will limit ourselves to an informal characterisation: the equality type $a = b$ is defined by the following two features:
\begin{itemize}
\item a canonical term $\refl : a = a$;
\item a ``substitution'' principle: a term $p : a = b$ can be used to reduce any construction involving $b$ (and possibly $p$ itself) into one in terms of $a$
(and $\refl$).
\end{itemize}

The first feature simply expresses the fact that every object should be equal to itself (and gives us a concrete \emph{witness} of the fact).  The second formalises the idea that equal objects are indistinguishable from within the theory.

\subsection{Propositions as types}\label{sec:propositions-as-types}

Equality as a type is but one example of a general pattern in type theory: propositions, i.e. statements \emph{about} mathematical objects, are themselves mathematical objects and can be studied as such.

The idea is that if a type $A$ is thought of as a proposition, then its terms are interpreted as \emph{witnesses} of the truth of $A$, or, in other words, as pieces of evidence for $A$.

Interestingly, all the structures introduced above have a sensible interpretation in terms of operations over propositions.  For example, if $A$ and $B$ are propositions, the type $\Pi_A B$ can be interpreted as the proposition stating that $A$ implies $B$: a witness of $\Pi_A B$, in fact, is a function that turns evidence for $A$ into evidence for $B$.

Similarly $\Sigma_A B$ corresponds to the logical conjuction of $A$ and $B$: a witness of $\Sigma_A B$ is a pair of witnesses for $A$ and $B$ respectively.

Furthermore, one can use dependent products and sums to reproduce the ideas of universal and existential quantification of logical theories.  For example, if $A$ is any type, and $B$ is thought of as a family of propositions indexed over $A$ (or, equivalently, a ``predicate'' over $A$), the dependent product $\Pi_A B$ corresponds to the assertion that $B$ holds for all the elements of $A$ (i.e. $\forall x : A, B$).  Dually, $\Sigma_A B$ serves as the assertion that there exists an element of $A$ for which $B$ holds (i.e. $\exists x : A, B$).

\subsection{Other structures}\label{sec:other-structures}

Unfortunately, the structures of dependent products, dependent sums and equality defined above, although very powerful and versatile, are often not enough to express certain mathematical ideas.  Examples of constructions that are not covered by those basic operations are: induction, disjoint unions, logical negation, quotients, and others.

For this reason, type theories usually include extra structures designed to deal with those requirements.  In the following, after giving precise definitions of the basic structures defined above, we will give a generic definition of \emph{type former} (\cref{sec:rule-framework}), encompassing most of the type-theoretic structures that are encountered in the literature.

This will allow us to work in a type theory (or model thereof) where the set of type formers is arbitrary, and does not need to be specified in advance.  That in turn will make some of our results very general, only subject to certain conditions on the type formers involved, which can then be verified separately and independently.

We will not discuss those extra type formers in detail.  We will define some of them in \cref{sec:rf-examples}, but only give a brief explanation.  We refer the interested reader to \cite[Chapter~1]{hott-book}.

\section{Homotopy Type Theory}\label{sec:hott}

The equality type $x = y$ introduced in \cref{sec:intro-equality} expresses the idea that the two elements $x$ and $y$ are ``identified'' in some sense, and they can be substituted for each other.

However, by itself, it has somewhat awkward features, which make it hard to use it effectively when formalising mathematics in type theory.

First of all, it is not well-behaved when it comes to describing equality of functions and equality of types.  For example, we cannot derive the principle of \emph{function extensionality}, stating that two functions are equal whenever they are equal at every point.  Therefore, this principle is usually taken as an axiom in most incarnations of type theory.

Secondly, the following question may come quite naturally after reading the informal definition of \cref{sec:intro-equality}: is every witness of equality equal to $\refl$?

A superficial reading of the substitution principle of equality (corresponding to the so-called $J$-eliminator, which we will introduce rigorously in \cref{sec:rf-examples}) would suggest this to be the case, since it says that proving a property of equality can be reduced to proving the corresponding property for $\refl$.

A careful examination, however, reveals a fault in this straightforward argument: given arbitrary terms $a, b : A$, and $p : a = b$, we cannot internally express the property of $p$ of being equal to $\refl$, because $p$ and $\refl$ have different types.  If we restrict ourselves to terms $p : a = a$, then our premise is not general enough, and we are not allowed to use the substitution principle.

In fact, it turns out that the question cannot be answered internally: it is consistent to assume that there exist proofs of equality which are not themselves equal to $\refl$ \cite{groupoid-model}.  This implies that equality cannot be simply thought of as a ``mere'' proposition, since it carries potentially non-trivial internal structure.

From here, one can either dismiss this limitation as a failure of the definition of equality, and address it by adding the missing component as an extra assumption (see \cref{eq:uip}), or embrace it, and fully explore its consequences.

Both approaches are viable, and have been pursued with great success.  The first makes it possible to encode most, if not all, of existing informal mathematics (at least, if we also assume certain classical principles such as the axiom of choice or the excluded middle).  It is very close in spirit to working within the Mitchell-B\'{e}nabou language of topoi, and it exists on a similar level of generality.  We will call such a theory \emph{strict}.

The second approach is embodied by HoTT \cite{hott-book}.  When no assumptions on the triviality of equality types is made, we can observe that types arrange themselves into a cumulative hierarchy of \emph{truncation levels}, starting with \emph{$-1$-types} (also called \emph{propositions}), whose equality types are completely trivial, followed by \emph{$0$-types}, or \emph{sets}, having propositions as equality types, and in general $n$-types, defined as those types whose equality types are $(n-1)$-types.

One appeal of $\hott$ is that equalities can be seen as paths in a space, and it is even possibly to develop substantial amounts of homotopy theory synthetically (see for example~\cite{brunerie:thesis} for an extensive account).
An important fact to keep in mind is that, when doing homotopy theory in type theory, every statement that one can make holds up to homotopy, and every construction respects (homotopy) equivalence.

This means that whatever we do will be ``invariant'', in the sense that it can only take the homotopy type of spaces, and homotopy equivalence classes of maps, into account, and not the concrete representations of spaces or maps.
This is often considered a selling point of $\hott$: one might perform constructions using representatives of homotopy classes in traditional homotopy theory, which make it necessary to show that those constructions are well-defined, i.e.\ do not depend on the choice of the representative.

In $\hott$, everything is automatically well-defined up to homotopy as we are simply not able to talk about non-homotopy-invariant notions like strict equality internally.

\section{The problem of ``infinite structures''} \label{sec:infinite-structures}

It is not hard to imagine that the blessing of having only constructions up to homotopy can turn out to be a curse: the inability to reflect a notion of ``strict equality'' into the theory can sometimes make certain ideas much harder to express.

For example, we cannot form a type expressing that a given diagram commutes strictly; all we can do is stating that it commutes up to homotopy.
Unfortunately, depending on the shape of the diagram, this will only be sufficient in the simplest cases.
More often than not, it will be necessary to say that the different ``pieces'' (the equalities expressing commutativity) fit together.

For instance, the fact that a certain sub-diagram commutes can be part of the proof that the diagram commutes, but it may at the same time be derivable as the composition of the fact that other sub-diagrams commute.
In this case, it is natural to require these different ways of getting a certain proof to be equal.
It does not stop here; these new proofs can themselves be required to be coherent, and so on.

This phenomenon is of course not something that can only be observed in type theory.  The first step becomes already apparent in the theory of monoidal categories in the form of ``Mac Lane's Pentagon''.
On higher dimensions, it is exactly the same issue that is discussed as \emph{homotopy commutativity versus homotopy coherence} by Lurie~\cite{lurie-topos}.

In general, homotopy coherence corresponds to infinite towers of coherence data, and it is a major open problem (and commonly believed to be unsolvable) to express such towers internally in $\hott$.
One way to avoid the problem altogether is to restrict constructions to types of low truncation levels.
As an example, the category theory developed in~\cite{aks} only considers $1$-truncated types to develop a theory of ordinary categories.
This is in many situations not satisfactory: we know that types are $\infty$-groupoids~\cite{lumsdaine:phd,bg:type-wkom}, and similarly, the universe should be an $(\infty,1)$-category.
Unfortunately, there does not seem be a way to express this internally in $\hott$.

Of course, it is always possible to take one of the existing models of higher categories and replicate it internally in HoTT.
However, since all of the existing models are ultimately built out of sets, this would force the HoTT version to be based on \emph{sets} as well (i.e. $0$-truncated types), which means that many specific structures that are expected to be $(\infty,1)$-categories would not qualify.  One notable example is provided by universes, which cannot in general be assumed to be truncated (as shown in \cite{kraus-sattler:universes}), hence cannot possibly be given a categorical structure for any notion of higher category which is based on sets.
On the other hand, we define an $(\infty,1)$-category structure for a universe in \cref{sec:segal-examples}.

The crucial shortcoming of $\hott$ is that we are unable to encode certain constructions which would appear to be harmless, as they only require finite amounts of coherence data at every step.
An example that has received considerable attention in the $\hott$ community is the construction of Reedy fibrant $n$-semi-simplicial types (simply referred to as \emph{semi-simplicial types}).

Let us start with $\deltplus$, the category of finite non-zero ordinals and strictly monotone functions.
Let us write $\ordinal n$ for the ordinal with $n+1$ elements.
A type-valued diagram over $\deltop$ is a \emph{strict} functor from $\deltop$ to the category of types.
It would correspond to a type $X_{\ordinal n}$ (for simplicity written $X_n$) for every $n$, and face maps $d_i : X_{n+1} \to X_{n}$ for $0 \leq i \leq n$, as it is well-known that any map in $\deltop$ can be written as a composition of face maps.
The problem is that we need the semi-simplicial identities (essentially a
representation of the functor laws) to be strict, a fact which we cannot express in type theory.

The considered approach to avoid this problem is to only attempt internalising \emph{Reedy fibrant} diagrams over $\deltop$, essentially ensuring that the face maps are simple projections.

Using the correspondence between fibrations and type families, a (Reedy fibrant) semi-simplicial type then corresponds to a type $X_0$ (the ``points'') on level $0$.
On level $1$, we need a family
$$
X_1 : X_0 \to X_0 \to \U,
$$
where $\U$ is the universe of types.
We think of $X_1$ as lines between types.
Next, we need
$$
X_2 : \prd{a,b,c: X_0} X_1(a,b) \to X_1(b,c) \to X_1(a,c) \to \U,
$$
the type of fillers for triangles.

Writing down the type of $X_4$ is already rather tedious, but nevertheless straightforward: $X_4$ is a family which gives a type for any collection of four points, six lines and four triangles that form a boundary of a tetrahedron.

A long-standing open problem of homotopy type theory is then to write down the type of $X_n$, or something equivalent to it, for a general natural number $n$.  This has revealed to be much harder than one might expect, and it is actually conjectured to be impossible.

What is definitely possible is to generate an expression $X_n$ for every externally fixed numeral $n$, such that the expressions $X_0, X_1, X_2, \ldots$ all ``fit together''.
If one attempts to perform the same construction for a \emph{variable} $n : \nat$, the types do not match up anymore.
The reason is that some strict equalities that hold in the case of a numeral $n$ fail to hold in the case of a variable.
One could try to prove that the required equalities hold up to homotopy, but one quickly realises that one would also need to show that these equalities are coherent, and that the coherence proofs are coherent themselves, and so on; even only expressing the coherence data that is required to make the construction go through seems to be as hard, if not harder, than the original problem.

\section{Internalising strict equality}
\label{sec:internalising-strict-equality}

In some sense, the equalities needed when attempting to construct semi-simplicial types, as explained in \cref{sec:infinite-structures}, \emph{should} hold and be fully coherent, because they are trivially satisfied for each externally fixed natural number.
If only we had a way to reason about strict equalities \emph{within} the system, there would be no problem at all; however, this would require strict equalities to be reified into a type.

We could take the equality of a strict theory to be the internalised version of strict equality.  In that case, it would be possible to construct Reedy fibrant semi-simplicial types internally.  However, we can also simply define categories and functors in the usual sense, and all coherences will be satisfied automatically thanks to the strictness assumptions in the theory.

Using this approach, we would bypass all the coherence problems, but have to give up all the advantages of $\hott$, like univalence and higher inductive types.  The idea of a two-level system is to combine strict type theory and $\hott$, instead of viewing them as two alternative extensions of the basic underlying type theory.

A two-level type theory consists of two ``parallel'' type theory, with possibly different structures, sharing a small common \emph{core} consisting of dependent products and sums.  We call the two fragments \emph{strict} and \emph{fibrant} respectively.
The strict fragment is, unsurprisingly, a strict form of type theory, while the fibrant fragment is an incarnation of $\hott$.  Every fibrant type can be canonically regarded as a strict type, but not vice versa.

The reason why two-level type theory has to be set up in this way, rather than just having two equality types, is \cref{lem:all-fibrant-collapse}, showing that if there is no distinction between fibrant and strict types, then the two equalities necessarily collapse into one.

The idea a type theory with two equality types is not new.  Such a system was first suggested by Voevodsky~\cite{hts}, who referred to it as HTS, but the theory developed in this thesis (specifically in \cref{chap:two-levels}) presents substantial differences with HTS (see \cref{sec:hts}).  In particular, it requires no form of equality reflection in its strict fragment. Thus, we can avoid all the problems that are usually connected to equality reflection, such as undecidability of type checking.

In contrast, the two-level system presented in this thesis is well-behaved, very close to the standard formulation of $\hott$, and has straightforward semantics.
One could expect that a downside of our system might be reduced expressibility compared to a theory that features equality reflection.
However, we can achieve in our system what HTS was suggested for: a definition of semi-simplicial types, and other constructions based on them.

Furthermore, by being careful about the relationship between strict and fibrant type formers, we can prove a conservativity result (\cref{thm:conservativity}).
This means that, in some sense, the fibrant fragment corresponds exactly to $\hott$ as presented in~\cite{hott-book}.
In a proof assistant which supports this theory, we could in principle implement results that so far can only be stated meta-theoretically.
To give an example, it is shown in~\cite{kraus:general-universal-property} that constant functions from $A$ to $B$ which satisfy $n$ coherence conditions correspond to maps $\|A\| \to B$, provided that $B$ is $n$-truncated.  Here $n$ is a natural number, external to the theory, so the result has to be formalised as a \emph{sequence} of internal statements, which means that it can only be stated and proved meta-theoretically.
In a two-level system, we can formalise it by taking $n$ to be an element of the strict type of natural numbers, then show the required equivalence in the fibrant fragment.  Conservativity would then allow us to conclude the the corresponding statement is valid in $\hott$ for all choices of the parameter $n$, and all the complications of meta-theoretic reasoning would be encapsulated in the proof of \cref{thm:conservativity}.

\chapter{Type theory and type formers}\label{chap:models}

This chapter contains the fundamental definitions and constructions that will be used throughout the rest of the thesis.
We will start from the intuitive ideas presented in \cref{chap:introduction}, and make them precise in terms of \emph{categories with families}, which we choose as the primary basic notion of model of type theory.

Our presentation of basic type formers ($\Pi$, $\Sigma$, equality and unit type) is based on the same ideas as in \cite{natural-models}, which will make it easier to extend the notion of type former to more general operations, as well as to the context of \cref{chap:two-levels}.

\section{Categories with families}\label{sec:cwf}

\begin{defn}[see \cite{cwf}]\label{def:cwf}
A \emph{category with families} (CwF) is given by:
\begin{itemize}
\item a category $\C$, equipped with a distinguished terminal object $1$;
\item a presheaf $\Ty : \C \to \op\set$;
\item a presheaf $\Tm : \left(\int \Ty\right) \to \op\set$;
\item for all $\Gamma : \C$ and $A : \Ty(\Gamma)$, an object $(\Gamma.A, \pi_A)$ representing the functor $\left(\C/\Gamma\right) \to \op\set$ defined by:
\begin{equation}\label{eq:ext-repr-functor}
(\Delta, \sigma) \mapsto \Tm_\Delta(A[\sigma]).
\end{equation}
\end{itemize}

Here and in the following, if $X : \C \to \op\set$ is a presheaf on a category $\C$, $\sigma : \C(\Delta, \Gamma)$ is a morphism, and $x : X_\Gamma$ is an element of $X$, we write $x[\sigma]$ instead of $X(\sigma)(x)$.

The objects of $\C$ are called \emph{contexts}. Given a context $\Gamma$, the elements of $\Ty(\Gamma)$ are called \emph{types}, and given a type $A$, the elements of $\Tm_\Gamma(A)$ are called \emph{terms}.

The context $\Gamma.A$ is called the \emph{context extension} of $\Gamma$ by the type $A$, and $\pi_A$ is the \emph{display map} of $A$.

The action of $\Ty$ and $\Tm$ on morphisms is called \emph{substitution}. \end{defn}

Note that, given a morphism $\sigma : \C(\Delta, \Gamma)$, a type $A : \Ty(\Gamma)$, and a term $a : \Tm_\Delta(A[\sigma])$, the definition of CwF gives a corresponding morphism $\C(\Delta, \Gamma.A)$ which we will denote by $\langle \sigma, a \rangle$.

\begin{prop}\label{prop:tm-sections}
For all contexts $\Gamma : \C$ and types $A : \Ty(\Gamma)$, there is a natural isomorphism:
\begin{equation}\label{eq:tm-sections}
\Tm_\Gamma(A) \cong \C/\Gamma(\Gamma, \Gamma.A).
\end{equation}
\end{prop}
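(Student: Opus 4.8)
The plan is to derive the isomorphism directly from the representability clause of \cref{def:cwf}, by evaluating it at the terminal object of the slice category $\C/\Gamma$. The entire content of the statement is the observation that a representing object, evaluated at the identity, returns the value of the represented functor there.

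First I would unfold what it means for $(\Gamma.A, \pi_A)$ to represent the functor $F : (\C/\Gamma) \to \op\set$ with $F(\Delta, \sigma) = \Tm_\Delta(A[\sigma])$ from \cref{eq:ext-repr-functor}. This supplies a universal element $q_A : \Tm_{\Gamma.A}(A[\pi_A])$ together with, for every object $(\Delta, \sigma)$ of $\C/\Gamma$, a bijection
\[
(\C/\Gamma)\big((\Delta, \sigma), (\Gamma.A, \pi_A)\big) \cong \Tm_\Delta(A[\sigma]), \qquad h \mapsto q_A[h],
\]
natural in $(\Delta, \sigma)$.

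Next I would specialise to the terminal object $(\Gamma, \id_\Gamma)$ of $\C/\Gamma$. Since $\Ty$ is a presheaf we have $A[\id_\Gamma] = A$, so the right-hand side collapses to $\Tm_\Gamma(A)$; the left-hand side is, by definition of morphisms in a slice, the set of $s : \Gamma \to \Gamma.A$ with $\pi_A \circ s = \id_\Gamma$, which is exactly the hom-set $\C/\Gamma(\Gamma, \Gamma.A)$ appearing in \cref{eq:tm-sections} (here the objects $(\Gamma, \id_\Gamma)$ and $(\Gamma.A, \pi_A)$ are abbreviated to $\Gamma$ and $\Gamma.A$). Concretely, the bijection sends a section $s$ to $q_A[s]$ and a term $a$ back to the morphism $\langle \id_\Gamma, a \rangle$ supplied by the CwF structure.

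Finally, for naturality I would regard both sides as presheaves on the category of elements $\int \Ty$ and check compatibility with substitution: given $\sigma : \Delta \to \Gamma$, the induced maps $\Tm_\Gamma(A) \to \Tm_\Delta(A[\sigma])$ and $\C/\Gamma(\Gamma, \Gamma.A) \to \C/\Delta(\Delta, \Delta.A[\sigma])$ must agree under the bijection. This reduces to functoriality of $\Tm$ together with the pullback property of the extended morphism in \cref{eq:ext_pullback}, and is purely routine. I do not expect any genuine obstacle here; the only points requiring care are the bookkeeping of variance and the identification $A[\id_\Gamma] = A$, both of which are immediate from the presheaf axioms.
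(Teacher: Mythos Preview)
Your proposal is correct and follows essentially the same approach as the paper: the paper's proof is a one-liner that simply instantiates $\Delta :\equiv \Gamma$ and $\sigma :\equiv \id$ in the representing functor \cref{eq:ext-repr-functor}, which is exactly your specialisation to the terminal object of $\C/\Gamma$. Your version is considerably more detailed (spelling out the universal element, the explicit inverse $\langle \id_\Gamma, a\rangle$, and the naturality check), but the underlying argument is identical.
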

\begin{proof}
Equation \cref{eq:tm-sections} follows directly from the definition of context extension, by taking $\Delta :\equiv \Gamma$ and $\sigma :\equiv \id$ in \cref{eq:ext-repr-functor}.
\end{proof}

\Cref{prop:tm-sections} says that terms of type $A$ can be equivalently regarded as sections of the display map $\pi_A : \C(\Gamma.A, \Gamma)$.

\begin{prop}\label{prop:morphism-weakening}
Let $\sigma : \C(\Delta, \Gamma)$ be any morphism, and $A : \Ty(\Gamma)$. There exists a morphism $\sigma^+ : \C(\Delta.A[\sigma], \Gamma.A)$ that makes the square
\begin{equation}\label{eq:ctx-ext-pullback}
\xymatrix{
\Delta.A[\sigma] \ar[r]^-{\sigma^+} \ar[d]_-{\pi_{A[\sigma]}} & \Gamma.A \ar[d]^-{\pi_A} \\
\Delta \ar[r]_-{\sigma} & \Gamma
}
\end{equation}
into a pullback.
\end{prop}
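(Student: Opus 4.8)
The plan is to build $\sigma^+$ directly from the universal property that defines context extension, and then verify the pullback condition by reducing everything to the naturality of the representing isomorphisms. The whole argument is an exercise in manipulating the \emph{generic term} attached to a context extension.

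First I would unpack what \cref{def:cwf} gives us. Since $(\Gamma.A, \pi_A)$ represents the functor $(\Theta, f) \mapsto \Tm_\Theta(A[f])$ on $\C/\Gamma$, there is a natural bijection $\phi$ between morphisms $g : \Theta \to \Gamma.A$ with $\pi_A \circ g = f$ and terms of $\Tm_\Theta(A[f])$. Applying $\phi$ to the identity of $(\Gamma.A, \pi_A)$ produces the generic term $\mathsf{v}_A : \Tm_{\Gamma.A}(A[\pi_A])$, and naturality of $\phi$ yields the computation rule $\phi(g) = \mathsf{v}_A[g]$; in particular $\mathsf{v}_A[\langle f, a\rangle] = a$ for every $a : \Tm_\Theta(A[f])$, and $\pi_A \circ \langle f, a\rangle = f$. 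This, together with the fact that $\Ty$ is a presheaf, so that $A[\sigma \circ k] = (A[\sigma])[k]$ holds strictly, is the only machinery I need.

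To define $\sigma^+$, I would apply $\phi^{-1}$ to the map $\sigma \circ \pi_{A[\sigma]} : \Delta.A[\sigma] \to \Gamma$ paired with the generic term $\mathsf{v}_{A[\sigma]} : \Tm_{\Delta.A[\sigma]}\bigl((A[\sigma])[\pi_{A[\sigma]}]\bigr)$; since $(A[\sigma])[\pi_{A[\sigma]}] = A[\sigma \circ \pi_{A[\sigma]}]$, this datum is admissible and produces $\sigma^+ := \langle \sigma \circ \pi_{A[\sigma]}, \mathsf{v}_{A[\sigma]}\rangle : \Delta.A[\sigma] \to \Gamma.A$. By construction $\pi_A \circ \sigma^+ = \sigma \circ \pi_{A[\sigma]}$, so the square \cref{eq:ctx-ext-pullback} commutes. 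For the pullback property, given $h : \Theta \to \Gamma.A$ and $k : \Theta \to \Delta$ with $\pi_A \circ h = \sigma \circ k$, I would set $a := \mathsf{v}_A[h] : \Tm_\Theta(A[\sigma \circ k]) = \Tm_\Theta((A[\sigma])[k])$ and define the mediating map $u := \langle k, a\rangle : \Theta \to \Delta.A[\sigma]$, which satisfies $\pi_{A[\sigma]} \circ u = k$ automatically. The equation $\sigma^+ \circ u = h$ follows by comparing generic terms: both sides lie over $\sigma \circ k$, and $\mathsf{v}_A[\sigma^+ \circ u] = (\mathsf{v}_A[\sigma^+])[u] = \mathsf{v}_{A[\sigma]}[u] = a = \mathsf{v}_A[h]$, so injectivity of $\phi$ forces $\sigma^+ \circ u = h$. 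Uniqueness is the same computation run backwards: any $u'$ over $k$ with $\sigma^+ \circ u' = h$ has $\mathsf{v}_{A[\sigma]}[u'] = \mathsf{v}_A[\sigma^+ \circ u'] = \mathsf{v}_A[h] = a$, hence $u' = \langle k, a\rangle = u$.

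I expect the main obstacle to be bookkeeping rather than anything conceptual: one must be scrupulous about the presheaf functoriality identifications for types under substitution, namely the strict equalities $A[\sigma \circ k] = (A[\sigma])[k]$, so that the terms being compared live in literally the same set, and one must track the naturality squares for $\phi$ carefully enough that the three substitution manipulations above typecheck. No coherence issues arise because $\Ty$ is a strict presheaf, but maintaining this notational discipline is the delicate part of the argument.
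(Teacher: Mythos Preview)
Your proof is correct and follows essentially the same idea as the paper's: both arguments rest entirely on the representability property of context extension, observing that morphisms into $\Gamma.A$ over $\sigma\circ\tau$ and morphisms into $\Delta.A[\sigma]$ over $\tau$ are both in natural bijection with $\Tm_\Theta(A[\sigma\circ\tau])$. The only difference is presentational: the paper states this as a one-line Yoneda-style argument (``both sides are naturally isomorphic to the same set of terms''), whereas you unpack the bijections explicitly, construct $\sigma^+$ and the mediating map by hand, and verify the equations using the generic term --- which is exactly what the paper's abstract isomorphism amounts to when written out.
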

\begin{proof}
Diagram \ref{eq:ctx-ext-pullback} being a pullback is equivalent to the condition that, for all contexts $\Phi$ and morphisms $\tau : \C(\Phi, \Delta)$, there is a natural isomorphism:
$$
\C/\Gamma(\sigma_* \Phi, \Gamma.A) \cong \C/\Delta(\Phi, \Delta.A[\sigma]),
$$
where we write $\Phi$ to mean the pair $(\Phi, \tau)$ in the slice category $\C/\Delta$, and similarly for $\Gamma.A$ and $\Delta.A[\sigma]$.

But clearly, the isomorphism holds, since both sides are naturally isomorphic to $\Tm_\Phi(A[\sigma\circ\tau])$, by the defining property of context extension.
\end{proof}

\Cref{prop:morphism-weakening} allows us to turn the context extension operation into a functor $\ext : \int\Ty \to \C$.

\begin{defn}\label{def:cwf-morphism}
Let $\C$, $\D$ be CwFs. A \emph{CwF morphism} $\C \to \D$ is given by:
\begin{itemize}
\item a functor $F : \C \to \D$;
\item a natural transformation $F^{\Ty} : \int_{\Gamma} \Ty(\Gamma) \to \Ty(F \Gamma)$;
\item a natural transformation $F^{\Tm}: \int_{\Gamma, A} \Tm_{\Gamma}(A) \to \Tm_{F\Gamma}(F^{\Ty} A)$;
\end{itemize}
such that $F1$ is a terminal object in $\D$, and, for all $\Gamma : \C$ and $A : \Ty(\Gamma)$, the map
$$
\phi^F_A : \D(F(\Gamma.A), F\Gamma.F^\Ty A)
$$
defined below is an isomorphism.
\end{defn}

The map $\phi^F_A$ is obtained as follows. First, by applying the functor $F$ to the display map $p_A$, the context $F(\Gamma.A)$ can be regarded as an element of $\D / F\Gamma$.
Then, the term $F^\Tm(v_A) : \Tm_{F(\Gamma.A)}(F^\Ty A [F(p_A)])$ determines a morphism $\D/F\Gamma(F(\Gamma.A), F\Gamma.F^\Ty A)$ by the defining property of context extension, and $\phi^F_A$ is taken to be the corresponding underlying morphism $\D(F(\Gamma.A), F\Gamma.F^\Ty A)$.

We will usually omit the superscripts $\Ty$ and $\Tm$ when referring to the
action of a morphism on types and terms respectively.

\begin{defn}
A CwF morphism $F : \C \to \D$ is said to be \emph{split} if it preserves the distinguished terminal objects and context extension ``on the nose'' and the map $\phi^F_A$ is the identity for all types $A$.
\end{defn}

\begin{defn}
A CwF morphism $F : \C \to \D$ is said to be a \emph{CwF equivalence} if it is an equivalence of categories, and it induces isomorphisms on types.
\end{defn}

Note that a CwF equivalence automatically induces isomorphisms on terms.

\subsection{Notation}\label{sec:notation}

In the following, let $\C$ be a CwF.

If $\Gamma$ is a context, and $A : \Ty(\Gamma)$, the universal property of the context extension, applied to the identity substitution $\C/\Gamma(\Gamma.A, \Gamma.A)$, yields a canonical term $v_A : \Tm_{\Gamma.A}(A[\pi_A])$.  We call $v_A$ the \emph{variable} of type $A$.

Weakenings, i.e. substitutions along display maps, will often be omitted from the notation, as they can usually be unambiguously reconstructed, and leaving them implicit simplifies the syntax considerably. In particular, the variable of type $A$ can be regarded simply as a term in $\Tm_{\Gamma.A}(A)$.

Sometimes, when building contexts using context extension, we will associate ``names'' to certain types.  These names will be used to refer to their corresponding variables, and weakenings thereof.  For example, the context $\Gamma(a : A)$ denotes the context $\Gamma.A$, with the convention that the name $a$ refers to the variable $v_A : \Tm_{\Gamma(a : A)}(A)$.

The terminal object of $\C$ is referred to as the \emph{unit} context.%
\footnote{In traditional type-theoretic terminology, the term \emph{empty context} is more often found. This is because contexts are usually built explictly by chaining a finite number of context extensions, and 1 is the base case of this process, where no extensions have been performed yet. However, ``empty'' is more suggestive of an initial, rather than terminal, object, so we will keep consistency with the corresponding terminology for types, and use the term \emph{unit context} instead}
We will identify types in the unit context with the corresponding contexts obtained by context extension.  So, for example, if $A : \Ty(1)$, we will write $A$ to denote $1.A$, and if $B : \Ty(A)$, we can form the context extension $A.B$.

Finally, thanks to \cref{prop:tm-sections}, terms in $\Tm_\Gamma(A)$ correspond bijectively with sections of the display map $\pi_A$. We will therefore identify a term with its corresponding section.

With those syntactical conventions, working in an arbitrary CwF is basically indistinguishable from working in the corresponding type theory (i.e. its internal language). For that reason, we are able to avoid giving a precise definition of \emph{syntax} of type theory. Our definitions and constructions exist purely within the semantics realm of CwFs, and that is sufficient for our purposes.

We will also implicitly assume the existence of a hierarchy of an arbitrary finite number of universes of sets $\set_0 \subseteq \set_1 \subseteq \set_2 \ldots$, but remove the indices from the notation. In particular, we will simply write $\set$ instead of $\set_0$ or $\set_1$. This is in line with a widespread convention in type theory called ``typical ambiguity''~\cite{feferman:typical-ambiguity}, and is used, for example, in \cite{hott-book}.

The existence of this hierarchy of universes may depend on certain large cardinal axioms (like the existence of a corresponding chain of innaccessible cardinals) in a foundations like ZFC. Alternatively, if we assume that the metatheory that we are working in is itself some form of type theory, then all we need is a tower of universes (as in \cref{def:universe}) in the outer theory.

\subsection{Presheaves}\label{sec:presheaves}

The prototypical example of a CwF is the category of presheaves over $\C$, where $\C$ is an arbitrary (small) category.  We will denote this category by $\presheaf\C$.  For any presheaf $P$, let $\bTy(P)$ be the category of presheaves over $\int^{\C}P$, and let $\Ty(P)$ be the underlying set of objects of $\bTy(P)$.

Clearly, $\bTy$ defines a functor $\op{\C} \to \cat{Cat}$, hence $\Ty$ is a functor $\op{\C} \to \set$. The corresponding term functor is given by:
$$
\Tm_P(A) :\equiv \bTy(P)(1, A),
$$

where 1 is the terminal object of $\bTy(P)$, i.e. the functor which is constantly equal to the terminal object 1 of $\set$.  Substitutions are defined in the obvious way via precomposition.

To define context extension, we will need the following
\begin{prop}\label{prop:presheaf-families}
Let $\C$ be any category, and $P : \presheaf{C}$ a presheaf on $\C$. There is an equivalence of categories:
$$
\Phi : \presheaf{C} / P \cong \presheaf{\int P}
$$
such that, for all presheaves $Q$ over $P$, there is an isomorphism of categories:
\begin{equation}\label{eq:presheaf-families}
\int \Phi(Q) \cong \int Q
\end{equation}
\end{prop}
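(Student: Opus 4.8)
The plan is to construct $\Phi$ together with an explicit quasi-inverse, and then read off \cref{eq:presheaf-families} by unwinding definitions. Throughout I use the convention that an object of $\int P$ is a pair $(c, x)$ with $c : \C$ and $x : P(c)$, and that a morphism $(c, x) \to (c', x')$ is an $f : \C(c, c')$ with $P(f)(x') = x$ (the same convention under which $\ext : \int \Ty \to \C$ was defined above). First I would define $\Phi$ by fibrewise restriction. Given an object $\alpha : Q \to P$ of $\presheaf\C / P$, set
$$
\Phi(Q, \alpha)(c, x) :\equiv \{\, q : Q(c) \mid \alpha_c(q) = x \,\}.
$$
For a morphism $f : (c, x) \to (c', x')$ of $\int P$, naturality of $\alpha$ shows that $Q(f)$ restricts to a map $\Phi(Q, \alpha)(c', x') \to \Phi(Q, \alpha)(c, x)$, so $\Phi(Q, \alpha)$ is a presheaf on $\int P$; the assignment is functorial in $(Q, \alpha)$ since a morphism over $P$ is sent to its fibrewise restriction.

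Next I would build the inverse $\Psi : \presheaf{\int P} \to \presheaf\C / P$ by the fibrewise coproduct $\Psi(R)(c) :\equiv \coprod_{x : P(c)} R(c, x)$, equipped with the projection to $P$ that sends the summand indexed by $x$ to $x$. The two round-trips are naturally isomorphic to the identity essentially by inspection: $\Psi\Phi(Q, \alpha)(c) = \coprod_{x} \alpha_c^{-1}(x) \cong Q(c)$ recovers $Q$ together with its map to $P$, while the fibre of $\Psi(R)(c) \to P(c)$ over $x$ is exactly $R(c, x)$, giving $\Phi\Psi(R) \cong R$. Verifying that these isomorphisms are natural in $c$ and in the respective presheaves establishes the equivalence $\Phi$.

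For the second claim I would compute $\int \Phi(Q)$ directly. An object is a triple $(c, x, q)$ with $x : P(c)$ and $q : Q(c)$ satisfying $\alpha_c(q) = x$; since $x$ is thereby forced to equal $\alpha_c(q)$, the assignment $(c, x, q) \mapsto (c, q)$ is a bijection onto the objects of $\int Q$. A morphism $(c, x, q) \to (c', x', q')$ of $\int \Phi(Q)$ is an $f : \C(c, c')$ with $P(f)(x') = x$ and $Q(f)(q') = q$, and the first equation is automatic from the second by naturality of $\alpha$; hence it is precisely a morphism $(c, q) \to (c', q')$ of $\int Q$. So the two categories coincide on the nose, yielding the strict isomorphism \cref{eq:presheaf-families} rather than a mere equivalence, as the statement demands.

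I expect the only genuine work to be bookkeeping: keeping the variance of $\int P$ and of the restriction maps consistent, and checking naturality of the unit and counit of the equivalence. The explicit fibre and coproduct descriptions make all of these verifications routine, and in particular it is exactly the redundancy of the component $x$ (being determined by $q$) that collapses $\int \Phi(Q)$ onto $\int Q$ as an identity-on-the-nose isomorphism.
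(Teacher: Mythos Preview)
Your proposal is correct and is essentially the same argument as the paper's: both define $\Phi$ by taking fibres of $Q \to P$ and the quasi-inverse by the total-space (coproduct) construction, then read off \cref{eq:presheaf-families} from the explicit description of $\Phi$. The paper's proof is simply terser, leaving the round-trip and $\int$-computation as ``easy to see'' and ``follows immediately'', whereas you spell these out.
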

\begin{proof}
Given a presheaf $Q$ over $P$, define a presheaf $\Phi(Q)$ on $\int^{\C}P$ by assigning to every object $(\Gamma, x)$ of $\int^{C}P$, where $\Gamma : \C$ and $x : P_\Gamma$, the fibre of $Q$ over $x$.

Conversely, given a presheaf $F : \presheaf{\int^{C}P}$, define $Q_\Gamma$ as the set of pairs $(x, y)$, where $x : P_\Gamma$, and $y : F(\Gamma, x)$.

It is easy to see that $\Phi$. defines an equivalence of categories.  As for equation \cref{eq:presheaf-families}, it follows immediately from the definition of $\Phi$.
\end{proof}

Now, given a presheaf $P$ and a type $A$ over $P$, define $P.A$ to be the presheaf over $P$ corresponding to $A$ through the equivalence of
\cref{prop:presheaf-families}, so that we have equivalences:
\begin{equation}\label{eq:presheaf-extension}
\bTy(P.A) \cong \presheaf{\int A} \cong \bTy(P)/A,
\end{equation}
where the first is a consequence of the isomorphism \cref{eq:presheaf-families}, and the second is obtained by applying \cref{prop:presheaf-families} to the category $\int^{C} P$.  We will call $P.A$ the \emph{total space} of $A$.

Therefore, we can associate, to any type in $B : \Ty(P.A)$, a corresponding type in $\Ty(P)$, which we will denote by $\Sigma_A B$.  Note that $P.\Sigma_A B \cong P.A.B$.

\begin{lem}
The map $B \mapsto \Sigma_A B$ defines a left adjoint for the substitution functor $\bTy(P) \to \bTy(P.A)$ along $\pi_A$.
\end{lem}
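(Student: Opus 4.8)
The plan is to transport the claimed adjunction across the equivalence $\bTy(P.A) \cong \bTy(P)/A$ of $\cref{eq:presheaf-extension}$, reducing it to the standard adjunction between the domain functor and the product functor in a slice of the presheaf topos $\bTy(P) = \presheaf{\int P}$.

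First I would make the two functors explicit under this equivalence. By construction $\Sigma_A$ is the composite of the equivalence $\bTy(P.A) \xrightarrow{\sim} \bTy(P)/A$ of $\cref{eq:presheaf-extension}$ with the domain functor $\mathrm{dom} : \bTy(P)/A \to \bTy(P)$, $(X \to A) \mapsto X$; indeed the object of $\bTy(P)/A$ corresponding to $B$ has total space $\Sigma_A B$ equipped with its projection to $A$, so this identification is essentially definitional.

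The substantive step is to identify the substitution functor $-[\pi_A] : \bTy(P) \to \bTy(P.A)$, transported across the equivalence, with the functor $A^* : \bTy(P) \to \bTy(P)/A$ sending a type $C$ to the first projection $A \times C \to A$ (equivalently, the pullback of $C$ along the unique map $A \to 1$). Here I would unwind the explicit description of $\Phi$ from $\cref{prop:presheaf-families}$: substitution along $\pi_A$ is restriction along $\int\pi_A : \int(P.A) \to \int P$, so, under the identification $\int(P.A) \cong \int A$, the presheaf $C[\pi_A]$ takes at an object $((\Gamma,x),a)$ of $\int A$ the value $C_{(\Gamma,x)}$, independently of the component $a : A_{(\Gamma,x)}$. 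Applying the inverse of the equivalence of $\cref{prop:presheaf-families}$ then yields the presheaf over $\int P$ whose value at $(\Gamma,x)$ is the set of pairs $(a,y)$ with $a : A_{(\Gamma,x)}$ and $y : C_{(\Gamma,x)}$, that is $A_{(\Gamma,x)} \times C_{(\Gamma,x)}$, together with its first projection to $A$; this is precisely $A^* C$.

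With both functors identified, the lemma reduces to the adjunction $\mathrm{dom} \dashv A^*$, valid in any category with binary products, and in particular in the presheaf topos $\bTy(P)$, where products are computed pointwise. Concretely, a morphism in $\bTy(P)/A$ from $(\Sigma_A B \to A)$ to $A^* C = (A \times C \to A)$ is a map $\Sigma_A B \to A \times C$ over $A$; its $A$-component is forced to be the structure map, so such a morphism is the same datum as a map $\Sigma_A B \to C$ in $\bTy(P)$, and this correspondence is natural in $B$ and $C$. I expect the only delicate point to be the bookkeeping in the computation identifying $-[\pi_A]$ with $A^*$, which requires threading the explicit description of $\Phi$ and its inverse through both equivalences of $\cref{eq:presheaf-extension}$.
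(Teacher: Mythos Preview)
Your proposal is correct and follows essentially the same approach as the paper: factor $\Sigma_A$ as the equivalence $\bTy(P.A)\simeq\bTy(P)/A$ followed by the forgetful (domain) functor, observe that the right adjoint of the latter is $C\mapsto(A\times C\to A)$, and then check that this corresponds under the equivalence to $C[\pi_A]$. The paper's proof is terser, simply asserting the last verification is ``easy to see'', whereas you spell out the pointwise computation; otherwise the arguments coincide.
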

\begin{proof}
The functor $\Sigma_A$ can be regarded as the composition:
$$
\Sigma_A : \bTy(P.A) \to \bTy(P)/A \to \bTy(P),
$$
where the first functor is the equivalence \ref{eq:presheaf-extension}, and the second is the forgetful functor.

The latter has a right adjoint, mapping a type $C : \Ty(P)$ to the product $A \times C$, together with the first projection.

Therefore, all is left to do is to verify that $A \times C$ corresponds to $C[\pi_A]$ through the equivalence \ref{eq:presheaf-extension}, which is easy to see.
\end{proof}

Note that $\bTy(P)$, being a presheaf category, is a cartesian closed category with all small limits and colimits.  In particular, given two types $A, B$, we can form their exponential $B^A$, which we can think of as the ``function type'' between $A$ and $B$.

We will now generalise this notion of function type to the situation where $B$ ``depends on $A$'', i.e. when $B$ is not in $\bTy(P)$, but in $\bTy(P.A)$.

Given $B : \Ty(P.A)$, we can obtain a type $\Sigma_A B : \Ty(P)$, together with a projection $\pi_1 : \Ty(P)(\Sigma_AB, A)$.  Since $\Ty(P)$ has limits, we can form a pullback square:
\begin{equation}\label{eq:presheaf-pi-pullback}
\xymatrix{
\Pi_A B \ar[r] \ar[d] &
\left(\Sigma_A B\right)^A \ar[d] \\
1 \ar[r] &
A^A,
}
\end{equation}
where the bottom arrow selects the identity morphism $A \to A$.

This determines a type $\Pi_A B : \Ty(P)$.

\begin{lem}\label{lem:pi-adjunction}
The map $B \mapsto \Pi_A B$ defines a right adjoint for the substitution
functor $\bTy(P) \to \bTy(P.A)$ along $\pi_A$.
\end{lem}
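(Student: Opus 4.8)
The plan is to prove that $\Pi_A$ is right adjoint to the pullback functor $\pi_A^* : \bTy(P) \to \bTy(P.A)$ by constructing a natural bijection between hom-sets. Concretely, I want to show that for every $C : \Ty(P)$ and every $B : \Ty(P.A)$, there is a natural isomorphism
\begin{equation*}
\bTy(P.A)\bigl(\pi_A^* C,\, B\bigr) \;\cong\; \bTy(P)\bigl(C,\, \Pi_A B\bigr).
\end{equation*}
First I would reduce the problem to an unfolding of the defining pullback \cref{eq:presheaf-pi-pullback}. Since $\Pi_A B$ is defined as the pullback of $(\Sigma_A B)^A$ along the ``name of the identity'' $1 \to A^A$, a morphism $C \to \Pi_A B$ is, by the universal property of that pullback, the same as a morphism $C \to (\Sigma_A B)^A$ whose composite with $(\Sigma_A B)^A \to A^A$ equals the composite $C \to 1 \to A^A$. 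In other words, it is a morphism $f : C \to (\Sigma_A B)^A$ that, after projecting $\Sigma_A B \to A$, becomes the constant map naming $\id_A$.

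Next I would use the cartesian closed structure of $\bTy(P)$, which is available because it is a presheaf category, to transpose the exponential. A morphism $C \to (\Sigma_A B)^A$ corresponds by the exponential adjunction to a morphism $A \times C \to \Sigma_A B$, and the side condition above translates into the requirement that its first component $A \times C \to A$ be the projection $\pi_1$. So a map $C \to \Pi_A B$ is exactly a section over $A$, i.e. a morphism $A \times C \to \Sigma_A B$ in the slice $\bTy(P)/A$ where both objects are regarded as objects over $A$ via their respective projections. At this point I would invoke the already-established adjunction for $\Sigma_A$ together with the equivalence \cref{eq:presheaf-extension}: the lemma preceding this statement shows $\Sigma_A$ is left adjoint to $\pi_A^*$, and the equivalence $\bTy(P.A) \cong \bTy(P)/A$ identifies objects over $A$ in $\bTy(P)$ with types over $P.A$. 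Under this identification, $A \times C$ corresponds to $\pi_A^* C$ (this was verified in the proof of the $\Sigma_A$ lemma, where $A \times C$ was shown to correspond to $C[\pi_A]$), and $\Sigma_A B$ as an object over $A$ corresponds to $B$ itself. Hence morphisms $A \times C \to \Sigma_A B$ over $A$ are precisely morphisms $\pi_A^* C \to B$ in $\bTy(P.A)$, giving the desired bijection.

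The remaining work is to check naturality of this chain of bijections in both $C$ and $B$, which follows since each step — the pullback universal property, the exponential transpose, and the slice equivalence of \cref{prop:presheaf-families} — is itself natural. The main obstacle I anticipate is bookkeeping the side condition through the exponential transpose: one must verify carefully that the ``constant at $\id_A$'' constraint on the map into $A^A$ transposes exactly to the ``first component is $\pi_1$'' constraint, so that the pullback really does cut out the sections over $A$ and nothing more. Once that correspondence is pinned down, everything else is a formal consequence of cartesian closedness and the equivalence \cref{eq:presheaf-extension}, so I would keep the computational part brief and emphasise that the nontrivial input is precisely the interplay between the pullback defining $\Pi_A B$ and the exponential adjunction.
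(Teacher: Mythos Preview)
Your proposal is correct and follows essentially the same route as the paper's proof: both use the universal property of the defining pullback \cref{eq:presheaf-pi-pullback}, transpose via the exponential adjunction to land in the slice $\bTy(P)/A$, and then invoke the equivalence \cref{eq:presheaf-extension} (together with the identification of $A \times C$ with $C[\pi_A]$) to reach $\bTy(P.A)$. The only cosmetic difference is that the paper organises the argument as a comparison of two pullback squares obtained by homming $X$ into the relevant diagrams, whereas you unfold the universal property step by step; the content is the same.
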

\begin{proof}
Let $X$ be an arbitrary type in $\Ty(P)$, and consider the homset $\bTy(P.A)(X[\pi_A], B)$. Through the equivalence \cref{eq:presheaf-extension}, this is isomorphic to $\left(\bTy(P)/A\right)(A \times X, \Sigma_A B)$, which fits into a pullback square:
$$
\xymatrix{
\left(\bTy(P)/A\right)(A \times X, \Sigma_A B) \ar[r] \ar[d] &
\bTy(P)(A \times X, \Sigma_A B) \ar[d] \\
1 \ar[r] &
\bTy(P)(A \times X, A).
}
$$

Using the adjunction defining the exponential, this diagram is isomorphic to:
$$
\xymatrix{
\left(\bTy(P)/A\right)(A \times X, \Sigma_A B) \ar[r] \ar[d] &
\bTy(P)(X, \left(\Sigma_A B\right)^A) \ar[d] \\
1 \ar[r] &
\bTy(P)(X, A^A).
}
$$

However, by applying the limit-preserving functor $\bTy(P)(X, -)$ to \ref{eq:presheaf-pi-pullback}, we get the same diagram, but with $\bTy(P)(X, \Pi_A B)$ in the top left corner.  Therefore, it follows that there is a natural isomorphism
$$
\bTy(P)(X, \Pi_A B) \cong \bTy(P.A)(X[\pi_A], B),
$$
hence $\Pi_A$ is right adjoint to substitution along $\pi_A$.
\end{proof}

As an immediate consequence of \cref{lem:pi-adjunction}, there is a natural isomorphism:
\begin{equation}\label{eq:lambda-abstraction}
\lambda : \Tm_{P.A}(B) \to \Tm_P(\Pi_A B),
\end{equation}
which is often referred to as \emph{lambda abstraction}.  Furthermore, given terms $f : \Tm_P(\Pi_A B)$ and $a : \Tm_P(A)$, we get a term $\lambda^{-1}(f)[a] : \Tm_P(B[a])$.  It is customary to denote this term simply by $f\ a$, and call this operation \emph{application}.

Alternatively, we can regard application as a morphism $\epsilon_{A,B}$:
$$
\xymatrix{
  \Gamma.A.\Pi_A B \ar[rr]^{\epsilon_{A,B}} \ar[rd] & &
  \Gamma.A.B \ar[ld] \\
  & \Gamma.A.
}
$$

Since the type $B$ appearing in a $\Pi_A B$ is defined over an extended context, it is often convenient to introduce a name for the variable of type $A$, when constructing such an expression.  Therefore, we will employ the notation:
$$
\Pi_{a : A} B,
$$
to mean the exact same thing as $\Pi_A B$, with the addition that $B$ is assumed to be a type in the context $P(a : A)$, i.e. the name $a$ refers to the variable of type $A$ within the expression that defines $B$.  A similar notation will be used for $\Sigma$.

We will now define a very simple notion of \emph{equality type} for presheaves.

Let $P$ be a presheaf, and $A : \Ty(P)$ a type over it. Consider the diagonal morphism $\bTy(P)(A, A \times A)$ and map it through the equivalence of \cref{prop:presheaf-families} to get a morphism in $\presheaf{C}(P.A, P.(A\times A))$, which is isomorphic to $\presheaf{C}(P.A, P.A.A)$.  Using \cref{prop:presheaf-families} again, this morphism determines a type over $P.A.A$ which we will denote by $\Eq_A$, and refer to as the \emph{equality type} of $A$.

In particular, given terms $a_1, a_2 : \Tm_P(A)$, we can form a type $\Eq_A[a_1, a_2]$ by substitution.  Terms of this type are witnesses of equality betwee $a_1$ and $a_2$, hence this type is inhabited (i.e. it has a global section) if and only if $a_1$ and $a_2$ are equal terms.

\begin{lem}\label{lem:eq-subterminal}
The type $\Eq_A$ is a subterminal object of $\bTy(P.A.A)$.
\end{lem}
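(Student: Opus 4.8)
The plan is to reduce the statement to the elementary fact that a diagonal morphism is a split monomorphism, transported across the equivalences of \cref{prop:presheaf-families}. Recall that an object $E$ of a category with terminal object $1$ is \emph{subterminal} precisely when the unique morphism $E \to 1$ is a monomorphism, or equivalently when $E$ admits at most one morphism from any fixed object. So the goal is to identify the map $\Eq_A \to 1$ and show it is monic.

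First I would unwind the definition of $\Eq_A$. By construction, $\Eq_A$ is the image, under the equivalence $\Phi : \presheaf{C}/(P.A.A) \cong \bTy(P.A.A)$ of \cref{prop:presheaf-families} applied to the presheaf $P.A.A$, of the object $(P.A, m)$ of the slice category, where $m : \presheaf{C}(P.A, P.A.A)$ is the morphism obtained by transporting the diagonal $A \to A \times A$ across the first equivalence $\presheaf{C}/P \cong \bTy(P)$ and using the identification $P.(A \times A) \cong P.A.A$. Since $\Phi$ is an equivalence of categories it preserves both terminal objects and monomorphisms, hence preserves subterminal objects, so it suffices to show that $(P.A, m)$ is subterminal in $\presheaf{C}/(P.A.A)$.

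Next I would invoke the standard description of subterminal objects in a slice category: the terminal object of $\presheaf{C}/(P.A.A)$ is $(P.A.A, \id)$, and an object $(Q, q)$ is subterminal exactly when its structure map $q$ is a monomorphism in $\presheaf{C}$. Here the unique morphism $(P.A, m) \to (P.A.A, \id)$ is $m$ itself, so the entire statement comes down to showing that $m$ is a monomorphism. But $m$ is the image of the diagonal $A \to A \times A$ under the equivalence $\presheaf{C}/P \cong \bTy(P)$, and the diagonal is a split monomorphism, since composing it with the first projection $A \times A \to A$ yields the identity. Equivalences preserve split monomorphisms, so $m$ is a (split) monomorphism, as required.

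The only real care needed is the bookkeeping across the two separate applications of \cref{prop:presheaf-families}, one on $P$ and one on $P.A.A$, and confirming that the map $\Eq_A \to 1$ really corresponds to $m$; once these identifications are pinned down the argument is immediate, and this is the step I would expect to be the main (if mild) obstacle. As a sanity check one can also verify the claim pointwise: over an object $(\Gamma, (p, a_1, a_2))$ of $\int(P.A.A)$ the presheaf $\Eq_A$ is the fibre of $m$, which is a singleton when $a_1 = a_2$ and empty otherwise, hence in every case has at most one element.
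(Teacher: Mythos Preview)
Your proof is correct and follows essentially the same strategy as the paper: transport the question through the equivalence of \cref{prop:presheaf-families} and reduce to the fact that the diagonal is monic. The only cosmetic differences are that the paper lands in $\bTy(P)/(A\times A)$ rather than $\presheaf{C}/(P.A.A)$, and it justifies monicity of the diagonal by observing it is the equaliser of the two projections, whereas you use the (slightly slicker) observation that it is split by a projection.
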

\begin{proof}
Since equivalence of categories preserves subterminality, it is enough to show that the diagonal $A \to A \times A$ is subterminal in $\bTy(P)/(A\times A)$.

Let now $\C$ be any category, and $A : \C$ an object such that the product $A \times A$ exists.  The diagonal $\delta : A \to A \times A$ is the equaliser of the two projections $A \times A \to A$, hence it is monic.  Since the forgetful functor $\C/(A\times A) \to \C$ is faithful, it follows that $\delta \to \id$ is monic in $\C/(A \times A)$, i.e. $\delta$ is subterminal.
\end{proof}

\subsection{Basic type formers}\label{sec:basic-type-formers}

In the previous section, we defined the operations $\Sigma$, $\Pi$ and $\Eq$ on types of a presheaf category.  We will now define what it means for a general CwF to support those operations.

The following definitions are standard (see for example \cite{hofmann:syntax-and-semantics}).

\begin{defn}\label{def:pi-structure-standard}
We say that a CwF \emph{supports $\Pi$-types} if for any two types $A : \Ty(\Gamma)$ and $B : \Ty(\Gamma.A)$ there is a type $\pi(A,B) : \Ty(\Gamma)$, and for each $b : \Tm_{\Gamma.A}(B)$ there is a term $\lambda(b)$, and for each $f : \Tm_{\Gamma}(\pi(A,B))$ and $a : \Tm_{\Gamma}(A)$ there is a term $f \cdot a : \Tm_\Gamma(B[a])$ such that the following equations (appropriately quantified) hold:
\begin{equation*}
\begin{aligned}
& \lambda(b) \cdot a = b[a] \\
& \lambda (f \cdot v_A) = f \\
& \pi(A,B)[\tau] = \pi(A[\tau], B[\tau^+]) \\
& (\lambda(b))[\tau] = \lambda (b[\tau]) \\
& (f \cdot a)[\tau] = f[\tau] \cdot a[\tau].
\end{aligned}
\end{equation*}
\end{defn}

\begin{defn}\label{def:sigma-structure-standard}
We say that a CwF \emph{supports $\Pi$-types} if for any two types $A : \Ty(\Gamma)$ and $B : \Ty(\Gamma.A)$ there is a type $\sigma(A,B) : \Ty(\Gamma)$, and for each $a : \Tm_\Gamma(A)$ and $b : \Tm_\Gamma(B[a])$ there is a term $\langle a, b \rangle : \Tm_\Gamma(\sigma(A,B))$, and for all terms $x : \Tm_\Gamma(\sigma(A,B))$ there are terms $\pi(x) : \Tm_\Gamma(A)$ and $\pi'(x) : \Tm_\Gamma(B[\pi(x)])$ such that the following equations (appropriately quantified) hold:
\begin{equation*}
\begin{aligned}
& \pi(\langle a, b \rangle) = a \\
& \pi'(\langle a, b \rangle) = b \\
& \langle \pi(x), \pi'(x) \rangle = x \\
& \sigma(A,B)[\tau] = \sigma(A[\tau], B[\tau^+]) \\
& \langle a , b \rangle[\tau] = \langle a[\tau], b[\tau] \rangle \\
& \pi(x)[\tau] = \pi(x[\tau]) \\
& \pi'(x)[\tau] = \pi(x'[\tau]).
\end{aligned}
\end{equation*}
\end{defn}

\begin{defn}\label{def:equality-structure-standard}
We say that a CwF \emph{supports equality types} if for all types $A : \Ty(\Gamma)$ there is a type $\eq(A) : \Ty(\Gamma.A.A)$, such that two terms $a, b : \Tm_\Gamma(A)$ are equal if and only if there is a term $p : \Tm_\Gamma(\eq(A)[a, b])$, and furthermore:
$$
\eq(A)[\tau^{++}] = \eq(A[\tau]).
$$
\end{defn}

\begin{defn}\label{def:unit-structure-standard}
We say that a CwF \emph{has a unit type} if there exists a type $1 : \Ty(1)$ with a unique term.
\end{defn}

The purpose of this section is to develop equivalent formulations of the above definitions based on presheaves.  In \cref{sec:rule-framework}, we will introduce the \emph{rule framework}, and that will help us generalise the presheaf-based definitions (\cref{def:pi-structure}, \cref{def:sigma-structure}, \cref{def:equality-structure} and \cref{def:unit-structure}) to cover a wide variety of ``type formers''.

\begin{defn}\label{def:universe}
Let $\C$ be a CwF. A \emph{universe} in $\C$ is given by:
\begin{itemize}
\item a type $\U$ in the unit context;
\item a type $\El$ in the context $\U$.
\end{itemize}
\end{defn}

We will see later how universes of sets determine universes in presheaf categories for an arbitrary $\C$ (\cref{sec:presheaf-universes}).  For now, we will focus on the case where $\C$ is itself a CwF.  In that case, the presheaf category $\presheaf{\C}$ has a canonical universe, given by the functors $\Ty$ and $\Tm$, part of the CwF structure of $\C$.  For reasons that will be clear later, we will call this the \emph{fibrant universe} of $\presheaf\C$.

Since now we have two CwFs in play, in an attempt to avoid confusion, we will use the notation $\presheaf{\Ty}$ and $\presheaf{\Tm}$ when discussing the CwF structure on $\presheaf\C$.

In the following, we will write $y$ for the Yoneda embedding $\C \to \presheaf\C$.

\begin{lem}\label{lem:yoneda-terms}
Let $P$ be a presheaf on $\C$, $A$ a term of type $\Ty$ in the context $P$ of $\presheaf\C$, and $x$ an element of $P$ over some $\Gamma : \C$.  Let us write $\pi$ for the display map of the type $\Tm[A]$ over $P$.

There is an isomorphism of types over $P.\Tm[A]$:
\begin{equation}\label{eq:yoneda-terms}
y(\Gamma, x)[\pi] \cong y(\Gamma.A_\Gamma(x), x[\pi], v_{A_\Gamma(x)}),
\end{equation}
natural in $(\Gamma, x) : \int P$.
\end{lem}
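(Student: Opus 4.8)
The plan is to unfold both sides of \cref{eq:yoneda-terms} into presheaves over $\int(P.\Tm[A])$ and to identify them by computing, for a generic test object, the set that each one assigns. Recall the relevant identifications in the fibrant universe of $\presheaf\C$: the term $A$ of type $\Ty$ over $P$ is the same as a natural assignment of a type $A_\Gamma(x) : \Ty(\Gamma)$ to every $(\Gamma, x) : \int P$; the type $\Tm[A]$ over $P$ is then the presheaf over $\int P$ with fibre $\Tm_\Gamma(A_\Gamma(x))$ over $(\Gamma, x)$; and its total space $P.\Tm[A]$ is the presheaf on $\C$ whose value at $\Delta$ is the set of pairs $(w, b)$ with $w : P_\Delta$ and $b : \Tm_\Delta(A_\Delta(w))$, with $\pi$ the first projection. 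Under \cref{prop:presheaf-families} the representable type $y(\Gamma, x)$ over $P$ is the representable presheaf on $\int P$ at $(\Gamma, x)$, and substitution along $\pi$ is restriction along the functor $\int(P.\Tm[A]) \to \int P$ induced by $\pi$.

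With these identifications, the Yoneda lemma lets me compute the two fibres at a generic object $(\Delta, (w,b))$ of $\int(P.\Tm[A])$. The left-hand side $y(\Gamma,x)[\pi]$ assigns the hom-set $\int P\big((\Delta, w),(\Gamma, x)\big)$, i.e. the set of $f : \C(\Delta, \Gamma)$ with $x[f] = w$. The right-hand side, being representable at $\big(\Gamma.A_\Gamma(x), (x[\pi_{A_\Gamma(x)}], v_{A_\Gamma(x)})\big)$, assigns the set of $g : \C(\Delta, \Gamma.A_\Gamma(x))$ with $(x[\pi_{A_\Gamma(x)}], v_{A_\Gamma(x)})[g] = (w, b)$.

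The key step is the universal property of context extension in $\C$: a morphism $g : \C(\Delta, \Gamma.A_\Gamma(x))$ is precisely a pair $\langle f, a\rangle$ with $f = \pi_{A_\Gamma(x)} \circ g : \C(\Delta, \Gamma)$ and $a = v_{A_\Gamma(x)}[g] : \Tm_\Delta(A_\Gamma(x)[f])$. Reading the constraint $(x[\pi_{A_\Gamma(x)}], v_{A_\Gamma(x)})[g] = (w, b)$ componentwise turns it into $x[f] = w$ together with $a = b$. The remaining point is a typing coherence supplied by naturality of $A$: when $x[f] = w$ we have $A_\Gamma(x)[f] = A_\Delta(w)$, so the given $b : \Tm_\Delta(A_\Delta(w))$ is exactly a term of $A_\Gamma(x)[f]$, and the pair $\langle f, b\rangle$ typechecks (the same remark is what makes the right-hand object well-formed in the first place). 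Hence $g \mapsto \pi_{A_\Gamma(x)} \circ g$ and $f \mapsto \langle f, b\rangle$ are mutually inverse bijections between the two fibres.

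Finally I would verify the two naturalities. Naturality in the test object is immediate: restriction along $h : \C(\Delta', \Delta)$ acts by precomposition on both sides, and $\langle f, b\rangle \circ h = \langle f \circ h, b[h]\rangle$; naturality in $(\Gamma, x) : \int P$ follows in the same way from functoriality of the pairing $\langle -, -\rangle$ and of the Yoneda embedding. I expect no genuine difficulty here: the entire content collapses to the universal property of context extension, and the only thing requiring care is the bookkeeping of the identifications---particularly the naturality-of-$A$ coherence that makes the second components typecheck, and the overloaded use of $\pi$ for the two different display maps.
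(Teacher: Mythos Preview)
Your proof is correct. Both you and the paper reduce the claim to the universal property of context extension, but you take a somewhat different route to get there.

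The paper first transports the whole problem along \cref{prop:presheaf-families}: it regards $x$ as a map $y(\Gamma)\to P$, pulls both sides back to types over $y(\Gamma)$, and then uses the equivalence $\presheaf\C/y(\Gamma)\simeq\presheaf{\C/\Gamma}$ to rewrite them as presheaves on $\C/\Gamma$. Under that equivalence the left-hand side becomes exactly the functor $(\Delta,\sigma)\mapsto\Tm_\Delta(A_\Gamma(x)[\sigma])$ of \cref{eq:ext-repr-functor}, and the right-hand side becomes the representable at $\Gamma.A_\Gamma(x)$, so the isomorphism is literally the defining property of context extension. You instead stay over the larger base $\int(P.\Tm[A])$ and compute both fibres at a generic $(\Delta,(w,b))$ by hand, then exhibit the bijection $f\leftrightarrow\langle f,b\rangle$ directly. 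Your approach is more elementary and makes the naturality checks explicit; the paper's is more structural and avoids tracking the extra term component $b$ by first passing to the simpler base $y(\Gamma)$. Either way the content is the same universal property.
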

\begin{proof}
We will construct the required isomorphism by using \cref{prop:presheaf-families} to transport all the presheaves involved to $\presheaf \C$.

By the Yoneda lemma, we can regard $x$ as a morphism $y(\Gamma) \to P$.  The left side of \ref{eq:yoneda-terms} is then isomorphic to the type over $y(\Gamma)$ obtained by substituting $\Tm[A]$ along $x$.

As for the right side, its total space can also be regarded as a presheaf over $y(\Gamma)$ through the Yoneda embedding of the display map $\Gamma.A(x) \to \Gamma$.

By \cref{prop:presheaf-families}, presheaves over $y(\Gamma)$ correspond to presheaves on $\int y(\Gamma)$, which is isomorphic to $\C / \Gamma$.  Applying the isomorphism of \cref{prop:presheaf-families} explicitly, it is easy to see that the left side is mapped to the functor given by \cref{eq:ext-repr-functor} for the type $A(x)$, so the conclusion follows from the defining property of context extension. 
\end{proof}

\begin{cor}\label{cor:ty-ext}
Let $P$ be a presheaf on $\C$, and $A$ a term of type $\Ty$ in the context $P$ of $\presheaf\C$.  The type
$$
\Pi_{\Tm[A]} \Ty
$$
is isomorphic to the presheaf on $\int P$ given by:
\begin{equation}\label{eq:ty-ext}
(\Gamma, x) \mapsto \Ty(\Gamma.A_\Gamma(x)).
\end{equation}
\end{cor}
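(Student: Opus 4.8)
The plan is to compute the value of the type $\Pi_{\Tm[A]}\Ty$, regarded via \cref{prop:presheaf-families} as a presheaf over $\int P$, at an arbitrary object $(\Gamma, x)$, and to identify this value naturally with $\Ty(\Gamma.A_\Gamma(x))$. The whole argument is a chain of canonical isomorphisms, with \cref{lem:yoneda-terms} doing the essential work; nothing needs to be constructed by hand.

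First I would use the Yoneda lemma inside $\bTy(P) = \presheaf{\int P}$: since $\Pi_{\Tm[A]}\Ty$ is by construction an object of $\bTy(P)$, its value at $(\Gamma, x)$ is
$$(\Pi_{\Tm[A]}\Ty)(\Gamma, x) \cong \bTy(P)\big(y(\Gamma, x), \Pi_{\Tm[A]}\Ty\big).$$
Next I would apply the adjunction of \cref{lem:pi-adjunction}, with $X = y(\Gamma, x)$ and with $B = \Ty$ read as the fibrant universe weakened to a type over $P.\Tm[A]$, to rewrite the right-hand side as
$$\bTy(P.\Tm[A])\big(y(\Gamma, x)[\pi], \Ty\big),$$
where $\pi$ is the display map of $\Tm[A]$.

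The key step is then to invoke \cref{lem:yoneda-terms}, which supplies the isomorphism $y(\Gamma, x)[\pi] \cong y(\Gamma.A_\Gamma(x), x[\pi], v_{A_\Gamma(x)})$ of types over $P.\Tm[A]$. Substituting this into the hom-set and applying the Yoneda lemma once more, now in $\bTy(P.\Tm[A]) = \presheaf{\int(P.\Tm[A])}$, collapses the hom-set to the value of the presheaf $\Ty$ at the object $(\Gamma.A_\Gamma(x), x[\pi], v_{A_\Gamma(x)})$ of $\int(P.\Tm[A])$. Since this $\Ty$ is the universe weakened from the unit context, as a presheaf over any $\int Q$ it depends only on the underlying $\C$-component, i.e.\ it is $(\Delta, \dots) \mapsto \Ty(\Delta)$; evaluating at the object above, whose $\C$-component is $\Gamma.A_\Gamma(x)$, yields exactly $\Ty(\Gamma.A_\Gamma(x))$.

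Finally I would check naturality in $(\Gamma, x) : \int P$. The Yoneda isomorphisms and the $\Pi$-adjunction of \cref{lem:pi-adjunction} are natural, and \cref{lem:yoneda-terms} is explicitly stated to be natural in $(\Gamma, x)$, so the composite is a natural isomorphism of presheaves over $\int P$, which is precisely the identification with \cref{eq:ty-ext}. I expect the only real obstacle to be bookkeeping: making sure that on the right-hand side of the $\Pi$-adjunction the symbol $\Ty$ is consistently interpreted as the weakened universe over $P.\Tm[A]$ rather than over the unit context, and that its value as a presheaf over $\int(P.\Tm[A])$ is correctly identified, so that the final Yoneda evaluation lands on $\Ty(\Gamma.A_\Gamma(x))$ and not on some reindexed variant.
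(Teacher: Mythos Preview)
Your proposal is correct and follows essentially the same route as the paper: Yoneda to express the value at $(\Gamma,x)$ as a hom out of a representable, the $\Pi$-adjunction of \cref{lem:pi-adjunction}, then \cref{lem:yoneda-terms} to replace the weakened representable by $y(\Gamma.A_\Gamma(x),\dots)$, and a final Yoneda evaluation. The paper's version is more terse---it leaves the initial Yoneda identification, the final evaluation of the weakened $\Ty$, and naturality implicit---but the underlying argument is identical.
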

\begin{proof}
Again, let us write $\pi$ for the display map of $\Tm[A]$.

Fix an arbitrary $(\Gamma, x) : \int P$. By \cref{lem:pi-adjunction}, there is a natural isomorphism:
$$
\presheaf\bTy(P)(y(\Gamma, x), \Pi_{\Tm[A]} \Ty) \cong
\presheaf\bTy(P.\Tm[A]) (y(\Gamma,x)[\pi], \Ty).
$$

By \cref{lem:yoneda-terms}, the weakened type $y(\Gamma,x)[\pi]$ is isomorphic to the representable presheaf $y(\Gamma, x, A(x))$, hence the conclusion follows from the Yoneda lemma.
\end{proof}

In the setting of \ref{cor:ty-ext}, if $B$ is a term of type $\Pi_{\Tm[A]}\Ty$ in context $P$, we will denote by $\widetilde B_\Gamma(x)$ the element of $\Ty(\Gamma.A_{\Gamma}(x))$ corresponding to $B_\Gamma(x)$ through the isomorphism \ref{eq:ty-ext}.  Expanding the definition of the isomorphism, one can show that:
$$
\widetilde B_\Gamma(x) = (\lambda^{-1}B)_{\Gamma.A_\Gamma(x)}(x[\pi], v_{A_\Gamma(x)}).
$$

\begin{cor}\label{cor:tm-ext}
Let $P$ be a presheaf on $\C$, $A$ a term of type $\Ty$, and $B$ a term of type $\Pi_{\Tm[A]} \Ty$, both in the context $P$.  The type
$$
\Pi_{\Tm[A]} \Tm[B\ a]
$$
is isomorphic to the presheaf on $\int P$ given by:
$$
(\Gamma, x) \mapsto \Tm_{\Gamma.A_\Gamma(x)}(\widetilde B_{\Gamma}(x))
$$
\end{cor}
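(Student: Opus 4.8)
The plan is to follow the proof of \cref{cor:ty-ext} essentially verbatim, replacing the universe type $\Ty$ by the term type $\Tm[B\ a]$, and then adding a single extra computation at the end that identifies the relevant fibre. Throughout, $\Tm[B\ a]$ is a type over $P.\Tm[A]$, since $B\ a$ (the application of $B$ to the variable $a$ of type $\Tm[A]$) is the uncurried term $\lambda^{-1}(B) : \Tm_{P.\Tm[A]}(\Ty)$, so that forming $\Pi_{\Tm[A]}\Tm[B\ a]$ is an instance of the setup of \cref{lem:pi-adjunction} with the family taken to be $\Tm[B\ a]$.

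First I would fix an arbitrary object $(\Gamma, x) : \int P$ and write $\pi$ for the display map of $\Tm[A]$ over $P$. Applying \cref{lem:pi-adjunction} to the type $\Tm[B\ a]$ over $P.\Tm[A]$ yields a natural isomorphism
$$
\bTy(P)(y(\Gamma, x), \Pi_{\Tm[A]} \Tm[B\ a]) \cong \bTy(P.\Tm[A])(y(\Gamma,x)[\pi], \Tm[B\ a]).
$$
Next, exactly as in \cref{cor:ty-ext}, I would invoke \cref{lem:yoneda-terms} to rewrite the weakened representable $y(\Gamma,x)[\pi]$ as the representable $y(\Gamma.A_\Gamma(x), x[\pi], v_{A_\Gamma(x)})$ on $\int(P.\Tm[A])$. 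By the Yoneda lemma, maps out of this representable into the type $\Tm[B\ a]$ are then in natural bijection with the elements of the presheaf $\Tm[B\ a]$ lying over the object $(\Gamma.A_\Gamma(x), x[\pi], v_{A_\Gamma(x)})$.

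The one genuinely new step, and the part I expect to require the most care, is identifying this fibre. Since $\Tm[B\ a]$ is obtained by substituting the universal term type $\Tm$ along the term $B\ a = \lambda^{-1}(B)$, its fibre over a point $(\Delta, z, s)$ of $\int(P.\Tm[A])$ is $\Tm_\Delta\big((\lambda^{-1}B)_\Delta(z,s)\big)$. Evaluating at the specific point $(\Gamma.A_\Gamma(x), x[\pi], v_{A_\Gamma(x)})$, and using the explicit description $\widetilde B_\Gamma(x) = (\lambda^{-1}B)_{\Gamma.A_\Gamma(x)}(x[\pi], v_{A_\Gamma(x)})$ recorded just after \cref{cor:ty-ext}, this fibre becomes precisely $\Tm_{\Gamma.A_\Gamma(x)}(\widetilde B_\Gamma(x))$, which is the claimed value of the target presheaf at $(\Gamma, x)$.

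Finally, I would check that the composite bijection is natural in $(\Gamma, x)$. This is inherited from the naturality already established in \cref{lem:pi-adjunction} and \cref{lem:yoneda-terms}, together with the naturality of the Yoneda isomorphism, so no fresh argument is needed beyond routine bookkeeping. The main obstacle is thus confined to the fibre computation of the substituted type $\Tm[B\ a]$: one must keep straight that applying $B$ to the generic variable produces the uncurried term $\lambda^{-1}(B)$, and that evaluating this term at the point produced by \cref{lem:yoneda-terms} reproduces exactly $\widetilde B_\Gamma(x)$.
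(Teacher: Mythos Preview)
Your proposal is correct and follows exactly the approach the paper intends: the paper states \cref{cor:tm-ext} without proof, evidently regarding it as an immediate variant of \cref{cor:ty-ext}, and your argument supplies precisely that variant by rerunning the same adjunction--Yoneda calculation with $\Tm[B\ a]$ in place of $\Ty$ and then invoking the explicit formula for $\widetilde B_\Gamma(x)$ to identify the resulting fibre.
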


The universe $\Ty$ allows us to use the CwF structure on $\presheaf \C$ to give definitions that work across all types of $\C$.  However, to generalise $\Pi$ and $\Sigma$, we need to access \emph{pairs} of dependent types.  For that reason, we define the context $\Ty^{(2)}$ as:
$$
(\b A : \Ty)(\b B : \Pi_{\Tm[\b A]} \Ty).
$$

Here we are using the syntactical conventions introduced in \cref{sec:notation}.
Let us take a minute to explain in detail what this expression means.

First of all, since $\Ty$ is a type in the unit context of $\presheaf{C}$, we can form a context $P_0 :\equiv (\b A : \Ty)$ by extension from the unit context, and use $\b A$ to refer to the corresponding term of type $\Ty$, i.e. $\b A : \presheaf{\Tm}_{P_0}(\Ty)$.

In the context $P_0$, the morphism corresponding to the variable $\b A$ is just the identity $P_0 \to P_0$, hence $\Tm[\b A]$ could have simply been written as $\Tm$.  However, using an explicit substitution makes it clear that we are referring to the variable $\b A$, and generalises better to situations where the context contains more than one variable.

Since $T :\equiv \Pi_{\Tm[\b A]} \Ty$ is a type in the context $P_0$, we can perform another context extension and obtain the context $P_0(\b B : T)$.  If we make weakenings explicit, now $\b A$ refers to the variable of type $\Ty[\pi_{\Ty}][\pi_T]$, and $\b B$ to the variable of type $T[\pi_T]$.

\begin{cor}\label{cor:ty2-ext}
There is an isomorphism, natural in $\Gamma : \C$:
$$
\Ty^{(2)}(\Gamma) \cong \coprod_{A : \Ty(\Gamma)}\Ty(\Gamma.A).
$$
\end{cor}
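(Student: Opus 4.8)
The plan is to unfold the definition of the context $\Ty^{(2)} = P_0.T$, where $P_0 :\equiv (\b A : \Ty)$ and $T :\equiv \Pi_{\Tm[\b A]}\Ty$, and then compute the elements of the resulting presheaf over $\Gamma$ directly, using the explicit description of total spaces in presheaf categories provided by \cref{prop:presheaf-families}. Everything reduces to chaining two identifications that are already available: one for $P_0$ and one for $T$.

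First I would identify $P_0$. Since $\Ty$ is a type in the unit context $1$ of $\presheaf\C$, and types over $1$ correspond, via \cref{prop:presheaf-families} together with $\int 1 \cong \C$, to presheaves on $\C$, the total space $1.\Ty$ is the presheaf $\Ty$ itself. Hence $(P_0)_\Gamma = \Ty(\Gamma)$: an element of $P_0$ over $\Gamma$ is exactly a type $A : \Ty(\Gamma)$, and the variable $\b A$ evaluates at such an element to $A$ itself.

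Next I would describe the type $T$ over $P_0$. By \cref{cor:ty-ext}, applied with $P :\equiv P_0$ and the term $\b A$, the type $\Pi_{\Tm[\b A]}\Ty$, viewed as a presheaf on $\int P_0$, is isomorphic to $(\Gamma, A) \mapsto \Ty(\Gamma.A)$. I would then invoke the explicit formula for total spaces from \cref{prop:presheaf-families}: the elements of $P_0.T$ over $\Gamma$ are the pairs $(x, y)$ with $x \in (P_0)_\Gamma$ and $y$ an element of $T(\Gamma, x)$. Substituting the two identifications, an element of $\Ty^{(2)}$ over $\Gamma$ is precisely a pair $(A, B)$ with $A : \Ty(\Gamma)$ and $B : \Ty(\Gamma.A)$, which is by definition an element of $\coprod_{A : \Ty(\Gamma)}\Ty(\Gamma.A)$.

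The only genuine work is verifying naturality in $\Gamma$, and this is the step I expect to be the main obstacle. Concretely, I would check that the presheaf action of a morphism $\sigma : \Delta \to \Gamma$ on $\Ty^{(2)}$ corresponds, under the bijection above, to the map $(A, B) \mapsto (A[\sigma], B[\sigma^+])$ on the coproduct. This is not conceptually difficult, but it requires unwinding how substitution acts on an iterated context extension and appealing to the naturality clauses already built into \cref{prop:presheaf-families} and \cref{cor:ty-ext}, rather than introducing any new idea.
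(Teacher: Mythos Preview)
Your proposal is correct and follows essentially the same approach as the paper: the paper's proof is the one-line remark that this is an immediate consequence of \cref{cor:ty-ext} and the definition of context extension of presheaves, and you have simply unfolded those two ingredients explicitly. Your concern about naturality is somewhat overstated, since both \cref{cor:ty-ext} and the total-space description from \cref{prop:presheaf-families} already come with naturality built in, so no separate verification is needed.
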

\begin{proof}
Immediate consequence of \cref{cor:ty-ext} and the definition of context extension of presheaves.
\end{proof}

Thanks to \cref{cor:ty2-ext}, we are free to identify elements of $\Ty^{(2)}(\Gamma)$ with pairs of types $(A, B)$, where $A : \Ty(\Gamma)$ and $B : \Ty(\Gamma.A)$.  However, using $\Ty^{(2)}$ can sometimes be preferable, since it avoids referring to context extension at all.

\begin{defn}\label{def:pi-structure}
A $\Pi$-type structure on $\C$ is given by:
\begin{itemize}
\item a term
\begin{equation}\label{eq:pi-structure-pi}
\pi : \presheaf{\Tm}_{\Ty^{(2)}}(\Ty),
\end{equation}
\item an isomorphism
\begin{equation}\label{eq:pi-structure-iso}
\Tm[\pi] \cong \Pi_{a : \Tm[\b A]} \Tm[\b B\ a]
\end{equation}
of types over $\Ty^{(2)}$.
\end{itemize}
\end{defn}

Note that a $\Pi$-type structure on $\C$ is given entirely in terms of the CwF structure on $\presheaf{\C}$ and its fibrant universe.

\Cref{def:pi-structure} can be stated more explictly: giving the term \ref{eq:pi-structure-pi} is the same as giving a natural transformation $\pi : \Ty^{(2)} \to \Ty$, and, thanks to \cref{cor:tm-ext}, the isomorphism \ref{eq:pi-structure-iso} is equivalent to an isomorphism: \begin{equation}\label{eq:pi-structure-iso2} \Tm_\Gamma(\pi_\Gamma(A, B)) \cong \Tm_{\Gamma.A}(B).  \end{equation}

It is then easy to verify that $\C$ supports $\Pi$-types (\cref{def:pi-structure-standard}) if and only if it has a $\Pi$-type structure.  In particular, we get the following:

\begin{prop}
For any category $\C$, the presheaf category $\presheaf\C$ is equipped with a canonical $\Pi$-type structure.
\end{prop}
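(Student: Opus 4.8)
The plan is to sidestep a direct verification of \cref{def:pi-structure} for $\presheaf\C$ --- which, taken literally, would force us to work in the doubly iterated presheaf category $\presheaf{\presheaf\C}$ --- and instead to invoke the equivalence, recorded just above the statement, between a CwF supporting $\Pi$-types in the elementary sense of \cref{def:pi-structure-standard} and its carrying a $\Pi$-type structure in the sense of \cref{def:pi-structure}. It therefore suffices to show that the CwF $\presheaf\C$ supports $\Pi$-types in the sense of \cref{def:pi-structure-standard}, and almost all of the data required has already been constructed in this section.

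Concretely, for a presheaf $P$ with $A : \Ty(P)$ and $B : \Ty(P.A)$, I would take $\pi(A,B) :\equiv \Pi_A B$, the type over $P$ cut out by the pullback square \cref{eq:presheaf-pi-pullback}. Lambda abstraction and its inverse are then supplied by the natural isomorphism \cref{eq:lambda-abstraction} arising from the adjunction of \cref{lem:pi-adjunction}, and application is $f \cdot a :\equiv \lambda^{-1}(f)[a]$, exactly as in the discussion following that lemma. The two computation rules $\lambda(b) \cdot a = b[a]$ and $\lambda(f \cdot v_A) = f$ then unfold into the single statement that $\lambda$ and $\lambda^{-1}$ are mutually inverse, so they hold by construction, while the term-level stability equations $(\lambda b)[\tau] = \lambda(b[\tau])$ and $(f \cdot a)[\tau] = f[\tau] \cdot a[\tau]$ follow from the naturality of \cref{eq:lambda-abstraction} in $P$.

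The one equation that deserves genuine attention --- and which I expect to be the crux --- is the type-level substitution law $\pi(A,B)[\tau] = \pi(A[\tau], B[\tau^+])$, i.e. strict stability of $\Pi_A B$ under reindexing. This is the Beck--Chevalley condition for dependent products, which in a general locally cartesian closed category holds only up to canonical isomorphism, whereas a $\Pi$-type \emph{structure} demands that $\pi$ be a genuinely (strictly) natural transformation $\Ty^{(2)} \to \Ty$. The point that makes this work is that substitution on $\Ty$ is, by \cref{prop:presheaf-families}, literally reindexing of presheaves along the functor $\int Q \to \int P$ induced by $\tau$, and the exponential appearing in \cref{eq:presheaf-pi-pullback} is computed locally over $\int P$; checking that this local formula commutes with precomposition is a routine pointwise verification that yields strictness on the nose. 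Once that is in place, $\pi$ assembles into a natural transformation $\Ty^{(2)} \to \Ty$, the required isomorphism \cref{eq:pi-structure-iso} is exactly \cref{eq:lambda-abstraction} read through the reformulation \cref{eq:pi-structure-iso2}, and the resulting structure is canonical because every choice made above is forced by the universal property of the dependent product.
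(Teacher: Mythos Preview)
You have correctly identified the crux of the argument: the substitution law $\pi(A,B)[\tau] = \pi(A[\tau],B[\tau^+])$ is the only non-formal point, and everything else does indeed follow from \cref{lem:pi-adjunction} and \cref{eq:lambda-abstraction}. However, your resolution of that point is where the proposal fails. You take $\pi(A,B) :\equiv \Pi_A B$ as defined by the pullback \cref{eq:presheaf-pi-pullback} and assert that strict stability under reindexing is a ``routine pointwise verification''. It is not: the paper explicitly observes that $\Pi$, regarded as a family of maps $\presheaf\Ty^{(2)}(P) \to \presheaf\Ty(P)$, is only \emph{pseudonatural} in $P$. The reason is that the exponential $(\Sigma_A B)^A$ in the presheaf category $\bTy(P) = \presheaf{\int P}$ is \emph{not} computed pointwise; its value at $(\Gamma,x)$ involves the representable $y(\Gamma,x)$ of $\int P$, and after precomposing along $\int\tau : \int Q \to \int P$ one does not literally obtain the exponential computed in $\presheaf{\int Q}$, only something canonically isomorphic to it. So the Beck--Chevalley condition here holds up to isomorphism, exactly as in a general locally cartesian closed category, and not on the nose.

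The paper's fix is the standard ``local splitting'' manoeuvre: rather than using $\Pi_A B$ directly, one sets
\[
\pi(A,B)_\Gamma(x) :\equiv \bigl(\Pi_{A[x]} B[x^+]\bigr)_\Gamma(\id),
\]
i.e.\ one first pulls $A$ and $B$ back along the point $x : y(\Gamma) \to P$, forms the $\Pi$-type there, and then evaluates at the identity. Because substitution in $\Ty$ acts by precomposing the argument $x$, this definition is strictly natural by construction, and pseudonaturality of $\Pi$ gives $\pi(A,B) \cong \Pi_A B$, from which the isomorphism \cref{eq:pi-structure-iso} follows via $\lambda$. Your proof would go through once you replace the direct choice $\pi :\equiv \Pi$ by this strictified version; the rest of your outline is fine.
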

\begin{proof}
It looks like one could simply take $\pi$ to be the $\Pi$ operation on presheaves.  However, $\Pi$, regarded as a family of functions $\presheaf\Ty^{(2)}(\Gamma) \to \presheaf\Ty(\Gamma)$, is not natural in $\Gamma$.

In fact, keeping in mind that $\presheaf\Ty(\Gamma)$ is a \emph{category}, and not just a set, one would only be able to prove that $\Pi$ is a \emph{pseudonatural} transformation of functors $\C \to \op{\cat{Cat}}$.  Fortunately, there is a way to give an alternative equivalent definition of $\Pi$ that is indeed strictly natural.

Let $P : \presheaf\C$, $A : \presheaf\Ty(P)$, and $B : \presheaf\Ty(P.A)$.  We will define $\pi(A,B)$ as a functor $\int P \to \op{\set}$.  For $(\Gamma,x) : \int P$, we will write $x : y(\Gamma) \to P$ for the morphism corresponding to $x$ through the isomorphism of the Yoneda lemma.  Then set:
$$
\pi(A,B)_\Gamma(x) :\equiv \left(\Pi_{A[x]}B[x^+]\right)_\Gamma(\id).
$$

Pseudonaturality of $\Pi$ implies that $\pi(A,B) \cong \Pi_A B$.  Furthermore, it is easy to check directly that $\pi : \Ty^{(2)} \to \Ty$ is (strictly!) a natural transformation.

The isomorphism \cref{eq:pi-structure-iso2} can now be obtained from $\lambda$ abstraction for $\Pi$, and the fact that $\Pi$ and $\pi$ are pointwise isomorphic.
\end{proof}

\begin{defn}\label{def:sigma-structure}
A $\Sigma$-type structure on $\C$ is given by:
\begin{itemize}
\item a term
\begin{equation}\label{eq:sigma-structure-sigma}
\sigma : \presheaf{\Tm}_{\Ty^{(2)}}(\Ty),
\end{equation}
\item an isomorphism
\begin{equation}\label{eq:sigma-structure-iso}
\Tm[\sigma] \cong \Sigma_{a : \Tm[\b A]} \Tm[\b B\ a]
\end{equation}
of types over $\Ty^{(2)}$.
\end{itemize}
\end{defn}

Like in the case of $\Pi$-type structures, $\Sigma$-type structures have a more direct characterisation: giving a $\Sigma$-type structure on $\C$ is the same as giving a natural transformation $\sigma : \Ty^{(2)} \to \Ty$, together with a natural isomorphism between $\Tm_\Gamma(\sigma_\Gamma(A, B))$ and the set of pairs $(a,b)$, where $a : \Tm_\Gamma(A)$ and $b : \Tm_\Gamma(B[a])$.  Clearly, this is just a reformulation of \cref{def:sigma-structure-standard}, hence $\C$ supports $\Sigma$-types if and only if it has a $\Sigma$-type structure.

From this characterisation, we get:

\begin{prop}
For any category $\C$, the presheaf category $\presheaf\C$ is equipped with a canonical $\Sigma$-type structure.
\end{prop}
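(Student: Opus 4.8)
The plan is to mirror the proof just given for $\Pi$-type structures almost verbatim. By the characterisation preceding the statement, it suffices to construct a strictly natural transformation $\sigma : \Ty^{(2)} \to \Ty$ together with a natural isomorphism identifying $\Tm_\Gamma(\sigma_\Gamma(A,B))$ with the set of pairs $(a,b)$ where $a : \Tm_\Gamma(A)$ and $b : \Tm_\Gamma(B[a])$. The evident candidate for $\sigma$ is the presheaf-level operation $\Sigma_A B$ constructed earlier, which is left adjoint to substitution along $\pi_A$. As in the $\Pi$ case, however, this operation---viewed as a family of maps $\presheaf\Ty^{(2)}(\Gamma) \to \presheaf\Ty(\Gamma)$---is not strictly natural in $\Gamma$: since $\presheaf\Ty(\Gamma)$ is a category rather than a mere set, substitution preserves $\Sigma_A B$ only up to coherent isomorphism, so one obtains at best a pseudonatural transformation of functors $\C \to \op{\cat{Cat}}$.

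To repair this I would apply the same rigidification used for $\Pi$. For $P : \presheaf\C$, $A : \presheaf\Ty(P)$ and $B : \presheaf\Ty(P.A)$, I would define $\sigma(A,B)$ as a functor $\int P \to \op\set$ by setting, for each $(\Gamma, x) : \int P$---where $x : y(\Gamma) \to P$ is the morphism corresponding to $x$ under the Yoneda lemma---
$$
\sigma(A,B)_\Gamma(x) :\equiv \left(\Sigma_{A[x]} B[x^+]\right)_\Gamma(\id).
$$
Pseudonaturality of the presheaf $\Sigma$ then yields a pointwise isomorphism $\sigma(A,B) \cong \Sigma_A B$, and a direct computation---identical in shape to the one carried out for $\pi$---shows that the rigidified $\sigma : \Ty^{(2)} \to \Ty$ is now strictly natural in $\Gamma$.

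It remains to produce the natural isomorphism on terms. This follows by combining the pointwise isomorphism $\sigma(A,B) \cong \Sigma_A B$ with the description of the terms of $\Sigma_A B$, which comes from the isomorphism of total spaces $P.\Sigma_A B \cong P.A.B$ together with the adjunction $\Sigma_A \dashv (-)[\pi_A]$; these together exhibit a term of $\Sigma_A B$ as exactly a pair $(a,b)$ with $a : \Tm_P(A)$ and $b : \Tm_P(B[a])$.

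The main obstacle, as for $\Pi$, is the strictness issue: the naively defined $\Sigma$ is only pseudonatural, and the crux of the argument is verifying that the ``evaluate at the identity'' redefinition above restores strict naturality while remaining pointwise isomorphic to the original operation. Everything else is a routine transport along the equivalence of \cref{prop:presheaf-families}.
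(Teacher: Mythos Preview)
Your approach is correct but does more work than necessary. The paper's proof is a one-liner: it takes $\sigma$ to be the $\Sigma$ operation on presheaves \emph{as originally defined}, observing that in this case it is already strictly natural, so no rigidification is required.

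The reason for the asymmetry with $\Pi$ is worth noting. The presheaf $\Pi_A B$ is defined via a pullback involving exponentials (diagram~\ref{eq:presheaf-pi-pullback}), and such universal constructions are only preserved by substitution up to canonical isomorphism; this is what forces the ``evaluate at the identity over a representable'' trick. By contrast, $\Sigma_A B$ is defined by an explicit pointwise formula: unwinding the equivalence of \cref{prop:presheaf-families}, one has
\[
(\Sigma_A B)_{(\Gamma,x)} \;=\; \bigl\{(a,b)\;\big|\;a \in A_{(\Gamma,x)},\ b \in B_{(\Gamma,(x,a))}\bigr\},
\]
and since substitution along $\tau : Q \to P$ acts by precomposition with the induced functor $\int Q \to \int P$, a direct computation shows $(\Sigma_A B)[\tau] = \Sigma_{A[\tau]} B[\tau^+]$ on the nose. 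The required isomorphism on terms then follows immediately from this description.

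So your rigidified $\sigma$ is isomorphic to the original $\Sigma$ and would serve equally well, but the detour is unnecessary: your premise that ``this operation is not strictly natural in $\Gamma$'' is false for $\Sigma$.
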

\begin{proof}
The morphism $\sigma : \Ty^{(2)} \to \Ty$ can now be taken to be the $\Sigma$ operation on presheaves, which in this case is automatically natural. The required isomorphism follows directly from the definition of $\Sigma$.
\end{proof}

\begin{defn}\label{def:equality-structure}
An equality type structure on $\C$ is given by:
\begin{itemize}
\item a term
\begin{equation}\label{eq:equality-structure-eq}
\eq : \presheaf{\Tm}_{(A : \Ty).\Tm[A].\Tm[A]}(\Ty),
\end{equation}
\item an isomorphism
\begin{equation}\label{eq:equality-structure-iso}
\Tm[\eq] \cong \Eq_{\Tm[A]}
\end{equation}
of types over $(A:\Ty).\Tm[A].\Tm[A]$.
\end{itemize}
\end{defn}

By \cref{cor:ty-ext}, a term like $\eq$ in \cref{def:equality-structure} is given by a map that assigns, to every $A : \Ty(\Gamma)$ a type $\eq(A) : \Ty(\Gamma.A.A)$, naturally in $(\Gamma,A)$.

Isomorphism \cref{eq:equality-structure-iso} is equivalent to an isomorphism between sections of the morphism $\Gamma.A \to \Gamma$ (display map of $A$), and of the morphism $\Gamma.A.A.\eq(A) \to \Gamma$ (composition of display maps).

\begin{prop}\label{prop:presheaf-equality-structure}
For any category $\C$, the presheaf category $\presheaf\C$ is equipped with a canonical equality type structure.
\end{prop}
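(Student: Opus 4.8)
The plan is to follow the pattern of the proofs for $\Sigma$- and $\Pi$-type structures, taking the equality operation $\Eq$ defined at the end of \cref{sec:presheaves} as the canonical choice. For a context $P$ and a presheaf type $A$ over $P$, I would set $\eq(A) :\equiv \Eq_A$, the subterminal type over $P.A.A$ arising from the diagonal $A \to A \times A$ through \cref{prop:presheaf-families}. By \cref{cor:ty-ext} and the reformulation recorded just after \cref{def:equality-structure}, producing the term $\eq$ reduces to showing that $A \mapsto \eq(A)$ is natural in $(P,A)$, that is, strictly compatible with substitution.

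First I would verify this naturality. In contrast with the dependent product, whose formation uses a right adjoint and is therefore only pseudonatural --- the obstruction that forced the ad hoc redefinition in the proof for $\Pi$-type structures --- the equality type is assembled solely from binary products and the diagonal. The substitution functors $-[\sigma]\colon \bTy(P) \to \bTy(Q)$ of a presheaf CwF are precomposition functors, hence preserve finite products and diagonals on the nose; together with the strict behaviour of context extension under substitution (the same fact that makes $\Sigma$ automatically natural) this gives $\Eq_A[\sigma^{++}] = \Eq_{A[\sigma]}$ strictly, after transporting along the canonical identification $P.(A \times A) \cong P.A.A$. This furnishes a genuine term $\eq$ of the universe.

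Next I would establish the isomorphism \cref{eq:equality-structure-iso}. Since $\eq$ is by construction the name of $\Eq$, its decoding is $\Eq_{\Tm[A]}$, so the isomorphism is essentially definitional; equivalently, I would check the induced bijection between sections of $P.A \to P$ and sections of the composite $P.A.A.\eq(A) \to P$. A section of the latter is a triple $(a,b,p)$ with $a,b$ terms of $A$ and $p$ a term of $\Eq_A[a,b]$. By \cref{lem:eq-subterminal}, $\Eq_A$ is subterminal, so $\Eq_A[a,b]$ admits at most one section, and by the equaliser characterisation underlying $\Eq$ it admits one exactly when $a = b$, in which case it is inhabited by reflexivity. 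Thus such triples correspond bijectively to terms $a$ of $A$ via $a \mapsto (a,a,\refl)$, which is the required isomorphism.

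The main obstacle is the strictness bookkeeping in the first step: ensuring that the chain of equivalences defining $\Eq_A$ --- two invocations of \cref{prop:presheaf-families} together with the identification $P.(A \times A) \cong P.A.A$ --- commutes with reindexing on the nose rather than merely up to coherent isomorphism. I expect this to succeed precisely because every ingredient is a product-based, \emph{absolute} construction preserved strictly by precomposition, unlike the $\Pi$ case; and should strict naturality nevertheless fail at some step, the device used there, namely defining the operation pointwise on representables to force strictness, transfers without change.
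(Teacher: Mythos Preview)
Your primary approach --- arguing that $\Eq$ is already strictly natural because it is built from products and diagonals --- is precisely what the paper says does \emph{not} work. The paper's proof opens with: ``As for $\Pi$ and $\Sigma$, we want to define $\eq$ using the $\Eq$ operation on presheaves, but once again we have the problem that $\Eq$, as defined, is not strictly natural.'' The obstruction is exactly the one you flagged as the ``main obstacle'': the construction of $\Eq_A$ passes through the equivalence $\Phi$ of \cref{prop:presheaf-families} (twice) and the identification $P.(A\times A)\cong P.A.A$, and these are only equivalences/isomorphisms, not strict equalities. So your optimism that the absence of a right adjoint makes everything strict is misplaced; the non-strictness enters through $\Phi$, not through the type former itself.

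Your fallback --- the pointwise-on-representables trick from the $\Pi$ case --- would indeed work, so the proposal is salvageable. But the paper takes a different and more economical route: it exploits \cref{lem:eq-subterminal} directly. Since $\Eq_A$ is subterminal in $\bTy(P.A.A)$, one defines $\eq(A)$ to be the \emph{image} of the unique map $\Eq(A)\to 1$. This image is a canonically chosen subobject of $1$, automatically isomorphic to $\Eq(A)$, and strictly natural because the image-in-$1$ construction is. The paper also remarks that this is the constructively correct formulation of what one might na\"ively write as ``$1$ if $a=a'$, $0$ otherwise''. Compared to your fallback, the image trick is specific to subterminal types but avoids the Yoneda machinery entirely; your approach is more uniform across type formers but heavier for this particular case.
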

\begin{proof}
As for $\Pi$ and $\Sigma$, we want to define $\eq$ using the $\Eq$ operation on presheaves, but once again we have the problem that $\Eq$, as defined, is not strictly natural.  However, thanks to \cref{lem:eq-subterminal}, we can easily define a stricter version of $\Eq$.

For $P : \presheaf\C$, and $A : \presheaf\Ty(P)$, let $\eq(A)$ be the image of the unique map $\Eq(A) \to 1$ in $\presheaf\bTy(P.A.A)$.  Since $\Eq(A)$ is subterminal by \cref{lem:eq-subterminal}, it follows that $\eq(A) \cong \Eq(A)$, and $\eq$ is clearly natural in $A$.

The required isomorphism is now easy to construct.
\end{proof}

The construction in \cref{prop:presheaf-equality-structure} may appear more involved than necessary, since one might be tempted to simply define $\eq$ as:
\begin{equation}\label{eq:nonconstructive-eq}
\eq(A)_\Gamma(x, a, a') = \left\{
\begin{aligned}
& 1 \qquad \mathrm{if}\ a = a' \\
& 0 \qquad \mathrm{otherwise}.
\end{aligned}
\right.
\end{equation}

However, a definition like \cref{eq:nonconstructive-eq} presumes that we are able to decide the equality of arbitrary functions.  Classically, \cref{eq:nonconstructive-eq} is equivalent to the definition given in \cref{prop:presheaf-equality-structure}, but the way we phrased it makes it valid in a constructive setting as well.

Similarly to $\Pi$ and $\Sigma$-type structures, the existence of an equality structure is equivalent to the fact that $\C$ supports equality structures (\cref{def:equality-structure-standard}).

Finally, we will define one last structure. This one is fortunately much simpler than the previous three.

\begin{defn}\label{def:unit-structure}
A \emph{unit type structure} on $\C$ is given by:
\begin{itemize}
\item a term
\begin{equation}\label{eq:unit-structure-eq}
u : \presheaf\Tm_1(\Ty)
\end{equation}
\item an isomorphism
\begin{equation}\label{eq:unit-structure-iso}
\Tm[u] \cong 1
\end{equation}
of types in the unit context.
\end{itemize}
\end{defn}

And correspondingly:

\begin{prop}
For any category $\C$, the presheaf category $\presheaf\C$ is equipped with a canonical unit type structure.
\end{prop}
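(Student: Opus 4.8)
The plan is to take the unit type of $\presheaf\C$ to be the terminal type, exploiting the fact, recalled earlier, that each $\bTy(P) = \presheaf{\int P}$ is a presheaf category and hence has a terminal object, which is moreover strictly stable under substitution. I would fix once and for all a canonical singleton set (for instance the ordinal $1$ of $\set$), and for every context $P : \presheaf\C$ let $1_P : \Ty(P)$ be the presheaf on $\int P$ that is constantly this singleton; this is the terminal object of $\bTy(P)$. Since substitution along $\sigma : \presheaf\C(Q, P)$ acts by precomposition with $\int\sigma$, it carries the constant-singleton presheaf on $\int P$ to the constant-singleton presheaf on $\int Q$, so that $1_P[\sigma] = 1_Q$ holds on the nose. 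Thus the assignment $P \mapsto 1_P$ is strictly natural, and no rigidification of the kind required for $\Pi$ (or for $\Eq$, via \cref{lem:eq-subterminal}) is needed here, which is precisely why this structure is simpler than the preceding three.

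I would then verify the two pieces of \cref{def:unit-structure}. For the term $u : \presheaf{\Tm}_1(\Ty)$, I would take the universe term naming the terminal type $1_1$ in the unit context; such a term is available because $\Ty$ classifies the types of $\presheaf\C$, so every type over the unit context, the terminal one included, is named by a term of $\Ty$. With this choice $\Tm[u]$ is, by construction, the type $1_1$, which supplies the required isomorphism $\Tm[u] \cong 1$ (indeed an equality, with the canonical singleton fixed above). To see that this agrees with the elementary formulation of \cref{def:unit-structure-standard}, I would note that a term of $1_1$ is by definition an element of $\bTy(1)(1, 1_1)$, and this homset is a singleton because $1_1$ is terminal; hence $1_1$ has a unique term.

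There is essentially no obstacle. The only two points that call for a moment's care are, first, to pin down a single canonical terminal object in each $\bTy(P)$, rather than an arbitrary one, so that stability under substitution is a strict equality rather than a mere isomorphism; and second, to observe that the terminal type does lie in the image of the universe, so that the naming term $u$ genuinely exists. Both are immediate from the presheaf description of $\bTy(P)$ and of the universe, so the construction goes through at once.
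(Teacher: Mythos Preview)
Your proof is correct and follows the same approach as the paper: take $u$ to be the terminal presheaf $1$, and the isomorphism $\Tm[u] \cong 1$ is immediate from terminality. The paper's proof is essentially a one-liner to this effect; your version simply spells out more of the bookkeeping (the strict stability of the constant-singleton presheaf under substitution, and the unique-term verification). One small remark: your phrasing about the terminal type ``lying in the image of the universe'' is slightly off --- here $\Ty$ is not an internal universe with a coding map but the type presheaf itself, so there is no image to check; every type over the unit context \emph{is} tautologically an element of $\presheaf\Ty(1)$, and that element is $u$.
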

\begin{proof}
The type $u$ can be set to the unit presheaf $1$. The required isomorphism obviously follows from the fact that $1$ is terminal.
\end{proof}

Again, unit type structures and the existence of unit types (\cref{def:unit-structure-standard}) are equivalent.

\subsection{Morphisms}\label{sec:basic-type-former-morphisms}

Given a morphism $F : \C \to \D$ between CwFs, if $\C$ and $\D$ are equipped with one of the structures defined in \cref{sec:basic-type-formers}, we can ask whether $F$ \emph{preserves} those structures.

\begin{defn}\label{def:related-pairs}
Let $\Gamma : \C$, $(A,B) : \Ty^{(2)}_\Gamma$ and $(A', B') : \Ty^{(2)}_{F\Gamma} $.  We say that $(A,B)$ and $(A',B')$ are $F$-related if:
\begin{itemize}
\item $FA = A'$
\item for all $(\Delta,\sigma) : \C/\Gamma$, and all terms $a : \Tm_\Delta(A[\sigma])$, we have that $F(B(a)) = B'(Fa)$.
\end{itemize}
\end{defn}

The following is a direct consequence of \cref{def:related-pairs}:

\begin{lem}\label{lem:related-pairs}
Two pairs $(A,B)$ and $(A',B')$ as in \cref{def:related-pairs} are $F$-related if and only if:
\begin{itemize}
\item $FA = A'$
\item $\phi^F_A(F\widetilde B) = \widetilde B'$, where $\phi^F_A$ is as in \cref{def:cwf-morphism}, $\widetilde B$ is the type in $\Ty(\Gamma.A)$ corresponding to $B$ through the isomorphism of \cref{cor:ty-ext}, and $\widetilde B'$ is defined similarly.
\end{itemize}
\end{lem}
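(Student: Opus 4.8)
The plan is to rephrase the universally quantified condition in the second clause of \cref{def:related-pairs} as a single equation at the generic term and then translate that equation across the correspondence of \cref{cor:ty-ext}. Since the condition $FA = A'$ occurs verbatim in both statements, everything reduces to comparing the two second clauses. First I would recall that, through \cref{cor:ty-ext}, the term $B$ corresponds to $\widetilde B : \Ty(\Gamma.A)$, and, more generally, that application is computed by substitution: for every $(\Delta,\sigma) : \C/\Gamma$ and every $a : \Tm_\Delta(A[\sigma])$ one has $B(a) = \widetilde B[\langle\sigma,a\rangle]$, where $\langle\sigma,a\rangle : \Delta \to \Gamma.A$ is the morphism classifying $a$. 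Specialising to the generic term $(\Gamma.A,\pi_A,v_A)$, for which $\langle\pi_A,v_A\rangle = \id$, gives $B(v_A) = \widetilde B$, and likewise $B'(v_{A'}) = \widetilde B'$.

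The next step is to establish the compatibility of the isomorphism $\phi^F_A$ of \cref{def:cwf-morphism} with the two pairing operations, namely that for all $a$ as above
\[
\phi^F_A \circ F\langle\sigma,a\rangle = \langle F\sigma, Fa\rangle,
\]
the right-hand side being formed in $\D$. Instantiating at the generic term collapses the left-hand side to $\phi^F_A$ itself (because $F\id = \id$), so that $\phi^F_A = \langle F\pi_A, Fv_A\rangle$; in other words $Fv_A$ is the variable $v_{A'}$ reindexed along $\phi^F_A$. Consequently $B'(Fv_A) = \widetilde B'[\phi^F_A]$, where here $\phi^F_A(-)$ denotes transport along the isomorphism $\phi^F_A$, i.e. reindexing along its inverse.

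With these two facts the equivalence is a short calculation. For the forward direction I would instantiate \cref{def:related-pairs} at the generic term to obtain $F(B(v_A)) = B'(Fv_A)$, which by the above reads $F\widetilde B = \widetilde B'[\phi^F_A]$; transporting along $\phi^F_A$ yields exactly $\phi^F_A(F\widetilde B) = \widetilde B'$. For the converse, assuming $\phi^F_A(F\widetilde B) = \widetilde B'$, hence $F\widetilde B = \widetilde B'[\phi^F_A]$, I would compute for an arbitrary $a$, using that $F$ preserves substitution of types and the compatibility identity,
\[
F(B(a)) = (F\widetilde B)[F\langle\sigma,a\rangle] = \widetilde B'[\phi^F_A \circ F\langle\sigma,a\rangle] = \widetilde B'[\langle F\sigma, Fa\rangle] = B'(Fa),
\]
which is precisely the condition of \cref{def:related-pairs}.

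I expect the main obstacle to be the verification of the compatibility identity $\phi^F_A \circ F\langle\sigma,a\rangle = \langle F\sigma, Fa\rangle$: this is the only point where the concrete definition of $\phi^F_A$ from \cref{def:cwf-morphism} (built from $F^\Tm(v_A)$ and the universal property of context extension) genuinely enters, and where one must keep the direction of reindexing along the isomorphism $\phi^F_A$ straight. Once this identity and the application-as-substitution formula $B(a) = \widetilde B[\langle\sigma,a\rangle]$ are in place, the remainder of the argument is purely formal.
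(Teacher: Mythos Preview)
Your proposal is correct. The paper does not actually prove this lemma: it is introduced with the sentence ``The following is a direct consequence of \cref{def:related-pairs}'' and no argument is given. Your write-up supplies precisely the unwinding that justifies this claim, via the application-as-substitution formula $B(a) = \widetilde B[\langle\sigma,a\rangle]$ and the compatibility identity $\phi^F_A \circ F\langle\sigma,a\rangle = \langle F\sigma, Fa\rangle$, both of which are the expected ingredients and both of which you handle correctly (the latter follows from $\phi^F_A = \langle F p_A, F^{\Tm} v_A\rangle$ together with naturality of $F^{\Tm}$). There is nothing to compare against on the paper's side; your argument is simply a fleshed-out version of what the paper leaves implicit.
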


In particular, for all pairs $(A,B)$ in $\C$ there is exactly one pair $(A',B')$ in $\D$ that is related to it.  The advantage of formulating the following definitions in terms of related pairs rather than using the characterisation of \cref{lem:related-pairs} directly is that we need no mention of context extension.

\newcommand{\piAB}{\Pi_{a : \Tm[\b A]} \Tm[\b B\ a]}
\newcommand{\sigmaAB}{\Sigma_{a : \Tm[\b A]} \Tm[\b B\ a]}

\begin{defn}\label{def:related-functions}
Let $(A,B)$ and $(A',B')$ be $F$-related pairs, $u : (\piAB)_\Gamma(A, B)$ and $u' : (\piAB)_{F\Gamma}(A', B')$.  We say that $u$ and $u'$ are $F$-related if for all $(\Delta,\sigma) : \C/\Gamma$, and all terms $a : \Tm_\Delta(A[\sigma])$, we have that $F(u(a)) = u'(Fa)$.
\end{defn}

Note that the equality between $F(u(a))$ and $u'(Fa)$ in \cref{def:related-functions} makes sense because $(A,B)$ and $(A',B')$ are themselves related.

\begin{defn}\label{def:preservation-pi}
Suppose both $\C$ and $\D$ are equipped with $\Pi$-type structures. We say that $F$ \emph{preserves} $\Pi$-types if, for all related pairs $(A,B)$ and $(A',B')$:
\begin{itemize}
\item $F(\pi(A,B)) = \pi(A', B')$,
\item for all terms $f : \Tm_\Gamma(\pi(A, B))$, the element of $(\piAB)_\Gamma(A, B)$ corresponding to $f$ through the $\Pi$-type structure on $\C$ is related to the element of $(\piAB)_{F\Gamma}(A', B')$ corresponding to $Ff$ through the $\Pi$-type structure on $\D$.
\end{itemize}
\end{defn}

The definition of preservation of $\Sigma$-types is similar, but simpler, because we don't need to define a notion of relatedness for elements of $\sigmaAB$, as we can simply map them using $F$ directly:

\begin{defn}
Suppose both $\C$ and $\D$ are equipped with $\Sigma$-type structures. We say that $F$ \emph{preserves} $\Sigma$-types if, for all related pairs $(A,B)$ and $(A',B')$:
\begin{itemize}
\item $F(\sigma(A,B)) = \sigma(A', B')$,
\item the following diagram commutes:
$$
\xymatrix{
\Tm_\Gamma(\sigma(A,B)) \ar[r]^-\cong \ar[d]_F &
(\sigmaAB)_\Gamma(A,B) \ar[d]^F \\
\Tm_{F\Gamma}(\sigma(A', B')) \ar[r]^-\cong &
(\sigmaAB)_{F\Gamma}(A',B'),
}
$$
where the horizontal arrows are the isomorphisms given by the $\Sigma$-type structures on $\C$ and $\D$ respectively.
\end{itemize}
\end{defn}

For equality types, the definition is entirely analogous:

\begin{defn}
Suppose both $\C$ and $\D$ are equipped with equality type structures. We say that $F$ \emph{preserves equality} if, for all $\Gamma : \C$, $A : \Ty(\Gamma)$:
\begin{itemize}
\item $F(\eq(A, a, b)) = \eq(FA, Fa, Fb)$,
\item the following diagram commutes:
$$
\xymatrix{
\Tm_\Gamma(\eq_\Gamma(A, a, b)) \ar[r]^-\cong \ar[d]_F &
(\Eq_{\Tm[\b A]})_\Gamma(A, a, b) \ar[d]^F \\
\Tm_{F\Gamma}(\eq_{F\Gamma}(FA, Fa, Fb)) \ar[r]_-\cong &
(\Eq_{\Tm[\b A]})_{F\Gamma}(FA, Fa, Fb)
}
$$
\end{itemize}
\end{defn}

Finally, we say that $F$ preserves the unit type simply if $Fu = u$ over the unit context.

Replacing equality with isomorphism in the above definitions yields the notions of \emph{weak preservation} of the various type structures.

\begin{remark}\label{remark:application-through-morphism}
Let $F : \C \to \D$ be a CwF morphism. Suppose $\C$ is equipped with a $\Pi$ structure.  Then the application morphism $\epsilon_{A,B} : \Gamma.A.\Pi_A B \to \Gamma.A.B$ can be mapped to $\D$ through $F$, which implies that we can apply terms of type $F(\Pi_A B)$ to terms of type $F A$, even though $\D$ might not even have a $\Pi$-type structure.
\end{remark}

\subsection{The Yoneda embedding for CwFs}\label{sec:yoneda}

If $\C$ is a CwF, the Yoneda embedding $y : \C \to \presheaf\C$ is a functor between CwFs, so it is natural to ask whether it can be extended to a CwF morphism.

\begin{defn}\label{def:y0}
Let $\Gamma : \C$ be a context, and $A : \Ty(\Gamma)$ a type over $\Gamma$.  Define the presheaf $y_0(A) : \presheaf\Ty(y\Gamma)$ as follows:
$$
y_0(A)_\Delta(\sigma) :\equiv \Tm_\Delta(A[\sigma]).
$$
\end{defn}

\begin{prop}\label{prop:y0-extension}
For all $\Gamma : \C$ and $A : \Ty(\Gamma)$, there is a natural isomorphism
$$
y(\Gamma).y_0(A) \cong y(\Gamma.A)
$$
over $y(\Gamma)$.
\end{prop}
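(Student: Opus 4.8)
The plan is to unfold both sides of the claimed isomorphism into concrete presheaves on $\C$, evaluate them at an arbitrary object $\Delta : \C$, and then recognise the resulting pointwise bijection as nothing more than the defining representability of context extension in $\C$. This also records the key step towards extending $y$ to a CwF morphism.

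First I would compute the left-hand side pointwise. By \cref{def:y0}, under the equivalence $\presheaf\C/y(\Gamma) \cong \presheaf{\int y(\Gamma)}$ of \cref{prop:presheaf-families} and the identification $\int y(\Gamma) \cong \C/\Gamma$, the type $y_0(A)$ is the presheaf on $\C/\Gamma$ sending $(\Delta,\sigma)$ to $\Tm_\Delta(A[\sigma])$. The context extension $y(\Gamma).y_0(A)$ in $\presheaf\C$ is by construction the total space of this type, so applying the explicit description of the equivalence in \cref{prop:presheaf-families} — where the object of $\presheaf\C/P$ corresponding to a presheaf $F$ over $\int P$ has value at $\Delta$ the set of pairs $(x,y)$ with $x \in P_\Delta$ and $y \in F(\Delta,x)$ — I obtain that $(y(\Gamma).y_0(A))_\Delta$ is the set of pairs $(\sigma,t)$ with $\sigma \in \C(\Delta,\Gamma)$ and $t \in \Tm_\Delta(A[\sigma])$, the display map $\pi_{y_0(A)}$ projecting such a pair onto $\sigma$.

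Next, the right-hand side at $\Delta$ is simply $\C(\Delta,\Gamma.A)$, with the map to $y(\Gamma)$ given by postcomposition $f \mapsto \pi_A \circ f$. I would split $\C(\Delta,\Gamma.A)$ as the coproduct over $\sigma \in \C(\Delta,\Gamma)$ of the morphisms $f : \Delta \to \Gamma.A$ with $\pi_A \circ f = \sigma$; these are exactly the morphisms $(\Delta,\sigma) \to (\Gamma.A,\pi_A)$ in $\C/\Gamma$. The defining property of context extension in \cref{def:cwf}, namely the representability of the functor \cref{eq:ext-repr-functor}, provides a natural bijection between this hom-set and $\Tm_\Delta(A[\sigma])$. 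Assembling over all $\sigma$ yields a bijection $\C(\Delta,\Gamma.A) \cong \{(\sigma,t)\}$ matching the pointwise description of the left-hand side, and by construction it carries $f$ to $(\pi_A \circ f, -)$, so it commutes with the two projections to $\C(\Delta,\Gamma)$; hence the isomorphism is over $y(\Gamma)$.

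Finally I would check naturality in $\Delta$: the presheaf structures of both $y(\Gamma).y_0(A)$ and $y(\Gamma.A)$ act by precomposition, and the representing isomorphism of \cref{def:cwf} is natural in $(\Delta,\sigma)$, so the assembled bijection is natural as well. The part I expect to require the most care is the bookkeeping through \cref{prop:presheaf-families}: one must track precisely how the fibrewise description of $y_0(A)$ and the pairing description of the total space interact with the identification $\int y(\Gamma) \cong \C/\Gamma$, so that the pointwise bijection is seen to be simultaneously natural and compatible with the display maps. Once the two sides are matched in this way, no genuine computation remains — the entire content of the statement is the representability built into \cref{def:cwf}.
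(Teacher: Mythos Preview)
Your proposal is correct and is precisely an unpacking of the paper's one-line proof, which reads ``Immediate consequence of the defining isomorphism of context extension.'' You have simply made explicit the pointwise computation that shows both sides at $\Delta$ are naturally in bijection with pairs $(\sigma,t)$ with $\sigma:\C(\Delta,\Gamma)$ and $t:\Tm_\Delta(A[\sigma])$, via the representability built into \cref{def:cwf}; this is exactly what the paper's terse remark is gesturing at.
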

\begin{proof}
Immediate consequence of the defining isomorphism of context extension.
\end{proof}

\begin{lem}\label{lem:y0-terms}
For all $\Gamma : \C$ and $A : \Ty(\Gamma)$, we have:
$$
\Tm_\Gamma(A) \cong \presheaf\Tm_{y\Gamma}(y_0 A).
$$
\end{lem}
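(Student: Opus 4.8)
The plan is to build the isomorphism as a short chain of canonical identifications, using the representability results already established rather than unwinding the definition of $y_0 A$ as a presheaf of terms directly. The key observation is that, by \cref{prop:tm-sections} applied to the CwF $\presheaf\C$, the set $\presheaf\Tm_{y\Gamma}(y_0 A)$ is nothing but the set of sections of the display map of $y_0 A$ over $y\Gamma$; and the total space of that display map has already been computed in \cref{prop:y0-extension}.

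Concretely, first I would apply \cref{prop:tm-sections} to the presheaf CwF to get
$$\presheaf\Tm_{y\Gamma}(y_0 A) \cong (\presheaf\C / y\Gamma)(y\Gamma, y\Gamma.y_0 A).$$
Next, \cref{prop:y0-extension} supplies an isomorphism $y\Gamma.y_0 A \cong y(\Gamma.A)$ over $y\Gamma$, so the previous hom-set is identified with the set of sections of $y(\pi_A) : y(\Gamma.A) \to y\Gamma$, i.e. with $(\presheaf\C/y\Gamma)(y\Gamma, y(\Gamma.A))$. Now I invoke the full faithfulness of the Yoneda embedding: since $y$ is fully faithful and preserves identities and composition, a morphism $f : \Gamma \to \Gamma.A$ is a section of $\pi_A$ if and only if $yf$ is a section of $y(\pi_A)$, so these two sets of sections are in bijection. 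This identifies the slice hom-set above with $(\C/\Gamma)(\Gamma, \Gamma.A)$, which by a second application of \cref{prop:tm-sections} — this time to $\C$ itself — is isomorphic to $\Tm_\Gamma(A)$. Composing the four isomorphisms gives the claim, and naturality in $(\Gamma, A)$ follows because each step is natural.

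The only point requiring care is tracking that the isomorphism of \cref{prop:y0-extension} is genuinely an isomorphism \emph{over} $y\Gamma$, i.e. that it commutes with the two display maps onto $y\Gamma$; this is exactly what guarantees that it restricts to a bijection between sections, rather than merely between arbitrary morphisms. Everything else is formal. As a sanity check one can verify that the composite sends a term $a : \Tm_\Gamma(A)$ to the global section of $y_0 A$ whose component at $(\Delta, \sigma)$ is $a[\sigma]$, and conversely sends a global section to its component at $(\Gamma, \id)$ — so the statement is really an incarnation of the Yoneda lemma. If one preferred, one could instead prove it directly by checking that $a \mapsto (a[\sigma])_{(\Delta,\sigma)}$ and evaluation at $(\Gamma,\id)$ are mutually inverse and natural, the naturality of the former being immediate from the functoriality of $\Tm$.
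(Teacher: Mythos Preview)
Your proof is correct and follows essentially the same route as the paper: apply \cref{prop:tm-sections} in $\presheaf\C$, use \cref{prop:y0-extension} to identify the context extension with $y(\Gamma.A)$, invoke full faithfulness of the Yoneda embedding to pass to sections of $\pi_A$ in $\C$, and apply \cref{prop:tm-sections} once more. Your additional remarks on naturality and the explicit description of the bijection are welcome elaborations but do not change the underlying argument.
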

\begin{proof}
It follows from \cref{prop:y0-extension} and \cref{prop:tm-sections} that $\presheaf\Tm_{y\Gamma}(y_0 A)$ is naturally isomorphic to the set of sections of $y\pi_A : y(\Gamma.A) \to y(\Gamma)$.  By the Yoneda lemma, this is isomorphic to the set of sections of $\pi_A : \C(\Gamma.A, \Gamma)$, which, by \cref{prop:tm-sections} again, is isomorphic to $\Tm_\Gamma(A)$.
\end{proof}

\begin{prop}\label{prop:y0-embedding}\label{prop:yoneda}
For any CwF $\C$, the Yoneda Embedding $y : \C \to \presheaf\C$ can be extended to a CwF morphism, where $y_0$ is the action of the morphism on types, and the isomorphism of \cref{lem:y0-terms} is its action on terms.
\end{prop}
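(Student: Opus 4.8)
The plan is to check that the three pieces of data already in hand — the functor $y$, the assignment $y_0$ on types (\cref{def:y0}), and the isomorphism of \cref{lem:y0-terms} on terms — satisfy the axioms of \cref{def:cwf-morphism}. Four verifications remain: that $y_0$ is natural with respect to substitution (so that it is a genuine $y^\Ty$), that the term isomorphism is natural (so that it is a genuine $y^\Tm$), that $y1$ is terminal in $\presheaf\C$, and that the comparison morphism $\phi^y_A$ is an isomorphism for every $\Gamma$ and $A$.

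First I would establish naturality of $y_0$, which I expect to hold strictly. Unwinding the CwF structure on $\presheaf\C$ from \cref{sec:presheaves}, substitution of a type along $y\sigma : y\Delta \to y\Gamma$ is restriction along the induced functor $\int y\Delta \cong \C/\Delta \to \C/\Gamma \cong \int y\Gamma$ sending $(\Phi,\tau)$ to $(\Phi,\sigma\tau)$. Evaluating at $(\Phi,\tau)$ gives $\bigl(y_0(A)[y\sigma]\bigr)_\Phi(\tau) = \Tm_\Phi(A[\sigma\tau])$ on one side and $y_0(A[\sigma])_\Phi(\tau) = \Tm_\Phi\bigl((A[\sigma])[\tau]\bigr)$ on the other; these agree on the nose because $\Ty$ is a strict presheaf, so $A[\sigma\tau] = (A[\sigma])[\tau]$. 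Hence $y_0(A[\sigma]) = y_0(A)[y\sigma]$, giving $y^\Ty$. Naturality of $y^\Tm$ is then routine: the isomorphism of \cref{lem:y0-terms} is by construction the composite of the Yoneda isomorphism with the section correspondences of \cref{prop:tm-sections,prop:y0-extension}, each natural, so the composite is natural in $(\Gamma,A) : \int\Ty$. That $y1$ is terminal is immediate, either because the Yoneda embedding preserves limits, or directly because $y1(\Gamma) = \C(\Gamma,1)$ is a singleton for all $\Gamma$, so $y1$ is the terminal presheaf.

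The hard part will be the last condition: showing that the abstractly defined comparison map $\phi^y_A$ is invertible. My approach is to identify it with the concrete isomorphism of \cref{prop:y0-extension}. Both $\phi^y_A$ and that isomorphism are morphisms $y(\Gamma.A) \to y\Gamma.y_0A$ lying over $y\Gamma$, and by the universal property of context extension such morphisms are classified by terms in $\presheaf\Tm_{y(\Gamma.A)}(y_0(A)[y\pi_A])$, where $y_0(A)[y\pi_A] = y_0(A[\pi_A])$ by the naturality just proved. By the construction of $\phi^F_A$ in \cref{def:cwf-morphism}, the morphism $\phi^y_A$ is the one classified by $y^\Tm(v_A)$, the image under \cref{lem:y0-terms} of the variable $v_A : \Tm_{\Gamma.A}(A[\pi_A])$. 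It therefore suffices to check that the isomorphism of \cref{prop:y0-extension} is classified by the same term.

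To conclude I would trace the definition of $y^\Tm$: it sends a term to the Yoneda image of the associated section (\cref{prop:tm-sections}) and then transports it through the extension isomorphism of \cref{prop:y0-extension}. Using that $v_A$ corresponds, under the universal property with $\sigma = \id$, to the identity morphism $\id_{\Gamma.A}$ (as in the definition of the variable in \cref{sec:notation}), one sees that $y^\Tm(v_A)$ is exactly the generic term classifying the isomorphism of \cref{prop:y0-extension}. The expected obstacle is purely the bookkeeping in this identification — keeping track of the several applications of \cref{prop:presheaf-families,prop:tm-sections,prop:y0-extension} that are folded into \cref{lem:y0-terms}. Once the two classifying terms are seen to coincide, the two morphisms coincide, so $\phi^y_A$ is the isomorphism of \cref{prop:y0-extension}; in particular it is invertible, which completes the verification of \cref{def:cwf-morphism}.
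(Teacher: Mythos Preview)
Your proposal is correct and follows essentially the same approach as the paper: the paper's proof dispatches naturality of $y_0$ as ``easy to verify'' and then observes that $\phi^y_A$ is the inverse of the isomorphism of \cref{prop:y0-extension}, which is exactly the identification you carry out. You simply supply more detail—the explicit naturality computation, the terminal-object check, and the classifying-term argument for matching $\phi^y_A$ with the extension isomorphism—where the paper is terse.
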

\begin{proof}
Naturality of $y_0$ is easy to verify.  It only remains to check that the map
$$
\phi^y_A : \presheaf\C(y(\Gamma.A), y(\Gamma).y_0(A))
$$
as in \cref{def:cwf-morphism} is an isomorphism, but this follows immediately from the fact that it is the inverse of the isomorphism of \cref{prop:y0-extension}.
\end{proof}

The reason for the subscript $0$ in our notation for the action of $y$ on types is that, when $\C$ possesses $\Pi$ and $\Sigma$ type structures, the map $y_0$, as defined, does not preserve them.

We will later define in certain cases a stricter version of $y_0$ that does indeed preserve the extra structure, and we reserve the name $y$ for that.

\subsection{Presheaf universes}\label{sec:presheaf-universes}

Using a universe of sets $\set_i$, we can build a universe in any presheaf model. This construction follows closely the one in \cite{universe-lifting}. Let $\C$ be any small category, and consider the CwF structure on $\presheaf\C$ defined in \cref{sec:presheaves}.

\begin{defn}
Let $P$ be a context in $\presheaf\C$. A type $A : \presheaf\Ty(P)$ is said to be \emph{small} (with respect to $\set_i$), if it factors through $\set_i$ when regarded as a functor $\int P \to \set$.
\end{defn}

For all object $\Gamma : \C$, let $\U_\Gamma$ be the set of small types over $y\Gamma$.  This defines a presheaf $\U$ on $\C$.

For all $\Gamma : \C$ and $P : \U_\Gamma$, define
$$
\El_\Gamma(P) :\equiv P_\Gamma(\id).
$$

We now have a universe $(\U, \El)$ in $\presheaf\C$.

\begin{prop}
The universe $(\U, \El)$ classifies small types, i.e. a type $A$ over $P$ is small if and only if there exists a term $\widetilde A$ of type $\U$ over $P$ such that $A = \El[\widetilde A]$.
\end{prop}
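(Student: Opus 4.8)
The plan is to prove the two implications separately, in both cases unwinding the definitions of smallness, of $(\U, \El)$, and of substitution in the presheaf CwF, and using the Yoneda lemma to pass between types over the arbitrary context $P$ and types over representables $y\Gamma$.

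\textbf{The ``if'' direction.} I would first note that $\El$ is itself small. Indeed, by definition $\El_\Gamma(Q) = Q_\Gamma(\id)$, and every element $Q$ of $\U_\Gamma$ is by construction a small type over $y\Gamma$, so each value $Q_\Gamma(\id)$ lies in $\set_i$; thus $\El$, regarded as a functor $\int \U \to \set$, factors through $\set_i$. Since substitution of types in $\presheaf\C$ is computed by precomposition of functors, it leaves the sets of values unchanged and hence preserves smallness. Therefore, if $A = \El[\widetilde A]$ for some term $\widetilde A$, then $A$ factors through $\set_i$ and is small.

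\textbf{The ``only if'' direction.} Suppose $A : \presheaf\Ty(P)$ is small. For each $\Gamma : \C$ and $x \in P_\Gamma$, the Yoneda lemma presents $x$ as a morphism $x : y\Gamma \to P$, and I can pull $A$ back along it to get a type $A[x]$ over $y\Gamma$, which is again small; hence $A[x] \in \U_\Gamma$, and I set $\widetilde A_\Gamma(x) :\equiv A[x]$. To see that $\widetilde A$ is a term of type $\U$ over $P$, i.e. (via \cref{prop:tm-sections}) a natural transformation $P \to \U$, I would check naturality: for $\sigma : \Delta \to \Gamma$ the two candidates $A[x][y\sigma] = A[x \circ y\sigma]$ and $A[\,\overline{x[\sigma]}\,]$ coincide because $\overline{x[\sigma]} = x \circ y\sigma$ by functoriality of $y$, so the assignment respects the presheaf structure of $\U$. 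Finally I would verify $\El[\widetilde A] = A$ pointwise: unfolding substitution and the definition of $\El$ gives $(\El[\widetilde A])(\Gamma, x) = \El_\Gamma(A[x]) = (A[x])_\Gamma(\id) = A(\Gamma, x[\id]) = A(\Gamma, x)$, and agreement on the morphisms of $\int P$ is the analogous bookkeeping.

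The computations themselves are routine; the point requiring genuine care is that every identity must hold \emph{strictly}, not merely up to isomorphism, since the statement asserts $A = \El[\widetilde A]$ on the nose. This is exactly what the precomposition description of substitution in the presheaf model delivers, and it is why $\El$ is defined by the sharp formula $\El_\Gamma(Q) = Q_\Gamma(\id)$: evaluation at the identity is precisely what cancels the Yoneda pullback, collapsing $A(\Gamma, x[\id])$ to $A(\Gamma, x)$ with no leftover coherence. I expect the main thing to get right to be the verification that $\widetilde A$ is strictly natural in $\Gamma$ and that $\El[\widetilde A]$ and $A$ agree on morphisms of $\int P$, not just on objects.
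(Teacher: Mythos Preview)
Your proof is correct and follows essentially the same route as the paper: for the ``if'' direction you observe that $\El$ is small and smallness is stable under substitution, and for the ``only if'' direction you set $\widetilde A_\Gamma(x) :\equiv A[x]$ via Yoneda and verify $\El[\widetilde A]_\Gamma(x) = A[x]_\Gamma(\id) = A_\Gamma(x)$ by unwinding the definitions. The paper's version is terser (it omits the explicit naturality check and the agreement on morphisms that you flag), but the argument is the same.
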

\begin{proof}
Clearly, $\El$ is small, hence $\El[\widetilde A]$ is small for all $\widetilde
A: P \to \U$.

Conversely, if $A$ is small, define $\widetilde A : P \to \U$ as follows:
$$
\widetilde A_\Gamma(x) :\equiv A[x],
$$
where $x : y(\Gamma) \to P$ denotes the morphism corresponding to $x : P_\Gamma$
through the isomorphism of the Yoneda lemma.  We have:
\begin{align*}
\El[\widetilde A]_\Gamma(x)
& = \El_\Gamma(\widetilde A_\Gamma(x)) \\
& = \El_\Gamma(A[x]) \\
& = A[x]_\Gamma(\id) \\
& = A_\Gamma(x).
\end{align*}
\end{proof}

\subsection{More notational conventions}
\label{sec:notational-conventions}

In the following, we will make heavy use of nested $\Pi$ and $\Sigma$ types, building complicated type expressions with them.  It is therefore convenient to adopt a ``flatter'' notation, one that is more symmetric with the respect to the two arguments of a $\Pi$ or $\Sigma$ type.

This notation is inspired by the syntax of the proof assistant \textsc{agda}~\cite{norell:agda}, and it works as follows: a type like $\Pi_{a : A} B$ is written as:
$$
(a : A) \to B,
$$
mimicking the usual notation for (non-dependent) function types.

Similarly, the type $\Sigma_{a : A} B$ will be written as follows:
$$
(a : A) \times B,
$$
making it explicit that $\Sigma$-types can be thought of as a generalised form of products.

Chained $\Pi$ types will be written by omitting all the intermediate arrows, and if the same type is present more than once, the corresponding variables can be grouped within one bracket. For example:
$$
(a : A)(b, b' : B)(c : C) \to D
$$
represents the type:
$$
\Pi_{a : A}\Pi_{b : B}\Pi_{b' : B}\Pi_{c : C} D.
$$

Finally, if $(\U, \El)$ is a universe, we will sometimes omit uses of $\El$, as they can be inferred very easily: if a term is used in place of a type, it means that there is an implicit application of $\El$ there.

\subsection{Fibrations and contextuality}\label{sec:fibrations}

\begin{defn}
Let $p : \C(\Delta, \Gamma)$ be a morphism in a CwF. We say that $p$ is a \emph{fibration} if there is a type $A : \Ty(\Gamma)$ such that $p$ and $p_A : \C(\Gamma.A, \Gamma)$ are isomorphic in the slice category $\C/\Gamma$.

We say that a context $\Gamma$ is \emph{fibrant} if the unique morphism $\C(\Gamma,1)$ is a fibration.
\end{defn}

\begin{lem}\label{lem:pullbacks-of-fibrations}
In any CwF, pullbacks of fibrations exist and are fibrations.
\end{lem}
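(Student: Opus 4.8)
The plan is to reduce everything to the case of a display map, for which \cref{prop:morphism-weakening} already asserts that the relevant weakening square is a pullback. Note that in a general CwF the underlying category $\C$ need not have all pullbacks, so the existence claim has genuine content; however, that content is precisely what \cref{prop:morphism-weakening} supplies, and the isomorphism built into the definition of fibration lets us transport it to an arbitrary fibration.

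First I would set up the data. Let $p : \C(X, \Gamma)$ be a fibration and $f : \C(\Delta, \Gamma)$ an arbitrary morphism. By definition of fibration there is a type $A : \Ty(\Gamma)$ together with an isomorphism $\alpha : X \to \Gamma.A$ in the slice $\C/\Gamma$, so that $\pi_A \circ \alpha = p$, and hence $p \circ \alpha^{-1} = \pi_A$. I would then form the substituted type $A[f] : \Ty(\Delta)$ and invoke \cref{prop:morphism-weakening}, which gives the morphism $f^+$ and makes the weakening square with legs $\pi_{A[f]}$ and $\pi_A$ (and top edge $f^+$) into a pullback.

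Next I would transport this pullback along $\alpha$. Since pullbacks are unchanged under replacing one leg of the cospan by an isomorphic arrow, composing the top edge with $\alpha^{-1}$ and using $p = \pi_A \circ \alpha$ in place of $\pi_A$ yields the square
$$
\xymatrix{
\Delta.A[f] \ar[r]^-{\alpha^{-1} \circ f^+} \ar[d]_-{\pi_{A[f]}} & X \ar[d]^-{p} \\
\Delta \ar[r]_-{f} & \Gamma
}
$$
which is again a pullback. Concretely, a cone over the cospan $(\Delta \xrightarrow{f} \Gamma \xleftarrow{p} X)$ given by $g : Z \to \Delta$ and $h : Z \to X$ with $f \circ g = p \circ h$ corresponds, by post-composing $h$ with $\alpha$, to a cone over $(f, \pi_A)$; the mediating map into $\Delta.A[f]$ supplied by \cref{prop:morphism-weakening} is then the required one, and its uniqueness follows from the uniqueness there together with $\alpha$ being invertible. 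This square exhibits the pullback of $p$ along $f$, settling existence, and its left leg is the display map $\pi_{A[f]}$, so the pulled-back morphism is a fibration, settling the second claim.

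I do not expect a real obstacle here: the entire mathematical substance is contained in \cref{prop:morphism-weakening}, and the remaining work is the routine verification that the pullback property survives transport across the isomorphism $\alpha$. If anything, the only point to state carefully is that ``pullback exists'' refers to these specific pullbacks rather than to completeness of $\C$, so that the argument stays within the structure actually guaranteed by a CwF.
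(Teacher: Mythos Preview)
Your argument is correct and is exactly the unpacking of the paper's one-line proof, which simply cites \cref{prop:morphism-weakening}: you reduce an arbitrary fibration to a display map via the defining slice isomorphism, invoke the weakening-square pullback, and transport back. There is no substantive difference in approach, only in level of detail.
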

\begin{proof}
Immediate consequence of \cref{prop:morphism-weakening}.
\end{proof}

\begin{defn}\label{def:contextual}
A CwF $\C$ is said to be \emph{contextual} if every context of $\C$ is fibrant.
\end{defn}

The idea of \cref{def:contextual} is to express the idea that in certain CwFs contexts are none other than types in the unit context.  For example, this holds for syntactical models like $\RF_0$, introduced in \cref{sec:rule-framework} (see \cref{lem:rf0-contextual}).

If $\C$ is a CwF, and $\Gamma$ is any context of $\C$, we can put a category structure on $\Ty(\Gamma)$ by defining a morphism between types $A$ and $B$ to be a morphism between $p_A$ and $p_B$ in the slice category $\C/\Gamma$.  We denote with $\bTy(\Gamma)$ the resulting category of types over $\Gamma$.

Note that the notation $\bTy(\Gamma)$ is consistent with how we denoted the category of types over a presheaf in \cref{sec:presheaves}.

\begin{prop}\label{prop:contextual-equivalence}
A CwF $\C$ is contextual if and only if the canonical functor $j : \bTy(1) \to \C$ is an equivalence of categories.
\end{prop}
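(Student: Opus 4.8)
The plan is to show that $j$ is fully faithful essentially by construction, so that the statement reduces to matching \emph{essential surjectivity} of $j$ with \emph{contextuality} of $\C$. Throughout, the one fact doing all the work is that $1$ is terminal, so the forgetful functor $\C/1 \to \C$ is an isomorphism of categories: every object of $\C$ carries a \emph{unique} map to $1$, and every triangle over $1$ commutes automatically.

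First I would pin down $j$. Under the identification of a type $A : \Ty(1)$ with the context $1.A$, the functor $j$ sends $A$ to $1.A$ and a morphism of $\bTy(1)$ to its underlying morphism of contexts. By definition $\bTy(1)(A,B)$ is the hom-set $\C/1(p_A, p_B)$, which, since $1$ is terminal, is exactly $\C(1.A, 1.B)$; and $j$ acts as the identity on these hom-sets. Hence $j$ is fully faithful with no further argument. Consequently $j$ is an equivalence of categories if and only if it is essentially surjective, i.e.\ if and only if every context $\Gamma$ of $\C$ is isomorphic in $\C$ to $1.A$ for some $A : \Ty(1)$.

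It then remains to identify this essential image with the fibrant contexts, which I would do by unwinding the definition of fibration. The context $\Gamma$ is fibrant precisely when the unique map $\Gamma \to 1$ is a fibration, i.e.\ when it is isomorphic in the slice $\C/1$ to some display map $p_A : 1.A \to 1$. Using terminality once more, an isomorphism in $\C/1$ between $\Gamma \to 1$ and $p_A$ is the same datum as a plain isomorphism $\Gamma \cong 1.A$ in $\C$ (the compatibility triangle is vacuous). Thus $\Gamma$ is fibrant iff $\Gamma$ lies in the essential image of $j$, and so ``every context is fibrant'' is literally the same condition as ``$j$ is essentially surjective''. Combined with the previous paragraph, this gives the equivalence in both directions.

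As for the main obstacle: there is no serious difficulty here; the only thing to be careful about is to invoke terminality of $1$ consistently in its two distinct roles — first to see that morphisms of $\bTy(1)$ are just morphisms of $\C$ (making $j$ fully faithful), and then to see that a \emph{fibration} over $1$ is the same as an object \emph{isomorphic} to a context extension $1.A$ (making the essential image exactly the fibrant contexts). Once both collapses are made explicit, the proposition is immediate.
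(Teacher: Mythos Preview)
Your proposal is correct and follows exactly the same approach as the paper: the paper's proof is the single sentence ``The functor $j$ is always fully faithful, and $\C$ being contextual is clearly equivalent to $j$ being essentially surjective,'' and you have simply unpacked both clauses carefully using terminality of $1$.
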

\begin{proof}
The functor $j$ is always fully faithful, and $\C$ being contextual is clearly equivalent to $j$ being essentially surjective.
\end{proof}

\begin{cor}
A presheaf category is a contextual CwF.
\end{cor}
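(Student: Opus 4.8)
The plan is to invoke \cref{prop:contextual-equivalence}, which says that a CwF is contextual exactly when the canonical functor $j : \bTy(1) \to \C$ from the category of types over the unit context into the underlying category is an equivalence. Applied to the presheaf CwF $\presheaf\C$ (where the role of the ambient category $\C$ in that proposition is played by $\presheaf\C$), this reduces the corollary to showing that $j : \bTy(1) \to \presheaf\C$ is an equivalence of categories.

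First I would identify $\bTy(1)$ concretely. By the construction of the presheaf CwF in \cref{sec:presheaves}, the category of types over a context $P$ is $\presheaf{\int P}$, the presheaves on the category of elements of $P$. Taking $P$ to be the terminal presheaf $1$, the category of elements $\int 1$ is isomorphic to $\C$: the presheaf $1$ takes a one-point value at every object, and each morphism of $\C$ lifts uniquely. Hence $\bTy(1) \cong \presheaf{\int 1} \cong \presheaf\C$, which is precisely the underlying category of the CwF. I would then check that $j$ realises this identification, i.e. that the assignment $A \mapsto 1.A$ agrees, up to natural isomorphism, with the equivalence $\bTy(1) \simeq \presheaf\C$ just described; this follows from the definition of context extension of presheaves together with the fact that slicing over a terminal object is trivial ($\presheaf\C / 1 \simeq \presheaf\C$), so the total space $1.A$ is naturally isomorphic to the presheaf underlying $A$.

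Since \cref{prop:contextual-equivalence} already guarantees that $j$ is fully faithful, all that is left is essential surjectivity, which is immediate: every presheaf $P$ arises as $1.A$ for $A$ the corresponding type over $1$. The argument is essentially bookkeeping, and the only point needing a little care — the closest thing to an obstacle — is pinning down the naturality of the identification $\int 1 \cong \C$ and confirming that $j$ coincides with the induced equivalence rather than merely landing in the same category. Once this is settled, \cref{prop:contextual-equivalence} closes the argument.
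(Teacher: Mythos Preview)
Your proposal is correct and follows essentially the same approach implied by the paper: the corollary is placed immediately after \cref{prop:contextual-equivalence} with no explicit proof, so the intended argument is precisely to invoke that proposition and observe that $\bTy(1) = \presheaf{\int 1} \cong \presheaf\C$, with $j$ realising this equivalence via the total-space construction of \cref{prop:presheaf-families}. Your identification of $\int 1 \cong \C$ and the check that $j$ agrees with the resulting equivalence is exactly the bookkeeping the paper leaves implicit.
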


Contextual CwFs are similar to C-systems (also called contextual categories) \cite{cartmell-contextual}.  There are, however, two important differences:
\begin{itemize}
\item the identification between types and contexts is not canonical, and only up to isomorphism;
\item we require that every context can be obtained out of a single type, rather than a chain of types.
\end{itemize}

In particular, the second condition implies that our notion of contextuality is only well-behaved when $\C$ has a $\Sigma$-type structure.  It would be possible to formulate \cref{def:contextual} in a way that doesn't implicitly require the existence of $\Sigma$-types, using the idea of a \emph{telescope} (i.e. a finite sequence of types, each depending on the previous ones), but doing so is cumbersome, and will not be required in the following, so we avoid it.

\begin{prop}\label{prop:ty-cwf}
If $\C$ is a CwF equipped with a $\Sigma$-type structure, then the category $\bTy(\Gamma)$ is itself a CwF with a $\Sigma$-type structure, and the canonical functor $j : \bTy(\Gamma) \to \C$ is a split CwF morphism preserving $\Sigma$-types.
\end{prop}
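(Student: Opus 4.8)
The plan is to transport the whole CwF structure of $\C$ across the total–space functor, using the $\Sigma$-structure precisely to interpret context extension in $\bTy(\Gamma)$. The contexts of $\bTy(\Gamma)$ are already fixed to be the types $A : \Ty(\Gamma)$ with slice morphisms; I would let its presheaf of types assign to a context $A$ the set $\Ty(\Gamma.A)$ and its terms the set $\Tm_{\Gamma.A}(B)$, with a slice morphism $f : \Gamma.A \to \Gamma.A'$ over $\Gamma$ acting by the substitution $B' \mapsto B'[f]$; this is visibly contravariant, so both data are presheaves. The distinguished terminal object is the terminal display map $\id_\Gamma$ (the unit type over $\Gamma$). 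The decisive move is to \emph{define} the context extension of $A$ by a type $B : \Ty(\Gamma.A)$ to be the type $\Sigma_A B : \Ty(\Gamma)$, with display map the first projection $\Sigma_A B \to A$; this is where the hypothesis is used.

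The first thing I would verify is the representability axiom \cref{eq:ext-repr-functor} for this extension. Fixing $A$ and $B : \Ty(\Gamma.A)$, a morphism from an object $(A',\sigma)$ of $\bTy(\Gamma)/A$ into $(\Sigma_A B, \pi_{\Sigma_A B})$ is a slice map $g : \Gamma.A' \to \Gamma.\Sigma_A B$ over $\Gamma$ whose composite with the projection is the given $f : \Gamma.A' \to \Gamma.A$. Through the canonical identification $\Gamma.\Sigma_A B \cong \Gamma.A.B$ supplied by the $\Sigma$-structure, such $g$ correspond to maps $\Gamma.A' \to \Gamma.A.B$ lying over $f$, and by the universal property of context extension in $\C$ (\cref{def:cwf}) these are in natural bijection with $\Tm_{\Gamma.A'}(B[f])$. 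Since the latter is exactly the value of the term presheaf of $\bTy(\Gamma)$ at $(A',\sigma)$, the pair $(\Sigma_A B, \pi_{\Sigma_A B})$ represents the required functor. The remaining axioms (functoriality of the type and term presheaves, terminality of $\id_\Gamma$) are inherited from $\C$ by routine unfolding.

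For the $\Sigma$-structure on $\bTy(\Gamma)$ and the statement about $j$, I would argue by transport along $j$ itself. The functor $j$ sends $A \mapsto \Gamma.A$ and a slice morphism to its underlying map; by construction its action on types and on terms is the identity, since the type presheaf of $\bTy(\Gamma)$ at $A$ is literally $\Ty(jA)$ and likewise for terms. Consequently the comparison map $\phi^j_A$ of \cref{def:cwf-morphism} is the identity, so $j$ is a CwF morphism, and preservation of extension amounts to the identification $j(\Sigma_A B) = \Gamma.\Sigma_A B$ with $\Gamma.A.B = jA.\,j^{\Ty}B$. Pulling the $\Sigma$-structure of $\C$ (restricted to the total spaces $\Gamma.A$) back along these identifications equips $\bTy(\Gamma)$ with a $\Sigma$-structure whose defining isomorphism (\cref{def:sigma-structure}) is, by construction, the one of $\C$; hence $j$ preserves $\Sigma$-types essentially by definition.

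The delicate point — and the reason $\C$ is required to carry an actual $\Sigma$-structure rather than merely to support $\Sigma$-types up to isomorphism — is the word \emph{split}: $j$ must preserve the distinguished terminal object and context extension \emph{on the nose}, not only up to the canonical comparison. The real work is therefore to arrange the extension of $\bTy(\Gamma)$ so that $\Gamma.\Sigma_A B$ \emph{equals} $\Gamma.A.B$ strictly and so that its display map is literally the composite of the two display maps of $\C$, together with checking that the strict naturality of $\Sigma$ (of the sort exhibited for the canonical $\Sigma$-structure on $\presheaf\C$) makes all the substitution and extension squares commute exactly. This is precisely the strictness bookkeeping emphasised elsewhere in the thesis, and I expect it to be the main obstacle; once it is secured, the verifications that $j$ is a split, $\Sigma$-preserving CwF morphism reduce to a direct reading-off of the definitions.
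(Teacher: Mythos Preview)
Your approach is essentially the paper's: types over $A$ in $\bTy(\Gamma)$ are elements of $\Ty(\Gamma.A)$, terms are $\Tm_{\Gamma.A}(B)$, and context extension is supplied by the $\Sigma$-structure. The paper's proof is three terse sentences and you have simply expanded them.

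Two small remarks. First, your terminal object is confused: $\id_\Gamma$ is a morphism, not an element of $\Ty(\Gamma)$, so it is not an object of $\bTy(\Gamma)$ at all; what you need is a \emph{unit type} over $\Gamma$, which the proposition as stated does not actually assume (though it is present in every application the paper makes of this result, in particular in \cref{lem:rf0-contextual}). Second, your worry about splitness is well placed and is not something the paper resolves either. With $A.B :\equiv \Sigma_A B$ in $\bTy(\Gamma)$, the comparison map $\phi^j_B$ goes from $\Gamma.\Sigma_A B$ to $\Gamma.A.B$, and the $\Sigma$-structure of \cref{def:sigma-structure} only makes these canonically \emph{isomorphic}, not equal on the nose. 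The paper glosses this as ``straightforward''; you are not missing a hidden trick. In practice one either reads ``split'' here modulo that fixed canonical isomorphism, or (equivalently) builds the isomorphism into the definition of $j$'s action on extended contexts, so that $j$ sends the object $\Sigma_A B$ of $\bTy(\Gamma)$ directly to $\Gamma.A.B$ rather than to $\Gamma.\Sigma_A B$.
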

\begin{proof}
Define a type over $A : \bTy(\Gamma)$ to simply be an element of $\Ty(\Gamma.A)$.
Context extension and $\Sigma$-types can be defined directly using the $\Sigma$-type structure of $\C$.

Verifying that $j$ is a split CwF morphism is then straightforward, and the preservation of $\Sigma$-types is a direct consequence of the definitions.
\end{proof}

Contextuality has a useful category-theoretic consequence:

\begin{prop}\label{prop:contextual-finite-products}
Let $\C$ be a contextual CwF.  Then $\C$ has finite products.
\end{prop}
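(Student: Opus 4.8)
The plan is to reduce the claim to two ingredients: a terminal object (the empty product) and binary products, from which all finite products follow by the usual induction. The terminal object is immediate, since $1$ is part of the CwF structure by \cref{def:cwf}. So the only real work is to construct binary products, and the key observation is that the product $\Gamma \times \Delta$ of two contexts is precisely the pullback of the cospan $\Gamma \xrightarrow{\;!\;} 1 \xleftarrow{\;!\;} \Delta$: because $1$ is terminal, a cone over this cospan is nothing but a pair of morphisms into $\Gamma$ and $\Delta$, with the triangles to $1$ commuting automatically.

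The heart of the argument is therefore to exhibit this pullback, and here contextuality does the work. Since $\C$ is contextual (\cref{def:contextual}), the unique morphism $! : \Delta \to 1$ is a fibration. By \cref{lem:pullbacks-of-fibrations}, pullbacks of fibrations exist, so in particular the pullback of $! : \Delta \to 1$ along $! : \Gamma \to 1$ exists. Concretely, contextuality gives a type $A : \Ty(1)$ together with an isomorphism $\Delta \cong 1.A$ over $1$; then $A[!_\Gamma] : \Ty(\Gamma)$ is a type over $\Gamma$, and by \cref{prop:morphism-weakening} the context extension $\Gamma.A[!_\Gamma]$, equipped with the display map $\pi_{A[!_\Gamma]}$ and the extended morphism $(!_\Gamma)^+$, fits into a pullback square over $1$. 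Transporting along the isomorphism $\Delta \cong 1.A$ identifies this with the desired pullback of $\Gamma \xrightarrow{\;!\;} 1 \xleftarrow{\;!\;} \Delta$, which is the binary product $\Gamma \times \Delta$.

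The main (and essentially only) obstacle is bookkeeping rather than conceptual: one must check that the pullback over the terminal object genuinely satisfies the universal property of the binary product, and that iterating the construction yields all finite products. Neither step presents a real difficulty, as the existence of the relevant pullbacks is guaranteed by the fibrancy of $! : \Delta \to 1$ that contextuality provides, and the universal property of the pullback over $1$ degenerates to that of the product in the standard way. I would conclude by noting that the empty and binary cases together give finite products by induction on the number of factors.
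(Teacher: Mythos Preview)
Your proposal is correct and follows essentially the same route as the paper: reduce to binary products, use contextuality to replace $\Delta$ by a type $A$ over $1$, and then invoke the pullback square of \cref{prop:morphism-weakening} (equivalently \cref{lem:pullbacks-of-fibrations}) to exhibit $\Gamma.A$ as the product of $\Gamma$ and $\Delta$. The paper's proof is just a terser version of yours, omitting the explicit remark that a pullback over the terminal object is a product.
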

\begin{proof}
The existence of a terminal object is part of the definition of a CwF, so we only need to show that $\C$ has binary products.

Let $\Gamma, \Delta : \C$ be any two contexts.  By contextuality, we can replace $\Delta$ with a type $A$ over the unit context.  By \cref{prop:morphism-weakening}, the following square is a pullback:
$$
\xymatrix{
\Gamma.A \ar[r]\ar[d] &
A \ar[d] \\
\Gamma \ar[r] &
1,
}
$$
which means that $\Gamma.A$ is the product of $\Gamma$ and $A$.
\end{proof}

We conclude this \namecref{sec:fibrations} with a construction that will occasionally be useful later.

\begin{prop}\label{prop:cwf-slice}
Let $\C$ be a CwF, and $\Gamma : \C$ a context. The slice category $\C / \Gamma$ can be equipped with a CwF structure.
\end{prop}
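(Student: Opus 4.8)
The plan is to equip $\C/\Gamma$ with a CwF structure pulled back from $\C$ along the forgetful functor $U : \C/\Gamma \to \C$ sending $(\Delta, \sigma)$ to $\Delta$. First I would take the distinguished terminal object to be $(\Gamma, \id_\Gamma)$, which is manifestly terminal in the slice. For the presheaf of types I set $\Ty'(\Delta, \sigma) :\equiv \Ty(\Delta)$, i.e. $\Ty' :\equiv \Ty \circ \op U$, so that substitution along a slice morphism $f : (\Delta', \sigma') \to (\Delta, \sigma)$ (a map $f : \Delta' \to \Delta$ with $\sigma \circ f = \sigma'$) is just the operation $A \mapsto A[f]$ in $\C$. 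Likewise I set $\Tm'_{(\Delta,\sigma)}(A) :\equiv \Tm_\Delta(A)$, which is functorial via the evident functor $\int \Ty' \to \int \Ty$ induced by $U$. All naturality and functoriality requirements for $\Ty'$ and $\Tm'$ are then inherited from those of $\Ty$ and $\Tm$ by composition with $U$, hence immediate.

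For context extension, given an object $(\Delta, \sigma)$ and a type $A : \Ty'(\Delta, \sigma) = \Ty(\Delta)$, I would define the extension to be $(\Delta, \sigma).A :\equiv (\Delta.A,\, \sigma \circ \pi_A)$, with display map the ordinary display map $\pi_A : \Delta.A \to \Delta$ regarded as a morphism into $(\Delta, \sigma)$ in $\C/\Gamma$. The only point that genuinely requires an argument is that this data represents the functor of \cref{eq:ext-repr-functor} for the CwF $\C/\Gamma$.

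The key to that verification is the canonical isomorphism $(\C/\Gamma)/(\Delta, \sigma) \cong \C/\Delta$: a morphism into $(\Delta, \sigma)$ in the slice $\C/\Gamma$ is the same datum as a morphism $g : \Delta' \to \Delta$ in $\C$, since the triangle condition forces the structure map of the domain to be $\sigma \circ g$. Under this identification, the functor whose representability must be checked, namely $(\Delta', g) \mapsto \Tm'_{\Delta'}(A[g]) = \Tm_{\Delta'}(A[g])$, coincides exactly with the representing functor of \cref{eq:ext-repr-functor} for $A$ over $\Delta$, which is represented by $(\Delta.A, \pi_A)$ by the CwF structure of $\C$ itself. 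Transporting this representation back across the isomorphism yields the required representing object in $\C/\Gamma$.

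I expect the main (indeed essentially the only nontrivial) obstacle to be recognising and exploiting this iterated-slice identification $(\C/\Gamma)/(\Delta, \sigma) \cong \C/\Delta$ and checking that it carries the slice representing functor to the one already handled by $\C$. Once that observation is in place, the terminality of $(\Gamma, \id_\Gamma)$ and the functoriality of $\Ty'$ and $\Tm'$ are routine, since every ingredient of the structure is obtained from $\C$ by composition with $U$.
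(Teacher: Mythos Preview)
Your proposal is correct and follows exactly the same approach as the paper: define types and terms over $(\Delta,\sigma)$ to be those over $\Delta$ in $\C$, with context extension inherited via the forgetful functor. The paper's proof is a single sentence that leaves all the verifications you spell out (terminal object, representability via the iterated-slice identification) implicit, so your version is simply a more detailed rendering of the same argument.
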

\begin{proof}
If $(\Delta,\sigma)$ is an object of $\C$, we simply define types and terms over $(\Delta,\sigma)$ to be the types and terms over $\Delta$ in $\C$.
\end{proof}

\section{The Rule Framework}\label{sec:rule-framework}

We will use the type structures defined above to ``bootstrap'' a more general definition of structure for CwF.  To that end, we give the following definition:

\begin{defn}
An $\RF$-category\footnote{$\RF$ stands for \emph{rule framework}} is a CwF $\C$, equipped with $\Pi$, $\Sigma$, equality and unit type structures, and a universe $\U$, $\El$.  An $\RF$-morphism is a CwF morphism preserving all the structure.
\end{defn}

$\RF$-categories and $\RF$-morphisms form a category $\mathcal{RF}$. Denote by $\strict{\mathcal{RF}}$ the subcategory of $\mathcal{RF}$ consisting of only split morphisms. We will need the following:

\begin{lem}\label{lem:rf-limits}
The category $\strict{\mathcal{RF}}$ has all small limits.
\end{lem}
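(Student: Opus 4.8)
The plan is to construct small limits in $\strict{\mathcal{RF}}$ by building them ``levelwise'': compute the limit of the underlying diagram at the level of categories, then lift the $\Ty$/$\Tm$ presheaves, the context extension data, and each of the five structures ($\Pi$, $\Sigma$, equality, unit, universe) componentwise, checking at each stage that the universal cone consists of \emph{split} morphisms. Concretely, given a small diagram $D : \I \to \strict{\mathcal{RF}}$, I would first form the limit $\C := \lim_{i} D(i)$ in $\cat{Cat}$ (which exists since $\cat{Cat}$ is complete), with projection functors $p_i : \C \to D(i)$. Because the morphisms in the diagram are split, they preserve the distinguished terminal objects and context extension on the nose, so the limit category inherits a distinguished terminal object as the tuple of terminal objects, and this is genuinely terminal in $\C$.

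Next I would define the CwF structure on $\C$. For a context $\Gamma = (\Gamma_i)_i$ of $\C$, one takes $\Ty(\Gamma)$ to be the limit (in $\set$) of the sets $\Ty(\Gamma_i)$ over the diagram, and similarly $\Tm_\Gamma(A)$ to be the corresponding limit of term-sets; these are well-defined because split morphisms strictly commute with substitution, so the transition maps assemble into a genuine diagram of sets. Context extension is then defined componentwise, $\Gamma.A := (\Gamma_i.A_i)_i$, and one checks that the representability condition of \cref{def:cwf} holds by observing that a section into the extension is precisely a compatible family of sections, i.e. an element of the limit. The five type-former structures are handled the same way: since each structure is given by a natural transformation together with an isomorphism, and split morphisms preserve all of them strictly (identities $\phi^F_A$, strict commutation with $\pi$, $\sigma$, $\eq$, $u$, $\U$ and $\El$), the componentwise operations are automatically compatible with the transition maps and therefore descend to operations on the limit, with the required isomorphisms inherited pointwise.

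Finally I would verify the universal property. The projections $p_i : \C \to D(i)$ are split $\RF$-morphisms essentially by construction, since all structure on $\C$ was defined so that each $p_i$ strictly preserves it. Given any cone of split morphisms $F_i : \D \to D(i)$, the universal functor $F : \D \to \C$ into the categorical limit lifts uniquely to the level of types and terms (again using that everything is defined as an honest limit of sets, so a compatible family of type/term assignments is the same as a single assignment into $\C$), and this lift strictly preserves all the $\RF$-structure precisely because each $F_i$ does. Uniqueness follows from uniqueness at the level of $\cat{Cat}$ together with uniqueness of the induced maps into the limit sets.

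The main obstacle I expect is \emph{strictness bookkeeping} rather than any deep mathematical difficulty: the entire argument depends on the fact that the morphisms in the diagram are split, so that the would-be diagrams of type-sets and term-sets are strictly functorial (no coherence isomorphisms intervening) and hence admit ordinary set-theoretic limits. If one tried to run the same construction in $\mathcal{RF}$ with arbitrary (non-split) $\RF$-morphisms, the $\phi^F_A$ maps would only be isomorphisms, the substitution squares would commute only up to coherent isomorphism, and the ``levelwise'' sets would fail to form a strict diagram, so the naive limit would not carry a canonical CwF structure. Thus the key point to get right is to track carefully, at each of the $\Ty$, $\Tm$, extension, and five-structure stages, that splitness makes every relevant square commute on the nose, which is exactly what licenses taking honest limits in $\set$ componentwise.
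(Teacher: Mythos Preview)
Your proposal is correct and follows essentially the same approach as the paper: take the limit of underlying categories in $\cat{Cat}$, define $\Ty$, $\Tm$, context extension, and all the $\RF$-structures pointwise as limits of sets, and observe that splitness of the diagram morphisms is precisely what makes the componentwise context extension well-defined. The paper's proof is terser and does not spell out the universal property verification in as much detail, but the strategy and the identification of splitness as the crucial ingredient are identical.
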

\begin{proof}
Let $I$ be a small category, and $F : I \to \mathcal{RF}$ a functor.  Denote by $\C_i$ the underlying category of $F(i)$.

We construct the limit of $F$ by first taking the limit $\C$ (in $\cat{Cat}$) of the $\C_i$, and then defining a CwF structure on $\C$, equipped with all the required type structures.

For a context $\Gamma : \C$, denote by $\Gamma_i$ the context of $\C_i$ obtained from $\Gamma$ through the projection of the universal cone $\C \to \C_i$.  Types over $\Gamma$ are defined to be simply the limit of $\Ty(\Gamma_i)$ over $I$.

Similarly, if $A$ is a type over $\Gamma$, we write $A_i$ for the projection of $A$ to $\Ty(\Gamma_i)$, and define terms of type $A$ as the limit of $\Tm_{\Gamma_i}(A_i)$.

Context extension is defined pointwise.  This is the crucial point where we use the fact that the diagram $F$ is composed solely of split morphisms.

Verifying that this gives a CwF structure on $\C$ is straightforward.

As for the $\Pi$, $\Sigma$, equality and unit type structures, they can all be defined pointwise, and the resulting RF-category is easily seen to satisfy the universal property of the limit.  \end{proof}

\begin{thm}\label{thm:rf0}
The category $\strict{\mathcal{RF}}$ has an initial object $\RF_0$.
\end{thm}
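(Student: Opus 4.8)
The plan is to obtain $\RF_0$ abstractly, without ever writing down any syntax by hand, so as to remain faithful to the guiding idea that the syntax \emph{is} the initial object. The key observation is already available: by \cref{lem:rf-limits} the category $\strict{\mathcal{RF}}$ is complete, i.e.\ has all small limits. I would therefore invoke Freyd's Initial Object Theorem --- the special case of the General Adjoint Functor Theorem for the unique functor to the terminal category --- which states that a complete category admitting a \emph{solution set} has an initial object. The entire proof then reduces to exhibiting a solution set: a small family $\mathcal{S}$ of $\RF$-categories such that every $\RF$-category receives a split $\RF$-morphism from some member of $\mathcal{S}$.

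To construct $\mathcal{S}$, I would fix an arbitrary $\RF$-category $\C$ and form the smallest sub-$\RF$-category $\langle \emptyset \rangle \subseteq \C$, obtained by closing the (initially singleton) collection of contexts, together with the distinguished data of the universe $(\U,\El)$ and the unit term $u$, under all the operations constituting the $\RF$-structure: context extension, substitution along the already-generated morphisms, the type formers $\pi$, $\sigma$, $\eq$, $\El$, and the associated term operations ($\lambda$-abstraction, application, pairing, projections, and so on). Since the $\RF$-signature consists of only finitely many operations, each of finite arity, this closure is reached after $\omega$ iterations and yields a \emph{countable} sub-$\RF$-category; moreover the inclusion $\langle \emptyset \rangle \hookrightarrow \C$ preserves the terminal object, context extension and every piece of structure strictly, hence is a split $\RF$-morphism. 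Up to isomorphism there is only a set of countable $\RF$-categories, so choosing one representative per isomorphism class produces the desired solution set $\mathcal{S}$.

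With completeness supplied by \cref{lem:rf-limits} and the solution set condition verified, Freyd's theorem produces the initial object $\RF_0$ (concretely, as a suitable wide equalizer of all endomorphisms of the product of the members of $\mathcal{S}$). I would close by remarking that the same conclusion may alternatively be read off from general principles: $\strict{\mathcal{RF}}$ is the category of models of a generalized algebraic theory in the sense of Cartmell, hence is locally presentable and in particular has an initial object. Presenting both routes makes clear that nothing in the argument depends on a concrete syntactic description of $\RF_0$.

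The main obstacle I anticipate lies entirely in the solution-set step, and is one of careful bookkeeping rather than of idea. One must check that the generated sub-$\RF$-category is genuinely \emph{closed} under \emph{every} operation --- in particular under substitution along all of its own morphisms and under the universe operations, which feed their own outputs back into the generation process --- so that it is a bona fide object of $\strict{\mathcal{RF}}$; and one must confirm that this closure never escapes a fixed cardinal bound, which is precisely what guarantees that $\mathcal{S}$ is a set rather than a proper class. Both points become routine once the finitary, small-arity nature of the signature is made explicit, but they are where the genuine content of the argument resides.
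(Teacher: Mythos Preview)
Your proposal is correct and follows essentially the same route as the paper: invoke completeness from \cref{lem:rf-limits}, then verify the solution-set (weakly-initial family) condition by showing that every $\RF$-category contains a countable sub-$\RF$-category, built by an $\omega$-stage closure under the finitely many finitary operations of the $\RF$-signature. The paper carries out exactly this iterative construction (writing the stages as $\D_n$) and appeals to the adjoint functor theorem in the same way; your additional remark about generalized algebraic theories and local presentability is a welcome aside but not a departure from the paper's argument.
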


\Cref{thm:rf0} can be proved by giving an explicit inductive definition of $\RF_0$: types are expressions generated from base types like $\U$, $\El$ and the unit type, by applying the operations of the $\RF$ structures: $\Pi$, $\Sigma$ and equality.  Similarly, terms are generated from variables and their weakening by applying the various isomorphisms of the $\RF$ structures. Contexts are defined as tuples of types, and morphisms as tuples of terms.

Making this sort of definition precise is, however, far from a straightforward task, as is proving that it in fact gives an initial object of $\mathcal{RF}$.  Intuitively, initiality follows because we can regard every context (resp. type, term, morphism) in $\RF_0$ as a ``recipe'' to build a context (resp. type, term, morphism) in an arbitrary $\RF$-category $\C$.  This gives, for any such $\C$, a uniquely determined functor $\RF_0 \to \C$ that clearly preserves all the structures.

We follow a slightly more indirect approach, based on the ideas underlying the proof of the adjoint functor theorem. Indeed, the following proof could be adapted to show the more general fact that the forgetful functor $\mathcal{RF} \to \cat{Cat}$ has a left adjoint.  However, we will not need the extra generality.

\begin{proof}[Proof of \cref{thm:rf0}]
Since $\strict{\mathcal{RF}}$ has all small limits (\cref{lem:rf-limits}), it is enough to show that it has a weakly-initial family.  We say that a small $\RF$-category is \emph{countable} if the set of objects is countable, all the homsets are countable, and $\Ty(\Gamma)$ and $\Tm_\Gamma(A)$ are countable for all $\Gamma$ and $A$.

We will show that every $\RF$-category contains a countable $\RF$-subcategory. From this fact, the existence of a weakly-initial family easily follows (for example, fix a countably infinite set $\Omega$ and take the family of all $\RF$-categories whose contexts, morphisms, types and terms are all elements of $\Omega$).  

Let $\C$ be an $\RF$-category. We define a chain of subsets $\D_n$ of $\C$, each equipped with subfamilies of morphisms, types and terms, arranged just like in a CwF, but with no further structure.  The morphisms of $\D_n$ between contexts $\Delta$ and $\Gamma$ will be denoted $\D_n(\Delta, \Gamma)$, just like in a category, and they will form a subset of $\C(\Delta, \Gamma)$.  We will write $\Ty^n(\Gamma)$ for the types of $\D_n$ over $\Gamma$, which will form a subset of $\Ty(\Gamma)$, and similarly for terms.

The starting point $\D_0$ is just the empty subset. Given $\D_n$ and its associated structures, define $\D_{n+1}$ as the subset of $\C$ containing $\D_n$, plus all the contexts, morphisms, types and terms that are obtained from those of $\D_n$ by applying any of the operations of the $\RF$-category $\C$.  In detail:

\begin{itemize}
\item the set $\D_{n+1}$ contains all the elements of $\D_n$, plus the unit context, and the context $\Gamma.A$, for all choices of $\Gamma : \D_n$ and $A : \Ty^n(\Gamma)$;
\item morphisms of $\D_{n+1}$ are obtained from those of $\D_n$ by adding the canonical morphism to the unit context, identity morphisms, compositions of morphisms in $\D_n$, projections of types in $\D_n$ and substitutions of the form $\langle \sigma, a \rangle$, where $\sigma : \D_n(\Delta, \Gamma)$, and $a : \Tm^n_\Delta(A[\sigma])$;
\item the set $\Ty^{n+1}(\Gamma)$ contains all the types of $\D_n$, plus the unit type, types of the form $\Pi_A B$ and $\Sigma_A B$, where $A : \Ty^n(\Gamma)$, and types of the form $a = b$, where $a,b : \Tm^n_\Gamma(A)$, and $A : \Ty^n(\Gamma)$;
\item the set $\Tm^{n+1}(\Gamma)$ contains all the terms of $\D_n$, plus the unique inhabitant of the unit type, and the images of the isomoprhisms defining $\Pi$, $\Sigma$ and equality types and their inverses.
\end{itemize}

From the fact that every operation in the definition of $\RF$-category has a finite number of arguments, it easily follows that the union of all the $\D_n$ and corresponding structures forms an $\RF$-subcategory of $\C$.
\end{proof}

The advantage of the proof above over the usual technique of building the initial model purely syntactically is that the iterative construction happens within an existing CwF, hence we only need to concern ourselves with adding the necessary elements to the structures involved, and their required properties will automatically hold, because they do so in the ambient category.

We will write $\RF_0$ to denote the initial object of $\strict{\mathcal{RF}}$.  Since $\RF_0$ is only initial in a subcategory of $\mathcal{RF}$, we cannot conclude that it is initial in $\mathcal{RF}$. In particular, given an $\RF$-category $\C$, we can always give a morphism $\RF_0 \to \C$, but that morphism might not be unique.

Fortunately, we can prove a weaker version of uniqueness.

\begin{defn}
A \emph{weak $\RF$-morphism} is a CwF morphism that weakly preserves all the
structure.
\end{defn}

\begin{thm}\label{thm:rf-2-initiality}
Let $\C$ be an $\RF$-category, and $F, G : \RF_0 \to \C$ two weak $\RF$-morphisms in $\mathcal{RF}$.  Then $F$ and $G$ are isomorphic.
\end{thm}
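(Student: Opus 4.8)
The plan is to reduce the statement to a comparison against the canonical strict morphism, and then to build the comparison isomorphism by induction on the way elements of $\RF_0$ are generated. Since $\RF_0$ is initial in $\strict{\mathcal{RF}}$ (\cref{thm:rf0}) and $\C$ is an object of that subcategory, there is a \emph{unique} split $\RF$-morphism $U : \RF_0 \to \C$. My first move is to observe that it suffices to prove the rectification statement: every weak $\RF$-morphism $H : \RF_0 \to \C$ is isomorphic to $U$. Granting this, we get $F \cong U$ and $G \cong U$, and composing with the inverse isomorphism gives $F \cong G$. Working against the strict $U$ rather than comparing $F$ and $G$ directly is convenient, because the structural isomorphisms of $U$ are identities, so half of the coherence bookkeeping disappears.

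To produce an isomorphism $\theta : H \cong U$, I would induct on the inductive presentation of $\RF_0$ described after \cref{thm:rf0}, carrying a compatibility invariant at the level of contexts, types and terms simultaneously, in the style of a logical relation. The induction produces, for every context $\Gamma$, a $\C$-isomorphism $\theta_\Gamma : H\Gamma \cong U\Gamma$ natural in $\Gamma$, and for every type $A : \Ty(\Gamma)$ an isomorphism of types identifying $HA$ with the reindexing $(\theta_\Gamma)^*(UA)$, compatibly with the term actions. The base case is the unit context, where $H1$ and $U1$ are both terminal and hence canonically isomorphic. For context extension I would paste $\theta_\Gamma$ together with the type-level isomorphism and the structural maps $\phi^H_A$, $\phi^U_A$ of \cref{def:cwf-morphism} to obtain $H(\Gamma.A) \cong H\Gamma.HA \cong U\Gamma.UA \cong U(\Gamma.A)$, where $\phi^U_A$ is the identity since $U$ is split. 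For the generating type formers I would invoke weak preservation: for instance $H(\Pi_A B) \cong \Pi_{HA}(HB) \cong \Pi_{UA}(UB) \cong U(\Pi_A B)$, the middle isomorphism being supplied by the inductive hypotheses on $A$ and $B$, and analogously for $\Sigma$, $\Eq$, the unit type, and the universe $(\U,\El)$. Terms and morphisms are then handled by naturality together with weak preservation of the defining isomorphisms (lambda-abstraction and application for $\Pi$, pairing and projections for $\Sigma$, and the equality isomorphism).

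The main obstacle is coherence and well-definedness, not the individual steps. Because an element of $\RF_0$ may arise from more than one ``recipe'', I must check that $\theta$ is independent of the chosen derivation and that all the relevant squares commute at once: compatibility with substitution and weakening, and with each structural isomorphism. The clean way to discharge this is to phrase the whole induction as a single displayed statement whose inductive hypothesis already bundles every commutativity we will need, so that each inductive step merely pastes together isomorphisms that are given to be compatible, rather than verifying fresh coherence conditions. The genuine work is then showing that this class of ``related'' data is closed under every operation of an $\RF$-category, which is precisely the analogue of the closure argument in the proof of \cref{thm:rf0}; each individual closure check reduces to a routine diagram chase using the weak-preservation isomorphisms and the pullback property of context extension (\cref{prop:morphism-weakening}). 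I expect the bulk of the proof to consist of these checks, with no conceptual difficulty beyond organizing the invariant carefully enough that they all go through uniformly.
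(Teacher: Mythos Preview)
Your approach is conceptually sound but takes a different, and in this setting more fragile, route than the paper's.

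The paper does not do an explicit induction at all. It constructs the \emph{pseudo-equaliser} $\E$ of $F$ and $G$: objects of $\E$ are contexts $\Gamma$ of $\RF_0$ together with an isomorphism $F\Gamma \cong G\Gamma$, and similarly for types and terms. Weak preservation by $F$ and $G$ is exactly what is needed to equip $\E$ with an $\RF$-structure making the projection $\pi : \E \to \RF_0$ a \emph{split} $\RF$-morphism. Initiality of $\RF_0$ in $\strict{\mathcal{RF}}$ then gives a section of $\pi$, and unpacking that section yields the desired natural isomorphism $F \cong G$.

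Your ``displayed statement whose inductive hypothesis bundles every commutativity, closed under each $\RF$-operation'' is precisely this category $\E$; the step you are missing is to recognise that closure under $\RF$-operations means $\E$ is itself an $\RF$-category, so that initiality of $\RF_0$ does the induction for you and the well-definedness worry evaporates. The reduction to the split morphism $U$ is harmless but unnecessary: the pseudo-equaliser compares $F$ and $G$ directly.

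The reason the paper's packaging matters here is that $\RF_0$ is constructed abstractly via the adjoint-functor-theorem argument (limits of a weakly-initial family), not via an explicit syntactic induction; the paragraph after \cref{thm:rf0} mentions an inductive presentation only to say that making it precise ``is far from a straightforward task'' and then avoids it. So your plan to ``induct on the inductive presentation of $\RF_0$'' relies on structure the paper never supplies. Rephrasing your logical-relation invariant as the $\RF$-category $\E$ and invoking initiality is the fix.
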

\begin{proof}
Construct an $\RF$-category $\E$ (the \emph{pseudo-equaliser} of $F$ and $G$) as follows: the objects of $\E$ are contexts $\Gamma$ in $\RF_0$, together with an isomorphism between $F\Gamma$ and $G\Gamma$. Similarly, types (resp. terms) in $\E$ are types (resp. terms) in $\RF_0$, together with an isomorphism between their respective images in $\C$.

The fact that $F$ and $G$ are weak $\RF$-morphisms implies that it is possible to equip $\E$ with a structure of $\RF$-category such that the obvious projection $\pi: \E \to \RF_0$ is a split morphism.

By initiality of $\E$, the morphism $\pi$ has a section, which implies that $F$ and $G$ are isomorphic.
\end{proof}

\begin{lem}\label{lem:rf0-contextual}
The category $\RF_0$ is contextual.
\end{lem}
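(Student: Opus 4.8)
The plan is to avoid reasoning about the internal shape of the contexts of $\RF_0$ — which is awkward, since \cref{thm:rf0} produces $\RF_0$ by a weakly-initial-family and limit argument rather than as an explicit inductive syntax — and instead to exploit its universal property directly. Concretely, I would let $\I$ denote the full sub-structure of $\RF_0$ spanned by the \emph{fibrant} contexts: its objects are the fibrant contexts of $\RF_0$, and over each such context it carries all the morphisms, types, and terms of $\RF_0$. The aim is to show that $\I$ is itself an $\RF$-category and that the evident inclusion $\iota : \I \to \RF_0$ is a \emph{split} $\RF$-morphism; initiality will then do the rest.

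The crux is to check that $\I$ is closed under context extension, i.e. that $\Gamma.A$ is fibrant whenever $\Gamma$ is fibrant and $A : \Ty(\Gamma)$. First I would record the base case: the unit context $1$ is fibrant, since the unit type $1 : \Ty(1)$ has a unique term, so its display map $1.1 \to 1$ is an isomorphism and hence $\id_1$ is a fibration in $\RF_0/1$. For the inductive step, fibrancy of $\Gamma$ gives a type $C : \Ty(1)$ and an isomorphism $\theta : \Gamma \cong 1.C$ over $1$. Transporting $A$ to $A' :\equiv A[\theta^{-1}] : \Ty(1.C)$, the pullback square of \cref{prop:morphism-weakening} along the isomorphism $\theta$ shows $\Gamma.A \cong 1.C.A'$, and the $\Sigma$-type structure supplies the standard isomorphism $1.C.A' \cong 1.\sigma(C,A')$ (both sides represent the functor sending $\Delta$ to pairs $(a,b)$ with $a : \Tm_\Delta(C)$ and $b : \Tm_\Delta(A'[a])$, so the identification follows by Yoneda). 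Hence $\Gamma.A$ is fibrant, witnessed by $\sigma(C,A') : \Ty(1)$. With closure in hand, $\I$ is a genuine sub-CwF — its context extensions, terminal object, and variables are exactly those of $\RF_0$ — and, since the type-formers of an $\RF$-category only ever produce new types over already-present contexts, $\I$ inherits the $\Pi$, $\Sigma$, equality, and unit structures and the universe verbatim. The inclusion $\iota$ preserves all of this on the nose, so it is a split $\RF$-morphism.

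Finally I would invoke initiality. Since $\I$ is an object of $\strict{\mathcal{RF}}$, there is a unique split $\RF$-morphism $r : \RF_0 \to \I$. Then $\iota \circ r : \RF_0 \to \RF_0$ is again a split $\RF$-morphism, so by the uniqueness part of initiality $\iota \circ r = \id_{\RF_0}$. In particular every context $\Gamma$ of $\RF_0$ satisfies $\Gamma = \iota(r(\Gamma))$, and $r(\Gamma)$ is by construction a fibrant context; thus $\Gamma$ is fibrant, which is precisely contextuality. The main obstacle is the closure step: without access to the explicit syntax one must produce, for each extension $\Gamma.A$, a single type over $1$ presenting it, and this is exactly where the $\Sigma$-type structure (together with transport along the fibrancy isomorphism) is indispensable — indeed it is the reason \cref{def:contextual} is only well-behaved in the presence of $\Sigma$-types.
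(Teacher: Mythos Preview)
Your proposal is correct and follows essentially the same strategy as the paper: construct an auxiliary $\RF$-category whose objects are the ``fibrant'' part of $\RF_0$, exhibit a split $\RF$-morphism into $\RF_0$, and invoke initiality to conclude surjectivity. The paper phrases this via $\bTy(1)$ (using \cref{prop:ty-cwf} and \cref{prop:contextual-equivalence}) rather than the full subcategory on fibrant contexts, but this is a cosmetic difference --- the key ingredient, namely using the $\Sigma$-type structure to close up under context extension, is the same in both arguments.
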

\begin{proof}
It is easy to see that $\bTy(1)$ can be equipped with an $\RF$-category structure such that the canonical functor $j : \bTy(1) \to \RF_0$ is a split $\RF$-morphism (see \cref{prop:ty-cwf}).  It follows that $j$ is an isomorphism of $\RF$-categories, hence $\RF_0$ is contextual by \cref{prop:contextual-equivalence}.
\end{proof}

\section{Type formers and structures}\label{sec:type-formers}

We know from \cref{sec:cwf} that presheaf categories are equipped with a canonical CwF structure, as well as $\Pi$, $\Sigma$, equality and unit type structures.  If $\C$ is a CwF, then its presheaf category additionally possesses a canonical universe (the \emph{fibrant} universe) given by the presheaves of types and terms.  Therefore, we have that for any CwF $\C$, the presheaf category $\presheaf{\C}$ is an $\RF$-category.

\begin{defn}\label{def:type-former}
A \emph{type former} is a context in $\RF_0$.
\end{defn}

The idea behind \cref{def:type-former} is that we can use the language of $\RF_0$ as a meta-theoretical framework to describe structures on a generic CwF $\C$.  The universe $\U$ in $\RF_0$ intuitively stands for the collection of types of $\C$.  Given some $A : \U$, the $\RF$-type $\El[A]$ corresponds to the terms of $A$ regarded as a type on $\C$.

Making this intuition precise is relatively straightforward: denote by $\interp{-}^{\presheaf\C}$ the unique split morphism $\RF_0 \to \presheaf{\C}$. Using $\interp{-}^{\presheaf\C}$, any type former can be interpreted as a presheaf on $\C$ constructed from $\Ty$ and $\Tm$, using the operations of $\RF$-categories in $\presheaf{\C}$.

\begin{defn}
Let $\Phi$ be a type former, and $\C$ a CwF.  A $\Phi$-structure on on $\C$ is a global element of $\interp{\Phi}^{\presheaf\C}$.  A CwF equipped with a $\Phi$-structure will be referred to as a $\Phi$-CwF.
\end{defn}

\begin{lem}\label{lem:phi-slice}
Let $\Phi$ be a type former. A $\Phi$-structure $\phi$ on $\C$ can be transported to a $\Phi$-structure on the slice category $\C/\Gamma$ for any context $\Gamma$.
\end{lem}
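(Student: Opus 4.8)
The plan is to produce an RF-morphism $\presheaf\C \to \presheaf{\C/\Gamma}$ and then use the initiality of $\RF_0$ to conclude that it sends $\interp\Phi^{\presheaf\C}$ to $\interp\Phi^{\presheaf{\C/\Gamma}}$; pushing the given global element through this morphism then produces the desired structure. The morphism I would use is restriction along the forgetful functor $U : \C/\Gamma \to \C$, i.e.\ the functor $U^\ast : \presheaf\C \to \presheaf{\C/\Gamma}$ with $(U^\ast P)(\Delta,\sigma) = P(\Delta)$. Since restriction is computed pointwise, $U^\ast$ preserves all limits and colimits, in particular the terminal presheaf. It is also worth recording the standard identification $\int y\Gamma \cong \C/\Gamma$, which by \cref{prop:presheaf-families} gives an equivalence $\presheaf{\C/\Gamma} \cong \presheaf\C / y\Gamma$ under which $U^\ast$ becomes base change $P \mapsto (P \times y\Gamma \to y\Gamma)$ along the unique map $y\Gamma \to 1$; this is the form in which its preservation properties are easiest to see.

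Next I would verify that $U^\ast$ is a split CwF morphism preserving the fibrant universe. On the nose it sends $\U = \Ty_\C$ to the presheaf $(\Delta,\sigma) \mapsto \Ty_\C(\Delta)$, which by \cref{prop:cwf-slice} is precisely the fibrant universe of $\presheaf{\C/\Gamma}$, and similarly for $\El$. Context extension in either presheaf category is defined fibrewise through \cref{prop:presheaf-families}, and a direct comparison of fibres shows that $U^\ast(P.A)$ and $(U^\ast P).(U^\ast A)$ agree, so $U^\ast$ preserves context extension strictly.

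The remaining task, which I expect to be the main obstacle, is preservation of the $\Pi$, $\Sigma$, equality and unit structures. Under the identification above $U^\ast$ is a base-change functor between presheaf topoi, and such functors preserve the locally cartesian closed structure---hence all four of these type formers---up to canonical isomorphism; what is \emph{not} automatic is that these comparison isomorphisms are identities, which is exactly the kind of strictness issue flagged elsewhere in the thesis. Rather than resolve strictness here, I would observe that $U^\ast$ is at least a \emph{weak} RF-morphism, so $U^\ast \circ \interp{-}^{\presheaf\C}$ and $\interp{-}^{\presheaf{\C/\Gamma}}$ are both weak RF-morphisms out of $\RF_0$; by \cref{thm:rf-2-initiality} they are isomorphic, whence $U^\ast(\interp\Phi^{\presheaf\C}) \cong \interp\Phi^{\presheaf{\C/\Gamma}}$.

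Finally I would carry out the transport. A $\Phi$-structure $\phi$ is a global element $1 \to \interp\Phi^{\presheaf\C}$; applying the functor $U^\ast$, together with $U^\ast 1 \cong 1$, yields a global element of $U^\ast(\interp\Phi^{\presheaf\C})$, and composing with the isomorphism to $\interp\Phi^{\presheaf{\C/\Gamma}}$ gives the required $\Phi$-structure on $\C/\Gamma$. If one instead checks that $U^\ast$ preserves every structure strictly, the appeal to \cref{thm:rf-2-initiality} can be replaced by the plain initiality of \cref{thm:rf0}, turning the isomorphism into an equality and removing the final composition.
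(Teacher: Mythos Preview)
Your approach is correct and is essentially the same argument as the paper's, only packaged externally rather than internally. The paper stays inside the single $\RF$-category $\presheaf\C$: it regards $\phi$ as a term of type $\interp{\Phi}$ over the unit context, substitutes along $!: y\Gamma \to 1$, and then observes that under the equivalence of \cref{prop:presheaf-families} the type $\interp{\Phi}[!]$ over $y\Gamma$ is exactly $\interp{\Phi}^{\presheaf{\C/\Gamma}}$; hence $\phi[!]$ is the desired structure. You do the same thing via the restriction functor $U^\ast$, and you yourself note that under $\presheaf{\C/\Gamma}\simeq\presheaf\C/y\Gamma$ this $U^\ast$ \emph{is} base change along $!$, i.e.\ substitution $[!]$.

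The only substantive difference is where the strictness comes from. In the paper's internal phrasing, ``substitution commutes with the $\RF$ type formers'' is literally the naturality that is part of the \emph{definitions} of $\Pi$, $\Sigma$, equality and unit structures (they are natural transformations into $\Ty$), and it preserves the universe $(\Ty,\Tm)$ on the nose by \cref{prop:cwf-slice}; so $[!]$ is a split $\RF$-morphism and plain initiality (\cref{thm:rf0}) gives an equality $\interp{\Phi}[!]=\interp{\Phi}^{\presheaf{\C/\Gamma}}$. Your external formulation forces you to re-derive this and hedge with \cref{thm:rf-2-initiality}, but as you anticipate in your final paragraph, the hedge is unnecessary: once you recognise $U^\ast$ as $[!]$, strict preservation is immediate and the isomorphism becomes an equality.
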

\begin{proof}
The $\Phi$-structure $\phi$ can be regarded as a term of type $\interp{\Phi}$ in the unit context of $\presheaf\C$.  If $! : y(\Gamma) \to 1$ is the unique morphism to the terminal object of $\presheaf\C$, it is not hard to verify that $\interp{\Phi}[!]$ coincides with $\interp{\Phi}^{\presheaf{\C / \Gamma}}$ under the isomorphism of \cref{prop:presheaf-families}.  Therefore, $\phi[!]$ is a $\Phi$-structure for $\C / \Gamma$.
\end{proof}

It follows from \cref{lem:phi-slice} that a slice of an $\RF$-category is itself an $\RF$-category.

\section{Examples}\label{sec:rf-examples}

All of the commonly employed type structures on CwFs can be expressed using the notion of type former developed in \cref{sec:type-formers}.

In particular, we can now revisit the definitions of the type structures of an $\RF$-category, as given in \cref{sec:cwf}, and reformulate them in terms of type formers.

For example, a $\Pi$-type structure is none other than a $\Phi^\Pi$-structure, where $\Pi$ is the following type former:
\begin{equation} \label{eq:pi-types-rf}
\begin{aligned}
\Phi^\Pi & = (A : \U)(B : A \to \U) \\
  & \to (P : \U) \times (P \cong ((a : A) \to B\ a)),
\end{aligned}
\end{equation}

where we are making use of the notation described in \cref{sec:notational-conventions} to represent nested $\Pi$ and $\Sigma$ types in $\RF$, and uses of $\El$ are implicit.  The symbol $\cong$ refers to a notion of \emph{isomorphism} internal to $\RF$, defined in the natural way:
\begin{align*}
A \cong B
& :\equiv (f : A \to B) \\
& \times (g : B \to A) \\
& \times ((x : A) \to g(f(x)) = x) \\
& \times ((y : B) \to f(g(y)) = y).
\end{align*}

Note that the equality symbol used here and in following type formers refers to the equality type structure that is part of the definition of $\RF$-category.

Expanding the definition of isomorphism into \cref{eq:pi-types-rf} brings it closer to the traditional formulation of $\Pi$-types: the return $\Sigma$-type in \cref{eq:pi-types-rf} consists of five components, corresponding to the formation, elimination and introduction rule, plus $\beta$ and $\eta$ equalities \cite{hofmann:syntax-and-semantics}.

Similarly, we can define a type former $\Phi^\Sigma$ for $\Sigma$-type structures, a type former $\Phi^{\textsc{eq}}$ for equality type structures, and a type former $\Phi^{\textsc{unit}}$ for unit type structures.

Unfortunately, we cannot use the above characterisations as definitions, because we need to bootstrap the process with a number of basic type structures in order to define $\RF_0$.

However, we can now use $\RF$ to give succint definitions of other commonly employed type structures, and, more importantly, we can prove metatheoretical results on CwFs while remaining agnostic of the particular type structures that they carry.

One of the simplest examples that we haven't covered directly so far is given by \emph{binary sums}.  They can be defined by the following $\RF$ type former:
\begin{align*}
\Phi^{\textsc{sum}}
& :\equiv (A, B: \U) \\
& \to (S : \U) \\
& \times (l : A \to S) \\
& \times (r : B \to S) \\
&
\begin{aligned}
\times (
& (P : S \to \U) \\
& (d_1 : (x : A) \to P(l(x))) \\
& (d_2 : (y : B) \to P(r(y))) \\
& (f : (s : S) \to P(s)) \\
& \times ((x : A) \to f(l(x)) = d_1(x)) \\
& \times ((y : B) \to f(r(y)) = d_2(y))).
\end{aligned}
\end{align*}

Another important example is \emph{intensional equality}, the cornerstone of Martin-L\"of type theory, and $\hott$ in particular.  This is not to be confused with the equality type former introduced in
\cref{sec:basic-type-formers}.
\begin{equation}\label{eq:ieq-rf}
\begin{aligned}
\Phi^{\textsc{IEQ}}
& :\equiv (A : \U) \\
& \to (E: A \to A \to \U) \\
& \times (r : (a : A) \to E(a, a)) \\
&
\begin{aligned}
\times (
& ((a : A)(P : (b : A) \to E(a, b) \to \U) \\
& (d : P(a, r(a))) \\
& \to (J : (b : A)(p : E(a, b)) \to P(b, p)) \\
& \times J(a, r(a)) = d).
\end{aligned}
\end{aligned}
\end{equation}

For comparison, the \emph{extensional} equality type structure of \cref{sec:basic-type-formers} can be represented in $\RF$ as follows:
\begin{align*}
\Phi^{\textsc{EQ}}
& :\equiv (A : \U) \\
& \to (E: A \to A \to \U) \\
& \times ((a, b : A) \to E(a, b) \cong (a = b)).
\end{align*}

Given a $\Phi^{\textsc{EQ}}$-structure, the ability to convert any \emph{propositional} equality, (i.e. a term of type $E(a, b)$), into a \emph{definitional} equality (i.e. an equality of $a$ and $b$ as terms), is often referred to as the ``reflection rule''.

The difference between intensional and extensional equality can then be summarised by the statement that intensional equality does not admit a reflection rule, and instead replaces it with the \emph{$J$ eliminator} and corresponding computation rule given in \cref{eq:ieq-rf}.

We employed extensional equality as a very convenient technical device in the development of our framework of type formers.  Indeed, many ``natural'' models of type theory like $\set$ or any presheaf model come equipped with a straightforward extensional equality structure.

However, in a constructive setting, extensional equality has certain undesirable characteristics (for example, models with extensional equality, such as $\RF_0$, tend to have undecidable equality of terms), hence intensional equality is often preferred.

As a compromise between the two forms of equality, we recall the following rule, depending on some $E : \Phi^{\textsc{IEQ}}$, called \emph{uniqueness of
identity proofs} (UIP).
\begin{equation} \label{eq:uip}
\begin{aligned}
\textsc{UIP}(E)
& :\equiv (A : \U) \\
& \to (a, b : A) \\
& \to (p, q : E(a, b)) \\
& \to E(p,q).
\end{aligned}
\end{equation}

$\textsc{UIP}$ says that any two parallel equalities are themselves equal, which means that types do not possess any higher equality structure.  In HoTT terminology, this can be expressed by saying that \emph{every type is a set}.

We will refer to the type former $(E : \Phi^{\textsc{IEQ}}) \times \Phi^{\textsc{UIP}}(E)$ as \emph{strict equality}.  Note that extensional equality satisfies UIP, hence it can be regarded as a special case of strict equality.

Other type formers that we will need in the following are:
\begin{itemize}
\item $\Phi^{\textsc{empty}}$, for the empty type;
\item $\Phi^\nat$, for the natural numbers;
\item $\Phi^{\textsc{funext}}$, for \emph{function extensionality}.
\end{itemize}

Their definitions can be obtained by encoding in $\RF_0$ the usual rules that concern them.  See for example \cite{hott-book} for a detailed exposition of these type formers and similar ones.

\section{Morphisms}

Similarly to what we did in \cref{sec:basic-type-former-morphisms}, we want to define what it means for a morphism between CwFs $F : \C \to \D$ to \emph{preserve} a $\Phi$-structure.
Unfortunately, due to the presence of $\Pi$-types in the description of a type former as a context in $\RF_0$, this turns out to be quite challenging.

In fact, given a CwF morphism $F : \C \to \D$, it is not possible in general to define a corresponding $\RF$-morphism $\widetilde F$ between $\presheaf\C$ and $\presheaf\D$, in either direction.
If we had such a morphism, we could say that $F$ \emph{preserves} $\Phi$-structures when $\widetilde F$ maps the $\Phi$-structure on $\C$ into the one on $\D$, or vice versa.

However, since this is not the case, our definition of preservation of type structures is much more cumbersome, and requires setting up some infrastructure to be able to talk about a form of ``logical relations'' on type structures.  Then, given an $F$, we will be able to recursively define the preservation relation on $\Phi$-structures on $\C$ and $\D$, essentially by induction on $\Phi$.

\begin{defn}\label{def:lax-rf-morphism}
An \emph{oplax $\RF$-morphism} between $\RF$-categories $\A$ and $\A'$, with universes $(\U, \El)$ and $(\U', \El')$ respectively, is given by:
\begin{itemize}
\item a CwF morphism $F : \A \to \A'$;
\item a morphism $F^{\U} : \A'(\U', F \U)$;
\item a morphism $F^{\El} : \A'(\U'.\El', F(\U.\El))$;
\end{itemize}
such that the following diagram commutes:
$$
\xymatrix{
\U'.\El' \ar[r]^{F^{\El}} \ar[d] &
F(\U.\El) \ar[d] \\
\U' \ar[r]_{F^{\U}} &
F\U.
}
$$
\end{defn}

We will often suppress the superscript $\U$ and $\El$ from our notation when working with an oplax $\RF$-morphism.

Note that $F$ is \emph{not} required to preserve any of the $\RF$-structure.

Given an oplax $\RF$-morphism $F : \A \to \A'$, we can construct an $\RF$-category
$\mathcal{R}_F$.

Objects of $\mathcal{R}_F$ are defined to be triples $\mathbf{\Gamma} = (\Gamma, \Gamma', R)$, where $\Gamma : \A$, $\Gamma' : \A'$, and $R$ is a \emph{span} over $F\Gamma$ and $\Gamma'$, i.e. a diagram in $\A'$ of the form:
\begin{equation}\label{eq:span}
\xymatrix{
F \Gamma & R \ar[l]_-l \ar[r]^-r & \Gamma'.
}
\end{equation}

A type over $\mathbf{\Gamma}$ is itself a triple $\mathbf{A} = (A, A', X)$, where $A : \Ty(\Gamma)$, $A' : \Ty(\Gamma')$, and $X : \Ty(R.FA[l].A'[r])$.
Context extension of $(A',A',X)$ is defined to be the span determined by $R.FA[l].A'[r].X$.

Terms of type $\mathbf{A}$ are defined to be triples $(a, a', x)$, where $a : \Tm_\Gamma(A)$, $a' : \Tm_{\Gamma'}(A')$, and $x : \Tm_R(X[al, a'r])$.

This determines a CwF structure on $\mathcal{R}_F$, and it is easy to see that the two obvious projections $\mathcal{R}_F \to \A$ and $\mathcal{R}_F \to \A'$ are split CwF morphisms.

\begin{prop}\label{prop:oplax-rf}
The CwF $\mathcal{R}_F$ defined above has an $\RF$-structure, and the two projections $\mathcal{R}_F \to \A$ and $\mathcal{R}_F \to \A'$ are split $\RF$-morphisms.
\end{prop}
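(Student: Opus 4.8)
The plan is to equip $\mathcal{R}_F$ with each of the four basic type structures and a universe by working \emph{componentwise}: on the $\A$- and $\A'$-components I simply reuse the structures that $\A$ and $\A'$ already carry as $\RF$-categories, while on the span (relational) component I use the corresponding structure of the ambient category $\A'$---inside which every relation $X$ lives---to build the appropriate logical relation. Since the two projections read off exactly the first and second components of each triple, and these are by construction the original structures on $\A$ and $\A'$, preservation of all the $\RF$-structure on the nose will be immediate; the projections are already split CwF morphisms, so they will automatically be split $\RF$-morphisms once the structures on $\mathcal{R}_F$ are shown to be well-defined.

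For $\Sigma$, given related types $(A,A',X)$ over $\mathbf\Gamma$ and $(B,B',Y)$ over $\mathbf\Gamma.\mathbf A$, I would take the $\A$- and $\A'$-components to be $\sigma(A,B)$ and $\sigma(A',B')$, and define the relational component as the $\Sigma$-type in $\A'$ combining $X$ and $Y$ over the appropriate reindexing. The defining bijection between terms of $\sigma$ and dependent pairs then holds in each component separately, using the remark after \cref{def:sigma-structure} that a $\Sigma$-structure is completely determined by such a bijection. Equality and the unit type are handled the same way, with relational component the equality relation, resp. the unit type, of $\A'$.

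The $\Pi$-type structure is more delicate, precisely because $F$ is oplax and does \emph{not} preserve $\Pi$. I would again set the $\A$- and $\A'$-components to $\pi(A,B)$ and $\pi(A',B')$, but define the relational component of the function type on a pair $(f,f')$, with $f : \Tm(F(\pi(A,B)))$ and $f'$ of the corresponding $\A'$-type, as the logical relation ``related arguments go to related results'', schematically $\Pi_{a}\Pi_{a'}\, X(a,a') \to Y(f\,a,\,f'\,a')$, formed with the $\Pi$-structure of $\A'$. What makes this legitimate is \cref{remark:application-through-morphism}: even though $F$ carries no $\Pi$-structure, the application morphism $\epsilon_{A,B}$ transports through $F$, so the term $f\,a$ makes sense for $f : \Tm(F(\pi(A,B)))$ and $a : \Tm(FA)$. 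Verifying the defining isomorphism for $\Pi$ in $\mathcal{R}_F$ then reduces to the componentwise isomorphisms together with the observation that a triple of terms is a term of the function type exactly when its first two components satisfy this relation.

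The genuine obstacle is the universe. A universe is merely a type $\U$ in the unit context together with a type $\El$ over it, so the task is to choose relational components $X_\U$ and $X_\El$ completing the triples $(\U_\A,\U_{\A'},X_\U)$ and $(\El_\A,\El_{\A'},X_\El)$, and this is exactly where the oplax data $F^\U$ and $F^\El$ of \cref{def:lax-rf-morphism} are used. The natural choice is the universe of small relations: $X_\U$, a type over $F\U_\A$ and $\U_{\A'}$ in $\A'$, classifies relations valued in $\U_{\A'}$ between the fibre $F\El_\A$ over a code $a$ and the fibre $\El_{\A'}$ over a code $a'$, and $X_\El$ is the corresponding evaluation, with the comparison maps supplied by $F^\U$ and $F^\El$ (over which the display-map square of \cref{def:lax-rf-morphism} commutes) used to link the two decodings and to witness that $\El$ is correctly fibred over $\U$ in $\mathcal{R}_F$. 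I expect the bulk of the work of the whole proof to be concentrated here: checking naturality of the relational universe and that $\El_{\mathcal{R}_F}$ is a well-defined type over it, since this is the one place where the non-structural oplax comparison data, rather than honest preserved $\RF$-structure, has to be invoked.
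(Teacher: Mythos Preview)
Your treatment of $\Pi$, $\Sigma$, equality and unit is essentially the paper's: indeed the paper only spells out the $\Pi$ case, declaring it ``the most involved step'', and your use of \cref{remark:application-through-morphism} to form $f\,a$ for $f:\Tm(F(\pi(A,B)))$ is exactly the device employed there.

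Where you diverge is the universe, and there you have both over-complicated the construction and misidentified where the oplax data enters. Your ``universe of small relations'' --- taking $X_\U(a,a')$ to be the type of $\U_{\A'}$-valued relations on $F\El_\A[a]\times\El_{\A'}[a']$, with $X_\El$ the evaluation --- does yield a valid universe for $\mathcal{R}_F$, but it uses only the CwF-morphism action of $F$ on $\El_\A$; the comparison maps $F^\U,F^\El$ play no role in it, contrary to what you write. The intended (and much simpler) construction is to take $X_\U(a,a')$ to be the equality type $a\seq F^\U(a')$ in $\A'$, so that the span of $\mathbf{\U}$ is the graph of $F^\U$; the commuting square in \cref{def:lax-rf-morphism} then lets one define $X_\El$ analogously via the fibrewise component of $F^\El$. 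This is where the oplax data is actually consumed, and it makes the universe the \emph{easiest} piece rather than the hardest.

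Moreover this choice is effectively forced by the sequel: \cref{lem:f-element-prop} needs the interpretation functor $\RF_0\to\mathcal{R}_F$ to land in the subcategory $\mathcal{R}_F'$ of subterminal spans, hence in particular the span of $\mathbf{\U}$ must be subterminal. The graph-of-$F^\U$ span is subterminal (its map to $F\U_\A\times\U_{\A'}$ is monic, the second component being the identity), whereas your universe-of-relations span is not. So your universe, while technically adequate for the present proposition in isolation, would break the very next lemma.
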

\begin{proof}
We will only show how to define a $\Pi$-type structure on $\mathcal{R}_F$, since this is the most involved step.

Let $\mathbf{\Gamma} = (\Gamma, \Gamma', R)$ be a context in $\mathcal{R}_F$, $\mathbf{A} = (A,A',X)$ a type over it, and $\mathbf{B} = (B,B',Y)$ a type over $\mathbf{\Gamma}.\mathbf{A}$.  Let $R$ be given by the span in \cref{eq:span}.

The $\Pi$-type $\Pi_{\mathbf{A}} \mathbf{B}$ is defined as the triple $\mathbf{P} = (\Pi_A B, \Pi_{A'} B', P)$, where $P$ is the following type in the context $R_0 = R(u : F(\Pi_A B)[l])(u' : \Pi_{A'}{B'}[r])$:
$$
\Pi_{a : F A[l]} \Pi_{a' : A'[r]}
\Pi_{X[a,a']}
Y[a, u\ a, a', u'\ a'],
$$

and $u\ a$ denotes the application of $u : F(\Pi_A B)[l]$ to $a : F A[l]$, as described in \cref{remark:application-through-morphism}.

Terms of type $\mathbf{P}$ are triples $(u, u', w)$, where $u : \Tm_\Gamma(\Pi_A B)$, $u' : \Tm_\Gamma(\Pi_{A'} B')$, and $w : \Tm_R(P[Fu[l], u'r])$.

Using the defining properties of $\Pi$-type structures in $\A$ and $\A'$, we can see that these are naturally isomorphism to triples $(b, b', y)$, where $b : \Tm_{\Gamma.A} B$, $b' : \Tm_{\Gamma'.A'} B'$, and $y : \Tm_{R(a : F A[l])(a' : A'[r]).X}(Y[a, Fb[l], a', b'[r]])$, which are exactly terms of type $\mathbf{B}$ in the context $\mathbf{\Gamma}.\mathbf{A}$.
\end{proof}

Since the functors $\mathcal{R}_F \to \A$ and $\mathcal{R}_F \to \A'$ are split $\RF$-morphisms by \cref{prop:oplax-rf}, initiality of $\RF_0$ implies that $\interp{\Phi}^{\mathcal{R}_F}$ is a span over $\interp{\Phi}^{\A}$ and $\interp{\Phi}^{\A'}$.  We will write that span as:
$$
\xymatrix{
F \interp{\Phi}^{\A} & \interp{\Phi}^F \ar[l] \ar[r] & \interp{\Phi}^{\A'}.
}
$$

\begin{defn}
Let $\phi$ be a global element of $\interp{\Phi}^{\A}$ and $\phi'$ a global element of $\interp{\Phi}^{\A'}$.  An \emph{element} of $\interp{\Phi}^F$ over $\phi$ and $\phi'$ is defined to be a global element $s$ of $\interp{\Phi}^F$ such that the following diagram commutes:
$$
\xymatrix{
& 1 \ar[dl]_\phi \ar[d]^s \ar[dr]^{\phi'} \\
F \interp{\Phi}^{\A} & \interp{\Phi}^F \ar[l] \ar[r] & \interp{\Phi}^{\A'}.
}
$$
\end{defn}

Let us now fix two CwFs $\C$ and $\D$, both equipped with $\Phi$-structures for some type former $\Phi$, and a CwF morphism $F : \C \to \D$.  The following \namecref{lem:presheaf-functor-lax} is an immediate consequence of \cref{def:lax-rf-morphism}.

\begin{lem}\label{lem:presheaf-functor-lax}
The functor $F^* : \presheaf \D \to \presheaf\C$ is an oplax $\RF$-morphism.
\end{lem}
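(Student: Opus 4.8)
The plan is to unfold \cref{def:lax-rf-morphism} for the choice $\A :\equiv \presheaf\D$ and $\A' :\equiv \presheaf\C$, whose fibrant universes are $(\U,\El) = (\Ty_\D,\Tm_\D)$ and $(\U',\El') = (\Ty_\C,\Tm_\C)$, and to supply the three required pieces of data together with the single coherence square. The bulk of the work is organisational: once the presheaves in sight are computed explicitly, all three components turn out to be assembled directly from the CwF-morphism structure of $F$ itself.

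First I would exhibit $F^*$ as a CwF morphism. For a presheaf $P$ on $\D$ there is a comparison functor $\widetilde F_P : \int F^* P \to \int P$ sending an object $(\Gamma, x)$, where $x \in (F^*P)_\Gamma = P_{F\Gamma}$, to $(F\Gamma, x)$, and a morphism $g$ to $Fg$; this is well defined since $P(Fg)(x_0) = x$ whenever $(F^*P)(g)(x_0) = x$. The action of $F^*$ on types is then precomposition $A \mapsto \widetilde F_P^* A$, carrying $\bTy(P) = \presheaf{\int P}$ into $\bTy(F^*P) = \presheaf{\int F^* P}$, and its action on terms is the same precomposition applied to a morphism $1 \to A$, using that $\widetilde F_P^*$ fixes the terminal presheaf. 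Computing total spaces shows that $F^*(P.A)$ and $F^* P.\,\widetilde F_P^* A$ coincide on the nose — both send $\Gamma$ to $\{(x,y) : x \in P_{F\Gamma},\, y \in A(F\Gamma, x)\}$ — so $\phi^{F^*}_A$ is an identity and $F^*$ is in fact a \emph{split} CwF morphism; that $F^*1$ is terminal is clear since $F^*$ preserves the constant presheaf on a singleton.

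Next I would supply the universe data from $F$ directly. As part of its CwF-morphism structure (\cref{def:cwf-morphism}), $F$ carries a natural family $F^\Ty : \Ty_\C(\Gamma) \to \Ty_\D(F\Gamma)$, which is exactly a morphism of presheaves on $\C$ from $\Ty_\C$ to $F^*\Ty_\D$; I take $(F^*)^\U :\equiv F^\Ty$. Computing total spaces again identifies $\U'.\El'$ with $\Gamma \mapsto \{(A,a) : A \in \Ty_\C(\Gamma),\, a \in \Tm_\Gamma(A)\}$ and $F^*(\U.\El)$ with $\Gamma \mapsto \{(A',a') : A' \in \Ty_\D(F\Gamma),\, a' \in \Tm_{F\Gamma}(A')\}$, and I take $(F^*)^\El$ to be $(A,a) \mapsto (F^\Ty A,\, F^\Tm a)$; this is well typed and natural because $F^\Tm a \in \Tm_{F\Gamma}(F^\Ty A)$ and both $F^\Ty$ and $F^\Tm$ are natural. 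The coherence square then commutes strictly, since its two vertical legs are the first projections and both composites send $(A,a)$ to $F^\Ty A$, with $F^\Tm$ lying over $F^\Ty$.

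The one point I would be most careful about — and the reason the structure is \emph{oplax} rather than lax — is the direction of the universe morphisms. The restriction functor $F^*$ runs from $\presheaf\D$ to $\presheaf\C$, whereas a CwF morphism transports $\C$-types to $\D$-types, so the induced maps necessarily go from the $\presheaf\C$-universe $\Ty_\C$ into the restriction $F^*\Ty_\D$ (and likewise on $\El$), which is precisely the variance required by \cref{def:lax-rf-morphism}. Since no preservation of $\Pi$, $\Sigma$, equality or unit structure is demanded, nothing beyond these identifications and routine naturality checks remains.
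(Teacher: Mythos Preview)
Your proof is correct and follows exactly the approach the paper has in mind: the paper states that the lemma is an immediate consequence of \cref{def:lax-rf-morphism}, and what you have written is precisely the explicit unfolding of that definition, supplying the CwF-morphism structure on $F^*$ and reading off the universe components $(F^*)^{\U}$ and $(F^*)^{\El}$ directly from the natural transformations $F^{\Ty}$ and $F^{\Tm}$ that come with the CwF morphism $F$. Your care about the variance (why the structure is oplax rather than lax) is the only non-bookkeeping observation, and it is the right one.
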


It follows from \cref{lem:presheaf-functor-lax} that we have an $\RF$-category $\mathcal{R}_{F^*}$, thus we get an interpretation morphism $\RF_0 \to \mathcal{R}_{F^*}$.

\begin{defn}\label{def:phi-morphism}
Let $\phi$ and $\psi$ be the $\Phi$-structures of $\C$ and $\D$ respectively.  We say that $F$ is a \emph{$\Phi$-morphism} (or that $F$ preserves $\Phi$-structures) if there exists an element of $\interp{\Phi}^{F^*}$ over $\phi$ and $\psi$.
\end{defn}

\Cref{def:phi-morphism} is based on the idea of \emph{logical relations} \cite{tait:logical-relations}.  For a fixed CwF morphism $F$, we defined a notion of ``being related through $F$'' for $\Phi$-structures, by induction on $\Phi$.

For the type formers of $\RF$ itself, it is not hard to see that preservation as defined in \cref{sec:basic-type-former-morphisms} coincides with the notion of \cref{def:phi-morphism}, when using the equivalent definitions given in \cref{sec:rf-examples}.

Note that, for a general $\Phi$, for $\Phi$-structures $\phi$ and $\psi$ on $\C$ and $\D$ respectively, being related through $F$ does not mean that $\phi$ can be mapped through $F$ to a $\Phi$-structure on $\D$ that happens to coincide with $\psi$.  In fact, there is no way in general to transport a $\Phi$-structure along an arbitrary functor.

This can be understood in analogy with common algebraic structures. For example, given two monoids $A$ and $B$, and a function between them $f : A \to B$, we know what it means for $f$ to be a monoid homomorphism - meaning that the two monoid structures on $A$ and $B$ are ``related through $f$'' - but there is in general no way to transport a monoid structure from $A$ to $B$.

\section{Composition of morphisms}\label{sec:composition}

Unfortunately, for a general type former $\Phi$, \cref{def:phi-morphism} is not very well behaved.  In fact, it is not even guaranteed that composition of $\Phi$-morphisms is a $\Phi$-morphism, that is, $\Phi$-CwFs do not necessarily form a category.

The problem becomes apparent as soon as we consider certain ``higher order'' type formers, i.e. type formers with $\Pi$ types nested on the left.  The simplest example is:
$$
\Phi :\equiv (\U \to \U) \to \U.
$$

To make our example easier to follow, we observe that, given any set $A$, we can construct a CwF with only one context $1$, $\Ty(1) = A$, and $\Tm_1(a) = 1$ for all types $a : A$, with context extension defined in the only possible way.

If we assume that the set $A$ is equipped with a function $A^A \to A$, then its corresponding CwF can be equipped with a $\Phi$-structure.  Let us call a set equipped with such a structure a $\Phi$-set.

Given a function $f : A \to B$ between $\Phi$-sets, we say that it is a $\Phi$-morphism if it induces a $\Phi$-morphism on the corresponding $\Phi$-CwFs.  If we denote by $\phi_A$ and $\phi_B$ the $\Phi$-structures on $A$ and $B$ respectively, what this means is that for all functions $u : A \to A$ and $v : B \to B$ such that the following diagram commutes:
$$
\xymatrix{
A \ar[r]^u \ar[d]_f &
A \ar[d]^f \\
B \ar[r]_v &
B,
}
$$
we have that $f(\phi_A(u)) = \phi_B(v)$.

To show that $\Phi$-morphisms between $\Phi$-CwFs are not in general closed under composition, it is therefore enough to find $\Phi$-morphisms $f : A \to B$, $g : B \to C$ such that $g \circ f$ is not a $\Phi$-morphism.

We take $A=1$, $B=2$, $C=3$, and $f, g$ to be inclusions. The $\Phi$ structure $\phi_A$ on $A$ is the only possible one, while the $\Phi$-structure $\phi_B$ on $B$ takes a function $u : B \to B$ and returns $u(0)$.

The $\Phi$-structure $\phi_C$ on $C$ is defined as follows: given $u : C \to C$, it distinguishes two cases:
\begin{itemize}
\item if $u(2) \subseteq 2$, then $\phi_C(u) = u(0)$;
\item otherwise, $\phi_C(u) = 1$.
\end{itemize}

It is easy to see that the inclusions $A \to B$ and $B \to C$ are indeed $\Phi$-morphisms. However, if we take for example the function $u : C \to C$ that swaps $1$ and $2$ and fixes $0$, then clearly the following diagram commutes:
$$
\xymatrix{
1 \ar[r]^{\id} \ar[d]_0 &
1 \ar[d]^0 \\
C \ar[r]_u &
C,
}
$$

but $\phi_C(v) = 1 \neq 0 = \phi_A(\id)$.

\section{Special type formers}\label{sec:special-type-formers}

The notion of type formers is very general.  As shown in \cref{sec:composition}, it is possible to define ``higher order'' type formers, for which even the most basic properties are not provable.

In practice, most of the commonly employed type formers are much better behaved than in the general case.  For this reason, it is useful to single out certain specific properties of type formers that make them more suitable to be analysed.

\begin{lem}\label{lem:f-element-prop}
Let $\Phi$ be a type former, $F : \A \to \A'$ an oplax $\RF$-morphism.  Suppose $\phi$ is a global element of $\interp{\Phi}^{\A}$ and $\phi'$ a global element of $\interp{\Phi}^{\A'}$.  Then any two elements of $\interp{\Phi}^F$ over $\phi$ and $\phi'$ are equal.
\end{lem}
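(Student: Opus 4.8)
The plan is to reduce the statement to a proof-irrelevance (subterminality) property of the span apex $\interp{\Phi}^F$, and then to establish that property by induction over the construction of $\RF_0$.

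First I would unwind the definitions. An element of $\interp{\Phi}^F$ over $\phi$ and $\phi'$ is a global element $s$ of the object $\interp{\Phi}^F$ of $\A'$ — the apex of the span produced by interpreting $\Phi$ in $\mathcal{R}_F$ (see \cref{prop:oplax-rf}) — whose composites with the two span legs are $F\phi$ and $\phi'$. Thus it suffices to prove that $\interp{\Phi}^F$ is \emph{subterminal relative to its legs}: any two global elements of $\interp{\Phi}^F$ that have the same composite with each leg are equal. Two elements over the fixed pair $(\phi, \phi')$ have, by definition, the same leg-composites, so this immediately yields their equality.

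Next I would prove the following strengthening by induction, using the initiality of $\RF_0$ (\cref{thm:rf0}): for every type $(A, A', X)$ in the image of the interpretation $\RF_0 \to \mathcal{R}_F$, the relational component $X$ is a subterminal type over $R.FA[l].A'[r]$, and for every context $(\Gamma, \Gamma', R)$ in that image the apex $R$ is subterminal relative to its legs. Initiality lets me carry out this induction simply by checking that the class of such ``propositional'' types and contexts is closed under all the $\RF$-operations and context extension. These closure steps are routine: extending a context by a subterminal type preserves subterminality of the apex; the relational component of a $\Pi$-type, which by \cref{prop:oplax-rf} has the form $\Pi_{a}\Pi_{a'}\Pi_{X}\, Y[\dots]$, is subterminal whenever $Y$ is, because dependent products of subterminal families are subterminal; the same holds for $\Sigma$, a pair of subterminals being subterminal; the relational component of an equality type is subterminal by \cref{lem:eq-subterminal}; and the unit type is trivially subterminal.

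The main obstacle is the base case of the universe. If small types of $\mathcal{R}_F$ carried an arbitrary relation between $FA$ and $A'$, the relational component of $\interp{\U}$ would fail to be subterminal and the lemma would be false for $\Phi = \U$. The point is that the oplax data $F^{\U}$ and $F^{\El}$ of \cref{def:lax-rf-morphism} are precisely what allows the universe of $\mathcal{R}_F$ to be defined \emph{functionally}: the relational component of a small type over $(A, A')$ is not a free relation but an equality witness linking $A'$ to the transport of $FA$ along $F^{\U}$ and $F^{\El}$, which is subterminal by \cref{lem:eq-subterminal}. Granting this, the relational component of $\El$ over such a universe is the graph of the induced comparison map, and hence again subterminal, which seeds the induction. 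I would verify these two base cases with care, since they are exactly where the construction of the $\RF$-structure on $\mathcal{R}_F$ — given explicitly only for $\Pi$ in \cref{prop:oplax-rf} — genuinely enters.
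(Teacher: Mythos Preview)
Your approach is correct and is essentially the same argument as the paper's, only packaged differently. The paper defines $\mathcal{R}_F'$ as the full sub-$\RF$-category of $\mathcal{R}_F$ on those objects whose span is subterminal (in the category of spans over its endpoints), observes that $\mathcal{R}_F'$ is closed under all the $\RF$-operations and hence is itself an $\RF$-category, and then invokes initiality of $\RF_0$ once to conclude that the interpretation $\interp{-}^{\mathcal{R}_F}$ factors through $\mathcal{R}_F'$. This is exactly your induction, but phrased as a single application of initiality rather than a case analysis; your closure checks for $\Pi$, $\Sigma$, equality, unit and context extension are precisely the content of ``$\mathcal{R}_F'$ is an $\RF$-category''. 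Your explicit worry about the universe is well-placed and is the only non-routine point: the paper silently relies on the fact that the universe of $\mathcal{R}_F$, built from the oplax data $F^\U$, $F^{\El}$, has subterminal relational component (it is essentially the graph of $F^\U$), which is just what you sketch. So your proposal is fine; the paper's formulation is slightly slicker in that it avoids spelling out the individual closure cases, while yours has the virtue of making the one delicate step --- the universe --- visible.
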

\begin{proof}
Let $\mathcal{R}_F'$ be subcategory of $\mathcal{R}_F$ consisting of all those objects
$$
\xymatrix{
F \Gamma & R \ar[l]_-l \ar[r]^-r & \Gamma',
}
$$
where $R$ is subterminal in the category of spans over $F \Gamma$ and $\Gamma'$.

It is not hard to see that $\mathcal{R}_F'$ is itself an $\RF$-category, and consequently the inclusion functor $i : \mathcal{R}_F' \to \mathcal{R}_F$ is a split $\RF$-morphism.

It follows that the interpretation functor $\interp{-}^{\mathcal{R}_F}$ has values in $\mathcal{R}_F'$.  In particular, $\interp{\Phi}^F$ is subterminal over $F \interp{\Phi}^{\A}$ and $\interp{\Phi}^{\A}$, which is exactly what we had to prove.
\end{proof}

\Cref{lem:f-element-prop} ensures that, if a CwF morphism $F : \C \to \D$
between $\Phi$-CwFs is a $\Phi$-morphism, then there is at most one possible choice for the element $s$ of \cref{def:phi-morphism}.

Now, given oplax $\RF$-morphisms $F : \A \to \B$ and $G : \B \to \C$, we can form the pullback $\mathcal{R}_F \times_\B \mathcal{R}_G$, which is an $\RF$-category by \cref{lem:rf-limits}, and is equipped with split morphisms $\pi_A$ and $\pi_C$ to $\A$ and $\C$ respectively.

\begin{defn}\label{def:flat-type-former}
We say that a type former $\Phi$ is \emph{flat} if for all $F, G$ as above, whenever $\interp{\Phi}^{\mathcal{R}_F \times_\B \mathcal{R}_G}$ has a global element $s$, then there is an element of $\interp{\Phi}^{GF}$ over $\pi_A(s)$ and $\pi_C(s)$.
\end{defn}

\Cref{def:flat-type-former} formalises the idea of a type former that is well-behaved with respect to composition, as the following \namecref{prop:flat-composition} shows.

\begin{prop}\label{prop:flat-composition}
If $\Phi$ is a flat type former, composition of $\Phi$-morphisms is a $\Phi$-morphism.
\end{prop}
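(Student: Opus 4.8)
The plan is to unwind the definition of $\Phi$-morphism on both sides, assemble the two witnessing elements into a single global element over the pullback $\RF$-category appearing in \cref{def:flat-type-former}, and then let flatness produce the required element for the composite. Concretely, suppose $\C, \D, \E$ are $\Phi$-CwFs with $\Phi$-structures $\phi, \psi, \chi$, and let $F : \C \to \D$ and $G : \D \to \E$ be $\Phi$-morphisms. By \cref{lem:presheaf-functor-lax} and \cref{def:phi-morphism}, $F$ being a $\Phi$-morphism supplies a global element $s_F$ of $\interp{\Phi}^{\mathcal{R}_{F^*}}$ whose projections along the two split morphisms $\mathcal{R}_{F^*} \to \presheaf\D$ and $\mathcal{R}_{F^*} \to \presheaf\C$ are $\psi$ and $\phi$; similarly $G$ supplies a global element $s_G$ of $\interp{\Phi}^{\mathcal{R}_{G^*}}$ projecting to $\chi$ and $\psi$ along $\mathcal{R}_{G^*} \to \presheaf\E$ and $\mathcal{R}_{G^*} \to \presheaf\D$. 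Since $(-)^*$ is contravariant, $(G \circ F)^* = F^* \circ G^*$, so the composable oplax $\RF$-morphisms to which I want to apply flatness are $G^* : \presheaf\E \to \presheaf\D$ and $F^* : \presheaf\D \to \presheaf\C$, whose composite is exactly the oplax $\RF$-morphism $(G \circ F)^*$ used in the definition of $\Phi$-morphism for $G \circ F$.

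Next I form the pullback $\mathcal{P} :\equiv \mathcal{R}_{G^*} \times_{\presheaf\D} \mathcal{R}_{F^*}$, which is an $\RF$-category by \cref{lem:rf-limits} and carries split projections $\pi_{\presheaf\E}$ and $\pi_{\presheaf\C}$. The key observation is that interpretation commutes with this pullback: because limits in $\strict{\mathcal{RF}}$ are computed pointwise (\cref{lem:rf-limits}) and the split morphism out of the initial object $\RF_0$ is unique, the interpretation $\interp{\Phi}^{\mathcal{P}}$ projects to $\interp{\Phi}^{\mathcal{R}_{G^*}}$ and $\interp{\Phi}^{\mathcal{R}_{F^*}}$, and a global element of $\interp{\Phi}^{\mathcal{P}}$ is precisely a compatible pair of global elements of the two factors that agree over $\presheaf\D$. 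The pair $(s_G, s_F)$ is compatible for exactly this reason: both $s_G$ and $s_F$ restrict to $\psi$ over $\presheaf\D$. Hence they assemble into a global element $s$ of $\interp{\Phi}^{\mathcal{P}}$ with $\pi_{\presheaf\E}(s) = \chi$ and $\pi_{\presheaf\C}(s) = \phi$.

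Applying flatness (\cref{def:flat-type-former}) to $s$ then produces an element of $\interp{\Phi}^{F^* \circ G^*} = \interp{\Phi}^{(G \circ F)^*}$ whose projections are $\pi_{\presheaf\E}(s) = \chi$ and $\pi_{\presheaf\C}(s) = \phi$, and by \cref{def:phi-morphism} this is exactly the witness that $G \circ F$ is a $\Phi$-morphism. The main obstacle is the bookkeeping in the middle step: one must check that $\interp{\Phi}$ sends the pullback $\mathcal{P}$ to the pullback of the factor interpretations, so that global elements genuinely factor as compatible pairs — this rests on the pointwise construction of limits in \cref{lem:rf-limits} together with uniqueness of split morphisms from $\RF_0$. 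A secondary point to spell out is the coherence $(G \circ F)^* = F^* \circ G^*$ as oplax $\RF$-morphisms, i.e. that the comparison maps $F^{\U}, F^{\El}$ compose as expected, so that flatness is literally applicable to the composite.
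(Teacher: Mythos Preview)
Your proposal is correct and follows essentially the same approach as the paper's proof: assemble the two witnesses into a global element of the interpretation over the pullback $\mathcal{R}_{G^*} \times_{\presheaf\D} \mathcal{R}_{F^*}$, then invoke flatness to obtain the witness for the composite. The paper's proof is considerably terser and does not spell out the contravariance $(G\circ F)^* = F^* \circ G^*$ or the pointwise-limit argument showing that a global element over the pullback is the same as a compatible pair; your additional bookkeeping on these points is appropriate and fills in what the paper leaves implicit.
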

\begin{proof}
If $F : \A \to \B$ and $G : \B \to \C$ are $\Phi$-morphisms, then $\interp{\Phi}^{\mathcal{R}_F \times_{\B} \mathcal{R}_G}$ has a global element $s$, where $\pi_A(s)$ is the $\Phi$-structure on $\A$, and $\pi_B(s)$ is the $\Phi$-structure on $\C$.

Since $\Phi$ is flat, we get a corresponding element of $\interp{\Phi}^{GF}$, showing that $GF$ is a $\Phi$-morphism.
\end{proof}

\begin{cor}\label{cor:flat-category}
Let $\Phi$ be a flat type former.  $\Phi$-CwFs, together with $\Phi$-morphisms, form a category.
\end{cor}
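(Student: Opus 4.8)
The plan is to show that $\Phi$-CwFs and $\Phi$-morphisms satisfy the category axioms, exploiting the fact that being a $\Phi$-morphism (\cref{def:phi-morphism}) is a \emph{property} of an underlying CwF morphism rather than extra structure: by \cref{lem:f-element-prop}, the witnessing element $s$ of $\interp{\Phi}^{F^*}$ over $\phi$ and $\psi$ is unique whenever it exists. Consequently, $\Phi$-morphisms form a subclass of CwF morphisms, and it suffices to check that this subclass contains all identities and is closed under composition; associativity and the unit laws are then inherited from the ambient category of CwFs and CwF morphisms, whose composites are again CwF morphisms because the isomorphisms $\phi^F_A$ of \cref{def:cwf-morphism} compose to isomorphisms.

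Closure under composition is exactly \cref{prop:flat-composition}: given $\Phi$-morphisms $F : \C \to \D$ and $G : \D \to \E$, flatness of $\Phi$ supplies a witnessing element for $G \circ F$, so $G \circ F$ is again a $\Phi$-morphism. This is the only place where the flatness hypothesis intervenes.

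The remaining point, which I expect to be the main obstacle, is that every identity CwF morphism $\id_\C$ is a $\Phi$-morphism. Here the induced presheaf functor is $\id^* = \id : \presheaf\C \to \presheaf\C$, and I must exhibit a global element of $\interp{\Phi}^{\id^*}$ lying over $\phi$ and $\phi$, where $\phi$ is the $\Phi$-structure on $\C$. To this end I would construct the \emph{reflexive} (diagonal) comparison $\Delta : \presheaf\C \to \mathcal{R}_{\id^*}$, sending a context $\Gamma$ to the span with both legs the identity of $\Gamma$, and a type $A$ to the triple $(A, A, \eq(A))$, using the equality type as the relation between the two copies of $A$. The key verification is that $\Delta$ is a \emph{split} $\RF$-morphism: preservation of context extension on the nose amounts to the isomorphism $\Gamma.A.A.\eq(A) \cong \Gamma.A$, which holds precisely because $\eq(A)$ is subterminal (\cref{lem:eq-subterminal}), so its total space collapses onto the diagonal; preservation of the $\Pi$, $\Sigma$, equality and unit structures then follows from the explicit descriptions used in the proof of \cref{prop:oplax-rf}.

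Granting that $\Delta$ is a split $\RF$-morphism, it is by construction a common section of the two projections $\mathcal{R}_{\id^*} \to \presheaf\C$. By initiality of $\RF_0$ in $\strict{\mathcal{RF}}$, the composite $\Delta \circ \interp{-}^{\presheaf\C}$ coincides with the unique split morphism $\interp{-}^{\mathcal{R}_{\id^*}}$, so $\Delta$ carries $\interp{\Phi}^{\presheaf\C}$ to $\interp{\Phi}^{\id^*}$ and hence the global element $\phi$ to a global element $s :\equiv \Delta(\phi)$ of $\interp{\Phi}^{\id^*}$. Since $\Delta$ is a section of both projections, $s$ lies over $\phi$ and $\phi$, which is exactly the witness demanded by \cref{def:phi-morphism}. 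This shows $\id_\C$ is a $\Phi$-morphism, and combined with the closure under composition it completes the verification that $\Phi$-CwFs together with $\Phi$-morphisms form a category.
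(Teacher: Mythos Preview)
The paper states the corollary without proof, treating it as an immediate consequence of \cref{prop:flat-composition}; in particular it gives no explicit argument that identity CwF morphisms are $\Phi$-morphisms. Your write-up is therefore more thorough than the paper's, and your overall plan---reduce to showing that identities are $\Phi$-morphisms and that $\Phi$-morphisms are closed under composition, the latter being exactly \cref{prop:flat-composition}---is correct.

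There is, however, a genuine gap in your treatment of identities. You claim that the diagonal $\Delta : \presheaf\C \to \mathcal{R}_{\id^*}$, sending $P$ to the identity span and $A$ to $(A,A,\eq(A))$, is a \emph{split} $\RF$-morphism. It is not. Context extension in $\mathcal{R}_{\id^*}$ of $(P,P,P)$ by $(A,A,\eq(A))$ yields the span with apex $P.A.A.\eq(A)$, whereas $\Delta(P.A)$ is the identity span on $P.A$. These are isomorphic (via the equality-structure isomorphism of \cref{def:equality-structure}), but not literally equal, so $\phi^\Delta_A$ is not the identity and $\Delta$ fails to be split. Your sentence ``preservation of context extension on the nose amounts to the isomorphism $\Gamma.A.A.\eq(A)\cong\Gamma.A$'' conflates strict equality with isomorphism.

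The fix is straightforward but requires care. One option is to accept that $\Delta$ is only a \emph{weak} $\RF$-morphism and invoke \cref{thm:rf-2-initiality} to obtain an isomorphism $\Delta\circ\interp{-}^{\presheaf\C}\cong\interp{-}^{\mathcal{R}_{\id^*}}$; you then transport $\Delta(\phi)$ along this isomorphism and check that the resulting element still lies over $(\phi,\phi)$, using that the projections $\mathcal{R}_{\id^*}\to\presheaf\C$ are split. A cleaner alternative, in the spirit of the proof of \cref{lem:f-element-prop}, is to build an $\RF$-category of \emph{reflexive} spans (spans equipped with a common section of both legs) with a split $\RF$-morphism to $\mathcal{R}_{\id^*}$ and to $\presheaf\C$; the interpretation of $\Phi$ there then automatically supplies the required diagonal element.
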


\begin{defn}\label{def:algebraic-type-former}
A flat type former $\Phi$ is said to be \emph{algebraic} if the category of $\Phi$-CwFs and \emph{split} $\Phi$-morphism has an initial object.
\end{defn}

All the usually considered type formers are algebraic.  In particular, all the type formers involved in the definition of an $\RF$-category are algebraic (as essentially proved by \cref{thm:rf0}), as well as all the examples of \cref{sec:rf-examples}.

\begin{prop}\label{prop:algebraic-2-initiality}
Let $\Phi$ be an algebraic type former, $\mathcal{H}$ the initial $\Phi$-CwF, and $\C$ an arbitrary $\Phi$-CwF.  Then any two $\Phi$-morphisms $F, G: \mathcal{H} \to \C$ are isomorphic.
\end{prop}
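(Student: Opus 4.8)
The plan is to adapt the pseudo-equaliser argument of \cref{thm:rf-2-initiality} to an arbitrary algebraic type former. First I would build the \emph{pseudo-equaliser} $\E$ of $F$ and $G$: its objects are pairs $(\Gamma, \theta)$, where $\Gamma : \mathcal{H}$ and $\theta : F\Gamma \to G\Gamma$ is an isomorphism in $\C$, and a morphism $(\Gamma, \theta) \to (\Delta, \theta')$ is a morphism $\sigma : \mathcal{H}(\Gamma, \Delta)$ satisfying $\theta' \circ F\sigma = G\sigma \circ \theta$. A type over $(\Gamma, \theta)$ is a pair $(A, \xi)$, with $A : \Ty(\Gamma)$ and $\xi : FA \cong GA[\theta]$ an isomorphism in $\bTy(F\Gamma)$, and terms are decorated analogously. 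This yields a CwF structure on $\E$ for which the projection $\pi : \E \to \mathcal{H}$, forgetting all the decorations, is a split CwF morphism.

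The main obstacle is to promote $\E$ to a $\Phi$-CwF in such a way that $\pi$ becomes a split $\Phi$-morphism, and this is where the hypothesis that $F$ and $G$ are $\Phi$-morphisms enters. By \cref{def:phi-morphism}, each of $F$ and $G$ supplies an element of $\interp{\Phi}^{F^*}$, respectively $\interp{\Phi}^{G^*}$, witnessing that the $\Phi$-structures on $\mathcal{H}$ and $\C$ are related through the corresponding oplax $\RF$-morphism. Interpreting the type former $\Phi$ in $\presheaf{\E}$ and unwinding the decorated construction of $\E$, these two relatedness witnesses should assemble into a single global element of $\interp{\Phi}^{\presheaf{\E}}$, that is, a $\Phi$-structure on $\E$; the coherence required for this assembly to be well-defined is guaranteed by the essential uniqueness of relatedness witnesses (\cref{lem:f-element-prop}). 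By construction $\pi$ preserves this structure strictly, so it is a split $\Phi$-morphism. I expect this step to carry the real weight of the argument, since it is the only place where the $\RF$-logical-relations machinery of \cref{sec:type-formers} is genuinely needed.

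With $\E$ in hand, initiality finishes the proof. Since $\mathcal{H}$ is initial in the category of $\Phi$-CwFs and split $\Phi$-morphisms (\cref{def:algebraic-type-former}), there is a split $\Phi$-morphism $s : \mathcal{H} \to \E$. Both $\pi \circ s$ and $\id_{\mathcal{H}}$ are split $\Phi$-morphisms from $\mathcal{H}$ to itself, so initiality forces $\pi \circ s = \id_{\mathcal{H}}$, exhibiting $s$ as a section of $\pi$. Writing $s(\Gamma) = (\Gamma, \theta_\Gamma)$, the fact that $s$ is a functor makes the family $\theta_\Gamma : F\Gamma \cong G\Gamma$ natural in $\Gamma$, while the type- and term-components of $s$ record precisely the isomorphisms $\xi$ comparing the action of $F$ and $G$ on types and terms. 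These data together constitute an isomorphism $F \cong G$ of CwF morphisms, which is what we had to prove.
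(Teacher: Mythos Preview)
Your overall strategy matches the paper's: build the pseudo-equaliser $\E$ of $F$ and $G$, equip it with a $\Phi$-structure so that the projection $\pi : \E \to \mathcal{H}$ is a split $\Phi$-morphism, and conclude by initiality. The final paragraph is fine.

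The gap is in the middle paragraph. You write that the two relatedness witnesses ``should assemble into a single global element of $\interp{\Phi}^{\presheaf{\E}}$'', and cite \cref{lem:f-element-prop} for the required coherence. But \cref{lem:f-element-prop} is a \emph{uniqueness} statement; it gives you nothing towards \emph{existence} of a $\Phi$-structure on $\E$. Nor does ``interpreting $\Phi$ in $\presheaf{\E}$'' help by itself: $\interp{\Phi}^{\presheaf{\E}}$ knows nothing about $F$ or $G$, so there is no evident map into it along which to push the two witnesses. This is precisely the step that needs a mechanism, and you have not supplied one.

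The paper's mechanism is as follows. Since the witnesses for $F$ and $G$ both sit over the \emph{same} $\Phi$-structure on $\C$, they pair up to a global element of $\interp{\Phi}$ in the fibre product $\mathcal{R}_{F^*} \times_{\presheaf\C} \mathcal{R}_{G^*}$. One then constructs an auxiliary $\RF$-category $\mathcal{S}$ (objects: a presheaf $P$ on $\mathcal{H}$, a presheaf $Q$ on $\C$, and a span $F^*Q \leftarrow R \rightarrow G^*Q$) together with $\RF$-morphisms
\[
\mathcal{R}_{F^*} \times_{\presheaf\C} \mathcal{R}_{G^*} \longrightarrow \mathcal{S} \longrightarrow \mathcal{R}_{\pi^*}.
\]
Transporting the paired witness along this chain yields a global element of $\interp{\Phi}^{\mathcal{R}_{\pi^*}}$, whose $\presheaf{\E}$-component is the desired $\Phi$-structure on $\E$, and whose span component certifies that $\pi$ is a $\Phi$-morphism. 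That is the content you are missing; once it is in place, your concluding initiality argument goes through exactly as you wrote it.
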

\begin{proof}
Let $\mathcal{S}$ be the $\RF$-category whose objects are triples $(P, Q, R)$, where $P$ is a presheaf on $\mathcal{H}$, $Q$ a presheaf on $\C$, and $R$ a span of the form:
$$
\xymatrix{
F^*Q & R \ar[r]\ar[l] & G^*Q.
}
$$

Let $\E$ the pseudo-equaliser of $F$ and $G$, defined like in the proof of \cref{thm:rf-2-initiality}.  Let $\pi : \E \to \mathcal{H}$ be the canonical projection.

We can define $\RF$-morphisms
$$
\xymatrix{
\mathcal{R}_{F^*} \times_{\presheaf\C} \mathcal{R}_{G^*} \ar[r] &
\mathcal{S} \ar[r] &
\mathcal{R}_{\pi^*}.
}
$$

The fact that $F$ and $G$ are both $\Phi$-morphisms determines a global element of the interpretation of $\Phi$ in $\mathcal{R}_{F^*} \times_{\presheaf\C} \mathcal{R}_{G^*}$, which can therefore be transported to $\mathcal{R}_{\pi^*}$.

It follows that $\E$ can be equipped with a $\Phi$-structure such that the CwF morphism $\pi$ is a split $\Phi$-morphism.  The conclusion now follows immediately from the initiality of $\mathcal{H}$.
\end{proof}

\begin{defn}\label{def:set-theoretic-type-former}
A type former $\Phi$ is said to be \emph{set-theoretic} if for all small categories $\A$, the CwF $\presheaf\A$ has a $\Phi$-structure.
\end{defn}

Again, all type formers considered so far are set-theoretic.  In \cref{sec:systems-of-universes} we will define a type former for a \emph{univalent universe} (\cref{def:univalent-universe}), which fails to be set-theoretic.

A type in $\RF_0$ over some type former $\Psi$ will be referred to as a \emph{type former over $\Psi$}.  Given such a type $\Phi$, we will often identify it with the corresponding context extension $\Psi.\Phi$.

\begin{defn}\label{def:phi-structure-over}
Let $\Psi$ be a type former, $\Phi$ a type former over $\Psi$, and $\C$ a CwF equipped with a $\Psi$-structure $\psi$.

A \emph{$\Phi$-structure} on $\C$ is a $\Psi.\Phi$-structure on the underlying CwF such that the induced $\Psi$ structure is equal to $\psi$.
\end{defn}

\section{Systems of universes}\label{sec:systems-of-universes}

If $(\U, \El)$ is a universe in a CwF $\C$, $\U$ induces another CwF structure on $\C$, which we shall denote with the superscript $\U$.
Types of $\C^\U$ over a context $\Gamma$ are defined by:
$$
\Ty^\U(\Gamma) :\equiv \C(\Gamma,\U).
$$
For a type $A : \Ty^\U(\Gamma)$, we define terms of $A$ as follows:
$$
\Tm^\U_\Gamma(A) :\equiv \C(\Gamma,\El[A]).
$$

There is a canonical map $\C^\U \to \C$, which is easily verified to be a CwF morphism.

\begin{defn}\label{def:universe-morphism}
Let $(\U, \El)$ and $(\U', \El')$ be universes in a CwF $\C$.  A \emph{universe morphism} $\U \to \U'$ is a CwF morphism $\C^\U \to \C^{\U'}$ that makes the following diagram commutative:
$$
\xymatrix{
\C^\U \ar[rr] \ar[rd] & &
\C^{\U'} \ar[ld] \\
& \C.
}
$$
\end{defn}

If $\C$ is equipped with a $\Phi$-structure $\phi$, it is not possible in general to restrict $\phi$ to $\C^\U$.  This justifies the following \namecref{def:phi-universe}.

\begin{defn}\label{def:phi-universe}
Let $\C$ be a $\Phi$-category, where $\Phi$ is any type former, and $(\U,\El)$ a universe in $\C$.  We say that $\U$ is a $\Phi$-universe if $\C^\U$ has a $\Phi$-structure $\phi^U$ such that the canonical map $\C^\U \to \C$ is a $\Phi$-morphism.
\end{defn}

Note that if $\Phi$ is flat, then universes over $\C$ form a category, with morphisms given by universe morphisms such that the underlying CwF morphism preserves $\Phi$-structures.

\begin{defn}
Let $\A$ be a category, $\Phi$ a flat type former and $\C$ a $\Phi$-CwF.  A \emph{system of $\Phi$-universes} on $\C$ (indexed by $\A$) is a functor from $\A$ to the category of $\Phi$-universes of $\C$.
\end{defn}

Usually, $\A$ is taken to be a poset, most commonly the ordinal $\omega$. This is the case, for example, in the type theory described in \cite{hott-book}.

\begin{lem}\label{lem:rf-finite-limits}
In any $\RF$-category $\C$, finite diagrams of fibrant objects have a limit.
\end{lem}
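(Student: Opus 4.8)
The plan is to build the limit of a finite diagram out of three ingredients---a terminal object, finite products, and equalizers---each of which I construct inside the subcategory of fibrant objects and verify to have the correct universal property in all of $\C$. Recall that the limit of a diagram $D : I \to \C$ over a finite category $I$ is the equalizer of the two evident maps between the product of the $D(i)$ taken over all objects $i$ and the product of the $D(j)$ taken over all morphisms $u : i \to j$ of $I$. So it suffices to establish: (i) finite products of fibrant objects exist, are fibrant, and are genuine products in $\C$; and (ii) equalizers of parallel maps between fibrant objects exist, are fibrant, and are genuine equalizers in $\C$. The whole construction then lands among the fibrant objects. I will use throughout the observation that a composite of two display maps is again a fibration: if $\Gamma \cong 1.A$, then $\Gamma.C \cong 1.A.C \cong 1.\Sigma_A C$ by the $\Sigma$-type structure (\cref{def:sigma-structure-standard}), so the composite $\Gamma.C \to \Gamma \to 1$ is isomorphic to a single display map over $1$.

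For products, first note that the terminal object $1$ is fibrant, since the unit type $1 : \Ty(1)$ satisfies $1.1 \cong 1$, whence the identity of $1$ is a fibration. For binary products I mimic \cref{prop:contextual-finite-products}: given fibrant $\Gamma$ and $\Delta$, write $\Delta \cong 1.B$ with $B : \Ty(1)$ and let $! : \Gamma \to 1$ be the unique map. By \cref{prop:morphism-weakening} the square with top-left corner $\Gamma.B[!]$ is a pullback, and being a pullback over $1$ it presents $\Gamma.B[!]$ as the product $\Gamma \times \Delta$ in $\C$. By the composite-of-fibrations remark, $\Gamma.B[!] \to 1$ is a fibration, so the product is fibrant. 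Iterating yields arbitrary finite products of fibrant objects, which are fibrant and are honest products in $\C$.

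The equalizer step is where the \emph{extensional} equality structure of an $\RF$-category is essential. Let $f, g : \Gamma \to \Delta$ with $\Delta$ fibrant, and take $\Delta \cong 1.B$. Under the natural bijection $\C(\Gamma, 1.B) \cong \Tm_\Gamma(B[!])$ coming from \cref{eq:ext-repr-functor}, let $f$ and $g$ correspond to terms $a_f, a_g : \Tm_\Gamma(B[!])$, and set $E :\equiv \eq(B[!])[a_f, a_g] : \Ty(\Gamma)$ by substituting into the equality type. I claim that the display map $\pi_E : \Gamma.E \to \Gamma$ is the equalizer of $f$ and $g$. Indeed, for any $h : X \to \Gamma$, a factorization of $h$ through $\pi_E$ is, by context extension, the same as a term of $E[h]$; by stability of $\eq$ under substitution and \cref{def:equality-structure-standard}, such a term exists iff $a_f[h] = a_g[h]$, which by naturality of the bijection holds iff $f h = g h$; and by \cref{lem:eq-subterminal} the type $E$ is subterminal, so $\pi_E$ is a monomorphism and the factorization is unique when it exists. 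Thus $\pi_E$ has exactly the universal property of the equalizer in $\C$, and $\Gamma.E$ is fibrant by the composite-of-fibrations remark.

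Assembling these, the limit of any finite diagram of fibrant objects is the equalizer of two morphisms between finite products of fibrant objects; by the above it exists in $\C$, is fibrant, and satisfies the limit universal property against all objects of $\C$. The only genuinely delicate point is the equalizer construction: it relies on the equality type being simultaneously \emph{reflective} (a term witnesses definitional equality of the underlying morphisms, \cref{def:equality-structure-standard}) and \emph{subterminal} (\cref{lem:eq-subterminal}), the two properties that make its display map a monomorphic equalizer rather than merely some comparison map. This is precisely why the argument uses the strict equality of an $\RF$-category and would fail for an intensional identity type.
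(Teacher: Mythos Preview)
Your proof is correct, but it takes a different decomposition of finite limits than the paper. The paper reduces to \emph{pullbacks}: it shows that any map $f : A \to B$ between fibrant objects is isomorphic (in $\C/B$) to a fibration, via the graph factorisation $E :\equiv (a:A)\times(b:B)\times(f(a)=b)$ with $A \xrightarrow{\sim} E \twoheadrightarrow B$, and then invokes \cref{lem:pullbacks-of-fibrations} to conclude that pullbacks of cospans of fibrant objects exist. You instead reduce to \emph{products and equalizers}, building products by iterated context extension and equalizers directly from the equality type. Both routes are standard and both hinge on the same essential point: the extensional equality structure of an $\RF$-category makes the relevant comparison map an isomorphism (for the paper, that $A \to E$ is invertible because $(b:B)\times(f(a)=b)$ is a singleton; for you, that $\pi_E$ is monic because $\eq$ has at most one term). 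Your argument is more explicit about why each piece is fibrant and why the universal property holds against all of $\C$; the paper's is shorter and closer to the usual homotopical ``replace a map by a fibration'' idiom. One small remark: your appeal to \cref{lem:eq-subterminal} is literally about $\Eq$ in a presheaf category, so strictly speaking you are using it together with the isomorphism $\Tm[\eq] \cong \Eq_{\Tm[A]}$ of \cref{def:equality-structure} to conclude that $\eq$ in $\C$ has at most one term; this is fine, but worth making explicit.
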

\begin{proof}
By \cref{lem:pullbacks-of-fibrations}, all we have to prove is that any morphism between fibrant objects of $\C$ is isomorphic to a fibration.  The following argument appears in \cite{gambino-garner:wfs}.

Let $A$ and $B$ be types over the unit context, and $f : A \to B$ any map. Define:
$$
E :\equiv (a : A) \times (b : B) \times (f(a) = b).
$$

We have a factorisation:
$$
\xymatrix{
A \ar[r]^-i &
E \ar[r]^-p &
B,
}
$$
and it is easy to see that $i$ is an isomorphism, and $p$ is a fibration.
\end{proof}

\newcommand{\UNIV}{\textsc{univ}}

\begin{prop}\label{prop:finite-systems-of-universes}
Let $\A$ be a finite category and $\Phi$ a flat type former.  There is a type former $\UNIV^{\Phi,\A}$ such that systems of $\Phi$-universes indexed by $\A$ are in bijective correspondence with $\UNIV^{\Phi,\A}$-structures on $\C$.
\end{prop}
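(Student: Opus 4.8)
The plan is to exhibit $\UNIV^{\Phi,\A}$ explicitly as a context of $\RF_0$, built by following \cref{def:phi-universe} and \cref{def:universe-morphism} clause by clause. Since the defining conditions refer to the ambient $\Phi$-structure carried by $\C$, the natural thing is to produce $\UNIV^{\Phi,\A}$ as a type former over $\Phi$ in the sense of \cref{def:phi-structure-over}. The finiteness of $\A$ will be essential: it guarantees that the total amount of data — one universe with its structure per object, one morphism per arrow, and the functoriality equations — is finite, so the assembled object is still a single finite context and hence a genuine type former.

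First I would encode a bare universe of $\C$ by the type former $(U : \U) \times (E : U \to \U)$; its $\C$-structures are exactly universes $(\U_0,\El_0)$ in the sense of \cref{def:universe}, because a global element of $\U$ over $\presheaf\C$ is a type in the unit context of $\C$ and, by \cref{cor:ty-ext}, a global element of $U \to \U$ is a type over the corresponding context extension. The central step is to express, over this datum, that $\C^{\U_0}$ supports a $\Phi$-structure. Here I would use the observation that $\C^{\U_0}$ and $\C$ share the same underlying category, so that $\presheaf{\C^{\U_0}}$ and $\presheaf\C$ agree as presheaf CwFs and in their $\Pi$-, $\Sigma$-, equality- and unit-structures, and differ only in the fibrant universe — which on the $\presheaf{\C^{\U_0}}$ side is $(\Ty^{\U_0},\Tm^{\U_0})$ and corresponds, via \cref{cor:ty-ext}, to the sub-universe of $\presheaf\C$ classified by $(U,E)$. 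Hence $\interp\Phi^{\presheaf{\C^{\U_0}}}$ is computed by the same recipe as $\interp\Phi^{\presheaf\C}$ with the universe of $\RF_0$ fed by $(U,E)$. This produces a relativised type former $\Phi^{(U,E)}$ over $(U:\U)\times(E : U \to \U)$ — morally $\Phi$ with $U$ substituted for $\U$ and $E$ for $\El$ — whose structures over a universe $(\U_0,\El_0)$ are precisely the $\Phi$-structures on $\C^{\U_0}$.

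To finish the single-universe type former I would add the requirement that the canonical map $\C^{\U_0} \to \C$ be a $\Phi$-morphism. By \cref{lem:presheaf-functor-lax} this map induces an oplax $\RF$-morphism, and by \cref{def:phi-morphism} the requirement is the existence of an element of the associated span interpretation lying over the relativised structure $\phi^U$ and the ambient structure $\psi$ of $\C$; by \cref{lem:f-element-prop} such an element is unique when it exists, so recording it as additional data introduces no spurious choices. Packaging the universe datum, the relativised structure and this witness gives a type former $\UNIV^{\Phi}$ over $\Phi$ whose structures biject with the $\Phi$-universes of $\C$.

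For general finite $\A$, a functor into the category of $\Phi$-universes is encoded by: a $\UNIV^{\Phi}$-datum $(\U_i,\El_i,\dots)$ for each object $i$; for each arrow $f : i \to j$ a universe morphism, which by \cref{def:universe-morphism} I would encode as a code map $m_f : U_i \to U_j$ with the compatibility $E_i \cong E_j \circ m_f$ together with a (again unique) witness that it preserves $\Phi$; and finitely many equations $m_{\id} = \id$ and $m_{g\circ f} = m_g \circ m_f$ for functoriality. Taking the iterated context extension of all these pieces yields the context $\UNIV^{\Phi,\A}$, and unwinding the definitions shows its $\C$-structures are exactly the systems of $\Phi$-universes indexed by $\A$; flatness of $\Phi$ is what makes these morphisms composable (\cref{prop:flat-composition}), so that the target category is well defined and functoriality is meaningful. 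I expect the main obstacle to be the relativisation step: turning the informal substitution $\U \mapsto U$, $\El \mapsto E$ into an honest operation on contexts of $\RF_0$ and proving that its interpretation reproduces $\interp\Phi^{\presheaf{\C^{\U_0}}}$, with the analogous comparisons for the $\Phi$-morphism and universe-morphism conditions.
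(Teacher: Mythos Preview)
Your approach is sound and shares its core idea with the paper: the ``relativisation'' of $\Phi$ to a universe $(U,E)$ is exactly what the paper achieves by regarding the slice $\RF_0/\Psi$, where $\Psi :\equiv (U:\U)\times(\el:\El\,U \to \U)$, as an $\RF$-category whose universe is $(U,\el)$, and then taking $\UNIV^{\Phi,1}$ to be the interpretation of $\Phi$ there. So the step you flag as the main obstacle is precisely the manoeuvre the paper performs, and it gives a clean formal realisation of your informal ``substitute $U$ for $\U$''.

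Where the two diverge is in the passage to general finite $\A$. You propose an explicit iterated context extension over the objects, arrows and equations of $\A$. The paper instead treats only the cases $\A = 1$ and $\A = \I$ (the walking arrow) directly, and for $\I$ uses a neat trick you do not: it equips $\RF_0/\Psi_2$ with the ``graph'' universe $(A:U)\times(A':U')\times(f(A)=A')$ and interprets $\Phi$ once against that, so that a single structure simultaneously encodes the two $\Phi$-structures and their compatibility along $f$. The general finite $\A$ is then obtained abstractly, using that every finite category is a finite colimit of copies of $1$ and $\I$ in $\cat{Cat}$, together with \cref{lem:rf-finite-limits} to take the corresponding finite limit on the $\RF$ side. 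Your route is more explicit and arguably more transparent about the $\Phi$-morphism conditions (which the paper's proof treats rather tersely); the paper's route is shorter and avoids having to spell out the span-based preservation witnesses as types in $\RF_0$, at the cost of relying on the colimit decomposition and the limit lemma.
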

\begin{proof}
Define:
$$
\Psi :\equiv (U : \U) \times (\el : \El\ U \to \U).
$$

Clearly, a $\Psi$-structure is the same as a universe.  Furthermore, $(\El[U],\El[\lambda^{-1}(\el)])$ is a universe in $\RF_0 / \Psi$, which we will also denote with $U$.  Therefore, $\RF_0 / \Psi$ is an $\RF$-category with universe $U$.

If $\C$ is a CwF equipped with a universe $\V$, we get an interpretation functor $\interp{-}^{\presheaf\C} : \RF_0 / \Psi$ mapping $U$ to $\Ty^V$.

It follows that, if we define $\UNIV^{\Phi,1}$ be the interpretation of $\Phi$ in $\RF_0 / \Psi$, a $\UNIV^{\Phi,1}$-structure in $\C$ is the same as a $\Phi$-universe $\V$ in $\C$.

Now, let $\I$ be the category with two objects 0 and 1, and only one non-identity morphism in $\I(0,1)$. Define a type former $\Psi_2$ as follows:
\begin{align*}
\Psi_2
& :\equiv ((U,\el) : \Psi) \\
& \times ((U',\el' : \Psi)) \\
& \times (f_0 : \El\ U \to \El\ U') \\
& \times (f_1 : (X : \El\ U) \to \El(\el(X)) \to \El(\el'(f_0(X))).
\end{align*}

Clearly, a $\Psi_2$-structure is the same as a pair of universes, together with a universe morphism, i.e. a system of universes indexed by $\I$.

Again, if $\C$ is equipped with universes $\V$ and $\V'$, there is an interpretation functor $\interp{-}^{\presheaf\C} : \RF_0 / \Psi_2$ that maps the two universes $U$ and $U'$ in $\RF_0 / \Psi_2$ to $V$ and $V'$ respectively.

$\RF_0 / \Psi_2$ can be regarded as an $\RF$-category, where the universe is defined to be:
$$
(A : \U) \times (A' : \U') \times (f(A) = A').
$$

Consequently, if we define $\UNIV^{\Phi,\I}$ to be the interpretation of $\Phi$ in $\RF_0 / \Psi_2$, it is easy to see that a $\UNIV^{\Phi,\I}$-structure on $\C$ is the same as a system of $\Phi$-universes indexed by $\I$.

Now the general case follows from \cref{lem:rf-finite-limits} and the fact that every finite category is a finite colimit of 1 and $\I$ in $\cat{Cat}$.
\end{proof}

\subsection{Univalent universes}\label{sec:univalent-universes}

Let $\C$ be a $\Phi_0$-CwF where $\Phi_0$ is defined as:
$$
\Phi_0 :\equiv \Phi^\Pi \times \Phi^\Sigma \times \Phi^{\textsc{ieq}},
$$
and let $(\U, \El)$ be a universe in $\C$.

We can define the property of a function being an equivalence, internally in $\C$, as follows.

Over the context $(A,B : \U)(f : A \to B)$, define a type $\isequiv$:
\begin{align*}
\isequiv
& :\equiv ((g : B \to A) \times (g \circ f = \id)) \\
& \times ((g : B \to A) \times (f \circ g = \id)).
\end{align*}

Here $\id$ and $\circ$ denote the identity function and composition of functions internal to $\C$, respectively, defined in the obvious way using the $\Pi$-type structure on $\C$.

The type $\Equiv$ of equivalences is defined over the context $(A,B : \U)$:
$$
\Equiv :\equiv (f : A \to B) \times \isequiv[f].
$$

It is easy to define a term $\id_E : \Equiv[A,A]$ over the context $(A : \U)$, corresponding to the identity equivalence.  From the properties of equality, it follows that there exists a function $\coerce : A = B \to \Equiv[A,B]$ in the context $(A,B : \U)$.

\emph{Univalence} for $\U$ is the following type, in the unit context:
$$
\ua_\U :\equiv (A, B : \U) \to \isequiv[A = B, \Equiv[A,B], \coerce[A,B]].
$$

\begin{defn}\label{def:univalent-universe}
The universe $\U$ is said to be \emph{univalent} if the corresponding univalence type $\ua_\U$ has a global element.
\end{defn}

\begin{prop}\label{def:ua-type-former}
There is a type former $\Phi^{\textsc{ua}}$ over $\Phi$, such that a $\Phi^{\textsc{ua}}$-structure over $\Phi_0$-CwF $\C$ is the same as a univalent universe.
\end{prop}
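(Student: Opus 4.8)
The plan is to internalise the construction of $\ua_\U$ inside a slice of $\RF_0$, exactly in the style of the universe type formers built in \cref{prop:finite-systems-of-universes}. The key observation is that every ingredient in the definition of univalence — the types $\isequiv$ and $\Equiv$, the terms $\id_E$ and $\coerce$, and finally $\ua_\U$ — is assembled purely from the $\Phi_0$-structure (that is, $\Pi$, $\Sigma$ and intensional equality) together with a universe. Since these are precisely the data carried by a suitable slice of $\RF_0$, the whole construction can be replayed generically there, and the resulting type in the unit context is the type former we are after.

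First I would take $\Psi :\equiv (U : \U) \times (\el : \El\,U \to \U)$, the universe type former of \cref{prop:finite-systems-of-universes}, and set $\Phi :\equiv \Phi_0 \times \Psi$, so that a $\Phi$-structure on a CwF is a $\Phi_0$-structure together with a universe. By \cref{lem:phi-slice} the slice $\RF_0/\Phi$ is again an $\RF$-category; as in \cref{prop:finite-systems-of-universes}, its generic datum $(U,\el)$ determines a distinguished universe $(\El\,U, \El[\lambda^{-1}(\el)])$, which I shall again denote by $U$, while its $\Phi_0$-component supplies generic $\Pi$-, $\Sigma$- and intensional-equality structures. Working in $\RF_0/\Phi$ with $U$ in place of $\U$, I would then repeat verbatim the definitions preceding the statement: build $\isequiv$ and $\Equiv$ with the generic $\Pi$ and $\Sigma$, produce $\id_E$ and $\coerce$ by the generic $J$-eliminator of the $\Phi^{\textsc{ieq}}$-component, and set
\[
\Phi^{\textsc{ua}} :\equiv (A, B : U) \to \isequiv[A = B,\ \Equiv[A,B],\ \coerce[A,B]],
\]
a type in the unit context of $\RF_0/\Phi$, i.e.\ a type former over $\Phi$.

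To identify structures, let $\C$ be a $\Phi$-CwF. Its $\Phi$-structure induces, exactly as in \cref{prop:finite-systems-of-universes}, an interpretation of $\RF_0/\Phi$ in $\presheaf\C$ that sends the generic universe $U$ to the universe $\V$ of $\C$ and the generic $\Phi_0$-structure to that of $\C$. Because this interpretation is induced by the canonical split morphism $\RF_0 \to \presheaf\C$, it preserves $\Pi$, $\Sigma$ and intensional equality on the nose, so each step of the construction is carried to the corresponding step performed externally in $\C$. Hence $\interp{\Phi^{\textsc{ua}}}^{\presheaf\C}$ coincides with the type $\ua_\U$ built in $\C$, and by \cref{def:phi-structure-over} a $\Phi^{\textsc{ua}}$-structure over the given $\Phi$-structure is exactly a global element of $\ua_\U$ — that is, by \cref{def:univalent-universe}, a proof that $\U$ is univalent.

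The step I expect to be delicate is this last identification, specifically the claim that interpretation commutes with the construction of $\coerce$, which is defined through the $J$-eliminator rather than by an explicit formula. This reduces to the preservation of the full $\Phi^{\textsc{ieq}}$-structure — its eliminator together with its computation rule — by the split interpretation morphism, so that the $\coerce$ built generically in $\RF_0/\Phi$ is sent to the $\coerce$ defined in $\C$; the transport of the remaining $\Pi$-, $\Sigma$- and equality-constructors along the same morphism is then routine.
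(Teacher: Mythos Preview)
Your proposal is correct and follows essentially the same strategy as the paper: the paper's proof is the single sentence ``Univalence can be defined internally in any $\Phi_0$-CwF, hence in particular in $\RF_0 / \Phi_0$'', and what you have written is a careful unpacking of exactly that idea, mirroring the universe construction of \cref{prop:finite-systems-of-universes}.

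The one point worth flagging is the discrepancy over where the universe data lives. You take $\Phi :\equiv \Phi_0 \times \Psi$ and put only $\ua_U$ into $\Phi^{\textsc{ua}}$; the paper's proof instead speaks of $\RF_0/\Phi_0$, which suggests that the universe component $\Psi$ is folded into $\Phi^{\textsc{ua}}$ itself rather than into the base $\Phi$. Both readings yield the claimed bijection with univalent universes (since a $\Phi^{\textsc{ua}}$-structure over a $\Phi_0$-CwF must in either case supply a universe together with a witness of its univalence), and the paper's statement is ambiguous about which is intended --- the symbol $\Phi$ in the statement is not otherwise bound in that section. Your decomposition is the cleaner one and makes \cref{def:phi-structure-over} apply on the nose; the paper's phrasing is simply more compressed. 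Your remark about the preservation of $\coerce$ under the split interpretation morphism is the right diagnosis and the right resolution.
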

\begin{proof}
Univalence can be defined internally in any $\Phi_0$-CwF, hence in particular in $\RF_0 / \Phi_0$.
\end{proof}

\section{Further work}\label{sec:tformers-further-work}

The definitions of special type formers given in \cref{sec:special-type-formers} serve their purpose of allowing a workable theory of type formers to be developed, but could be considered rather unsatisfactory, since they involve quantification over arbitrary functors, and it is thus hard to verify in practice that a given type former possesses those properties.

It seems reasonable that, at least for the case of \emph{flat} and \emph{algebraic} type formers, one should be able to verify that a type formers falls in one of those classes simply by inspecting the type expression in $\RF_0$ that defines it.

For example, it appears to be the case that if a type former is written only using ``first-order'' $\Pi$-types of non-small types, then it is automatically flat.  All the usual type formers, at least the ones that we used or mentioned, have this form, and the example of non-flat type former given in \cref{sec:composition} is indeed higher order.

It also seem likely that there should exist a notion of ``strict positivity'' for type formers, and those type formers that turn out to be strictly positive ought to be algebraic.

Investigating these and similar syntactic characterisations for type formers will be the goal of future research.

\chapter{Two-level type theory}\label{chap:two-levels}

In this \namecref{chap:two-levels}, we will develop the idea of \emph{two-level} type theory, modelled by CwFs with two type functors.
Such systems are motivated by the need to introduce an internalised notion of \emph{strict equality} into the theory.

Since certain type formers will play a special role within a two-level CwF, we single out CwFs with a fixed basic structure:

\begin{defn}\label{def:model}
A \emph{model of type theory} is a CwF equipped with $\Pi$, $\Sigma$ and unit type structures.  Given models of type theory $\C$ and $\D$, a morphism between them is a CwF morphism that preserves the $\Pi$, $\Sigma$ and unit type structures.
\end{defn}

We will write $\T$ to denote the type former corresponding to $\Pi$, $\Sigma$ and unit types, so that a model of type theory is simply a CwF with a $\T$-structure.  In other words:
$$
\T :\equiv \Phi^\Pi \times \Phi^\Sigma \times \Phi^{\textsc{unit}}.
$$

\begin{example}\label{ex:presheaf-model}
If $\C$ is an arbitrary category, the presheaf category $\presheaf\C$ is a model of type theory.
\end{example}

We will often simply say \emph{model} instead of \emph{model of type theory}.  In particular, the structure needed to make a category (or a CwF) into a model will often be referred to as a \emph{model structure}.  Note that our notion of model structure is completely unrelated to that of \emph{Quillen model structure} \cite{quillen-homotopical-algebra}.  No confusion is possible, however, since we will never refer to the latter.

If $\Phi$ is a type former over $\T$, CwFs equipped with a $\Phi$-structure will be referred to as $\Phi$-models.
If $\Phi$ is flat, $\Phi$-models of type theory form a category $\mathcal{M}^\Phi$.
In particular, the trivial type former over $\T$ is flat, and its corresponding category of models will be denoted simply by $\mathcal{M}$.

To incorporate strict equality into type theory, we will need to make a distinction between arbitrary types, and types for which weak equality is well defined.  This is necessary, because, as we will see in \cref{lem:all-fibrant-collapse}, the theory becomes degenerate if we don't make this distinction.

\begin{defn}\label{def:dual-cwf}
A \emph{two-level CwF} is a CwF $\C$, equipped with a functor $\fibrant \Ty : \C \to \op{\set}$, and a natural transformation $|-| : \fibrant \Ty \to \Ty$.
\end{defn}

Given a two-level CwF $\C$, we can define a second CwF structure on $\C$ having $\fibrant \Ty$ as the type functor, and where terms are given by $\fibrant \Tm_\Gamma(A) :\equiv \Tm_\Gamma(|A|)$. Context extension is similarly defined as $\Gamma.A :\equiv \Gamma.|A|$.  We will write $\fibrant \C$ to denote $\C$ equipped with this second CwF structure.  To avoid confusion, and for consistency with notations that we will introduce later, we will write $\strict \C$, $\strict \Ty$ and $\strict \Tm$ when referring to the original CwF structure on $\C$.

The natural transformation $|-|$ induces a split CwF morphism $\strict \C \to \fibrant \C$.

\begin{defn}\label{def:two-level}
A \emph{two-level model of type theory} is a two-level CwF $\C$ such that both $\strict \C$ and $\fibrant \C$ are models of type theory, and $|-|$ is a $\T$-morphism.
\end{defn}

If $\Phi$ and $\Psi$ are type formers over $\T$, we define a $(\Phi,\Psi)$-model to be a two-level model $\C$ where $\fibrant\C$ is equipped with a $\Phi$-structure, and $\strict\C$ is equipped with a $\Psi$-structure.

The simplest way to construct a two-level CwF is with a universe:

\begin{remark}\label{remark:two-level-universe}
Let $\C$ be a CwF equipped with a universe $\U$, $\El$.  Define $\fibrant \Ty(\Gamma) :\equiv \Tm_\Gamma(\U)$, and for $A : \fibrant \Ty(\Gamma)$, let $|A| :\equiv \El[A]$.

Then $\C$, with the above choice of fibrant type functor, is a two-level CwF.
\end{remark}

For all CwFs $\C$, the presheaf category $\presheaf \C$ is a two-level CwF, where we can use the fibrant universe to define fibrant types as in \cref{remark:two-level-universe}.

\section{The simplicial model}\label{sec:simplicial-model}

The reference example of a two-level model is given by the category of \emph{simplicial sets}, whose definition we recall below.

\begin{defn}
The \emph{simplicial category} $\Delta$ has the natural numbers as objects, and morphisms $\Delta(n,m)$ are defined to be monotone functions $[n] \to [m]$, where $[k]$ denotes the set of natural numbers less or equal to $k$.
\end{defn}

\begin{defn}
A \emph{simplicial set} is a presheaf on $\Delta$.
\end{defn}

Simplicial sets form a category $\sset$, that can be regarded as a model of type theory like any presheaf category (\cref{ex:presheaf-model}).

We can then define two-level model structure on $\sset$ as follows: for all contexts $\Gamma$, fibrant types $\fibrant\Ty(\Gamma)$ are defined to be the subset of $\Ty(\Gamma)$ of those types $A$ such that the display map $\Gamma.A \to \Gamma$ is a \emph{Kan fibration}. Since Kan fibrations are closed under $\Pi$ and $\Sigma$ type formation \cite{simplicial-model}, the fibrant fragment of $\sset$ admits $\Pi$ and $\Sigma$ type formers, hence $\sset$ is a two-level model of type theory.

Note that the definition of types used in \cite{simplicial-model} differs from the one we have given here.  However, it can be easily verified that all the constructions carry over to our definition.  Following \cite{simplicial-model}, then, it can be shown that the fibrant fragment of $\sset$ models all of the commonly used type formers, including a univalent universe.

\section{Presheaf models}\label{sec:two-level-presheaves}

In this section, we will show that, given a model of type theory $\C$, its presheaf category $\presheaf \C$ can be regarded as a two-level model. Furthermore, if $\C$ is equipped with a $\Phi$-structure for some type former $\Phi$, one can find the same $\Phi$-structure on the fibrant fragment of $\presheaf \C$.

The idea of the proof is very simple: we start with a model $\C$ and build a two-level model structure on $\presheaf\C$.  The strict fragment of $\presheaf\C$ is obtained from the usual CwF structure on presheaf categories (\cref{sec:presheaves}).  Fibrant types on $\presheaf\C$ are given by the fibrant universe (\cref{sec:basic-type-formers}), and the fibrant model structure is inherited from that of $\C$.

The problem with this approach is that the resulting morphism from the fibrant to the strict fragment does not preserve type formers strictly.  For example, let $A,B : \presheaf\C(1, \Ty)$ be fibrant types over the unit context.  If we form their $\Sigma$-type within the fibrant model structure, then convert it to a strict type, we get the presheaf $P$ given by:
$$
P_\Gamma = \Tm_{\Gamma}(\Sigma_A B).
$$

However, if we convert both $A$ and $B$ to strict types first, then take their $\Sigma$-type, we end up with a presheaf $Q$, where, $Q_\Gamma$ is a set of pairs of terms of $A$ and $B$ over $\Gamma$.

Of course, terms of $\Sigma_A B$ can be identified to the set of such pairs, but the two resulting presheaves, although isomorphic, are not equal on the nose.  A similar problem occurs with $\Pi$-types.  Therefore, the resulting structure on $\presheaf\C$ does not satisfy the definition of two-level model (\cref{def:two-level}).

For this reason, we need to slightly modify the CwF structure on $\C$, so that strict preservation of $\Pi$ and $\Sigma$ can be achieved.  This will be the aim of the following subsections.

\subsection{Lifting type formers}\label{sec:lifting-type-formers}

Let $\C$ be a CwF equipped with a $\Phi$-structure $\phi$.  The fibrant universe $\Ty$ determines a CwF structure on $\presheaf\C$, where types are given by
$$
\fibrant{\presheaf\Ty}(P) :\equiv \presheaf\C(P, \Ty).
$$
Let us denote by $\fibrant{\presheaf\C}$ the corresponding CwF.

\begin{lem}\label{lem:yoneda-fibrant}
The yoneda embedding $y : \C \to \fibrant{\presheaf\C}$ can be extended to a CwF morphism.
\end{lem}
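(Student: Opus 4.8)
The plan is to upgrade the CwF-morphism structure on the strict Yoneda embedding from \cref{prop:y0-embedding}, exploiting the fact that coercing a fibrant type over $y\Gamma$ back into a strict type recovers exactly the strict action $y_0$ of \cref{def:y0}. Since the underlying category of $\fibrant{\presheaf\C}$ is the same as that of $\presheaf\C$, the underlying functor is unchanged, and only the action on types, terms and context extension needs to be reconsidered.

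First I would fix the action on fibrant types. By construction of the fibrant universe, a fibrant type over $y\Gamma$ is a morphism $y\Gamma \to \Ty$ in $\presheaf\C$, and by the Yoneda lemma these correspond naturally to elements of $\Ty(\Gamma)$. I therefore take the action on types to send $A : \Ty(\Gamma)$ to its transpose $\hat A : y\Gamma \to \Ty$, explicitly $\hat A_\Delta(\sigma) :\equiv A[\sigma]$ for $\sigma : \C(\Delta, \Gamma)$; functoriality of substitution makes $\hat A$ natural, and naturality of the assignment $A \mapsto \hat A$ in $\Gamma$ is just naturality of the Yoneda isomorphism.

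The crux is the following strict identity. Unfolding the fibrant universe (\cref{remark:two-level-universe}), the coercion $|\hat A|$ equals $\Tm[\hat A]$, which as a presheaf over $\int y\Gamma$ sends $(\Delta, \sigma)$ to $\Tm_\Delta(\hat A_\Delta(\sigma)) = \Tm_\Delta(A[\sigma])$ --- precisely $y_0(A)$ as defined in \cref{def:y0}. Hence $|\hat A| = y_0(A)$ on the nose, so the fibrant terms of $\hat A$ are by definition the strict terms $\presheaf\Tm_{y\Gamma}(y_0 A)$, and the fibrant context extension $y\Gamma.\hat A$ is by definition the strict extension $y\Gamma.y_0(A)$. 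Accordingly I take the action on terms to be the isomorphism $\Tm_\Gamma(A) \cong \presheaf\Tm_{y\Gamma}(y_0 A)$ of \cref{lem:y0-terms}.

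Finally I would verify that the map $\phi^y_A$ of \cref{def:cwf-morphism} is an isomorphism. Because the underlying functor, the action on terms (in particular the image of the variable $v_A$), and the relevant context extension all coincide with those of the strict Yoneda morphism, $\phi^y_A : \presheaf\C(y(\Gamma.A), y\Gamma.\hat A)$ is literally the same morphism treated in \cref{prop:y0-embedding}, namely the inverse of the isomorphism of \cref{prop:y0-extension}, and is therefore invertible. I expect no genuine obstacle here: the only thing requiring care is the bookkeeping that $|\hat A| = y_0(A)$ and $y\Gamma.\hat A = y\Gamma.y_0(A)$ hold as strict equalities, which is what lets every piece of structure transport verbatim from \cref{prop:y0-embedding}.
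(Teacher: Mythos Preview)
Your proposal is correct and follows the same approach as the paper: both use the Yoneda isomorphism $\fibrant{\presheaf\Ty}(y\Gamma) \cong \Ty(\Gamma)$ to define the action on types, and the corresponding isomorphism on terms. The paper's proof is a two-line sketch, whereas you spell out the key identification $\Tm[\hat A] = y_0(A)$ and use it to reduce the context-extension condition to \cref{prop:y0-embedding}; this extra bookkeeping is exactly what the paper leaves implicit.
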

\begin{proof}
By the Yoneda lemma, $\fibrant\Ty(y\Gamma) \cong \Ty(\Gamma)$, hence we can take this isomorphism as the action of $y$ on types.  Consequently, $y$ can be defined to be an isomorphism on terms as well.
\end{proof}

In this section, we will show how to lift $\phi$ to a $\Phi$-structure on $\fibrant{\presheaf\Ty}$, so that the Yoneda embedding of \cref{lem:yoneda-fibrant} is a $\Phi$-morphism.

\begin{lem}\label{lem:equivalence-rf}
Let $f : \C \to \D$ be a CwF equivalence between $\RF$-categories.  If $f$ weakly preserves the universe, then $f$ is a weak $\RF$-morphism.
\end{lem}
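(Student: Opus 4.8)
The plan is to observe that each of the four type structures of an $\RF$-category---$\Pi$, $\Sigma$, equality and unit---is pinned down, up to canonical isomorphism, by a universal property expressed purely in terms of the CwF structure (context extension and the term functor), whereas the universe is extra, non-canonical data. A CwF equivalence respects exactly the CwF structure up to coherent isomorphism, so it automatically transports these universal properties; the only component of the $\RF$-structure that it need not preserve is the universe, which is precisely why that single condition is taken as a hypothesis.

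First I would record the consequences of $f$ being a CwF equivalence. By assumption $f$ induces bijections $\Ty(\Gamma) \cong \Ty(f\Gamma)$ and, as noted after \cref{def:cwf-morphism}, bijections $\Tm_\Gamma(A) \cong \Tm_{f\Gamma}(fA)$, both natural in $(\Gamma,A)$; moreover $f$ carries the comparison isomorphism $\phi^f_A : f(\Gamma.A) \cong f\Gamma.fA$. The key tool I would extract is representability: by \cref{prop:tm-sections} and the defining property of context extension, a type over $\Gamma$ is determined up to isomorphism by the functor $(\Delta,\sigma)\mapsto \Tm_\Delta(-[\sigma])$ it represents on $\C/\Gamma$, so a natural isomorphism of such term functors yields an isomorphism of types. (Conceptually, full faithfulness together with the bijection on types makes $f$ restrict to an isomorphism $\bTy(\Gamma)\cong\bTy(f\Gamma)$ compatible with substitution and extension, but I will only use it through this representability principle.)

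Next I would treat $\Pi$-types, the representative case. Fix $f$-related pairs $(A,B)$ and $(A',B')$, so that $A' = fA$ and $B'$ is the transport of $fB$ along $\phi^f_A$ by \cref{lem:related-pairs}. Stringing together the term bijection of $f$, the defining isomorphism \cref{eq:pi-structure-iso2} of the $\Pi$-structure on $\C$, transport along $\phi^f_A$, and the defining isomorphism of the $\Pi$-structure on $\D$, I obtain
$$
\Tm_{f\Gamma}\bigl(f(\pi(A,B))\bigr) \cong \Tm_\Gamma(\pi(A,B)) \cong \Tm_{\Gamma.A}(B) \cong \Tm_{f\Gamma.A'}(B') \cong \Tm_{f\Gamma}(\pi(A',B')).
$$
Each link is stable under substitution---the $\Pi$-isomorphisms by the substitution laws of \cref{def:pi-structure-standard}, the $f$-bijections by naturality of $f$, and the transport by naturality of $\phi^f$---so the composite is natural over the slice $\D/f\Gamma$, using essential surjectivity and fullness of $f$ to extend naturality from the image of $f$ to all of $\D/f\Gamma$. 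By representability, $f(\pi(A,B)) \cong \pi(A',B')$, which is the type part of weak preservation; the term part (relatedness of the images of abstraction and application) falls out because the constructed isomorphism is, by construction, the comparison induced by the abstraction/application correspondence. The arguments for $\Sigma$, equality and unit have the same shape: $\sigma(A,B)$ is characterised by the natural bijection of its terms with pairs (\cref{def:sigma-structure}), $\eq(A)$ by being the subterminal type whose sections witness equality (\cref{def:equality-structure}, \cref{lem:eq-subterminal}), and the unit type by having a unique term (\cref{def:unit-structure}); in each case $f$ transports the characterising property and representability delivers the required isomorphism.

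Finally, since $f$ weakly preserves the universe by hypothesis and weakly preserves $\Pi$, $\Sigma$, equality and unit by the above, it weakly preserves the entire $\RF$-structure, i.e.\ it is a weak $\RF$-morphism. I expect the main obstacle to be the naturality bookkeeping in the third step: verifying that the chain of isomorphisms is genuinely natural over the whole slice category of $\D$ rather than merely over the image of $f$, and that the comparison isomorphisms $\phi^f$---which appear because $f$ is not assumed split---thread through coherently. Once this is settled for $\Pi$, the remaining type formers are routine.
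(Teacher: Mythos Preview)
Your proposal is correct and follows essentially the same approach as the paper: both arguments rest on the observation that the $\Pi$, $\Sigma$, equality and unit structures are characterised by universal properties in terms of the CwF data, so a CwF equivalence preserves them up to isomorphism, leaving only the universe to be handled by hypothesis. The paper's proof is a two-sentence sketch (``transport the structures along $f$, then use the universal property to compare with the existing structures on $\D$''), and your proposal is a careful unpacking of exactly that idea via representability, including the naturality bookkeeping that the paper leaves implicit.
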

\begin{proof}
Since $f$ is a CwF equivalence, we can use it to transport all the type structures from $\C$ to $\D$.  Since all the type structures of an $\RF$-category except the universe are characterised by a universal property, it easily follows that the transported structures are isomorphic to the original ones on $\D$, which amounts to saying that $f$ preserves them.
\end{proof}

\begin{thm}\label{thm:morita-lift}
Let $f : \C \to \D$ be a CwF morphism such that $f^* : \presheaf\D \to \presheaf\C$ is an equivalence of categories, and $f$ is bijective on types.  Suppose $\C$ is equipped with a $\Phi$-structure $\phi$. Then there exists a $\Phi$-structure $\phi'$ on $\D$ such that $f$ is a $\Phi$-morphism.
\end{thm}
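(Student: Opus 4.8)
The plan is to exploit the fact that $f^{*}$ is, up to isomorphism, an $\RF$-morphism, and to transport $\phi$ across it. Recall from \cref{lem:presheaf-functor-lax} that $f^{*} : \presheaf\D \to \presheaf\C$ is an oplax $\RF$-morphism, so (as in \cref{def:phi-morphism}) the interpretation of $\Phi$ in the associated $\RF$-category $\mathcal{R}_{f^{*}}$ is a span
$$
f^{*}\interp{\Phi}^{\presheaf\D} \longleftarrow \interp{\Phi}^{f^{*}} \longrightarrow \interp{\Phi}^{\presheaf\C},
$$
with left and right legs $l$ and $r$, and $f$ is a $\Phi$-morphism precisely when this span carries a global element lying over a chosen pair of $\Phi$-structures on $\D$ and $\C$. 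So it suffices to produce a $\Phi$-structure $\phi'$ on $\D$ together with such a global element over $\phi'$ and $\phi$.

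First I would show that $f^{*}$ is a \emph{weak} $\RF$-morphism by invoking \cref{lem:equivalence-rf}. Its hypotheses hold: $f^{*}$ is an equivalence of categories by assumption, and it weakly preserves the fibrant universe because $f$ is bijective on types---the universe component of $f^{*}$ is the presheaf morphism that sends, at each $\Gamma$, a type of $\C$ to its image under the type action of $f$, which is a componentwise bijection and hence an isomorphism. That $f^{*}$ is moreover a CwF equivalence follows from \cref{prop:presheaf-families}, under which $f^{*}$ restricts to the equivalence of slices $\presheaf\D/P \simeq \presheaf\C/f^{*}P$; this identifies its action on types over $P$ with restriction along the induced functor of categories of elements, which is therefore an equivalence. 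Consequently both $f^{*}\circ\interp{-}^{\presheaf\D}$ and $\interp{-}^{\presheaf\C}$ are weak $\RF$-morphisms $\RF_0 \to \presheaf\C$, so by the $2$-initiality of $\RF_0$ (\cref{thm:rf-2-initiality}) there is a natural isomorphism $\theta : f^{*}\circ\interp{-}^{\presheaf\D} \cong \interp{-}^{\presheaf\C}$; write $\theta_{\Phi} : f^{*}\interp{\Phi}^{\presheaf\D} \cong \interp{\Phi}^{\presheaf\C}$ for its component at $\Phi$.

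Next I would construct $\phi'$. Composing $\phi : 1 \to \interp{\Phi}^{\presheaf\C}$ with $\theta_{\Phi}^{-1}$ yields a global element of $f^{*}\interp{\Phi}^{\presheaf\D}$. Since $f^{*}$ preserves the terminal object ($f^{*}1 = 1$) and is fully faithful, this element is $f^{*}$ applied to a unique global element $\phi' : 1 \to \interp{\Phi}^{\presheaf\D}$, which is the sought $\Phi$-structure on $\D$. To witness that $f$ is then a $\Phi$-morphism I would compare $\interp{-}^{\mathcal{R}_{f^{*}}}$ with the ``graph'' morphism $G : \RF_0 \to \mathcal{R}_{f^{*}}$ sending a context $\Xi$ to the span with middle object $f^{*}\interp{\Xi}^{\presheaf\D}$, left leg the identity, and right leg $\theta_{\Xi}$; because $\theta$ is a natural isomorphism of $\RF$-morphisms, $G$ is a weak $\RF$-morphism, so \cref{thm:rf-2-initiality} (applied in the $\RF$-category $\mathcal{R}_{f^{*}}$ of \cref{prop:oplax-rf}) identifies $G$ with $\interp{-}^{\mathcal{R}_{f^{*}}}$ up to isomorphism. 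Under this identification $\interp{\Phi}^{f^{*}} \cong f^{*}\interp{\Phi}^{\presheaf\D}$ with $l \cong \id$ and $r \cong \theta_{\Phi}$, and taking $s := f^{*}\phi'$ gives $l\circ s = f^{*}\phi'$ and $r\circ s = \theta_{\Phi}\circ f^{*}\phi' = \phi$, i.e.\ a global element over $\phi'$ and $\phi$. By \cref{def:phi-morphism}, $f$ is a $\Phi$-morphism.

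The main obstacle I anticipate is the second step: justifying rigorously that $f^{*}$ is a weak $\RF$-morphism despite the strictness defects of type formers on presheaf categories emphasised at the start of \cref{sec:two-level-presheaves}. In particular, the hypothesis that $f$ is bijective on types is what forces the universe component to be a strict isomorphism rather than a mere equivalence, and one must be careful that ``inducing isomorphisms on types'' is understood up to the isomorphisms supplied by \cref{prop:presheaf-families}, so that \cref{lem:equivalence-rf} genuinely applies and the subsequent appeals to $2$-initiality are legitimate. Everything downstream---the descent of $\phi'$ through full faithfulness and the extraction of the span element $s$---is then formal manipulation within $\mathcal{R}_{f^{*}}$, with uniqueness of $s$ guaranteed by \cref{lem:f-element-prop}.
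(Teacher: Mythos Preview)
Your proposal is correct and follows the same architecture as the paper's proof: establish that $f^{*}$ is a CwF equivalence (via \cref{prop:presheaf-families}) and a weak $\RF$-morphism (via \cref{lem:equivalence-rf}, using the type-bijectivity of $f$ for the universe component), then produce a weak $\RF$-morphism into $\mathcal{R}_{f^{*}}$ and invoke \cref{thm:rf-2-initiality} to extract both $\phi'$ and the span element witnessing that $f$ is a $\Phi$-morphism.

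The one point of genuine difference is in how the morphism into $\mathcal{R}_{f^{*}}$ is built. The paper introduces the left adjoint $f_{!} : \presheaf\C \to \presheaf\D$ (left Kan extension of $y\circ f$) and defines a single weak $\RF$-morphism $\widetilde f : \presheaf\C \to \mathcal{R}_{f^{*}}$ sending $P$ to the triple $(f_{!}P,\,P,\,\delta_P)$ with $\delta_P$ the span $f^{*}f_{!}P \xleftarrow{\eta} P \xrightarrow{\id} P$; then $2$-initiality is applied once, to the triangle $\RF_0 \to \presheaf\C \to \mathcal{R}_{f^{*}}$. You instead apply $2$-initiality twice: first in $\presheaf\C$ to obtain $\theta : f^{*}\circ\interp{-}^{\presheaf\D} \cong \interp{-}^{\presheaf\C}$, and then in $\mathcal{R}_{f^{*}}$ to identify your graph morphism $G$ (built from $\theta$) with $\interp{-}^{\mathcal{R}_{f^{*}}}$. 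Your route has the mild advantage of never naming $f_{!}$, at the cost of needing to argue that $G$ is a weak $\RF$-morphism---a step you gloss with ``because $\theta$ is a natural isomorphism of $\RF$-morphisms'', which deserves the same level of care as the paper's ``it is not hard to check that $\widetilde f$ is a weak $\RF$-morphism''. Either packaging works; they are reparametrisations of the same argument.
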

\begin{proof}
Since $f^*$ is an equivalence of categories, it induces a equivalences of slice categories, hence an isomorphism of the type functors of $\presheaf\D$ and $\presheaf\C$ thanks to \cref{prop:presheaf-families}.  Therefore, $f^*$ is a CwF equivalence.

Note that $f$ being bijective on types is equivalent to $f^*$ preserving the universe.  Hence, it follows from \cref{lem:equivalence-rf} that $f^*$ is a weak $\RF$-morphism.

Let $f_! : \presheaf\C \to \presheaf\D$ be the left adjoint of $f^*$. Explicitly, $f_!$ is given by the left Kan extension of $y \circ f$ along $y : \C \to \presheaf\C$.  In this case, $f_!$ is also a CwF equivalence, hence a weak $\RF$-morphism.

We define a CwF morphism $\widetilde f : \presheaf\C \to \mathcal{R}_{f^*}$.  On objects, $\widetilde f$ maps $P$ to the triple $(P, f_!(P), \delta_P)$, where $\delta_P$ denotes the span:
$$
\xymatrix{
f^*f_!P & P \ar[l]_-{\eta} \ar[r] & P.
}
$$
The action of $\widetilde f$ on types is defined similarly.  It is not hard to check that $\widetilde f$ is a weak $\RF$-morphism.

Therefore, the diagram:
$$
\xymatrix{
& \RF_0 \ar[ld]\ar[rd] & \\
\presheaf\C \ar[rr]_{\widetilde f} & & \mathcal{R}_f
}
$$
commutes weakly by \cref{thm:rf-2-initiality}.  It follows that $\widetilde f(\phi)$ determines a canonical $\Phi$-structure on $\presheaf\D$, such that $f$ is a $\Phi$-morphism, as required.
\end{proof}

\begin{thm}\label{thm:fibrant-lift}
There is a $\Phi$-structure on $\fibrant{\presheaf\C}$ such that the Yoneda embedding (\cref{lem:yoneda-fibrant}) preserves $\Phi$-structures.
\end{thm}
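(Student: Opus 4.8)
The plan is to construct the desired $\Phi$-structure on $\fibrant{\presheaf\C}$ by transporting $\phi$ along the Yoneda embedding $y \colon \C \to \fibrant{\presheaf\C}$ of \cref{lem:yoneda-fibrant}, and then to recognise $y$ as a $\Phi$-morphism. One cannot apply \cref{thm:morita-lift} to $y$ directly, since that would require $y^{*}$ to be an equivalence of categories and restriction along the Yoneda embedding is very far from one; but its method --- logical relations together with the initiality of $\RF_0$ --- still applies. Recall that a $\Phi$-structure on $\fibrant{\presheaf\C}$ is a global element of $\interp{\Phi}^{\presheaf{\fibrant{\presheaf\C}}}$, a presheaf on the category $\presheaf\C$, while a $\Phi$-structure on $\C$ is a global element of $\interp{\Phi}^{\presheaf\C}$, a presheaf on $\C$, and $\phi$ is such an element.

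First I would analyse the restriction functor $y^{*}\colon \presheaf{\fibrant{\presheaf\C}} \to \presheaf\C$. It is a CwF morphism that weakly preserves the canonical $\Pi$-, $\Sigma$-, equality and unit structures (the standard behaviour of reindexing between presheaf categories, up to the Beck--Chevalley isomorphisms) and weakly preserves the universe, because $y$ is bijective on types by the Yoneda isomorphism $\fibrant\Ty(y\Gamma) \cong \Ty(\Gamma)$. Hence $y^{*}$ is a weak $\RF$-morphism, and \cref{thm:rf-2-initiality} yields a natural isomorphism $y^{*}\bigl(\interp{\Phi}^{\presheaf{\fibrant{\presheaf\C}}}\bigr) \cong \interp{\Phi}^{\presheaf\C}$; that is, the interpretation of $\Phi$ in the fibrant presheaf $\RF$-category, evaluated on representables, recovers its interpretation over $\C$.

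The central remaining point is that $\interp{\Phi}^{\presheaf{\fibrant{\presheaf\C}}}$ is \emph{continuous}, meaning that it sends colimits of presheaves to limits of sets. Its base ingredient, the fibrant universe $P \mapsto \presheaf\C(P, \Ty)$, manifestly has this property, and I would prove by induction on $\Phi$ that it is preserved by all the $\RF$-operations. Continuity identifies $\interp{\Phi}^{\presheaf{\fibrant{\presheaf\C}}}$ with the right Kan extension along $\op y$ of its restriction, so by the universal property of the presheaf completion its global elements (a limit over $\presheaf\C$) coincide with the global elements of that restriction (a limit over $\C$). Combining this with the preceding paragraph, global elements of $\interp{\Phi}^{\presheaf{\fibrant{\presheaf\C}}}$ correspond bijectively to $\Phi$-structures on $\C$, and $\phi$ thereby determines a canonical $\Phi$-structure $\phi'$ on $\fibrant{\presheaf\C}$.

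Finally, that $y$ preserves $\Phi$-structures follows by inspecting the span $\interp{\Phi}^{y^{*}}$ obtained from the oplax $\RF$-morphism $y^{*}$ (\cref{lem:presheaf-functor-lax}): by construction $\phi'$ restricts to $\phi$ on representables, which provides the required element over $\phi$ and $\phi'$, unique by \cref{lem:f-element-prop}. The hard part is the continuity claim of the third paragraph: right Kan extension along $\op y$ commutes automatically with the limit-like formers $\Pi$, equality and unit, but $\Sigma$-types are colimit-like, and verifying that forming $\Sigma$ in the fibrant presheaf $\RF$-category again preserves continuity is the genuine obstacle --- precisely the strictness phenomenon that motivates this section. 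I would settle it by unfolding the explicit presheaf description of $\Sigma$ from \cref{sec:presheaves} and using \cref{prop:presheaf-families} to pass between $\presheaf{\fibrant{\presheaf\C}}/P$ and presheaves over $\int P$.
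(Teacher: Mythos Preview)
Your proposal diverges sharply from the paper's proof. The paper argues in two lines: it asserts that $y : \C \to \fibrant{\presheaf\C}$ is bijective on types (true, by the Yoneda lemma) and that the induced functor $y^*$ is an equivalence of categories, and then invokes \cref{thm:morita-lift} directly. You explicitly reject this route on the grounds that $y^*$ is not an equivalence. Your worry is legitimate: the underlying category of $\fibrant{\presheaf\C}$ is $\presheaf\C$, so $y^*$ is restriction along the ordinary Yoneda embedding $\presheaf{\presheaf\C} \to \presheaf\C$, and that is not an equivalence in general (already for $\C = 1$ it is evaluation at a point $\presheaf\set \to \set$). The paper does not say which size convention or restricted class of presheaves would make $y^*$ an equivalence, so its proof is at best elliptical on exactly the point you flag.

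That said, your alternative is itself incomplete. The whole argument rests on the claim that $\interp{\Phi}^{\presheaf{\fibrant{\presheaf\C}}}$ takes colimits of presheaves to limits of sets, and you single out the $\Sigma$-case as the ``genuine obstacle'' without resolving it. Merely unfolding the description of $\Sigma$ from \cref{sec:presheaves} does not obviously help: $\Sigma$ is a left adjoint, and continuous presheaves are not closed under coproducts, hence not under $\Sigma$, in general. To make your strategy work you would need a finer invariant --- for instance, that the interpretation lands in the essential image of the second Yoneda embedding $\presheaf\C \hookrightarrow \presheaf{\presheaf\C}$ --- and maintain \emph{that} through all the $\RF$-operations, checking in particular that it survives context extension and $\Sigma$. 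This is plausible (the fibrant universe is representable, and representables over representable bases behave well), but it is genuine additional work that your proposal gestures at rather than carries out. In short: you have correctly identified a gap in the paper's short argument, but your proposed repair has a matching gap of its own at the $\Sigma$-step.
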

\begin{proof}
The Yoneda embedding $y : C \to \fibrant{\presheaf\C}$ is bijective on types, and the induced functor $y^*$ is an equivalence.  Therefore, \cref{thm:morita-lift} applies directly.
\end{proof}

\subsection{Regular models}\label{sec:regularisation}

Let $\C$ be a model, and consider the category $\presheaf \C / \Ty$ of presheaves over $\Ty$. If $X$ is such a presheaf, we denote by $|-|_X$ the corresponding morphism to $\Ty$.

For a presheaf $X$ over $\Ty$, regard $X$ as a type in the unit context of the CwF $\presheaf\C$, and denote by $X^{(2)}_\Gamma$ the presheaf corresponding to the type:
$$
\Sigma_{A : X}\Pi_{\Tm[|A|_X]} X.
$$

\begin{lem}\label{lem:X2-pairs}
For all context $\Gamma : \C$, the set $X^{(2)}_\Gamma$ is naturally isomorphic to the set of pairs $(A,B)$, where $A : X_\Gamma$ and $B : X_{\Gamma.|A|_X}$.
\end{lem}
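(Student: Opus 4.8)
The plan is to treat this as the ``$X$-relative'' analogue of \cref{cor:ty2-ext}, and to mirror the proof of that corollary: it reduces the claim to \cref{cor:ty-ext} together with the description of context extension of presheaves (\cref{prop:presheaf-families}). The only genuine work is to replace the codomain $\Ty$ appearing in \cref{cor:ty-ext} by the arbitrary type $X$.

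First I would establish the key intermediate fact. Regarding $X$ as a type over the unit context and writing $P :\equiv 1.X$, so that $\int P \cong \int X$ with objects $(\Gamma, A)$ where $A : X_\Gamma$, the type $\Pi_{\Tm[|\b A|_X]} X$ over $P$ is isomorphic to the presheaf on $\int X$ whose value at $(\Gamma, A)$ is $X_{\Gamma.|A|_X}$, naturally in $(\Gamma, A)$. Here $|\b A|_X$ is the generic underlying type, i.e. the term of type $\Ty$ over $P$ obtained from $|-|_X$. To prove this I would run exactly the argument of \cref{cor:ty-ext}: fix $(\Gamma, A)$, let $\pi$ be the display map of $\Tm[|\b A|_X]$, and apply \cref{lem:pi-adjunction} to the representable $y(\Gamma, A)$ to obtain $\presheaf\bTy(P)(y(\Gamma,A), \Pi_{\Tm[|\b A|_X]} X) \cong \presheaf\bTy(P.\Tm[|\b A|_X])(y(\Gamma,A)[\pi], X)$; then use \cref{lem:yoneda-terms} to identify the weakened representable $y(\Gamma,A)[\pi]$ with the representable at $(\Gamma.|A|_X, A[\pi], v_{|A|_X})$; and finally apply the Yoneda lemma. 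The one place where this diverges from \cref{cor:ty-ext} is the last step: there the hom lands in $\Ty$ and Yoneda returns $\Ty(\Gamma.A_\Gamma(x))$, whereas here the hom lands in the weakening of $X$, so Yoneda returns the fibre of $X$ at the extended context, namely $X_{\Gamma.|A|_X}$.

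With this in hand, I would conclude as in \cref{cor:ty2-ext}. Since $\Sigma_{\b A : X}\Pi_{\Tm[|\b A|_X]} X$ is a type over the unit context, its value at $\Gamma$ is computed by the description of context extension of presheaves (\cref{prop:presheaf-families}): it is the set of pairs $(A, b)$ with $A : X_\Gamma$ and $b$ an element of the fibre of $\Pi_{\Tm[|\b A|_X]} X$ over $(\Gamma, A)$. Substituting the intermediate fact identifies this fibre with $X_{\Gamma.|A|_X}$, so $X^{(2)}_\Gamma$ is naturally isomorphic to the set of pairs $(A, B)$ with $A : X_\Gamma$ and $B : X_{\Gamma.|A|_X}$, as required. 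Naturality in $\Gamma$ is inherited from that of the isomorphisms of \cref{lem:pi-adjunction}, \cref{lem:yoneda-terms} and the Yoneda lemma used above.

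The main obstacle I anticipate is bookkeeping rather than a real difficulty: \cref{cor:ty-ext} is stated only for the fibrant universe $\Ty$, so it cannot be invoked verbatim, and one must re-run its proof with a general codomain. The step that most needs care is checking that the identification of the weakened representable via \cref{lem:yoneda-terms} is compatible with evaluating $X$ at the context $\Gamma.|A|_X$, and that this identification is natural as $(\Gamma, A)$ varies over $\int X$.
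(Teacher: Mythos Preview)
Your proposal is correct and takes essentially the same approach as the paper. The paper's proof consists of the single sentence ``Immediate consequence of \cref{lem:yoneda-terms}'', and what you have written is precisely the unpacking of that sentence: re-running the argument of \cref{cor:ty-ext} (which itself is \cref{lem:pi-adjunction} + \cref{lem:yoneda-terms} + Yoneda) with the codomain $\Ty$ replaced by $X$, then combining with the description of context extension as in \cref{cor:ty2-ext}.
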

\begin{proof}
Immediate consequence of \cref{lem:yoneda-terms}.
\end{proof}

In the following, we will use the isomorphic representation of $X^{(2)}$ given by \cref{lem:X2-pairs} liberally.

Note that if $\Ty$ is regarded as an element of $\presheaf\C/\Ty$, the presheaf $\Ty^{(2)}$ matches with the one we defined in \cref{sec:basic-type-formers}.  

We can make $X^{(2)}$ into a presheaf over $\Ty$ in at least two ways: using the $\Pi$ or $\Sigma$-type structures on $\C$. In fact, they both can be regarded as morphisms:
$$
\Ty^{(2)} \to \Ty,
$$
from which we obtain the desired morphism $X^{(2)} \to \Ty$ by composing with the obvious map $X^{(2)} \to \Ty^{(2)}$.

We now define an endofunctor $E$ of $\presheaf\C / \Ty$ as:
$$
EX :\equiv X^{(2)} + X^{(2)} + 1,
$$

where the map $EX \to \Ty$ on the first $X^{(2)}$ component is given by the $\Pi$-type structure on $\C$ as explained above, on the second component by the $\Sigma$-type structure, and on the third component it just selects the unit type.

Denote by:
\begin{align*}
& \pi^E : X^{(2)} \to EX \\
& \sigma^E : X^{(2)} \to EX \\
& u^E : 1 \to EX
\end{align*}
the three canonical injections into the coproduct $EX$.

\begin{prop}\label{prop:E-finitary}
The endofunctor $E : \presheaf\C / \Ty \to \presheaf\C / \Ty$ is finitary.
\end{prop}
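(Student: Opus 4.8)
The plan is to unfold finitariness as preservation of filtered colimits, and to reduce the claim about $E$ to the single statement that $X \mapsto X^{(2)}$ is finitary. Indeed, $E$ factors as the composite of the componentwise functor $X \mapsto (X^{(2)}, X^{(2)}, 1)$ with the ternary coproduct $(\presheaf\C/\Ty)^3 \to \presheaf\C/\Ty$. The coproduct, being left adjoint to the diagonal, preserves all colimits; and the third component is the constant functor at the fixed object $1 \to \Ty$, which preserves filtered colimits because a filtered index category is connected and nonempty, so the colimit of a constant diagram is its value. Hence, once $X \mapsto X^{(2)}$ is shown to preserve filtered colimits, the componentwise functor and then $E$ do as well.

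To analyse $X^{(2)}$, first I would recall that colimits in $\presheaf\C/\Ty$ are computed as in $\presheaf\C$ by the forgetful functor and are therefore pointwise in $\set$; in particular, every evaluation $X \mapsto X_\Delta$ preserves filtered colimits. Next I would rewrite the pointwise description of $X^{(2)}$ supplied by \cref{lem:X2-pairs}. Writing $|-|_X : X \to \Ty$ for the structure map and $\mathrm{fib}_a(X_\Gamma)$ for its fibre over $a \in \Ty(\Gamma)$, grouping a pair $(A,B)$ according to the value $a := |A|_X$ gives a natural isomorphism
\[
X^{(2)}_\Gamma \;\cong\; \coprod_{a \,:\, \Ty(\Gamma)} \mathrm{fib}_a(X_\Gamma) \times X_{\Gamma.a}.
\]
The point of this rewriting is that the indexing set $\Ty(\Gamma)$ does not depend on $X$, and that for fixed $a$ the context $\Gamma.a$ is independent of the particular $A$ lying over $a$, so the two factors decouple.

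Finally I would check that each operation on the right preserves filtered colimits. The coproduct is taken over the fixed set $\Ty(\Gamma)$, hence commutes with all colimits; the fibre $\mathrm{fib}_a(X_\Gamma)$ is the pullback of $X_\Gamma \to \Ty(\Gamma)$ along $\{a\} \to \Ty(\Gamma)$, a finite limit, which commutes with filtered colimits in $\set$; the binary product is likewise a finite limit; and $X \mapsto X_{\Gamma.a}$ is an evaluation, hence pointwise. Combining these, $X \mapsto X^{(2)}_\Gamma$ preserves filtered colimits for each $\Gamma$, and therefore so does $X \mapsto X^{(2)}$. I expect the only real subtlety to be the dependency of the second component's index $\Gamma.|A|_X$ on the first component $A$; the fibre decomposition above is exactly what dissolves this obstacle, reducing the whole statement to the standard fact that filtered colimits commute with finite limits in $\set$.
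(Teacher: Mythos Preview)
Your argument is correct and follows essentially the same route as the paper: the paper's proof is the single line ``Clear from the characterisation of \cref{lem:X2-pairs}'', and you have spelled out in detail why that characterisation makes finitariness evident, via the fibre decomposition over $\Ty(\Gamma)$ and the standard commutation of filtered colimits with finite limits in $\set$.
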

\begin{proof}
Clear from the characterisation of \cref{lem:X2-pairs}.
\end{proof}

It follows from \cref{prop:E-finitary} that $E$ admits a free monad $E^*$.

An element of $(E^*X)_\Gamma$ is either a \emph{base element} $\eta(A)$, where $A : X_\Gamma$ and $\eta : X \to E^*X$ is the unit of the monad $E^*$, or a \emph{compound element} of the form $\pi^E(A, B)$, $\sigma^E(A, B)$ or $u^E$.

Here, we are abusing notation by writing $\pi^E$ for the canonical map $(EX)^{(2)} \to EX$ given by the free monad construction, and similarly for $\sigma^E$ and $u^E$.

The idea of this construction becomes clear when we try to apply $E^*$ to $\Ty$ itself.  The resulting presheaf $E^*\Ty$ can be regarded as an alternative type functor on $\C$ where types can be uniformly be classified into base types, $\Pi$-types, $\Sigma$-types or unit types.

This is made precise by the following.

\begin{lem}\label{lem:e-star-model}
For any model $\C$, the presheaf $E^*\Ty$ can be extended to a model structure.
\end{lem}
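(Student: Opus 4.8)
The plan is to transport the CwF together with its $\Pi$-, $\Sigma$- and unit-structure from $\C$ to $E^*\Ty$, using the structure map $|-| : E^*\Ty \to \Ty$ as a bridge and reading off the type formers from the free $E$-algebra structure. First I would observe that $E^*\Ty$, being an object of $\presheaf\C/\Ty$, carries a canonical natural transformation $|-| : E^*\Ty \to \Ty$, which is exactly the datum of \cref{def:dual-cwf}: a presheaf on $\C$ together with a natural map into the type functor $\Ty$ of the CwF $\presheaf\C$. I would therefore define a CwF with type functor $E^*\Ty$ by the same recipe used there, setting $\Tm_\Gamma(A) :\equiv \Tm_\Gamma(|A|)$ and $\Gamma.A :\equiv \Gamma.|A|$; the verification that this is a CwF is identical to the two-level case, as it only uses naturality of $|-|$ and the fact that $\C$ is already a CwF.

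Next I would read off the type formers from the constructors of the free monad. By \cref{lem:X2-pairs} (with $X = E^*\Ty$), a type $A$ over $\Gamma$ together with a type $B$ over $\Gamma.A = \Gamma.|A|$ is the same as an element of $(E^*\Ty)^{(2)}_\Gamma$; I can thus set $\pi(A,B) :\equiv \pi^E(A,B)$ and $\sigma(A,B) :\equiv \sigma^E(A,B)$ using the free-monad versions $\pi^E, \sigma^E : (E^*\Ty)^{(2)} \to E^*\Ty$, and take the unit type to be $u^E$.

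The main obstacle, and the crux of the argument, is to check that $|-|$ carries these formal constructors strictly onto the operations of $\C$, that is, $|\pi^E(A,B)| = \Pi_{|A|}|B|$, $|\sigma^E(A,B)| = \Sigma_{|A|}|B|$, and $|u^E|$ is the unit type of $\C$. This holds because $\pi^E$, $\sigma^E$ and $u^E$ are the components of the $E$-algebra structure map of the free algebra $E^*\Ty$, hence morphisms in the slice $\presheaf\C/\Ty$, so they commute with the structure maps to $\Ty$; and by construction of $E$ the structure map on the first $X^{(2)}$ summand is the $\Pi$-operation $\Ty^{(2)} \to \Ty$, on the second the $\Sigma$-operation, and on $1$ the unit type. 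It is precisely this strictness — gained by taking the free monad inside $\presheaf\C/\Ty$ rather than in $\presheaf\C$ — that motivates the whole construction.

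Given this compatibility, the isomorphisms required by \cref{def:pi-structure-standard}, \cref{def:sigma-structure-standard} and \cref{def:unit-structure-standard} transfer directly from the model structure of $\C$: for instance $\Tm_\Gamma(\pi^E(A,B)) = \Tm_\Gamma(|\pi^E(A,B)|) = \Tm_\Gamma(\Pi_{|A|}|B|) \cong \Tm_{\Gamma.|A|}(|B|) = \Tm_{\Gamma.A}(B)$, where the middle isomorphism is the $\Pi$-structure of $\C$, and analogously for $\Sigma$ and the unit type. Since $\pi^E$, $\sigma^E$, $u^E$ and $|-|$ are all natural transformations of presheaves, the substitution equations hold automatically, and $E^*\Ty$ thereby acquires a model structure.
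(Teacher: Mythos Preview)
Your proposal is correct and follows exactly the same approach as the paper: define terms and context extension by pushing through $|-|$, then use $\pi^E$, $\sigma^E$, $u^E$ as the type formers. The paper's proof is terser, simply asserting that ``verifying all the required properties is straightforward,'' whereas you have actually spelled out the key point that $|\pi^E(A,B)| = \Pi_{|A|}|B|$ because the free monad lives in the slice over $\Ty$; this is a useful elaboration but not a different argument.
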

\begin{proof}
For an element $A : E^*\Ty$, define its set of terms simply as $\Tm_\Gamma(|A|)$.  This clearly equips $\C$ with a CwF structure, where context extension is given by $\Gamma.A :\equiv \Gamma.|A|$.

The rest of the structure can be obtained directly from the decomposition of $E^*$: the $\Pi$-type structure is given by $\pi^E$, the $\Sigma$-type structure by $\sigma^E$ and the unit type structure by $u^E$.  Verifying all the required properties is straightforward.
\end{proof}

If $\C$ is a category and $\Phi$ is a flat type former over $\T$, write $\mathcal{M}^\Phi_\C$ for the subcategory of $\mathcal{M}^\Phi$ consisting of $\Phi$-models that have $\C$ as the underlying category, and $\Phi$-morphisms that have the identity as the underlying functor.  We refer to $\mathcal{M}^\Phi_\C$ as the \emph{category of $\Phi$-model structures on $\C$}.

Similarly, $\mathcal{M}_\C$ denotes the category of model structures on $\C$ (without any additional structure).

\Cref{lem:e-star-model} implies that $E^*$ induces an endofunctor on $\mathcal{M}_\C$ for all models $\C$.

\begin{lem}\label{lem:e-star-comonad}
The endofunctor determined by $E^*$ is a comonad on $\mathcal{M}_\C$.
\end{lem}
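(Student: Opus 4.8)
The plan is to exhibit the comonad structure on $G := E^*$ as the one canonically carried by the free endofunctor arising from a monad acting on its algebras. Recall from \cref{prop:E-finitary} that $E$ is finitary, so it has a free monad $E^*$ with unit $\eta : \mathrm{Id} \to E^*$ and multiplication $\mu : E^* E^* \to E^*$, and that $E^*$-algebras coincide with $E$-algebras. The first step is to observe that a model structure on $\C$ is exactly an $E$-algebra structure on its type functor: the operations $\pi$, $\sigma$ and the unit type assemble into a single map $E\Ty = \Ty^{(2)} + \Ty^{(2)} + 1 \to \Ty$, and conversely such a map, together with the associated term isomorphisms, recovers the $\Pi$-, $\Sigma$- and unit-type structures of \cref{def:pi-structure}, \cref{def:sigma-structure} and \cref{def:unit-structure}. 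Under this identification $\mathcal{M}_\C$ is (equivalent to) the Eilenberg--Moore category of $E^*$, and the endofunctor induced by \cref{lem:e-star-model}, sending $M$ to the model on $E^*\Ty_M$, is precisely the composite $FU$, where $U : \mathcal{M}_\C \to \mathcal{N}_\C$ forgets the $\Pi$-, $\Sigma$- and unit-type structures (landing in the category $\mathcal{N}_\C$ of CwF structures on $\C$) and $F \dashv U$ freely adjoins them via $E^*$.

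Given this, I would invoke the standard fact that for any adjunction $F \dashv U$ with counit $\varepsilon$ and unit $\eta$, the composite $FU$ is a comonad, with counit $\varepsilon : FU \to \mathrm{Id}$ and comultiplication $F\eta U : FU \to FUFU$, the comonad laws being the triangle identities. Concretely, the counit $\epsilon_M : E^* M \to M$ is the evaluation $|-| : E^*\Ty_M \to \Ty_M$ furnishing the $E^*$-algebra structure of $M$; this is a morphism in $\mathcal{M}_\C$ because $|\pi^E(A,B)| = \Pi_{|A|}|B|$, $|\sigma^E(A,B)| = \Sigma_{|A|}|B|$ and $|u^E| = 1$ by construction, while terms are defined through $|-|$ in \cref{lem:e-star-model}. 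The comultiplication $\delta_M : E^* M \to E^* E^* M$ is $E^*\eta_{\Ty_M}$, which reinterprets a formal expression over the base types in $\Ty_M$ as a two-layer expression whose inner layer consists only of base insertions $\eta$. The counit and coassociativity laws for $(G, \epsilon, \delta)$ then reduce to the left and right unit laws of the monad $E^*$.

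The step I expect to be the main obstacle is the first one: pinning down the precise sense in which $\mathcal{M}_\C$ is the category of $E^*$-algebras. Two points need care. First, one must check that the coherence equations required of a model structure (the substitution, $\beta$ and $\eta$ laws quantified in \cref{def:pi-structure} and \cref{def:sigma-structure}) are exactly captured by the $E$-algebra axioms together with naturality over $\int\Ty$, so that no condition is lost or gained. Second, $E$ is itself defined using the ambient $\Pi$- and $\Sigma$-structures, since they provide the structure maps $X^{(2)} \to \Ty$ of \cref{lem:X2-pairs}; hence as $M$ varies over $\mathcal{M}_\C$ the functor $E$ must be treated fibrewise over each model rather than as a single fixed endofunctor, and verifying that $\eta$, $\mu$, and thus $\epsilon$, $\delta$, are natural across all of $\mathcal{M}_\C$ is where the real work lies. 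Once this identification is secured, checking that $\epsilon$ and $\delta$ are genuine $\mathcal{M}_\C$-morphisms and satisfy the comonad laws is a routine diagram chase via the universal property of the free monad, which I would carry out by direct computation on base and compound elements.
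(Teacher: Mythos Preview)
Your concrete formulas for $\epsilon$ and $\delta$ are correct and in fact coincide with the paper's: the counit is $|-| : E^*\Ty \to \Ty$, and your comultiplication $E^*\eta$ is exactly what the paper builds by hand via structural induction (sending $\eta(A) \mapsto \eta(\eta(A))$, $\pi^E(A,B) \mapsto \pi^E(\delta A,\delta B)$, etc.). So the computational content of your proposal is right.

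The gap is in the high-level justification. The identification of $\mathcal{M}_\C$ with the Eilenberg--Moore category of $E^*$ does not go through, for two reasons that you partly flag but do not resolve. First, $E$ is only defined \emph{after} a model structure $M$ has been chosen: the augmentation $EX \to \Ty$ uses $M$'s own $\Pi$-, $\Sigma$- and unit-type formers. There is therefore no single monad on a fixed category whose algebras are the objects of $\mathcal{M}_\C$; rather, each $M$ carries its own $E_M^*$, and the endofunctor on $\mathcal{M}_\C$ sends $M$ to the model on $E_M^*\Ty_M$. Your proposed adjunction $F \dashv U$ with $U : \mathcal{M}_\C \to \mathcal{N}_\C$ would need $F$ to produce a model structure from a bare CwF, but $E^*$ as defined cannot be applied to a bare CwF. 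Second, even fixing $M$, an $E$-algebra structure on $\Ty$ in $\presheaf\C/\Ty$ is only a map $E\Ty \to \Ty$; it does not carry the term isomorphisms of \cref{def:pi-structure}, \cref{def:sigma-structure} and \cref{def:unit-structure}, which are genuine additional data in a model structure. You note the term isomorphisms are needed, but then still assert the EM identification, which contradicts this.

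The paper's proof avoids the issue entirely: it never claims $\mathcal{M}_\C$ is monadic over anything, and instead checks directly that $|-|$ and the inductively defined $\delta$ are model morphisms (the key computation being $|\delta(X)| = |X|$) and that the comonad laws hold. Your observation that $\delta = E^*\eta$ is a useful gloss on that induction, but it does not let you skip the verification that $\epsilon$ and $\delta$ are morphisms in $\mathcal{M}_\C$ and are natural; that verification is the actual content of the lemma, and the EM framing, even if it could be repaired, would not shortcut it.
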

\begin{proof}
A morphism $\epsilon$, serving as the counit of the comonad, can be obtained directly from the map $|-| : E^*\Ty \to \Ty$.  All we need to do to make $\epsilon$ into the unit of a comonad is to show that it induces a model morphism.  Indeed, this is readily verified, since the model structure corresponding to $E^*\Ty$ is defined in terms of $|-|$ itself.

To define the comonad multiplication $\delta : E^*\Ty \to E^*(E^*\Ty)$, we proceed by induction on the structure of $E^*$, and at the same time show that $|\delta(X)| = |X|$ for all $X$.

Let $\Gamma : \C$, and $X : E^*\Ty(\Gamma)$.
\begin{itemize}
\item If $X = \eta(A)$ for some $A : \Ty(\Gamma)$, set $\delta(X) :\equiv
\eta(\eta(A))$. Then clearly $|\delta(X)| = A = |X|$.
\item If $X = \pi^E(A,B)$, we have by induction hypothesis $\delta(A) :
\E^*(\E^*\Ty)(\Gamma)$, and, modulo an application of the isomorphism of
\cref{lem:X2-pairs}, $\delta(B) : \E^*(\E^*\Ty)(\Gamma.|A|)$.  Since $|A| =
|\delta(A)|$ by the induction hypothesis, we can set $\delta(X) :\equiv
\pi^E(\delta(A), \delta(B))$, and observe that $|\delta(X)| = \Pi_{|A|} |B| =
|\pi^E(A,B)| = |X|$, as required.
\item If $X = \sigma^E(A,B)$, we proceed exactly like for the $\pi^E$ case above.
\item If $X = u^E$, we set $\delta(X) :\equiv u^E$, and the required equation
obviously holds.
\end{itemize}

The fact that $\delta$ is a model morphism follows immediately from its definition.

One of the comonad laws has already been proved as part of the definition of $\delta$, and the others can be easily verified.
\end{proof}

\Cref{lem:e-star-comonad} may seem surprising at first, since $E^*$ is defined as a (free) monad, while it turns out to be a comonad when regarded as an endofunctor of model structures.  However, $E^*$ is already a comonad on $\presheaf\C/\Ty$, so all that \cref{lem:e-star-comonad} states is that this structure carries over.

On the other hand, $E^*$ is \emph{not} a monad on $\mathcal{M}_\C$, since for example the unit $\eta : \Id \to E^*$ cannot be regarded as a model morphism.

\begin{thm}\label{thm:regular-yoneda}
Let $\C$ be a model. The Yoneda embedding $y : \C \to \presheaf\C$ can be extended to a model morphism between $E^*\Ty$ and the canonical model structure on $\presheaf\C$ defined in \cref{sec:presheaves}.
\end{thm}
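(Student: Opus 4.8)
The plan is to rectify the non-strict CwF morphism $y_0$ of \cref{prop:yoneda} into a genuine model morphism, exploiting the fact that types of the $E^*\Ty$-model carry a canonical syntactic decomposition. I would define the action of $y$ on types — and simultaneously a natural isomorphism $\rho_A : y(A) \cong y_0(|A|)$ of types over $y\Gamma$ — by recursion on the free-monad structure of $A : E^*\Ty(\Gamma)$. On a base element set $y(\eta A) :\equiv y_0(A)$ (with $\rho$ the identity); on compound elements set $y(\pi^E(A,B)) :\equiv \Pi_{yA}\,yB$, $y(\sigma^E(A,B)) :\equiv \Sigma_{yA}\,yB$ and $y(u^E) :\equiv 1$, using the canonical $\Pi$-, $\Sigma$- and unit structures on $\presheaf\C$ from \cref{sec:presheaves}. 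Here $B : E^*\Ty(\Gamma.|A|)$ (read via \cref{lem:X2-pairs}) must be reindexed to a type over $y\Gamma.yA$ along the composite of $\rho_A$ with the isomorphism $y\Gamma.y_0|A| \cong y(\Gamma.|A|)$ of \cref{prop:y0-extension}; this is where the inductively available $\rho_A$ is used. Since the $E^*$-decomposition is unique the recursion is well defined, and by construction $y$ carries the $\Pi$-structure of the $E^*\Ty$-model — which \emph{is} $\pi^E$ by \cref{lem:e-star-model} — to that of $\presheaf\C$ on the nose, and likewise for $\Sigma$ and the unit type. Strict preservation is thus built into the definition; the real work is to verify that $y$ is a CwF morphism and that the structure \emph{isomorphisms}, not merely the type operations, are matched.

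\textbf{The inductive claim.} Completing the simultaneous induction amounts to producing $\rho$ in the compound cases, and this reduces to showing that $y_0$ \emph{weakly} preserves $\Pi$, $\Sigma$ and the unit type. Indeed, in the $\pi^E$ case the induction hypotheses give $\Pi_{yA}\,yB \cong \Pi_{y_0|A|}\,y_0|B|$, so it suffices to exhibit a natural isomorphism $\Pi_{y_0|A|}\,y_0|B| \cong y_0(\Pi_{|A|}|B|) = y_0(|\pi^E(A,B)|)$; the $\sigma^E$ and $u^E$ cases are analogous and easier.

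\textbf{Weak preservation of $y_0$.} To establish weak preservation of $\Pi$, fix $A : \Ty(\Gamma)$ and $B : \Ty(\Gamma.A)$ in $\C$ and argue pointwise, using that a type over $y\Gamma$ — a presheaf on $\int y\Gamma \cong \C/\Gamma$ — is determined up to isomorphism by its terms over representables (by the Yoneda lemma, $\presheaf\Tm_{y\Delta}(X[\sigma]) \cong X_{(\Delta,\sigma)}$). For $(\Delta,\sigma) : \C/\Gamma$ one computes, naturally in $(\Delta,\sigma)$,
\[
\presheaf\Tm_{y\Delta}\bigl(\Pi_{y_0(A[\sigma])}\,y_0(B[\sigma^+])\bigr)
\cong \presheaf\Tm_{y(\Delta.A[\sigma])}\bigl(y_0(B[\sigma^+])\bigr)
\cong \Tm_{\Delta.A[\sigma]}(B[\sigma^+])
\cong \Tm_\Delta\bigl((\Pi_A B)[\sigma]\bigr),
\]
where the isomorphisms come from lambda abstraction \cref{eq:lambda-abstraction} together with \cref{prop:y0-extension}, from \cref{lem:y0-terms}, and from lambda abstraction in $\C$ (\cref{eq:pi-structure-iso2}). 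The right-hand side is in turn isomorphic to $\presheaf\Tm_{y\Delta}\bigl((y_0(\Pi_A B))[\sigma]\bigr)$ by \cref{lem:y0-terms} and naturality of $y_0$; hence $\Pi_{y_0A}\,y_0B$ and $y_0(\Pi_A B)$ have naturally isomorphic terms over all representables and are therefore isomorphic. The $\Sigma$ and unit cases repeat this computation with lambda abstraction replaced by the defining isomorphisms of $\Sigma$ and the unit type.

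\textbf{Assembly and the main obstacle.} Granting $\rho$, the morphism $y$ is a CwF morphism: naturality of its action on types follows from naturality of the presheaf $\Pi$-, $\Sigma$- and unit structures and of the $E^*$-decomposition under substitution; the comparison maps $\phi^y_A$ are isomorphisms because, under $\rho_A$, they coincide with the $\phi^{y_0}_{|A|}$ of \cref{prop:yoneda}; the action on terms is the composite $\Tm_\Gamma(|A|) \cong \presheaf\Tm_{y\Gamma}(y_0|A|) \cong \presheaf\Tm_{y\Gamma}(yA)$; and $y(1)$ is terminal in $\presheaf\C$ since the Yoneda embedding preserves all limits, so the data of \cref{def:cwf-morphism} are in place. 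Strict preservation of $\Pi$, $\Sigma$ and the unit type (\cref{def:preservation-pi} and its analogues) then holds by the definition of $y$ on compound types. I expect the main obstacle to be precisely this last point: confirming that the recursively-defined $y$ realises strict preservation in the \emph{full} sense of \cref{def:preservation-pi} — reconciling the ``on the nose'' definition of the type operations with the term-level structure bijections and the context-extension comparisons $\phi^y$ — rather than merely matching the type formers. The weak-preservation computation for $y_0$ above is the technical ingredient that makes this matching possible, since both the $E^*\Ty$-side and the presheaf side are ultimately presented by the same lambda-abstraction isomorphisms.
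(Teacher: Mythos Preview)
Your proposal is correct and follows essentially the same approach as the paper: define $y$ on types by recursion on the free-monad structure of $E^*\Ty$, simultaneously maintaining an isomorphism $y(X)\cong y_0(|X|)$, with base case $y(\eta A):\equiv y_0(A)$ and compound cases $y(\pi^E(A,B)):\equiv \Pi_{yA}\,yB$, etc., so that strict preservation of the $\T$-structure is immediate from \cref{lem:e-star-model}. Your account is in fact more explicit than the paper's on two points it leaves terse: the weak-preservation computation for $y_0$ (which the paper compresses to ``it follows from the definition of $\Pi$-types in $\C$'') and the verification that the term-level data of \cref{def:preservation-pi} line up.
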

\begin{proof}
We have already defined an action of $y$ on types, denoted $y_0 : \Ty \to \presheaf\Ty(y -)$.  The function $y_0$ maps a type $A : \Ty(\Gamma)$ to the functor $(\Delta, \sigma) \mapsto \Tm_\Delta(A[\sigma])$ (\cref{def:y0}).  We observed that $y_0$ is not in general a model morphism.

For an element $X : E^*\Ty(\Gamma)$, we will define $y(X) : \presheaf\Ty(y \Gamma)$ by induction on the structure of $E^*\Ty(\Gamma)$, and at the same time we will construct a natural isomorphism $y(X) \cong y_0(|X|)$.
\begin{itemize}
\item if $X = \eta(A)$ for some type $A : \Ty(\Gamma)$, let $y(X) :\equiv y_0(X)$ and the isomorphism be the identity;
\item if $X = \pi^E(A, B)$, we get by induction hypothesis a type $y(A) : \presheaf\Ty(y\Gamma)$; similarly, using \cref{lem:X2-pairs}, we get a type $y'(B) : \presheaf\Ty(y(\Gamma.A))$. Now, $y(\Gamma.A) \cong y(\Gamma).y_0(A) \cong y(\Gamma).y(A)$, hence we can set $y(X) :\equiv \Pi_{y(A)}y'(B)$.  It follows from the definition of $\Pi$-types in $\C$ that $y(X) \cong y_0(|X|)$.
\item if $X = \sigma^E(A, B)$ or $X = u^E$, we proceed similarly to the case of $\pi^E$.
\end{itemize}

We can prove that $y : \Ty(\Gamma) \to \presheaf\Ty(y\Gamma)$ is natural in $\Gamma$ by induction on its argument, and using the fact that $y_0$ is natural (\cref{prop:y0-embedding}) as the base case.

At this point, since the definition of the $\Pi$-type structure on $E^*\Ty$ is given precisely by $\pi^E$, it is easy to verify that $y$ as defined above does indeed preserve $\Pi$-types strictly, and a similar argument shows that $y$ preserves all type formers, hence it is a model morphism.
\end{proof}

\begin{defn}
A $\Phi$-model $\C$ is said to be \emph{(weakly) regular} if it is equipped with a model morphism $\theta : \Ty \to E^*\Ty$ on the category of $\Phi$-model structures of $\C$.  $\C$ is said to be \emph{strongly regular} if $\theta$ is a coalgebra of the comonad $E^*$.
\end{defn}

\begin{defn}\label{def:regular-type-former}
An type former $\Phi$ over $\T$ is said to be \emph{regular} if $E^*$ maps $\Phi$-models into $\Phi$-models.
\end{defn}

If $\Phi$ is flat, we can say that $\Phi$ is regular if and only if $E^*$ can be extended to an endofunctor (hence a comonad) on $\mathcal{M}^\Phi_\C$ for any $\Phi$-model $\C$.  Clearly, the trivial type former over $\T$ is regular by \cref{lem:e-star-model}.

\begin{prop}\label{prop:regular-initial}
If $\Phi$ a regular algebraic type former over $\T$, then the initial $\Phi$-model is strongly regular.
\end{prop}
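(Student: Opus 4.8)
The plan is to exploit the universal property of the initial $\Phi$-model together with the fact that, since $\Phi$ is regular (\cref{def:regular-type-former}) and flat, the free-monad endofunctor $E^*$ restricts to a comonad on the category $\mathcal{M}^\Phi_\C$ of $\Phi$-model structures on any fixed underlying category (the extension of \cref{lem:e-star-comonad} to $\Phi$-models recorded just after \cref{def:regular-type-former}). Write $\mathcal{H}$ for the initial $\Phi$-model, with underlying category also denoted $\mathcal{H}$, and let $H$ denote its $\Phi$-model structure, regarded as an object of $\mathcal{M}^\Phi_\mathcal{H}$. Because $E^*$ preserves the underlying category, $E^* H$ is again a $\Phi$-model whose underlying category is $\mathcal{H}$, so it is a legitimate target for the universal property.

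First I would produce the coalgebra structure map. By initiality of $\mathcal{H}$ in the category of $\Phi$-models and split $\Phi$-morphisms, there is a unique split $\Phi$-morphism $\theta : \mathcal{H} \to E^* H$; on types this is the desired map $\theta : \Ty \to E^*\Ty$. I then need to check that $\theta$ actually lies in $\mathcal{M}^\Phi_\mathcal{H}$, i.e.\ that its underlying functor is the identity. For this, observe that the counit $\epsilon : E^* H \to H$, induced by $|-| : E^*\Ty \to \Ty$, is a split $\Phi$-morphism with identity underlying functor. Hence $\epsilon \circ \theta$ is a split $\Phi$-endomorphism of $\mathcal{H}$, and by initiality it must equal $\id$. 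Comparing underlying functors gives that the underlying functor of $\theta$ is the identity, so $\theta$ is a morphism of $\Phi$-model structures on $\mathcal{H}$; this already shows $\mathcal{H}$ is weakly regular, and the identity $\epsilon \circ \theta = \id$ is precisely the counit law for a coalgebra.

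It then remains to verify coassociativity, $E^*\theta \circ \theta = \delta \circ \theta$, where $\delta : E^* H \to E^*(E^* H)$ is the comultiplication. Here I would again appeal to uniqueness: both composites are split $\Phi$-morphisms from $\mathcal{H}$ into the $\Phi$-model $E^*(E^* H)$, which is again a $\Phi$-model with underlying category $\mathcal{H}$ (using regularity twice, and the fact that $E^*$, $\delta$ and $\theta$ all preserve splitness). Since $\mathcal{H}$ is initial, there is at most one such morphism, so the two composites coincide and $\theta$ is a genuine $E^*$-coalgebra, i.e.\ $\mathcal{H}$ is strongly regular.

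The main obstacle I anticipate is the bookkeeping around splitness and underlying functors rather than any deep construction: one must check that $\epsilon$, $\delta$, and the functor $E^*$ all preserve split $\Phi$-morphisms and act as the identity on underlying categories, so that every arrow appearing above is a legitimate morphism in the category in which $\mathcal{H}$ is initial. Once this is established, all three coalgebra conditions (existence of $\theta$, the counit law, and coassociativity) fall out uniformly from the uniqueness clause of initiality, with no explicit computation required.
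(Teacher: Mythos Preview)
Your proposal is correct and follows exactly the approach the paper intends: the paper's proof is a single sentence (``the existence of $\theta$ is an immediate consequence of the initiality of $\S$''), leaving the verification of the coalgebra laws implicit, while you spell out that both the counit law $\epsilon \circ \theta = \id$ and coassociativity $E^*\theta \circ \theta = \delta \circ \theta$ follow from uniqueness of split $\Phi$-morphisms out of the initial object. Your additional care about the underlying functor of $\theta$ being the identity and about splitness of $\epsilon$, $\delta$, and $E^*\theta$ is exactly the bookkeeping the paper suppresses.
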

\begin{proof}
Let $\S$ be the initial $\Phi$-model.  The existence of $\theta : \Ty^\S \to E^*\Ty^\S$ is an immediate consequence of the initiality of $\S$.
\end{proof}

\begin{thm}\label{thm:two-level-presheaves}
Let $\Phi$ and $\Psi$ be type formers over $\T$, with $\Phi$ regular and $\Psi$ set theoretic, and let $\C$ be a regular model. The presheaf category $\presheaf\C$ can be equipped with a $(\Phi,\Psi)$-model structure such that the Yoneda embedding $y : \C \to \presheaf\C$ can be extended to a $\Phi$-morphism between $\C$ and the fibrant fragment of $\presheaf\C$.
\end{thm}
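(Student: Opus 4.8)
The plan is to keep the strict fragment of $\presheaf\C$ equal to the canonical CwF structure of \cref{sec:presheaves}, and to repair the strictness failure described at the start of this section by using the \emph{regularised} type functor $E^*\Ty$ in place of the naive fibrant universe $\Ty$. Concretely, I would regard $E^*\Ty$ as a presheaf on $\C$ with its structure map $|-| : E^*\Ty \to \Ty$, and take $\U := E^*\Ty$ together with the type $\El$ obtained by pulling back $\Tm$ along $|-|$; this is a universe in $\presheaf\C$. By \cref{remark:two-level-universe} it makes $\presheaf\C$ a two-level CwF, with $\fibrant\Ty(P) = \presheaf\C(P, E^*\Ty)$; by the Yoneda lemma a fibrant type over $y\Gamma$ is identified with an element of $E^*\Ty(\Gamma)$, and by \cref{lem:e-star-model} the functor $E^*\Ty$ already carries a model structure on $\C$, so the fibrant fragment is a model of type theory.

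The strict fragment is an $\RF$-category and, since $\Psi$ is set-theoretic, it carries a $\Psi$-structure by \cref{def:set-theoretic-type-former}. For the fibrant fragment I would exploit regularity of $\Phi$: by \cref{def:regular-type-former} the functor $E^*$ sends the $\Phi$-model $\C$ to a $\Phi$-model $(\C, E^*\Ty)$. Applying \cref{thm:fibrant-lift} to this $\Phi$-CwF produces a $\Phi$-structure on its fibrant presheaf fragment --- which is precisely the fibrant fragment of $\presheaf\C$ determined by the universe $E^*\Ty$ --- in such a way that the Yoneda embedding $y : (\C, E^*\Ty) \to \fibrant{\presheaf\C}$ is a $\Phi$-morphism.

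The step I expect to be the crux is showing that $|-| : \fibrant{\presheaf\C} \to \strict{\presheaf\C}$ is a $\T$-morphism, i.e.\ that it preserves $\Pi$, $\Sigma$ and the unit type \emph{on the nose}; this is the sole reason the regularisation machinery was introduced. The point is that in $(\C, E^*\Ty)$ these three operations are literally the coproduct injections $\pi^E$, $\sigma^E$, $u^E$, and the underlying-type map satisfies the strict equations $|\pi^E(A,B)| = \Pi_{|A|}|B|$, $|\sigma^E(A,B)| = \Sigma_{|A|}|B|$ and $|u^E| = 1$ already recorded in the proofs of \cref{lem:e-star-comonad} and \cref{thm:regular-yoneda}. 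On representable contexts this says exactly that $|-|$ agrees with the strict model morphism of \cref{thm:regular-yoneda}, hence commutes strictly with the fibrant type formers there; since the defining equations are equalities of presheaves over $\int P$ and reduce to the representable case by substituting along elements $y\Gamma \to P$, they hold for every context $P$, so $|-|$ is a $\T$-morphism and $\presheaf\C$ is a two-level model.

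Finally, to obtain the statement about the Yoneda embedding of the \emph{original} model, I would precompose the $\Phi$-morphism $y : (\C, E^*\Ty) \to \fibrant{\presheaf\C}$ with the regularity morphism $\theta : (\C,\Ty) \to (\C, E^*\Ty)$, which is a $\Phi$-morphism over the identity functor because $\C$ is regular. The composite lies over the ordinary Yoneda functor $y : \C \to \presheaf\C$ and is the required extension; assembling the lifted fibrant $\Phi$-structure, the strict $\Psi$-structure and the strictly $\T$-preserving $|-|$ then exhibits $\presheaf\C$ as a $(\Phi,\Psi)$-model. The one delicate point here is that $\Phi$ is not assumed flat, so the composite cannot be justified by \cref{prop:flat-composition}; instead I would check directly from the span description of \cref{def:phi-morphism} that precomposition by $\theta$, which lies over the identity, carries the element witnessing that $y$ is a $\Phi$-morphism to one witnessing the same for the composite.
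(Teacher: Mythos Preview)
Your proposal is correct and follows essentially the same route as the paper: take the strict fragment to be the canonical presheaf CwF, use $E^*\Ty$ as the fibrant universe, lift the $\Phi$-structure via \cref{thm:fibrant-lift} applied to $(\C,E^*\Ty)$, define the coercion $|-|$ pointwise via the strict Yoneda map of \cref{thm:regular-yoneda}, and precompose with $\theta$ at the end. The paper's only additional content is the explicit formula $|A|_\Gamma(x) :\equiv y(A_\Gamma(x))_\Gamma(\id)$ and a direct calculation showing $|\Pi_A B|_\Gamma(x) = (\Pi_{|A|}|B|)_\Gamma(x)$, which is exactly the ``reduce to representables'' argument you sketch. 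You are in fact more careful than the paper on one point: the paper silently passes from the $\Phi$-morphism $y : (\C,E^*\Ty) \to \fibrant{\presheaf\C}$ and the $\Phi$-morphism $\theta$ to a $\Phi$-morphism for the composite without invoking flatness, whereas you correctly flag this and propose to verify it directly from the span description.
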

\begin{proof}
Let $\C'$ be the model obtained from $\C$ by replacing $\Ty$ with $E^*\Ty$.  We know from \cref{thm:fibrant-lift} that $\fibrant{\presheaf\C}$ can be made into a $\Phi$-model and $y : \C' \to \fibrant{\presheaf\C}$ can be extended to a $\Phi$-morphism.  Furthermore, $\strict{\presheaf\C}$ is a $\Psi$-model by the assumption that $\Psi$ is set-theoretic.

Since $\theta : \C \to \C'$ is a model morphism, all we have to do is define the rest of the two-level model structure on $\presheaf\C$.

The non-obvious bit is how to define the coercion map $|-| : \fibrant{\presheaf\Ty} \to \strict{\presheaf\Ty}$.  Fortunately, most of the hard work is already contained in the proof of \cref{thm:regular-yoneda}.

For all contexts $P : \presheaf\C$, and $A : \presheaf\Ty(P)$, set:
$$
|A|_\Gamma(x) :\equiv y(A_\Gamma(x))_\Gamma(\id).
$$

From naturality of $y$, it follows that:
$$
|A|[x] = y(A_\Gamma(x)).
$$

Now, consider a pair $(A, B)$ in the fibrant fragment. Its related pair is given (\cref{lem:related-pairs}) by $(|A|, |B|)$, where we have used the isomorphism of \cref{cor:ty-ext} implicitly.

Now we compute:
\begin{align*}
|\Pi_A B|_\Gamma(x)
& = y((\Pi_A B)_\Gamma(x))_\Gamma(\id) \\
& = y(\Pi_{A_\Gamma(x)} \widetilde B_\Gamma(x))_\Gamma(\id) \\
& = (\Pi_{y(A_\Gamma(x)} y'(\widetilde B_\Gamma(x)))_\Gamma(\id) \\
& = (\Pi_{|A|[x]} |B|[x^+])_\Gamma(\id) \\
& = ((\Pi_{|A|} |B|)[x])_\Gamma(\id) \\
& = (\Pi_{|A|} |B|)_\Gamma(x).
\end{align*}

It follows that $|-|$ preserves $\Pi$ types strictly.  A similar verification for $\Sigma$ and the unit type shows that $|-|$ is a model morphism, concluding the proof.
\end{proof}

\subsection{Conservativity}\label{sec:conservativity}

An important consequence of the results of \cref{sec:regularisation} is the following \emph{conservativity} result.

\begin{thm}\label{thm:conservativity}
Let $\Phi$ be a regular algebraic type former, and $\Psi$ a set-theoretic type former over $\T$. Assume that the category of $(\Phi,\Psi)$-models has an initial object $\mathcal{S}$, and let $\mathcal{H}$ be the initial $\Phi$-model.
Let $H : \mathcal{H} \to \mathcal{S}$ be the unique morphism to the fibrant fragment of $\mathcal{S}$.

Let $\Gamma : \mathcal{H}$ and $A : \Ty(\Gamma)$.  If $H(A)$ is inhabited in $\mathcal{S}$, then $A$ is inhabited in $\mathcal{H}$.
\end{thm}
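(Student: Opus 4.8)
The plan is to realise $H$ as (isomorphic to) a split factor of the Yoneda embedding into the presheaf model of $\mathcal{H}$, and then transport inhabitation backwards along that factorisation. Since $\Phi$ is regular and algebraic, \cref{prop:regular-initial} shows that the initial $\Phi$-model $\mathcal{H}$ is strongly regular, in particular regular. This is exactly the hypothesis needed to invoke \cref{thm:two-level-presheaves} with $\C :\equiv \mathcal{H}$: the presheaf category $\presheaf{\mathcal{H}}$ carries a $(\Phi,\Psi)$-model structure (using that $\Psi$ is set-theoretic to equip the strict fragment, and that $\Phi$ is regular to lift it to the fibrant one), and the Yoneda embedding extends to a $\Phi$-morphism $y : \mathcal{H} \to \fibrant{\presheaf{\mathcal{H}}}$ into the fibrant fragment. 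This is the fibrant Yoneda of \cref{lem:yoneda-fibrant}, which is moreover an isomorphism on types and terms.

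Next I would use initiality twice. Because $\mathcal{S}$ is the initial $(\Phi,\Psi)$-model and $\presheaf{\mathcal{H}}$ is a $(\Phi,\Psi)$-model, there is a unique $(\Phi,\Psi)$-morphism $I : \mathcal{S} \to \presheaf{\mathcal{H}}$; restricting to fibrant fragments yields a $\Phi$-morphism $\fibrant{\mathcal{S}} \to \fibrant{\presheaf{\mathcal{H}}}$. The map $H$ is the unique $\Phi$-morphism out of the initial $\Phi$-model into $\fibrant{\mathcal{S}}$, so the composite $I \circ H : \mathcal{H} \to \fibrant{\presheaf{\mathcal{H}}}$ is again a $\Phi$-morphism, since $\Phi$ is algebraic, hence flat, and composites of $\Phi$-morphisms are $\Phi$-morphisms by \cref{prop:flat-composition}. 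Now $y$ and $I \circ H$ are both $\Phi$-morphisms out of the initial $\Phi$-model $\mathcal{H}$ into the same $\Phi$-model $\fibrant{\presheaf{\mathcal{H}}}$, so by \cref{prop:algebraic-2-initiality} they are isomorphic: $y \cong I \circ H$.

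Finally I would chase inhabitation through this isomorphism. Suppose a term $t$ witnesses that $H(A)$ is inhabited in $\mathcal{S}$; since $H$ lands in the fibrant fragment, $t$ is a term of the fibrant type $H(A)$ over $H\Gamma$. Applying the CwF morphism $I$ produces a term of $I(H(A))$ in $\fibrant{\presheaf{\mathcal{H}}}$, and the isomorphism $y \cong I \circ H$ transports it to a term of $y(A)$ over $y\Gamma$. Because the fibrant Yoneda embedding of \cref{lem:yoneda-fibrant} is an isomorphism on terms, a term of $y(A)$ determines a term of $A$ over $\Gamma$, so $A$ is inhabited in $\mathcal{H}$, as required.

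The conceptual heavy lifting is entirely contained in the earlier results, so the present argument is essentially bookkeeping; the step demanding the most care is checking that $y$ and $I \circ H$ genuinely meet the hypotheses of \cref{prop:algebraic-2-initiality} — both emanating from the initial $\Phi$-model, both being $\Phi$-morphisms into the \emph{same} fibrant $\Phi$-model — and then verifying that the resulting natural isomorphism is compatible with the type and term actions closely enough to carry inhabitation from $I(H(A))$ to $y(A)$. In particular I would keep explicit track of the context isomorphisms along which the types are identified, so that the transported term really lives over $y\Gamma$ and can be pulled back to a term over $\Gamma$.
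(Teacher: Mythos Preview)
Your proposal is correct and follows essentially the same route as the paper: build the $(\Phi,\Psi)$-model structure on $\presheaf{\mathcal{H}}$ via \cref{thm:two-level-presheaves}, obtain $I : \mathcal{S} \to \presheaf{\mathcal{H}}$ from initiality of $\mathcal{S}$, use \cref{prop:algebraic-2-initiality} to identify $y$ with $I \circ H$, and pull inhabitation back along the Yoneda embedding. Your write-up is more explicit than the paper's about why $\mathcal{H}$ is regular (via \cref{prop:regular-initial}) and why $I \circ H$ is a $\Phi$-morphism (via flatness and \cref{prop:flat-composition}); the paper compresses these into the phrase ``this diagram commutes weakly'' and justifies the final step by saying the Yoneda embedding is full, where you appeal to \cref{lem:yoneda-fibrant} being an isomorphism on terms --- both justifications amount to the same thing.
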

\begin{proof}
Consider the diagram:
$$
\xymatrix{
\mathcal{H} \ar[rr]^y \ar[dr]_H & &
\presheaf{\mathcal{H}} \\
& \mathcal{S}, \ar[ur]
}
$$

where $\presheaf{\mathcal{H}}$ is regarded as a $(\Phi,\Psi)$-model as in \cref{thm:two-level-presheaves}.

Since $\mathcal{H}$ is the initial $\Phi$-model, this diagram commutes weakly by \cref{prop:algebraic-2-initiality}.

Therefore, if $H(A)$ is inhabited in $\mathcal{S}$, it is also inhabited in $\presheaf{\mathcal{H}}$, hence in $\mathcal{H}$, since the Yoneda embedding is full.
\end{proof}

Theorem \cref{thm:conservativity} states that to prove a proposition or construct a value in a model, it is enough to prove it or construct it in a corresponding two-level model.  The type formers $\Phi$ and $\Psi$ appearing in \cref{thm:conservativity} specify the choice of structure for the fibrant fragment and strict fragment of the two level model, respectively.
They are both type formers over $\T$, because they share the common structure of a model of type theory, which, according to \cref{def:two-level}, has to be preserved by the coercion morphism from fibrant to strict types.

Note that individual type formers outside of the common fragment in $\T$ may be duplicated across $\Phi$ and $\Psi$.  For example, both $\Phi$ and $\Psi$ could contain the type former for binary sums $\Phi^{\textsc{sum}}$ introduced in \cref{sec:rf-examples}.  This is not a problem, but it is important to note that the common type formers outside of $\T$ need not be preserved by the coercion morphism.

Regularity of $\Phi$ is important, because without it we cannot make sure that preservation of the basic type former $\T$ is strict.  It could be possible to define a weaker notion of two-level model of type theory that, unlike \cref{def:two-level}, does not require the basic type formers to be preserved strictly by the coercion morphism.  In that case, it would be possible to remove the regularity assumption from the hypotheses of \cref{thm:conservativity}.

The intended application of \cref{thm:conservativity} is to a setting where $\Phi$ contains the type formers of a theory like $\hott$, and $\Psi$ the ones for a version of strict type theory, either something like our $\RF$, or alternatively a theory with just UIP and function extensionality (\cref{sec:rf-examples}).  In \cref{sec:two-level-intro} we will describe such a setting in detail.

\section{Two-level type formers}

When building a two-level system, one can specify type formers $\Phi$ and $\Psi$ over $\T$, and that gives a notion of $(\Phi, \Psi)$-model that one can work with.

This way, the type formers of $\Phi$ and $\Psi$, except for their $\T$ fragment, are completely independent, which means that the strict and fibrant fragment of a $(\Phi,\Psi)$-model do not interact outside of their common model of type theory.

Sometimes, however, it might be desirable to put structures on top of a two-level model that make full use of the two fragments.  To make this possible, we will define a notion of \emph{two-level type former}.

\begin{defn}
A \emph{two-level $\RF$-category} is an $\RF$-category with an additional universe $(\fibrant\U, \fibrant\El)$, and a morphism of universes $\fibrant\U \to \U$.
\end{defn}

To avoid confusion, we will denote the first universe in an $\RF$-category with $(\strict\U, \strict\El)$.  We will often keep the morphism $\fibrant\U \to \strict\U$ implicit when writing out types and terms in a two-level $\RF$-category.

Similarly to what we did in \cref{sec:type-formers}, we can define a category $\mathcal{RF}^2$ of two-level $\RF$-categories, and show that it has an initial object $\RF_0^2$.

Consequently, we get the corresponding notions of \emph{two-level type former} and \emph{two-level structure} for a two-level CwF.

Furthermore, there is a two-level type former $\T^2$ corresponding to the statements that both $\fibrant\U$ and $\strict\U$ are $\T$-universes, and that the map $\fibrant\U \to \strict\U$ is a $\T$-universe morphism.

Correspondingly, for a two-level type former $\Phi$ over $\T^2$, we get a corresponding notion of two-level $\Phi$-model.

Note that a type former $\Phi$ can be regarded as a two-level type former in two ways, either by lifting it to $\fibrant\U$ or $\strict\U$.  If $\Phi$ is over $\T$, then either of its liftings to $\RF_0^2$ are over $\T^2$.

In particular, given type formers $\Phi$ and $\Psi$ over $\T$, we can lift $\Phi$ to a two-level type former $\fibrant\Phi$ on $\fibrant\U$, $\Psi$ to a two-level type former $\strict\Psi$ on $\strict\U$, and obtain a two-level type former $\fibrant\Phi \times \strict\Psi$ over $\T^2$.  Then $(\Phi,\Psi)$-models are the same as two-level $\fibrant\Phi \times \strict\Psi$-models.

We can then prove a more general version of \cref{thm:conservativity} for models of two-level type formers.

\begin{defn}\label{def:set-theoretic2}
Let $\Phi$ be a type former over $\T$, and $\Psi$ a two-level type former over $\fibrant\Phi \times_T \T^2$.  We say that $\Psi$ is \emph{set-theoretic} if for all regular $\Phi$-models $\C$, the presheaf category $\presheaf\C$ is a two-level $\Psi$-model.
\end{defn}

\begin{thm}\label{thm:conservativity2}
Let $\Phi$ be a regular algebraic type former over $\T$, and $\Psi$ a set-theoretic two-level type former over $\Phi \times_\T \T^2$.
Assume that the category of two-level $\Psi$-models has an initial object $\mathcal{S}$, and let $\mathcal{H}$ be the initial $\Phi$-model.
Let $H : \mathcal{H} \to \mathcal{S}$ be the unique morphism to the fibrant fragment of $\mathcal{S}$.

Let $\Gamma : \mathcal{H}$ and $A : \Ty(\Gamma)$.  If $H(A)$ is inhabited in $\mathcal{S}$, then $A$ is inhabited in $\mathcal{H}$.
\end{thm}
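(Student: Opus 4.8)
The plan is to mimic the proof of \cref{thm:conservativity} almost verbatim, the only new ingredient being that the presheaf category must now carry the full \emph{two-level} $\Psi$-structure rather than a mere $(\Phi,\Psi)$-structure. First I would observe that, since $\Phi$ is regular and algebraic, the initial $\Phi$-model $\mathcal{H}$ is strongly regular by \cref{prop:regular-initial}; in particular it is a regular $\Phi$-model. This makes the set-theoretic hypothesis on $\Psi$ (\cref{def:set-theoretic2}) applicable to $\mathcal{H}$, so that $\presheaf{\mathcal{H}}$ can be equipped with a two-level $\Psi$-model structure. As in \cref{thm:two-level-presheaves}, this structure is arranged so that the fibrant fragment $\fibrant{\presheaf{\mathcal{H}}}$ carries the lifted $\Phi$-structure and the Yoneda embedding $y : \mathcal{H} \to \fibrant{\presheaf{\mathcal{H}}}$ is a $\Phi$-morphism.

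Next I would exploit the two initialities in play. Since $\mathcal{S}$ is the initial two-level $\Psi$-model and $\presheaf{\mathcal{H}}$ is such a model, there is a two-level $\Psi$-morphism $\mathcal{S} \to \presheaf{\mathcal{H}}$; restricting to fibrant fragments yields a $\Phi$-morphism $\fibrant{\mathcal{S}} \to \fibrant{\presheaf{\mathcal{H}}}$. Composing with $H : \mathcal{H} \to \fibrant{\mathcal{S}}$ gives a $\Phi$-morphism $\mathcal{H} \to \fibrant{\presheaf{\mathcal{H}}}$. Because $\mathcal{H}$ is the initial $\Phi$-model, \cref{prop:algebraic-2-initiality} forces this composite to be isomorphic to the Yoneda $\Phi$-morphism $y$, so the evident triangle commutes weakly.

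The conclusion is then formal: if $H(A)$ is inhabited in $\mathcal{S}$, its image under $\mathcal{S} \to \presheaf{\mathcal{H}}$ is inhabited in $\presheaf{\mathcal{H}}$; since that image is isomorphic to $y(A)$, the type $y(A)$ is inhabited, and fullness of the Yoneda embedding lets us transport the inhabitant back to witness that $A$ is inhabited in $\mathcal{H}$.

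The main obstacle I anticipate is entirely contained in justifying that the two-level $\Psi$-structure on $\presheaf{\mathcal{H}}$ supplied by the set-theoretic hypothesis really does restrict on its fibrant fragment to the lifted $\Phi$-structure of \cref{thm:two-level-presheaves}, with the Yoneda embedding remaining a $\Phi$-morphism. Since $\Psi$ lies over $\Phi \times_\T \T^2$, the fibrant fragment of any two-level $\Psi$-model is a $\Phi$-model by construction, so this should follow from how \cref{def:set-theoretic2} is set up together with the compatibility already established in \cref{thm:two-level-presheaves}; once that coherence is in place, every other step is a direct transcription of the proof of \cref{thm:conservativity}.
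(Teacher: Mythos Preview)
Your proposal is correct and follows exactly the approach the paper intends: the paper's own proof is the single sentence ``Completely analogous to the proof of \cref{thm:conservativity}'', and you have faithfully spelled out that analogy, including the use of \cref{prop:regular-initial}, \cref{def:set-theoretic2}, \cref{prop:algebraic-2-initiality}, and fullness of Yoneda. The coherence worry you flag about the fibrant fragment of the $\Psi$-structure on $\presheaf{\mathcal{H}}$ is indeed the only point requiring care, and it is precisely what \cref{def:set-theoretic2} (being formulated over $\Phi \times_\T \T^2$) together with \cref{thm:two-level-presheaves} is designed to guarantee.
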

\begin{proof}
Completely analogous to the proof of \cref{thm:conservativity}.
\end{proof}

\chapter{Type theory with strict equality}
\label{chap:type-theory}

In this \namecref{chap:type-theory}, we fix a specific two-level model of type theory, and work internally in it.  One is free to assume that this model is the initial one equipped with the prescribed type structures, but this is not strictly necessary, so we will not make that assumption.

The style used in the following mimics that employed in \cite{hott-book} to develop HoTT \emph{internally}.  We will make use of the same ideas, although our notation is consistent with the rest of the thesis, and follows the conventions described in \cref{sec:notation,sec:notational-conventions}.

Our main purpose for this \namecref{chap:type-theory} is to develop enough fundamentals of two-level type theory to be able to define certain basic notions that will enable us to express the idea of ``infinite structure'' or ``infinite tower of coherence conditions'', as explained in \cref{sec:infinite-structures}.

\section{Introduction}\label{sec:two-level-intro}

Let $\Phi_0$ be a ``basic'' type former over $\T$.  For concreteness, define $\Phi_0$ as:
$$
\Phi_0 :\equiv \Phi^{\textsc{ieq}} \times \Phi^{\textsc{sum}} \times \Phi^{\textsc{empty}} \times \Phi^\nat
$$
since these are the type structures that will be assumed to exist in the following.

The type former $\Phi_0$ represents structures that will be present both in the fibrant and in the strict fragment of our theory.  However, since the definition of two-level model only requires the two $\T$-structures to be compatible, the two $\Phi_0$-structures will behave very differently, in general.

For the rest of the \namecref{chap:type-theory}, fix a $(\Phi, \Psi)$ model of type theory $\A$, where:

\begin{itemize}
\item $\Psi$ is the type former over $\Phi$ obtained by adding function extensionality ($\Phi^{\textsc{funext}}$), and requiring that the equality in $\Phi$ satisfy $\textsc{UIP}$;
\item the strict fragment of $\A$ admits a system of $\Psi$-universes indexed by some finite ordinal $n$;
\item the fibrant fragment of $\A$ admits a system of \emph{univalent}  $\Phi$-universes indexed by $n$;
\item the two systems of universes can be extended to a system of universes indexed by $\omega \times 2$.
\end{itemize}

The idea of the universe setup is that the two systems of universes live in the two different fragments, but for any $i : n$, the $i$-th fibrant universe is ``contained'' in the $i$-th strict universe.

A crucial observation is that we can find a two-level type former $\Psi'$ so that the initial two-level $\Psi'$-model satisfies all the above conditions, and at the same type all the hypotheses of \cref{thm:conservativity2}.

To make this possible, we have to set up the type formers for our universes so that all the fibrant universes are contained in the first strict one.  This makes the two-level type former $\Psi'$ for set-theoretic.

Therefore, if we assume $\A$ to coincide with the initial two-level $\Psi'$-model, we are allowed to interpret all the results of this \namecref{chap:type-theory} to ordinary $\hott$, thanks to \cref{thm:conservativity2}.

As mentioned above, the type structures for the strict and the fibrant fragments are not required to match (outside of $\T$).  However, it is possible to assume that parts of them do.

In particular, the language is (at least apparently) more expressive if we require the $\Phi^{\textsc{sum}}$, $\Phi^{\textsc{empty}}$ and $\Phi^\nat$-structures to match.  A model where this happens has been referred to as \emph{strong} in \cite{two-level-type-theory}.

One substantial disadvantage of working in a strong model is that \cref{thm:conservativity2} does not apply.  It appears that strong two-level models constitute a proper extension of $\hott$, which means that adopting their language implies having to depart from $\hott$ itself.
Therefore, we will \emph{not} make this assumption in the following.

\subsection{Differences with HTS}\label{sec:hts}

Although our two-level theory is inspired by HTS \cite{hts}, and shares many of its features and motivations, there are some substantial differences between the two systems.

Probably the most important difference is that HTS assumes that natural numbers, binary sums and the empty type in the fibrant fragment can eliminate to arbitrary types.  In other words, coercion from fibrant to strict types preserves those type formers.
As we observed above, the extra assumptions would break the proof of our conservativity result (\cref{thm:conservativity}).  Furthermore, they are not strictly necessary for the development that follows.

Another fundamental difference is that HTS assumes the reflection rule for equality in the strict fragment.  From a semantic point of view, this is a completely unproblematic assumption, and in fact it is within the scope of \cref{thm:conservativity}, since equality in presheaf categories does validate the reflection rule.

However, systems with equality reflection seem to be much harder to study from a meta-theoretical point of view, and consequently harder to implement.  Although most of the current implementation efforts for proof assistants based on Martin-L\"{o}f type theory do not include equality reflection, there have been recent attempts at developing a system within which something like HTS could potentially be realised \cite{andromeda}.

In practice, lack of a reflection rule for strict equality does not seem to be a big hurdle when reasoning within a two-level system \emph{informally}.  Of course, formalising proofs in a proof assistant could potentially be made easier by not having to manually manage rewrites along equality witnesses, but we have no reason to believe that a system that replaces reflection with simply $\textsc{uip}$ would be any less practical for actual formalisation of results
based on a two-level theory.

Finally, universes in the strict fragment of our system are not assumed to be fibrant types, like in HTS.  In some variations of HTS, universes of strict types are even assumed to be contractible.  This is motivated by their interpretation in the simplicial set model (\cref{sec:simplicial-model}). However, universes in presheaf categories are clearly not fibrant in the two-level CwF structure that we constructed in \cref{sec:two-level-presheaves}, so we will not make this assumption.

\section{Basic notions}\label{sec:basic-notions}

We will adopt some specific conventions when working internally in a two-level theory.

As in any two-level model of type theory, we have a distinction between fibrant and strict types.  Technically, they are completely disjoint sets, only connected by the coercion morphism $\fibrant\Ty \to \strict\Ty$.

However, we will sometimes refer to \emph{being fibrant} as a property of a strict type: such a type will be called fibrant if there is a fibrant type that coerces to it.

We will keep the universe hierarchies of the two fragments distinct, by writing $\U$ for a generic fibrant universe, and $\strict\U$ for a strict one.  Similarly to how we dealt with universes in the metatheory (\cref{sec:notation}), we will not write explicit subscripts to identify a universe within a hierarchy, and instead adhere to the convention called ``typical ambiguity''~\cite{feferman:typical-ambiguity}.

Similarly, we will use the superscript $\mathrm{s}$ to denote type formers for the strict fragment, and no superscript at all for their fibrant counterparts.  For example $\strict 0$ is the strict empty type, $A \strict{+} B$ is a strict binary sum, etc.
For strict $\Pi$, $\Sigma$ and unit types, we are free to omit the subscript, since they behave identically to their fibrant versions.
Furthermore, strict equality will be written as $x \seq y$, and fibrant equality simply as $x = y$.

We will follow the same convention for defined notions.  For example, we will write $A \strict\simeq B$ to denote \emph{strict isomorphism}, defined as follows:
\begin{align*}
A \strict\simeq B
& :\equiv (f : A \to B) \\
& \times (g : B \to A) \\
& \times ((a : A) \to g (f a) \seq a) \\
& \times ((b : B) \to f (g b) \seq b).
\end{align*}

Of particular importance for the following are the \emph{finite ordinals} given by $\fin n : \U$.  They are defined by induction on the natural number argument $n : \nat$:
\begin{align*}
\fin 0 & :\equiv 0 \\
\fin{n+1} & :\equiv 1 + \fin n. \\
\end{align*}

Of course, we also get the corresponding strict type $\strict{\fin n}$, indexed over the strict natural numbers, with the analogous strict definition.

We conclude this \namecref{sec:basic-notions} with the following observation, showing that, in order to develop a system with two different notions of equality, one really needs the separation between fibrant and strict types.

\begin{lem}\label{lem:all-fibrant-collapse}
Assume that the coercion morphism $\fibrant\Ty \to \strict\Ty$ is an isomorphism.  Then strict and fibrant equality coincide up to equivalence, hence in particular fibrant equality satisfies \textsc{uip}.
\end{lem}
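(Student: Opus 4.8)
The plan is to use the hypothesis that $|-| : \fibrant\Ty \to \strict\Ty$ is an isomorphism to collapse the distinction between the two fragments at the level of \emph{types}, while keeping the two \emph{equalities} formally separate and comparing them. Since $|-|$ is an isomorphism, every strict type is the coercion of a (unique) fibrant type, and terms are shared on the nose because $\fibrant\Tm_\Gamma(A) = \Tm_\Gamma(|A|)$. The decisive consequence is that the fibrant eliminator $J$ may now be used with a \emph{strict} motive, and the strict eliminator with a \emph{fibrant} motive: a priori the fibrant $J$ only eliminates into fibrant type families, but under the isomorphism every strict type is fibrant, so this restriction evaporates. This cross-fragment elimination is what makes the two equality types directly comparable, and is the only place the hypothesis is used.

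First I would fix a fibrant type $A$ and terms $a, b : A$ and compare the strict equality type $a \seq b$ with the coercion $|a = b|$ of the fibrant equality type. I would define $\beta : |a = b| \to (a \seq b)$ by fibrant $J$ on the proof of $a = b$ — legitimate precisely because $a \seq b$ is, by the isomorphism, a fibrant type — sending $\refl$ to the strict reflexivity; and symmetrically $\alpha : (a \seq b) \to |a = b|$ by strict $J$, again sending $\refl$ to $\refl$. Each composite is then shown to be an identity by one further $J$-elimination: on the respective $\refl$ both computation rules fire and the composite reduces to $\refl$, so $J$ propagates this to every input. This yields the equivalence $|a = b| \simeq (a \seq b)$, i.e.\ strict and fibrant equality coincide up to equivalence.

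For the UIP clause I would transfer strict UIP across this equivalence. By assumption the strict fragment validates UIP, so every strict equality type $a \seq b$ is a strict proposition: any two of its elements are strictly equal. Applying the coincidence just proved \emph{to the fibrant type $a \seq b$ itself} upgrades this to fibrant propositionality — given $x, y : a \seq b$, strict UIP gives $x \seq y$, whence $x = y$ by coincidence — so $a \seq b$ is a mere proposition in the fibrant sense. Since mere propositions are preserved under equivalence in the fibrant fragment (a standard fact of HoTT, derivable from its $\Phi^{\textsc{ieq}}$ and function extensionality structures), the equivalence $|a = b| \simeq (a \seq b)$ forces $a = b$ to be a mere proposition as well, which is exactly UIP for fibrant equality.

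The main obstacle is bookkeeping rather than depth: one must track at every step which equality, hence which $J$ and which $\refl$, is in play, and check that the isomorphism hypothesis genuinely licenses each cross-fragment elimination invoked (the constructions of $\beta$, of the fibrant propositionality of $a \seq b$, and of the final transfer). Once that is pinned down, each individual step is a routine $J$-elimination together with its computation rule, and no further input is needed.
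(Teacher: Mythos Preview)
Your proposal is correct and follows essentially the same line as the paper's own proof: once the coercion is an isomorphism, the strict equality type $a \seq b$ becomes fibrant, so one can build the map in the ``forbidden'' direction and verify it inverts the always-available coercion between the two equality types. Your write-up is in fact more careful than the paper's about which direction genuinely requires the hypothesis (your $\beta$, via fibrant $J$ into the now-fibrant motive $a \seq b$) and about spelling out the transfer of \textsc{uip} along the resulting equivalence; the paper leaves both of these implicit. One small remark: the preservation of mere propositions under equivalence needs only the $J$-eliminator, not function extensionality, so you can drop that parenthetical dependency.
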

\begin{proof}
For any type $A$, and $a,b : A$, it follows from the assumption that the strict equality type $a \seq b$ is fibrant.  Therefore, we can define a function:
$$
f : a \seq b \to a = b,
$$
and it is easy to show that $f$ is the inverse of the usual coercion $a = b \to a \seq b$.

Therefore, strict and fibrant equality are strictly isomorphic types.
\end{proof}

\section{Fibrant replacement}\label{sec:fibrant-replacement}

It is natural to ask whether we could extend our theory with a \emph{fibrant replacement} operation, allowing us to convert any type into its ``closest'' fibrant approximation.

In fact, it is not hard to give a definition for a fibrant replacement type former in $\RF_0^2$:
\begin{align*}
\Phi^R : (A : \strict \U)
\to & (R : \fibrant\U) \\
\times & (\eta : A \to R) \\
\times & (\mathsf{elim} : (X : \fibrant\U) \to (A \to X) \to R \to X) \\
\times & ((X : \fibrant\U)(f : A \to X)(a : A) \to \mathsf{elim}^X(f)(\eta(a)) = f(a))
\end{align*}

The type former $\Phi^R$ expressed quite faithfully the idea of ``replacing'' a strict type with a fibrant approximation: given a strict type $A$, we get a fibrant type $R$, together with a function $\eta : A \to R$, and a universal property stating that, for any fibrant type $X$, to define a function $R \to X$ all we need it to define a function $A \to X$.

In fact, a fibrant replacement type former is quite similar to the propositional truncation operation, only, of course, it makes types \emph{fibrant} rather than propostional.

Having fibrant replacement in the theory would make a lot of constructions easier, and it does seem justifiable, since many of the known models of $\hotts$, being Quillen model categories, are indeed equipped with a very similar operation.

For example, a type former along the lines of $\Phi^R$ is considered in \cite{boulier:hts}, where the authors construct a model structure on a universe of strict types using fibrant replacement.

Unfortunately, it turns out that the fibrant replacement operation in models of $\hotts$ cannot be internalised as a $\Phi^R$-structure, as the following theorem shows.

\begin{thm}\label{thm:fibrant-replacement-inconsistent}
Assume the existence of a fibrant replacement type structure $R$, as given by
$\Phi^R$.  Then every fibrant type is a set.
\end{thm}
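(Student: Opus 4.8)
The plan is to reduce the statement to a criterion in the style of Hedberg's theorem. Recall the standard fact (a generalisation of Hedberg's theorem) that a fibrant type $X$ is a set as soon as, for all $x, y : X$, the fibrant identity type $x = y$ admits a \emph{weakly constant} endofunction, i.e.\ a map $f : (x = y) \to (x = y)$ together with a proof that $f(p) = f(q)$ for all $p, q$. Indeed, from such an $f$ one shows by fibrant path induction that $p = f(\refl)^{-1} \cdot f(p)$ for every $p : x = y$; since the right-hand side is independent of $p$ by weak constancy, any two paths $p, q : x = y$ are equal. So it suffices to produce, for each fibrant $X$ and each $x, y : X$, a weakly constant endomap on $x = y$.

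Fix $X$, $x$, $y$ and set $Q :\equiv (x \seq y)$, the \emph{strict} equality type, which is a strict type. Because the strict fragment validates $\textsc{uip}$, the type $Q$ is a strict proposition: any two of its inhabitants are strictly equal. I would then feed $Q$ to the assumed $\Phi^R$-structure, obtaining a fibrant type $RQ$, a unit $\eta : Q \to RQ$, the recursor $\mathsf{elim}$, and its computation rule. Two maps between $RQ$ and $x = y$ are now available: applying $\mathsf{elim}$ to the fibrant type $x = y$ and to the coercion $c : Q \to (x = y)$ (strict equality implies fibrant equality) gives $\bar{c} : RQ \to (x = y)$ with $\bar{c}(\eta\,w) = c(w)$; and fibrant path induction with $d(\refl) :\equiv \eta(\mathsf{srefl})$ gives $d : (x = y) \to RQ$. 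A short induction shows $\bar{c} \circ d = \id$, so that $x = y$ is a retract of $RQ$.

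The heart of the argument is then to exhibit a weakly constant endomap on $RQ$ (equivalently, to show $RQ$ is a fibrant proposition) and transport it across this retraction to obtain one on $x = y$. Here the strict-proposition structure of $Q$ is crucial: all elements $\eta\,w$ in the image of $\eta$ are fibrantly equal, since $w \seq w'$ for any $w, w' : Q$ and $\eta$ preserves strict equality. The universal property of the fibrant replacement is what must then propagate this collapse from the image of $\eta$ to the whole of $RQ$.

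I expect this last step to be the main obstacle, and it is exactly where the peculiar strength of an \emph{internalised} fibrant replacement is used. The recursor supplied by $\Phi^R$ is non-dependent and comes only with a computation rule on the image of $\eta$, so establishing a statement quantified over \emph{all} of $RQ$ --- as weak constancy or propositionality is --- cannot be done by naively eliminating, and requires carefully combining the recursor with strict $J$ and strict $\textsc{uip}$ so that no ``junk'' outside the image of $\eta$ can violate the collapse. Once a weakly constant endomap on $x = y$ is secured, the Hedberg-style criterion of the first paragraph finishes the proof that $X$ is a set.
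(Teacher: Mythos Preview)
Your proof has a genuine gap at exactly the place you flag as the ``main obstacle'', and you do not actually resolve it. With only the non-dependent recursor of $\Phi^R$ and its computation rule on the image of $\eta$, there is no evident way to establish that $RQ$ is a fibrant proposition, or even to construct a weakly constant endomap on $RQ$. All the tricks one tries --- $\mathsf{elim}^{RQ}(\eta)$, or $\mathsf{elim}^{RQ}(\lambda q.\,a)$ for variable $a$, etc.\ --- only control behaviour on $\eta$-images, never on the whole of $RQ$. Strict $J$ and strict $\textsc{uip}$ tell you everything about $Q$ itself but nothing about potential ``junk'' in $RQ$, and there is no dependent induction principle to rule such junk out. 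So your plan stalls precisely where it needs the missing idea.

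The paper's proof avoids this obstacle entirely by applying $R$ to a different, more carefully chosen type. Instead of $Q = (x \seq y)$, the paper takes, for each $y$ and each fibrant path $p : x = y$, the strict type
\[
P(y,p) \;:\equiv\; (r : x \seq y) \to c(r) = p,
\]
where $c$ is the coercion from strict to fibrant equality. At the base point $(x,\refl)$ this is inhabited (by strict $\textsc{uip}$, every $r : x \seq x$ satisfies $r \seq \refl^{\mathrm{s}}$, hence $c(r) = \refl$), so $R(P(x,\refl))$ is inhabited via $\eta$. Since $R(P(y,p))$ is a fibrant type depending on the fibrant data $(y,p)$, fibrant path induction transports this inhabitant to arbitrary $(y,p)$. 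Finally, for $p : x = x$, there is an obvious map $P(x,p) \to (\refl = p)$ given by evaluation at $\refl^{\mathrm{s}}$; since the codomain is fibrant, the \emph{non-dependent} eliminator of $R$ applies and yields $R(P(x,p)) \to (\refl = p)$. Chaining gives $\refl = p$ for all $p$, hence $A$ is a set. The point is that the paper never needs to prove a statement quantified over all of an $R$-type: it only needs inhabitation of $R(-)$ (which it gets from $\eta$ and transports via $J$) and a single application of $\mathsf{elim}$ into a fibrant target. Your Hedberg-style reduction, by contrast, forces you into exactly the kind of universal statement about $RQ$ that the weak eliminator cannot reach.
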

\begin{proof}
Let $A$ be a fibrant type, and $x : A$. Since the type $(r : x \seq x) \to r = \refl$ is inhabited, so is its fibrant replacement.  Therefore, by path induction, we get that for all $x, y : A$ and $p : x = y$:
$$
R((r : x \seq y) \to r = p).
$$

However, if $p : x = x$, the type $(r : x \seq x) \to r = p$ clearly imples that $\refl = p$, hence, by the elimination property of $R$ and the fibrancy of $\refl = p$, so does its fibrant replacement.

It therefore follows that for all $p : x = x$, we have $\refl = p$, i.e.\ $A$ is a set.
\end{proof}

\section{Reedy fibrant diagrams}
\label{sec:presheaves-reedy}

In this \namecref{sec:presheaves-reedy} we will demonstrate how a two-level system can be used to derive results about $\hott$ by going outside of the fibrant fragment.  This is analogous to how in homotopy theory one can get results that are invariant under homotopy equivalence, even when certain constructions are performed on concrete spaces and do not only depend on their homotopy type.

Specifically, we will define Reedy fibrant diagrams $I \to \U$ for an inverse category $I$, and show that they have limits in $\U$ if $I$ is finite.  This is an internalised version of some of the results in~\cite{shulman:inverse}.

\subsection{Essentially fibrant types and fibrations}

As a preparation for our sample application of the two-level system, we remark that for a strict type $A : \strict\U$, asking that $A$ be fibrant is quite a strong requirement.  It is often sufficient that there exists a fibrant type $B : \U$ and a strict isomorphism $A \strict\simeq B$.  If this is the case, we say that $A$ is \emph{essentially fibrant}.

In \cref{sec:basic-notions}, we have defined the fibrant finite ordinals $\fin n$, for $n : \nat$, and their strict counterparts $\strict{\fin n}$, for $n : \strict\nat$.

\begin{defn}
A type $I$ is said to be \emph{finite} if there exists a number $n : \strict\nat$ and a strict isomorphisms $I \strict\simeq \strict{\fin n}$.
\end{defn}

Note that $\fin n$ is not in general finite.

\begin{lem}\label{lem:fin-prod-finite}
Let $I$ be finite and $X : I \to \U$ be a family of fibrant types. Then, $(i:I) \to X(i)$ is essentially fibrant.
\end{lem}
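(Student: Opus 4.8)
The plan is to reduce the statement to a dependent product over a strict finite ordinal and then argue by induction. Since $I$ is finite, there are a number $n : \strict\nat$ and a strict isomorphism $\phi : I \strict\simeq \strict{\fin n}$. Reindexing along $\phi$ yields a strict isomorphism
$$
((i : I) \to X(i)) \strict\simeq ((j : \strict{\fin n}) \to X(\phi^{-1}(j))),
$$
so it suffices to show that for every family $Y : \strict{\fin n} \to \U$ the strict type $(j : \strict{\fin n}) \to Y(j)$ is essentially fibrant. Because essential fibrancy is itself expressible as a strict type, namely $(B : \U) \times (((j : \strict{\fin n}) \to Y(j)) \strict\simeq B)$, I would prove this statement by strict induction on $n : \strict\nat$, with the family $Y$ universally quantified in the motive.

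First I would handle the base case $n = 0$, where $\strict{\fin 0}$ is the strict empty type $\strict 0$. The dependent product $(j : \strict 0) \to Y(j)$ is strictly isomorphic to the unit type $1$ by the universal property of the strict empty type. Since the unit type lies in the shared fragment $\T$ and is preserved on the nose by the coercion $|-|$ (\cref{def:two-level}), the strict unit type is the coercion of a fibrant unit type, hence fibrant, and so the product is essentially fibrant.

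For the inductive step I would use the strict decomposition of a product over a sum. Writing $\strict{\fin{n+1}}$ as $1 \strict{+} \strict{\fin n}$, the eliminator of the strict sum together with the unit law for the strict product over $1$ gives a strict isomorphism
$$
((j : \strict{\fin{n+1}}) \to Y(j)) \strict\simeq Y(\inl\,\ast) \times ((k : \strict{\fin n}) \to Y(\inr\,k)).
$$
The first factor $Y(\inl\,\ast)$ is fibrant by hypothesis, and the second factor is essentially fibrant by the induction hypothesis applied to the family $k \mapsto Y(\inr\,k)$. It then remains to note that a binary product of essentially fibrant types is essentially fibrant: if $A \strict\simeq A'$ and $B \strict\simeq B'$ with $A', B' : \U$ fibrant, then $A \times B \strict\simeq A' \times B'$, and $A' \times B'$ is fibrant because the fibrant fragment is closed under $\Sigma$-types and $|-|$ preserves them strictly. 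This closes the induction, and tracing back through $\phi$ gives the desired witnessing fibrant type for $(i : I) \to X(i)$.

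The routine verifications, namely building the reindexing isomorphism along $\phi$ and the sum/product decomposition, are straightforward manipulations in the strict fragment. The one point requiring care, and the main obstacle, is purely bookkeeping: every isomorphism used must be a genuine \emph{strict} isomorphism, so that the conclusion really is about essential fibrancy, and the preservation of $\Sigma$ and of the unit type by the coercion morphism must be invoked at exactly the points where a strict type is asserted to be fibrant.
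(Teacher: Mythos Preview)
Your proof is correct and follows essentially the same approach as the paper: induction on the cardinality $n$, with the base case reducing to the unit type and the inductive step splitting off one factor via the decomposition $\strict{\fin{n+1}} \strict\simeq 1 \strict{+} \strict{\fin n}$. The only cosmetic difference is that you first transport along $\phi$ to work over $\strict{\fin n}$ directly, whereas the paper keeps the abstract finite type $I$ throughout and peels off one element at each step; your version is, if anything, slightly more explicit about why the unit and binary product remain fibrant.
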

\begin{proof}
Essential finiteness gives us a cardinality $n$ on which we can do induction. If $n$ is $\strict 0$, then $(i:I) \to X(i)$ is strictly isomorphic to the unit type.  Otherwise, we have an finite $I'$ such that $f : \unit +^s I' \simeq^s I$, and $(i : I) \to X(i)$ is strictly isomorphic to
\begin{equation*}
 X(f(\mathsf{inl}\ 1)) \times ((i : I') \to X(f(\mathsf{inr}\ i))),
\end{equation*}
which is finite by the induction hypothesis.
\end{proof}

Similar to essential fibrancy, we have the following definition:

\begin{defn}
Let $p : E \to B$ be a function.  We say that $p$ is a \emph{fibration} if there is a family $F : B \to \U$ such that the fibre of $p$ over any $b : B$ is strictly isomorphic to $F(b)$, that is,
\begin{equation*}
(b : B) \to \left(F(b) \strict\simeq (e:E) \times (p(e) \seq b) \right).
\end{equation*}
\end{defn}

Any fibrant type family $F : B \to \U$ gives rise to a fibration $p : E \to B$, as it is easy to see that the first projection $(\smsimple B F) \to B$ satisfies the given condition.  Indeed, any strict fibration is isomorphic over $B$ to a strict fibration of this form.  This often allows us to assume that a given fibration has the form of a projection.

\subsection{Strict Categories}\label{sec:strict-categories}

We can define categories in a two-level system in much the same way as precategories are defined in~\cite{hott-book}, except that we can use strict equality to express the laws.
Since strict equality does not suffer from coherence issues, this notion of category is well-behaved even when morphisms form a higher type, or even if they are not fibrant at all.

\newcommand{\obj}[1]{\vert #1 \vert}

\begin{defn}[strict category] \label{def:strictcat}
A \emph{strict category} $\C$ is given by:
\begin{itemize}
\item a type $\obj \C$ of \emph{objects};
\item for all pairs of objects $x, y : \obj \C$, a type $\C(x, y)$ of \emph{arrows} or \emph{morphisms};
\item for all objects $x : \obj \C$, an \emph{identity} arrow
$\id : \C(x, x)$;
\item for all objects $x, y, z : \obj \C$, a \emph{composition} function
$$\circ : \C(y, z) \to \C(x, y) \to \C(x, z).$$
\end{itemize}
With the usual categorical laws holding strictly, meaning that we have:

\begin{itemize}
\item for all object $x, y : \obj \C$ and morphisms $f : \C(x, y)$, strict equalities
\begin{align*}
& \mathsf{idl} : f \circ \mathsf{id} \seq f \\
& \mathsf{idr} : \mathsf{id} \circ f \seq f;
\end{align*}
\item for all objects $x, y, z, w : \obj \C$, and morphisms $f : \C(x, y)$, $g : \C(y, z)$, $h : \C(z, w)$, a strict equality
\begin{equation*}
 \mathsf{assoc} : h \circ (g \circ f) \seq (h \circ g) \circ f.
\end{equation*}
\end{itemize}

We say that a strict category $\C$ is \emph{locally fibrant} if $\C(x, y)$ is
a fibrant type for all objects $x, y$. We say that $\C$ is \emph{fibrant} if it is locally fibrant and the type of objects is a fibrant type.  Finally, we say that $\C$ is \emph{finite} if the type of objects $\obj \C$ is finite.
\end{defn}

The usual theory of categories can be reproduced in the context of strict categories.  It is not hard to define corresponding notions of \emph{functor}, \emph{natural transformation}, \emph{limits}, \emph{adjunctions}, and so on.

From now on, we will refer to strict categories simply as \emph{categories}. If $\C$ is a category, we will often abuse notation and use $\C$ itself to denote its type of objects.

Another important notion is the following:
\begin{defn}[reduced coslice]
Given a category $\C$ and an object $x : \C$, the \emph{reduced coslice} $x \sslash \C$ is the full subcategory of non-identity arrows in the coslice category $x \slash \C$.
A concrete definition is the following.
The objects of $x \sslash \C$ are triples of the following type:
\begin{equation*}
(y : \obj \C) \times (f : \C(x,y)) \times \left((p : x \seq y) \to \neg \left(\trans p f \seq \id \right)\right),
\end{equation*}
where $\transf p$ denotes the $\mathsf{transport}$ function $\C(x,y) \to \C(y,y)$, obtained from the eliminator of strict equality.
Morphisms between $(y,f,s)$ and $(y',f',s')$ are elements $h : \C(y,y')$ such that $h \circ f \seq f'$ in $\C$.

Note that we have a ``forgetful functor'' $\mathsf{forget} : x \sslash \C \to \C$, given by the first projection on objects as well as on morphisms.
\end{defn}

\subsection{Limits and colimits}\label{sec:limits-colimits}

Much of what is known about the category of sets in classical category theory
can be extended to the category of strict types in a given universe.

For example, the following result translates rather directly:

\begin{lem} \label{lem:all-strict-limits}
The universe $\strict\U$, regarded as a category in the usual way, has all small limits.
\end{lem}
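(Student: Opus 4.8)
The plan is to construct all small limits explicitly, in direct analogy with the classical formula for limits in $\set$, and to verify the universal property by hand. Throughout, let $I$ be a small category (so that its type of objects and all its hom-types lie in $\strict\U$) and let $D : I \to \strict\U$ be a functor. First I would define the candidate limit as the strict type of cones over $D$ with unit apex, namely
\[
L :\equiv (x : (i : I) \to D(i)) \times \big((i\,j : I)(f : I(i,j)) \to D(f)(x\,i) \seq x\,j\big).
\]
This type is an object of $\strict\U$ because, by the assumptions collected in \cref{sec:two-level-intro}, the strict fragment of $\A$ is closed under $\Pi$, $\Sigma$ and strict equality, and $I$ is small. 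The projections $\pi_i :\equiv \lambda w.\, (\mathsf{fst}\,w)\,i : L \to D(i)$ form a cone over $D$: for every $f : I(i,j)$ the equation $D(f) \circ \pi_i \seq \pi_j$ reduces, by strict function extensionality, to the pointwise statement $D(f)(\mathsf{fst}\,w\,i) \seq \mathsf{fst}\,w\,j$, which is exactly what $\mathsf{snd}\,w$ evaluated at $i,j,f$ provides.

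Next I would verify existence of a mediating morphism. Given any strict type $X$ equipped with a cone $(c_i : X \to D(i))_{i}$ satisfying $D(f)(c_i\,t) \seq c_j\,t$ for all $f : I(i,j)$ and $t : X$, define $m : X \to L$ by $m\,t :\equiv (\lambda i.\, c_i\,t,\ q_t)$, where $q_t :\equiv \lambda i\,j\,f.\,(\text{cone condition on } c)$ is assembled directly from the hypothesis on $c$. By construction $\pi_i \circ m \seq c_i$ for every $i$, so $m$ factors the given cone.

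Finally, for uniqueness I would observe that the second component of $L$ is \emph{proof-irrelevant}: it is a $\Pi$-type of strict equalities, so by \textsc{uip} (together with strict function extensionality) any two of its inhabitants are strictly equal. Consequently strict equality of elements of $L$ is detected on first components alone. If $m, m' : X \to L$ both induce the cone $(c_i)$, then their first components agree pointwise, so strict function extensionality gives $\mathsf{fst}(m\,t) \seq \mathsf{fst}(m'\,t)$ for all $t$, and proof-irrelevance of the second component upgrades this to $m \seq m'$. This establishes the universal property, so $L$ is a limit of $D$.

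The main obstacle is this uniqueness step, which is the only place where the special structure of the strict fragment is used essentially: it relies on strict function extensionality to equate first components and on \textsc{uip} to discard the second component. Everything takes place inside $\strict\U$ and no fibrancy is required; indeed the argument is a verbatim transcription of the set-theoretic construction, with strict equality $\seq$ playing the role of set-theoretic equality. Alternatively, one could build arbitrary products as $\Pi$-types and equalizers as $\Sigma$-types of strict-equality witnesses, and then invoke the classical reduction of general limits to products and equalizers.
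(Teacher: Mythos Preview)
Your proposal is correct and follows essentially the same approach as the paper: the paper defines $L$ as the type of natural transformations $1 \to X$ (where $1$ is the constant functor on $\unit$), which unwinds to exactly your $\Sigma$-type of cones with unit apex, and then simply asserts that ``a routine verification'' gives the universal property. You have spelled out that routine verification, correctly identifying that strict function extensionality and \textsc{uip} are what make the uniqueness step go through; the paper leaves this implicit.
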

\begin{proof}
Let $\C$ be a category with $\obj \C : \strict\U$ and $\C(x,y) : \strict\U$ (for all $x,y$), and let $X : \C \to \strict\U$ be a functor.

We define $L$ to be the type of natural transformations $1 \to X$, where $1 : \C \to \strict\U$ is the constant functor on $\unit$.  Clearly, $L : \strict\U$, and a routine verification shows that $L$ satisfies the universal property of the limit of $X$.
\end{proof}

Unfortunately, for colimits the situation is not as pleasant.  We can certainly show that $\strict\U$ has coproducts, since they can be obtained directly using the strict $\Sigma$ type structure, but only using our assumptions on the strict fragment of the system, we cannot prove that pushouts exist in $\strict\U$.

It would be possible to add pushouts as an additional strict type former.  This type former would be set-theoretic, since presheaf models do have arbitrary colimits, so it would not invalidate the assumptions of \cref{thm:conservativity2}.  Since we will not need arbitrary colimits in the following, we choose to not take this route, and maintain a traditional set of type formers for the strict fragment.

\subsection{Inverse Categories}

Classically, \emph{inverse categories} are defined as categories which do not contain an infinite sequence of nonidentity arrows (see~\cite{shulman:inverse}).

For simplicity, we restrict ourselves to those which have \emph{height} at most $\omega$, and where a \emph{rank function} is given explicitly.
This allows us to perform all constructions constructively, without having to deal with ordinals beyond $\omega$.

\newcommand{\natop}{\op{(\strict\nat)}}

First, consider the category $\natop$ which has $n : \strict\nat$ as objects, and $\natop(n,m) :\equiv n \strict > m$.

The predicate $\strict > : \strict\nat \to \strict\nat \to \strict\U$ is defined in the familiar way, and it is a \emph{strict proposition}, i.e.
$$
(p,q : n \strict > m) \to p \seq q
$$

\begin{defn}
We say that a category $\C$ is an \emph{inverse category} if there is a functor $\varphi : \C \to \natop$ which ``creates identities''; i.e.\ if we have $f : \C(x,y)$ and $\varphi_x \seq \varphi_y$, then we also have $p : x \seq y$ and $\trans p f \seq \mathsf{id}$.
\end{defn}

\subsection{Reedy Fibrant Limits}

We saw in \cref{sec:limits-colimits} that $\strict\U$ has all small limits.
Unfortunately, the same does not hold for the category $\U$ of fibrant types.
Even pullbacks of fibrant types are not fibrant in general (but see Lemma~\ref{prop:fibrant-pullback}).
If we have a functor $X : \C \to \U$, we can always regard it as a functor $X : \C \to \strict\U$, where it does have a limit.
If this limit happens to be essentially fibrant, we say that $X$ has a \emph{fibrant limit}.
Clearly, this limit will then be a limit of the original diagram $C \to \U$, since $\U$ is a full subcategory of $\strict\U$.

Of course, the category $\U$ has general \emph{homotopy limits}.  For example, given a diagram:
$$
\xymatrix{
  & A \ar[d]^f \\
B \ar[r]^g & C,
}
$$
we can form the corresponding homotopy pullback by taking:
$$
P :\equiv (a : A) \times (b : B) \times (f\ a = g\ b),
$$
which is fibrant by construction.

It could in principle be possible to use homotopy limits everywhere in place of strict limits, which would therefore work around the question of the existence of strict limits in $\U$.  However, definining homotopy limits for general (or even inverse) diagrams already requires some machinery to handle arbitrarily high towers of coherence data, hence we cannot tackle it at this point.

\begin{lem}\label{prop:fibrant-pullback}
The pullback of a fibration $E \to B$ along any function $f : A \to B$ is a fibration.
\end{lem}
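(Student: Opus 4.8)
The plan is to use the remark following the definition of fibration to replace $p$ by a projection, and then observe that the pullback of a projection is again a projection. First I would record the fact that being a fibration is invariant under strict isomorphism over the base: if $p : E \to B$ and $p' : E' \to B$ are strictly isomorphic over $B$, then their strict fibres over each $b : B$ are strictly isomorphic, so $p$ is a fibration (with classifying family $F$) exactly when $p'$ is (with the same $F$). By the remark, every fibration is strictly isomorphic over $B$ to the first projection $\smsimple B F \to B$ for some fibrant family $F : B \to \U$, so it suffices to treat the case $E = \smsimple B F$ and $p = \pi_1$.

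Next I would form the strict pullback
$$A \times_B E :\equiv (a : A) \times (b : B) \times (x : F(b)) \times (b \seq f(a)),$$
with $q : A \times_B E \to A$ the first projection, and take as candidate classifying family $G :\equiv F \circ f : A \to \U$, which is fibrant since it is a composite of $f$ with the fibrant family $F$. The key step is to produce a strict isomorphism over $A$ between $A \times_B E$ and $\smsimple A (F \circ f)$. In the forward direction I would send $(a, b, x, e)$ to $(a, \trans{e}{x})$, transporting $x : F(b)$ along the strict equality $e : b \seq f(a)$ into $F(f(a))$; in the reverse direction, send $(a, y)$ to $(a, f(a), y, \refl)$. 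Both maps commute with the projection to $A$, and both round-trips are strict equalities by the strict-equality eliminator: transport along $\refl$ reduces strictly to the identity in one composite, and path induction on $e$ contracts the singleton $(b : B) \times (b \seq f(a))$ in the other. Since being a fibration is invariant under strict isomorphism over $A$, and the projection $\smsimple A (F \circ f) \to A$ is a fibration by the remark, it follows that $q$ is a fibration with classifying family $G$.

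The only real difficulty is the strictness bookkeeping in the central isomorphism: each coherence between the two composites must be witnessed by a strict equality rather than merely a fibrant one. This is exactly where the strict fragment is essential. Because strict equality admits path induction, transport along $\refl$ reduces strictly to the identity and the relevant singleton types are strictly contractible, so every required equation holds on the nose and there is no infinite tower of higher coherences to manage. This stands in contrast with the fibrant equality, and is the very phenomenon that motivates the two-level setup.
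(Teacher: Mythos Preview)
Your proof is correct and follows essentially the same approach as the paper: reduce to the case where $p$ is the first projection of $\smsimple{B}{F}$, and identify the pullback with the projection $\smsimple{A}{(F\circ f)} \to A$. The paper is simply terser, asserting directly that $\smsimple{A}{(F\circ f)}$ satisfies the universal property of the pullback rather than writing out the explicit strict isomorphism you construct.
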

\begin{proof}
We can assume that $E$ is of the form $\sm{b:B} C(b)$ and $p$ is the first projection.
Clearly, the first projection of $\sm{a:A}C(f(a))$
satisfies the universal property of the pullback.
\end{proof}

Lemma \ref{prop:fibrant-pullback} makes it possible to construct fibrant limits of
certain ``well-behaved'' functors from inverse categories.
The so-called \emph{matching objects} play an important role.
\begin{defn}[matching object; see {\cite[Chp.~11]{shulman:inverse}}]
Let $\C$ be an inverse category, and $X : {\C} \to \U$ a functor. 
For any $z : \C$, we define the \emph{matching object} $M_z^X$ to be the (not necessarily fibrant) limit of
the composition
$z \sslash \C \xrightarrow{\mathsf{forget}} \C \xrightarrow{X} \U \subset \strict\U$.
\end{defn}

\begin{defn}[Reedy fibrant diagram; see {\cite[Def. 11.3]{shulman:inverse}}]
 Let $\C$ be an inverse category and $X : \C \to \U$ be a functor.
 We say that $X$ is \emph{Reedy fibrant} if, for all $z : \C$, the canonical map $X_z \to M_z^X$ is a fibration.
\end{defn}

Using this definition, we can make precise the claim that we can construct fibrant limits of certain well-behaved diagrams.  The following theorem is an internal version of the corresponding result in \cite[Lemma 11.8]{shulman:inverse}.

\begin{thm} \label{thm:fibrant-limits}
Let $\C$ be an finite inverse category. 
Then, every Reedy fibrant $X : \C \to \U$ has a fibrant limit.
\end{thm}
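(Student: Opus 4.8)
The plan is to argue by induction on the number of objects of $\C$, which is legitimate since $\C$ is finite, proving the slightly stronger statement that the strict limit $\lim_\C X$ — which exists by \cref{lem:all-strict-limits} — is essentially fibrant. Since $\U$ is a full subcategory of $\strict\U$, an essentially fibrant strict limit is automatically a limit of the original diagram $X : \C \to \U$, i.e. a fibrant limit. The base case, where $\obj\C$ is empty, is immediate: the limit is the unit type, which is fibrant.

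For the inductive step I would first use the rank function $\varphi : \C \to \natop$ to select an object $z$ of maximal rank; since $\obj\C$ is finite and the order on $\strict\nat$ is decidable, such a $z$ can be chosen constructively. Because $\varphi$ creates identities, every non-identity arrow strictly decreases rank, so maximality of $z$ forces two facts: there is no arrow into $z$ from a distinct object, and every object of the reduced coslice $z \sslash \C$ has rank strictly below $\varphi_z$, hence differs from $z$. Let $\C'$ be the full subcategory of $\C$ on all objects except $z$. Then $\C'$ is again a finite inverse category, and I would check that for each $z' : \C'$ the reduced coslice $z' \sslash \C$ coincides with $z' \sslash \C'$ (no arrow out of $z'$ can reach the maximal object $z$), so the matching objects are unchanged and the restriction of $X$ to $\C'$ remains Reedy fibrant. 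By the induction hypothesis, $L' :\equiv \lim_{\C'} X$ is essentially fibrant.

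The heart of the argument is to exhibit $\lim_\C X$ as a pullback. Since the forgetful functor $z \sslash \C \to \C$ factors through $\C'$, restriction of cones gives a canonical map $g : L' \to M_z^X$, while Reedy fibrancy supplies the fibration $p : X_z \to M_z^X$. Unwinding the cone conditions — arrows into $z$ are only identities, and the sole new compatibility of $x_z$ with the remaining diagram is precisely its agreement with the restricted $\C'$-cone in $M_z^X$ — shows that a cone over $\C$ is the same as a pair $(l,x)$ with $l : L'$, $x : X_z$ and $g(l) \seq p(x)$, so that $\lim_\C X$ is the pullback of $p$ along $g$:
$$
\xymatrix{
\lim_\C X \ar[r] \ar[d] & X_z \ar[d]^p \\
L' \ar[r]_g & M_z^X.
}
$$
By \cref{prop:fibrant-pullback}, the pullback of the fibration $p$ is again a fibration, so the left vertical map $\lim_\C X \to L'$ is a fibration; crucially, this step does not require $M_z^X$ itself to be fibrant. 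Finally, I would invoke that the total space of a fibration over an essentially fibrant base is essentially fibrant: the fibration presents $\lim_\C X$ as strictly isomorphic to $\Sigma_{L'} F$ for a fibrant family $F$, and transporting along a strict isomorphism $L' \strict\simeq B$ with $B$ fibrant yields a fibrant $\Sigma$-type. Hence $\lim_\C X$ is essentially fibrant, completing the induction.

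I expect the main obstacle to be the bookkeeping of the inductive step: verifying that deleting a maximal object leaves the matching objects — and hence the Reedy fibrancy — of the remaining objects untouched, and rigorously identifying $\lim_\C X$ with the stated pullback of cones. The fibration-preservation and essential-fibrancy closure steps are then routine, given \cref{prop:fibrant-pullback} and the elementary behaviour of $\Sigma$-types of fibrant families over essentially fibrant bases.
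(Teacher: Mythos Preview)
Your proposal is correct and follows essentially the same argument as the paper: induction on the cardinality of $\C$, removal of an object $z$ of maximal rank, identification of $\lim_\C X$ with the pullback of the matching fibration $X_z \to M_z^X$ along the canonical map from $L' = \lim_{\C'} X$, and the conclusion via \cref{prop:fibrant-pullback} together with closure of essential fibrancy under total spaces of fibrations. The only difference is cosmetic: the paper unfolds the limit explicitly as a $\Sigma$-type of cones before recognising the pullback, whereas you argue more abstractly via cone restriction; you also make explicit the check that the restricted diagram on $\C'$ is still Reedy fibrant, which the paper leaves implicit.
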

\begin{proof}
By induction on the cardinality of $\C$.  If the type of objects is empty, the limit is the unit type.

Otherwise, let us consider the rank functor $\varphi : \C \to \natop$.
We choose an object $z : \C$ such that $\varphi_z$ is maximal; this is possible (constructively) since $\C$ is assumed to be finite.  In particular, $z$ has no incoming arrow (apart from $\id$).

Let us call $\C'$ the category that we get if we remove $z$ from $\C$;
that is, we set
$$
\obj {\C'} :\equiv (x : \obj \C) \times (\neg (x \seq z)).
$$

Clearly, $\C'$ is still finite and inverse.
Let $X : \C \to \U$ be Reedy fibrant.
We can write down the limit of $X$ (i.e.\ the type of natural transformations to the constant functor) explicitly as
\begin{equation}
(c : (y : \obj \C) \to X_y) \times ((y,y' : \obj \C)(f : \C(y,y')) \to c_y[f] \seq c_{y'}).
\end{equation}
Using that $\obj \C \strict\simeq \obj{\C'} \strict+ 1$, and the fact that $z$ has no incoming non-identity arrows, this type is strictly isomorphic to
\begin{equation} \label{eq:limit-2}
\begin{aligned}
& (c_z : X_z) \times (c : (y : \obj {\C'}) \to X_y) \times \\
& \left((y : \obj {\C'})(f : \C(z,y)) \to c_z[f] \seq c_y\right) \times \\
& \left((y,y' : \obj {\C'})(f : \C(y,y')) \to c_y[f] \seq c_{y'} \right).
\end{aligned}
\end{equation}

Let us write $L$ for the limit of $X$ restricted to $\C'$,  $p$ for the canonical map $p : L \to M_z^X$, and $q$ for the map $X_z \to M_z^X$.

Then, \eqref{eq:limit-2} is strictly isomorphic to
\begin{equation}\label{eq:limit-3}
(c_z : X_z) \times (d : L) \times (p(d) \seq q(c_z))
\end{equation}
This is the pullback of the cospan
$$
\xymatrix{
L \ar[r]^-p & M_z^X & X_z \ar[l]_-q.
}
$$
By Reedy fibrancy of $X$, the map $q$ is a fibration. 
Thus, by Lemma~\ref{prop:fibrant-pullback}, the map from \eqref{eq:limit-3} to $L$ is a fibration.

By the induction hypothesis, $L$ is essentially fibrant.
This implies that \eqref{eq:limit-3} is essentially fibrant, as it is the domain of a fibration whose codomain is essentially fibrant.
\end{proof}

If $\C$ is an inverse category, we will denote by $\C^{<n}$ the full subcategory of $\C$ consisting of all those objects of rank less than $n$.  Correspondingly, for a given diagram $X$ over $\C$, we will denote by $X|n$ the restriction of $X$ to $\C^{<n}$.

\newcommand{\tdeltop}[1]{\left(\op{\Delta_+}\right)^{<#1}}

\subsection{Fibrant Limits and Semi-Simplicial Types}
\label{sec:semi-simplicial-types}

If $X$ is a Reedy fibrant diagram over $\C :\equiv \tdeltop n$, we can restrict $X$ to $n \sslash \C$, then take the limit of the corresponding functor.  With a slight abuse of notation, we will denote such limit by $M_n^X$, even though $X$ is not defined at $n$.

Note that a diagram $X$ over $\tdeltop{n+1}$ is Reedy fibrant if and only if its restriction to $\tdeltop n$ is Reedy fibrant and the map $X_n \to M_n^X$ is a fibration.  Hence, to give a Reedy fibrant diagram over $\tdeltop{n+1}$ is the same as to give a Reedy fibrant diagram $X$ over $\tdeltop{n}$, together with a fibration $Y$ over $M_n^X$.  We will refer to this extended diagram as $\langle X, Y \rangle$.

By mutual induction on the natural number $n$, we can define a type $\sst_n$, and a function $\ssk_n$ from $\sst_n$ to diagrams over $\tdeltop n$.  We start with with $\sst_0 :\equiv \unit$ and $\ssk_0(1)$ set to the trivial diagram over $\tdeltop 0$.

Then, we set
\begin{align*}
& \sst_{n+1} :\equiv \sm{X : \sst_n} (M_{n}^{\ssk_{n} X} \to \U) \\
& \ssk_{n+1}(X, Y) :\equiv \langle \ssk_n(X), Y \rangle.
\end{align*}

Above, we write $M_n^A$ to mean the fibrant type, given by Theorem \ref{thm:fibrant-limits}, which is strictly isomorphic to the matching object of $A$ at $n$ (which would otherwise only be a strict type).

For any strict natural number $n : \strict\nat$, elements of $\sst_n$ are Reedy fibrant $n$-semi-simplicial types.  Since $\sst_n$ is fibrant, this gives an internal representation of semi-simplicial types in $\hott$.

Unfortunately, unless we add some form of $\omega$-limits to the fibrant fragment of our system, we cannot use the family $\sst$ to obtain a fibrant type of general semi-simplicial types (i.e. with simplices of arbitrarily high dimension).

\section{Reedy-Fibrant Replacement}\label{sec:replacement}

The goal of the current section is to show that any strict functor $X$ from an \emph{admissible} inverse category $\C$ to $\U$ has a fibrant replacement; that is, we can construct a Reedy fibrant diagram which is equivalent in a suitable sense.  This construction is an internalisation of the known analogous construction in traditional mathematics (see e.g.~\cite[Lemma 11.10]{shulman:inverse} or \cite{riehl2014theory}.

Note that this notion of fibrant replacement does not contradict the impossibility result of \cref{sec:fibrant-replacement}.  In fact, all the types involved in the construction of a Reedy fibrant replacement are already fibrant: the replacement only happens at the level of diagrams.

\begin{lem}\label{lem:mapping-cocylinder}
Let $f : A \to B$ be a function between fibrant types. Then there exists a fibrant type $N$, an equivalence $i : A \to N$, and a fibration $p : N \to B$, such that $f \seq p \circ i$.
\end{lem}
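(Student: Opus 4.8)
The plan is to carry out the standard \emph{mapping path space} (mapping cocylinder) factorisation, arranged so that the strict half of the statement holds on the nose while the inclusion is verified to be a fibrant equivalence using the fibrant identity type. Concretely, I would set
$$
N :\equiv \sm{b : B} (a : A) \times (f(a) = b),
$$
which is a fibrant type, being built from the fibrant types $A$ and $B$ and the fibrant equality $f(a) = b$ by $\Sigma$ and $\times$. Taking $p : N \to B$ to be the first projection $(b, a, q) \mapsto b$, I observe that $N$ is literally the total space $\smsimple{B}{F}$ of the fibrant family $F(b) :\equiv (a : A) \times (f(a) = b)$, so $p$ is a fibration directly by the remark following the definition of fibration (the first projection out of $\smsimple{B}{F}$ satisfies the required condition, its strict fibre over $b$ being strictly isomorphic to $F(b)$).

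Next I would define the inclusion $i : A \to N$ by $i(a) :\equiv (f(a), a, \refl)$. With these choices $p(i(a))$ reduces to $f(a)$, so the strict equality $f \seq p \circ i$ holds definitionally; this disposes of the factorisation requirement without any real work.

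The only genuine argument is that $i$ is an equivalence. For this I would exhibit the projection $r : N \to A$, $(b, a, q) \mapsto a$, as a quasi-inverse. The composite $r \circ i$ computes to $\id$ on $A$. For the other composite I need a term of $(b,a,q) = i(r(b,a,q))$, i.e.\ of $(b, a, q) = (f(a), a, \refl)$, for every $(b,a,q) : N$; this follows by path induction (the fibrant $J$-eliminator) on $q : f(a) = b$, since in the case $q \equiv \refl$ both sides coincide by $\refl$. This is precisely the contractibility-of-singletons argument, and packaging the two homotopies yields a term of $\isequiv[i]$, so $i$ is an equivalence.

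The main obstacle is purely one of bookkeeping between the two equalities rather than a difficult construction: the fibrancy of $N$ and of $p$, and the factorisation $f \seq p \circ i$, all live on the strict side (so that $p$ is a \emph{strict} fibration and the equation is strict), whereas the equivalence $i$ is a fibrant notion that must be justified inside the fibrant fragment using the fibrant path type and its eliminator. The construction is designed so that the strict identities are definitional, leaving only the standard based-path-space contraction to be performed fibrantly.
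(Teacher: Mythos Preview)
Your proof is correct and follows essentially the same approach as the paper: both construct the mapping cocylinder as a $\Sigma$-type over $A$, $B$, and the fibrant path type $f(a) = b$, take $i(a) :\equiv (\ldots, \refl)$ and $p$ as the projection to $B$, observe the factorisation holds definitionally, and argue that $i$ is inverse to the projection onto $A$. The only cosmetic difference is that you put $b$ first so that $N$ is literally $\smsimple{B}{F}$ and $p$ is the first projection, making the fibration claim immediate, whereas the paper orders the components as $(a,b,q)$ and appeals to the same fact slightly less directly; this is not a substantive divergence.
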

\begin{proof}
Let $N :\equiv (a : A) \times (b : B) \times (f a = b)$.  The function $i$ is given by $i(a) :\equiv (a, f(a), \refl)$, while $p$ is simply the projection into the component of type $B$.

The function $i$ is clearly the inverse of the projection into the component of type $A$, hence $i$ is an equivalence. Furthermore, $p$ is a fibration, being a projection from a $\Sigma$ type.

The equation $f \seq p \circ i$ holds definitionally.
\end{proof}

We will refer to the type $N$ constructed in the proof of lemma \ref{lem:mapping-cocylinder} as the \emph{mapping cocylinder} of $f$.

\begin{defn}
Let $\C$ be an inverse category.  We say that $\C$ is \emph{admissible} if, for all $n : \C$, Reedy fibrant diagrams over the reduced coslice $x \sslash \C$ have a fibrant limit.
\end{defn}

The main example of an admissible inverse category is $\deltop$. This follows from Theorem \ref{thm:fibrant-limits} and the fact that all the reduced coslices of $\deltop$ are finite.

\begin{defn}
Let $X, Y$ be diagrams over a category $\C$. A natural transformation $f : X \to Y$ is said to be an \emph{equivalence} if, for all $n : \C$, the function $f_n : X_n \to Y_n$ is an equivalence.
\end{defn}

\begin{thm}\label{thm:reedy-fibrant-replacement}
Let $X$ be a diagram over an admissible inverse category $\C$. Then there exists a Reedy fibrant diagram $Y$, and an equivalence $\eta : X \to Y$.
\end{thm}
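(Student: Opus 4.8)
The plan is to construct $Y$ and $\eta$ together by induction on the rank functor $\varphi : \C \to \natop$. Writing $\C^{<n}$ for the full subcategory of objects of rank below $n$, I maintain as an inductive invariant that $Y$ has been defined as a Reedy fibrant diagram on $\C^{<n}$ and that $\eta|n : X|n \to Y|n$ is a natural transformation which is a levelwise equivalence. Because any nonidentity arrow in an inverse category strictly decreases rank (this is exactly what the ``creates identities'' condition enforces), the reduced coslice $z \sslash \C$ of an object $z$ of rank $n$ consists entirely of objects of rank $< n$. Hence the matching object $M_z^Y$ depends only on the part of $Y$ already constructed, and we may legitimately define $Y$ on all objects of rank exactly $n$ simultaneously, using only the data recorded over $\C^{<n}$.

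For the inductive step, fix $z$ with $\varphi_z = n$. First I would form the matching object $M_z^Y$, the limit over $z \sslash \C$ of $Y \circ \mathsf{forget}$. One checks that this composite is again a Reedy fibrant diagram over the inverse category $z \sslash \C$ (its matching objects being computed from those in $\C$), so that admissibility of $\C$ guarantees $M_z^Y$ is essentially fibrant; replacing it by a strictly isomorphic fibrant type, we may treat it as an object of $\U$. Next, for every nonidentity arrow $f : \C(z, y)$ the composites $\eta_y \circ X(f) : X_z \to Y_y$ form a cone over $Y \circ \mathsf{forget}$ — compatibility being precisely naturality of $\eta$ on $\C^{<n}$ together with functoriality of $X$ — and so assemble, by the universal property of the limit, into a single comparison map $g_z : X_z \to M_z^Y$. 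Applying the mapping cocylinder of Lemma~\ref{lem:mapping-cocylinder} to $g_z$ factors it as $X_z \xrightarrow{i} N \xrightarrow{p} M_z^Y$ with $i$ an equivalence and $p$ a fibration. I then set $Y_z :\equiv N$, take $\eta_z :\equiv i$, and define the action of $Y$ on a nonidentity arrow $f : \C(z,y)$ to be $Y_z \xrightarrow{p} M_z^Y \to Y_y$, where the second map is the projection $\pi_f$ of the limit indexed by $f$.

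It remains to verify three things. Reedy fibrancy at $z$ is immediate, since the canonical map $Y_z \to M_z^Y$ is exactly the fibration $p$. Levelwise equivalence is built in, as $\eta_z = i$ is an equivalence; and naturality of the extended $\eta$ holds by construction, since for $f : \C(z,y)$ we have $Y(f) \circ \eta_z = \pi_f \circ p \circ i = \pi_f \circ g_z = \eta_y \circ X(f)$. The point requiring the most care — and which I expect to be the main obstacle — is checking that $Y$ is a \emph{strict} functor, i.e.\ that the assignments $Y(f)$ defined through the matching-object projections respect identities and composition on the nose. This reduces to the compatibility of the limit projections of $M_z^Y$ with composition of arrows in $\C$, and it is exactly here that we exploit strict equality in the two-level system: all the coherences that would only hold up to homotopy for an ordinary $\U$-valued diagram hold definitionally, so no infinite tower of higher coherences is incurred. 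Finally, since every object of $\C$ has some finite rank, assembling the stages of the induction defines $Y$ and $\eta$ on all of $\C$, completing the construction.
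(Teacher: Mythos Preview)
Your proposal is correct and follows essentially the same approach as the paper: both proceed by induction on rank, use the mapping cocylinder of the comparison map $X_z \to M_z^Y$ to define $Y_z$, route the new face maps through the limit projections, and then glue the stages together. If anything, you are more explicit than the paper about invoking admissibility to ensure the matching object is fibrant and about the strict-functoriality check, both of which the paper leaves as brief assertions.
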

\begin{proof}
We will construct, by induction on the natural number $n$, a Reedy fibrant diagram $Y^{(n)}$ over $\C^{<n}$, and an equivalence $\eta^{(n)} : X|n \to
Y^{(n)}$.

For $n=0$ there is nothing to construct, so assume the existence of $Y^{(n)}$, and fix any object $x :\C$ of rank $n+1$.  The forgetful functor $i_x: x \sslash \C \to \C$ factors through $\C^{<n}$, hence we can consider the composition $Y^{(n)} \circ i$, which is again a Reedy fibrant diagram, and take its limit $L$.

The map $\eta^{(n)}$ induces a map $X_x \to L$. Define $Y^{(n+1)}_x$ to be the mapping cocylinder of this map.  For any object $y$ of rank $n$ or less, define $Y^{(n+1)}_y$ as $Y^{(n)}_y$, and for any morphism $f : \C(x, y)$, the corresponding function $Y^{(n+1)}_x \to Y^{(n+1)}_y$ is given by the projection from the mapping cocylinder, followed by a map of the universal cone of the limit $L$.  The action of $Y^{(n)}$ on morphisms between objects of ranks $n$ or less is defined to be the same as that of $Y^{(n)}$.

It is easy to see that those definitions make $Y^{(n+1)}$ into a diagram that extends $Y^{(n)}$ to objects of rank $n+1$.  We can also extend $\eta^{(n)}$ by defining $\eta^{(n+1)}(x)$ to be the embedding of $X_x$ into the mapping cocylinder $Y^{(n+1)}_x$, which is an equivalence by lemma \ref{lem:mapping-cocylinder}.

Reedy-fibrancy of $Y^{(n+1)}$ follows immediately from the construction, since $L$ is exactly the matching object of $Y^{(n+1)}$ at $x$.

To conclude the proof, we glue together all the $Y^{(n)}$ and $\eta^{(n)}$ into a single diagram $Y$ and natural transformation $\eta$.  Clearly, $Y$ is Reedy fibrant, and $\eta$ is an equivalence.  \end{proof}

\section{Semi-Segal types}

One of the most promising applications of a homotopy type theory with strict equality is the possibility of constructing and working with algebraic objects comprising infinite towers of coherence conditions.

Semi-semplicial types, introduced in \cref{sec:semi-simplicial-types}, represent the most fundamental of those objects, and a basis on which to build more complex and directly useful structures.

In this section, we will define the notion of \emph{semi-Segal type} and use it to model $(\infty,1)$-semicategories internally in $\hotts$.  The following definitions and results are mostly based on the theory of \emph{Segal spaces}~\cite{segal-spaces}, which can, to a certain extent, be thought of as the special case obtained when the model we are working on happens to be the simplicial model (\cref{sec:simplicial-model}).

The caveat here is that, as noted in \cref{sec:limits-colimits}, the category of simplicial sets is much richer, in terms of \emph{strict} categorical structure, than what we get to see when working from within type theory.  In particular, we noted that the lack of colimits in the formulation of $\hotts$ that we adopted makes it really hard (and perhaps impossible) to reproduce the theory of diagrams over general Reedy categories.

Therefore, we cannot hope for a well-behaved theory of Segal types, and we instead settle for the weaker notion of semi-Segal type, which means that we cannot directly model higher categories equipped with identity morphisms, but only semi-category-like structures.

Fortunately, a rich theory can be developed nonetheless.  For example, the notion of \emph{completeness}, which superficially seems to require the presence of degeneracies in the underlying simplicial type, can actually be defined for semi-Segal types (\cref{def:complete-semi-segal-type}).

\subsection{Preliminaries}\label{sec:segal-prelim}

We begin with some definitions concerning semi-simplicial types. Note that a map $f : \Delta^+(n, m)$ is uniquely determined by the finite strictly increasing sequence $f(0), f(1), \ldots, f(n)$.  In the following, we will use the notation $\facemap_{f(0), f(1), \ldots, f(n)}$ to denote the face map $X_m \to X_n$ of a semi-simplicial type $X$ corresponding to the map $f$.

For example, $\facemap_{012} : X_3 \to X_2$ is the face map corresponding to the inclusion $[2] \to [3]$.

For all $n$, the $n+1$ face maps $X_n \to X_0$ will therefore be denoted by $\facemap_i$, for $i : \fin{n+1}$.  Finally, for all $i : \fin n$, let us write $\lface_i$ for $\facemap_{i,i+1}$.

\begin{defn}
Let $X$ be a semi-simplicial type. The \emph{$n$-th spine} of $X$ is the type:
$$
S_n(X) :\equiv (x : \fin n \to X_1) \times
((i : (\fin{n-1})) \to \facemap_1(x_i) = \facemap_0(x_{i+1})).
$$
\end{defn}

The $n$-th spine of $X$ can be regarded as the type of ``paths'' of length $n$ in the graph underlying $X$.  Note that $S_n(X)$ is a fibrant type, for all semi-simplicial types $X$.

\begin{lem}
Let $X$ be a semi-simplicial type.  For all $n : \nat$, the family of face maps $\lface_i : X_n \to X_1$ determines a map $\phi_n : X_n \to S_n(X)$.
\end{lem}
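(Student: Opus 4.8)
The plan is to define $\phi_n$ directly on an $n$-simplex $\alpha : X_n$ by recording its sequence of edges, and to supply the coherence data required by $S_n(X)$ from the \emph{strict} functoriality of $X$. For the first component I would set the $i$-th edge to be $x_i :\equiv \lface_i(\alpha) = \facemap_{i,i+1}(\alpha)$ for each $i : \fin n$; this immediately gives a map $\fin n \to X_1$, which is exactly the first factor of $S_n(X)$.

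The content of the argument lies in the second component, namely the family of equalities $\facemap_1(x_i) = \facemap_0(x_{i+1})$ for $i : \fin{n-1}$. Here I would compute both sides using the fact that $X$, being a strict functor on $\op{\Delta_+}$, sends a composite of strictly monotone maps to the corresponding composite of face maps up to strict equality. The map $\facemap_1 \circ \facemap_{i,i+1}$ is the face map associated with the composite $[0] \to [1] \to [n]$ sending $0 \mapsto 1 \mapsto i+1$, i.e.\ $\facemap_{i+1}$; dually, $\facemap_0 \circ \facemap_{i+1,i+2}$ is associated with $[0] \to [1] \to [n]$ sending $0 \mapsto 0 \mapsto i+1$, again $\facemap_{i+1}$. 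Hence both $\facemap_1(x_i)$ and $\facemap_0(x_{i+1})$ are strictly equal to $\facemap_{i+1}(\alpha)$, and by transitivity and symmetry of strict equality we obtain $\facemap_1(x_i) \seq \facemap_0(x_{i+1})$.

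Finally, since the spine is phrased with fibrant equality, I would compose this strict equality with the canonical coercion $\facemap_1(x_i) \seq \facemap_0(x_{i+1}) \to \facemap_1(x_i) = \facemap_0(x_{i+1})$ to produce the required witness. Packaging $x$ together with this family of proofs yields an element of $S_n(X)$, and the assignment $\alpha \mapsto (x, h)$ is manifestly a function $\phi_n : X_n \to S_n(X)$.

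I do not expect any genuine obstacle. The one thing to be careful about is the bookkeeping of the face-map indexing convention, so that the two composites are correctly seen to land on the same vertex map $\facemap_{i+1}$; everything else is forced by strict functoriality. The degenerate low cases $n = 0$ and $n = 1$, where the coherence factor is indexed by an empty type and hence vacuous, require no separate treatment.
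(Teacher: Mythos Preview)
Your proposal is correct and follows essentially the same argument as the paper: both identify that $\facemap_1 \circ \lface_i$ and $\facemap_0 \circ \lface_{i+1}$ are each (strictly) equal to $\facemap_{i+1}$, and use this to supply the coherence component of the spine. The only cosmetic difference is that the paper, having the strict equality $\facemap_1 \circ \lface_i \seq \facemap_{i+1} \seq \facemap_0 \circ \lface_{i+1}$, writes the witness simply as $\refl$, whereas you spell out the coercion from strict to fibrant equality explicitly; your version is arguably more careful given that the system does not assume equality reflection.
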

\begin{proof}
It follows from the definition of $\lface_i$ that:
$$
\facemap_1 \circ \lface_i \seq \facemap_{i+1} \seq \facemap_0 \circ \lface_{i+1},
$$
therefore $\phi_n$ can be defined simply as:
$$
\phi_n(x) :\equiv (\lambda i. \lface_i(x), \lambda i. \refl).
$$
\end{proof}

The map $\phi_n$ defined above is called the $n$-th Segal map of $X$.

\begin{defn}\label{semi-segal-type}
A \emph{semi-Segal type} is a semi-simplicial type $X$ such that all the Segal maps $\phi_n$ are equivalences.  A morphism of semi-Segal types is simply a morphism of the underlying semi-simplicial types.
\end{defn}

For any $n$, the type expressing the fact that $\phi_n$ is an equivalence is called the \emph{$n$-th Segal condition}, and it is a fibrant, propositional type.  In fact, being an equivalence is always a proposition (\cite{hott-book}).

For a semi-simplicial type $X$, the structure of a semi-Segal type is therefore a fibrant proposition, so in particular it is invariant under levelwise equivalence of semi-simplicial types.

We will say that a semi-Segal type is \emph{Reedy fibrant} if the underlying semi-simplicial type is.
Note that that the Segal maps of a Reedy fibrant semi-simplicial type are fibrations.
The following proposition shows that it is quite easy to obtain Reedy fibrant semi-Segal types.

\begin{prop}
Let $X$ be a semi-Segal type.  Then the Reedy fibrant replacement of $X$ (\cref{thm:reedy-fibrant-replacement}) is a Reedy fibrant semi-Segal type.
\end{prop}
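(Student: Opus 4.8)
The plan is to exploit the levelwise equivalence produced by the Reedy fibrant replacement, together with the fact—already recorded above—that the semi-Segal condition is invariant under levelwise equivalence of semi-simplicial types. Concretely, I would first apply \cref{thm:reedy-fibrant-replacement} to the underlying semi-simplicial type $X$, viewed as a diagram over $\deltop$. Since $\deltop$ is admissible, the theorem supplies a Reedy fibrant diagram $Y$ together with an equivalence $\eta : X \to Y$, meaning that each component $\eta_n : X_n \to Y_n$ is an equivalence. By construction $Y$ is Reedy fibrant, so the whole content of the proposition reduces to checking that $Y$ is semi-Segal, i.e.\ that each Segal map $\phi_n : Y_n \to S_n(Y)$ is an equivalence.

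To that end, I would observe that the natural transformation $\eta$ induces a map $S_n(\eta) : S_n(X) \to S_n(Y)$ on spines: it sends a spine $(x, q)$, with $x : \fin n \to X_1$, to $(\eta_1 \circ x, q')$, where the coherence datum $q'$ is obtained from $q$ by applying $\mathsf{ap}_{\eta_0}$ together with the strict naturality squares $\eta_0 \circ \facemap_j \seq \facemap_j \circ \eta_1$ for $j = 0,1$. Using that the Segal maps are defined by $\phi_n(x) = (\lambda i.\, \lface_i(x), \lambda i.\, \refl)$, and the strict naturality $\eta_1 \circ \lface_i \seq \lface_i \circ \eta_n$, a direct computation shows that the square
$$
\xymatrix{
X_n \ar[r]^-{\phi_n^X} \ar[d]_-{\eta_n} & S_n(X) \ar[d]^-{S_n(\eta)} \\
Y_n \ar[r]_-{\phi_n^Y} & S_n(Y)
}
$$
commutes.

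Next I would argue that $S_n(\eta)$ is itself an equivalence: its first component is $\eta_1$ applied pointwise, an equivalence because $\eta_1$ is, while its coherence component is transported along $\eta_0$, whose action on paths $\mathsf{ap}_{\eta_0}$ is an equivalence since $\eta_0$ is. Hence $S_n(\eta)$ is an equivalence. Since $\eta_n$ is an equivalence and $X$ is semi-Segal—so that $\phi_n^X$ is an equivalence—the commuting square and the $2$-out-of-$3$ property force $\phi_n^Y : Y_n \to S_n(Y)$ to be an equivalence as well. As this holds for every $n$, the diagram $Y$ is a semi-Segal type, which is what we needed.

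The only genuinely delicate step is the careful bookkeeping required to define $S_n(\eta)$ and to verify that the square commutes: this means threading the strict naturality squares of $\eta$ through both the face maps $\facemap_0, \facemap_1 : X_1 \to X_0$ and the transport of the spine's equality data. Everything else is a routine application of the $2$-out-of-$3$ closure of equivalences. Indeed, once this invariance is isolated, the proposition is essentially immediate, since it is precisely the invariance of the (fibrant, propositional) Segal condition under the levelwise equivalence $\eta$ supplied by \cref{thm:reedy-fibrant-replacement}.
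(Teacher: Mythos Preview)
Your proposal is correct and follows essentially the same approach as the paper: both build the naturality square relating the Segal maps of $X$ and $Y$ via the levelwise equivalence $\eta$, observe that the induced map on spines is an equivalence, and conclude by $2$-out-of-$3$. Your version simply spells out more of the bookkeeping (the construction of $S_n(\eta)$ and the role of strict naturality) that the paper leaves implicit.
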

\begin{proof}
Let $Y$ be the Reedy fibrant replacement of $X$.  Since $\eta : X \to Y$ is a levelwise equivalence, we get commutative squares:
$$
\xymatrix{
  X_n \ar[r] \ar[d] &
  Y_n \ar[d] \\
  X_1 \times_{X_0} \cdots \times_{X_0} X_1 \ar[r] &
  Y_1 \times_{Y_0} \cdots \times_{Y_0} Y_1,
}
$$

where the horizontal maps are equivalences induced by $\eta$, and the vertical maps are the Segal maps.  Since the Segal maps of $X$ are equivalences, it follows that those of $Y$ are equivalences as well.
\end{proof}

The relationship between semi-Segal types and $(\infty,1)$-categorical structures on types becomes clear when we analyse the first few levels of their semi-simplicial structure.

Let $X$ be a semi-Segal type. We can think of the type $X_0$ as the type of \emph{objects} of $X$. Since $X$ is Reedy fibrant, we have a fibration $\overline X_1$ over $M_0(X) = X_0 \times X_0$.  The type $\overline X_1(x,y)$ can be thought of as the type of \emph{morphisms} between two objects $x$ and $y$.

So far, we have only singled out a graph.  The algebraic nature of semi-Segal types arises from the invertibility of the Segal maps.  For all $n$, let $\psi_n : S_n(X) \to X_n$ be an inverse of $\phi_n$.  Given morphisms $f : \overline X_1(x,y)$ and $g : \overline X_1(y, z)$, we can define their \emph{composition} $g \circ f :\equiv \facemap_{02}(\psi(f,g))$, where $(f,g)$ denotes the element of $S_2(X)$ determined by $f$ and $g$.

With some work, this composition operation can be shown to be \emph{weakly associative}, i.e. there exists a family of \emph{associators}, witnessing equalities between $h \circ (g \circ f)$ and $(h \circ g) \circ f$, for all triples of composable morphisms $f$, $g$ and $h$.

In fact, consider the homotopy pullback:
$$
\xymatrix{
X_2 \times_{X_1} X_2 \ar[r]^-{\pi_1} \ar[d]_{\pi_2} &
X_2 \ar[d]^{\facemap_{02}} \\
X_2 \ar[r]_{\facemap_{01}} &
X_1.
}
$$

Using the equivalence $\phi_2$ twice, we can easily construct an equivalence $\tau : X_2 \times_{X_1} X_2 \to S_3(X)$.  If $u : X_2 \times_{X_1} X_2$ is such that $\tau(u) = (f,g,h)$, then it is not hard to check that $\facemap_{02}(\pi_2(u)) = h \circ (g \circ f)$.

Note that the functions $\facemap_{012}, \facemap_{123} : X_3 \to X_2$ determine a well-defined map $p : X_3 \to X_2 \times_{X_1} X_2$, and $\tau \circ p = \phi_3$.  It follows from the 2-out-of-3 property of equivalences that $p$ is also an equivalence.

Now, let $t = \psi_3(f,g,h)$.  We have that $(f,g,h) = \phi_3(t) = \tau(p(t))$, hence $p(t) = u$. Therefore, $h \circ (g \circ f) = \facemap_{02}(\pi_2(u)) = \facemap_{03}(t)$.  Using a different pullback, one can show that, similarly, $(h \circ g) \circ f = \facemap_{03}(t)$, which implies the required equality.

It is perhaps not surprising that similar arguments, using the Segal conditions at successively higher levels, show the existence of coherence conditions for the semi-categorical structures built so far.  For example, at level 4 one can obtain a family of \emph{pentagonators}, witnessing the commutativity of the following diagram of equalities, for all quadruples of composable morphisms $f,g,h,k$:

$$
\xymatrix{
& k \circ (h \circ (g \circ f)) \ar[dl] \ar[dr] & \\
k \circ ((h \circ g) \circ f) \ar[d] & & (k \circ h) \circ (g \circ f) \ar[d] \\
(k \circ (h \circ g) \circ f) \ar[rr] & & ((k \circ h) \circ g) \circ f.
}
$$

\subsection{Nerve of a strict category}\label{sec:segal-examples}

The most fundamental examples of semi-Segal types are given by strict categories.  In principle, only \emph{semi-categories} are required, since the identities do not play any role in the construction of the corresponding semi-Segal type.  However, we will not be concerned with the extra generality.

Let us recall that in \cref{sec:strict-categories} we defined a locally fibrant category as a strict category $\C$, such that for all objects $x, y$ of $\C$, the type $\C(x, y)$ is fibrant.

\begin{lem}\label{lem:strict-cat-segal}
A locally fibrant category $\C$ determines a Reedy fibrant semi-Segal type.
\end{lem}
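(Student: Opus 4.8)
The plan is to construct the \emph{nerve} $N\C$ of $\C$ and to show that it is a Reedy fibrant semi-Segal type. I would define $N\C : \deltop \to \U$ levelwise, taking the $n$-simplices to be the composable $n$-chains of $\C$:
\[
  (N\C)_n :\equiv (a : \fin{n+1} \to \obj\C) \times \bigl((i : \fin n) \to \C(a_i, a_{i+1})\bigr),
\]
so that composability of consecutive arrows is definitional. The action of $N\C$ on a morphism $f : \deltplus(\ordinal m, \ordinal n)$ reindexes the vertices along $f$ and replaces the omitted arrows by their composites; checking that this is strictly functorial is exactly where the strict laws $\idl$, $\idr$ and $\assoc$ of \cref{def:strictcat} are used, and is the reason we must work with a \emph{strict} category. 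Local fibrancy ensures that each level is a fibrant type, since $(N\C)_n$ is assembled from the hom-types $\C(a_i,a_{i+1})$ using $\Sigma$ and $\Pi$. The face maps $\facemap$ are the instances of this functorial action along the coface inclusions.

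Next I would verify the Segal condition, i.e.\ that each Segal map $\phi_n : (N\C)_n \to S_n(N\C)$ is an equivalence. This map sends a chain to its sequence of edges, equipped with $\refl$ for each overlap. The spine $S_n(N\C)$ records $n$ edges whose adjacent endpoints agree only up to \emph{fibrant} equality, whereas a chain in $(N\C)_n$ records edges that already share their endpoints definitionally; the difference is precisely a product of based path spaces $(b : \obj\C) \times (a_i = b)$ in the type of objects, each of which is contractible. Contracting these singletons identifies $S_n(N\C)$ with $(N\C)_n$, and one checks that this identification is a quasi-inverse to $\phi_n$. Hence $N\C$ satisfies the Segal condition of \cref{semi-segal-type}.

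Finally I would establish Reedy fibrancy. The quickest route is to appeal to the preceding proposition: the Reedy fibrant replacement of a semi-Segal type (which exists by \cref{thm:reedy-fibrant-replacement}, since $\deltop$ is admissible) is again a semi-Segal type, and it is Reedy fibrant by construction. Alternatively one can argue directly that the canonical map $(N\C)_z \to M_z^{N\C}$ is a fibration for every $z$: the matching object $M_{\ordinal n}^{N\C}$ is the type of compatible boundaries of an $n$-simplex, and the fibre of $(N\C)_n$ over a given boundary is a product of hom-types, hence fibrant by local fibrancy, with $M_z^{N\C}$ itself essentially fibrant by \cref{thm:fibrant-limits}.

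The main obstacle I anticipate is the combinatorial bookkeeping around $\deltop$: identifying the matching object $M_z^{N\C}$ explicitly and exhibiting the fibre of $(N\C)_z \to M_z^{N\C}$ as a fibrant product of hom-types — and, in parallel, checking that $\phi_n$ is an equivalence at every level — both require carefully tracking the coface combinatorics while marshalling the strict equalities of $\C$ together with the contractibility of based path spaces in $\obj\C$.
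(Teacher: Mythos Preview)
Your approach is essentially the paper's: define the nerve levelwise as composable chains, verify the Segal condition by contracting the redundant based-path singletons appearing in the spine, and obtain Reedy fibrancy by passing to the Reedy fibrant replacement (the paper calls the raw construction the \emph{pre-nerve} and its replacement the \emph{nerve}). One small correction: since the simplicial index $n$ is a strict natural number, the paper uses $\strict{\fin{n+1}}$ rather than $\fin{n+1}$, and fibrancy of each level then rests on \cref{lem:fin-prod-finite} rather than merely on closure of fibrant types under $\Sigma$ and $\Pi$.
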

\begin{proof}
We define a semi-simplicial type $X$ using a familiar \emph{nerve} construction:
$$
X_n :\equiv (x : \strict{\fin{n+1}} \to \C) \times ((i : \strict{\fin n} \to \C(x_i, x_{i+1})).
$$

Face maps are defined in the usual way.  First, given two indices $i,j : \strict{\fin{n+1}}$, with $p: i < j$, an pair $(x, f) : X_n$ determines a morphism $f_p : \C(x_i, x_j)$ obtained by composing all the $f_k$ with $i \leq k < j$.  The composed morphism $f_p$ can easily be defined by induction over the inequality $p$.
Note that $X_n$ is fibrant thanks to \cref{lem:fin-prod-finite}.

Now, let $i^+ : i < i+1$. A map $\sigma : \Delta_+(n,m)$ can be used to obtain an inequality $\sigma(i^+) : \sigma(i) < \sigma(i+1)$.  We can then define $\sigma^* : X_m \to X_n$ as follows:
$$
\sigma^*(x, f) :\equiv (\lambda i. x_{\sigma(i)}, \lambda i. f_{\sigma(i^+)}).
$$

It is easy to show that $X$, as defined above, is indeed a semi-simplicial type.

The Segal condition can be shown by directly constructing the equivalence between $n$-spines and $n$-simplices.  The type of $n$-spines of $X$ is:
$$
(f : \strict{\fin n} \to X_1) \times ((i : \strict{\fin{n-1}} \to
\facemap_1(f_i) = \facemap_0(f_{i+1}))).
$$

Expanding the definitions, we get the equivalent type:
\begin{align*}
& ((x^0, x^1 : \strict{\fin n} \to \C) \\
\times & (f : (i : \strict{\fin n}) \to \C(x^0_i, x^1_i)) \\
\times & ((i : \strict{\fin{n-1}}) \to x^1_i = x^0_{i+1}).
\end{align*}

We now split $x^0$ into the pair of $s = x^0_0$ and the rest of the sequence, and similarly split $x^1$ into the pair consisting of the beginning of the sequence and $t = x^1_{n-1}$.  With some index manipulation, this yields the equivalent type:
\begin{align*}
& (s, t : \C) \\
\times & ((x^0 x^1 : \strict{\fin{n-1}} \to \C) \\
\times & (g : \to \C(s, x^0_0))) \\
\times & (h : \to \C(x^1_{n-2}, t))) \\
\times & (f : (i : \strict{\fin{n-2}}) \to \C(x^0_i, x^1_{i+1})) \\
\times & ((i : \strict{\fin{n-1}}) \to x^1_i = x^0_i).
\end{align*}

The last component of the previous type states that $x^0$ and $x^1$ are equal. Therefore, we can contract them into a single sequence $x$:
\begin{align*}
& (s, t : \C) \\
\times & ((x : \strict{\fin{n-1}} \to \C) \\
\times & (g : \to \C(s, x_0))) \\
\times & (h : \to \C(x_{n-2}, t))) \\
\times & (f : (i : \strict{\fin{n-2}}) \to \C(x_i, x_{i+1})).
\end{align*}

Now we can join $s$ at the beginning of $x$ and $t$ at the end, to get exactly the type $X_n$ as defined above.  Examining the equivalence $X_n \to S_n(X)$ obtained by chaining the above steps reveals that it is exactly given by the Segal map, thereby proving that $X$ is a semi-Segal type.  \end{proof}

We call the the semi-Segal type $X$ obtained from a locally fibrant category $\C$ using \cref{lem:strict-cat-segal} the \emph{pre-nerve} of $\C$, and we call \emph{nerve} its Reedy fibrant replacement.

It is important to note that the ``weak'' categorical structure arising from the nerve $X$ of a locally fibrant category $\C$ matches precisely with the categorical structure on $\C$ itself.

Clearly, the objects and morphisms of $X$ are the same as those of $\C$.  Let us now consider composition.  Let $f : \C(x_0, x_1)$ and $g : \C(x_1, x_2)$ be two composable morphisms.  Their composition as morphisms of the semi-Segal type $X$ is given by applying the face map $\facemap_{02}$ to the 2-simplex corresponding to the pair $(f,g)$ through the Segal equivalence.  It follows from the definition of the semi-simplicial structure on $X$ that this is indeed the composition $g \circ f$, as expected.

\Cref{lem:strict-cat-segal} can be applied to a universe regarded as a strict category.  We will denote the nerve of a univalent universe as $\TYPE$, leaving implicit the specific universe used, as usual.

\subsection{Maps of semi-Segal types}

The definition of semi-Segal types as semi-simplicial types satisfying a (propositional) property makes it extremely easy to define the corresponding notion of morphism.

\begin{defn}\label{def:semi-segal-map}
A \emph{semi-Segal map} is a morphism between the underlying semi-simplicial types of two semi-Segal types.
\end{defn}

A semi-Segal map can be regarded as the appropriate generalisation of the notion of \emph{functor} between categories.  In particular, we can regard a semi-Segal map between the nerves of two strict categories as a \emph{weak semi-functor} between them.

It is important to note that the notion of semi-Segal map between arbitrary semi-Segal types is not fibrant, hence not invariant under equivalence.  For example, a map between the \emph{pre-nerves} of two strict categories is the same thing as an ordinary (strict) functor between them, while a semi-Segal map between the nerves is a much weaker notion.

\subsection{Completeness}

In the classical theory of Segal spaces, completeness can be understood as the property that the internal notion of \emph{equivalence} in a Segal space can be recovered by only looking the path spaces of its space of points.

In $\hott$, completeness is also a very natural property, corresponding to an internal form of univalence for a categorical structure.  In~\cite{aks}, completeness is considered such a fundamental property that the term \emph{category} is reserved for those structures that possess it (while those that do not are referred to as \emph{precategories}).

However, it is clear that, in order to define completeness in the setting of semi-Segal types, we first need to derive a notion of \emph{equivalence}, which might appear to be problematic, since semi-Segal types have no identity morphisms.

Fortunately, there is a way to work around this issue:

\begin{defn}\label{def:segal-equivalence}
Let $X$ be a Reedy fibrant semi-Segal type, and $f : \overline X_1(x, y)$ be a morphism. We say that $f$ is an \emph{equivalence} if, for all objects $z : X_0$, the maps:
\begin{align*}
f \circ - : \overline X_1(z, x) \to X_1(z, y) \\
- \circ f : \overline X_1(y, z) \to X_1(x, z),
\end{align*}
given by left and right composition with $f$ respectively, are equivalences of types.
\end{defn}

It is easy to see that, if $X$ is the nerve of a strict category $\C$, then $f$ is an equivalence if and only if it is a ``homotopy equivalence'' in $\C$, i.e. if there exists a morphism $g$ in $\C$ in the opposite direction such that $g \circ f = \id$ and $f \circ g = \id$.  Note that we are using fibrant equality here, so a homotopy equivalence is not the same as a categorical isomorphism.

The property of being an equivalence for a morphism $f : X_1$ is a mere proposition, denoted $\isequiv(f)$, hence it determines a \emph{subtype} of $X_1$:
$$
\Equiv^X :\equiv (f : X_1) \times \isequiv(f).
$$

\begin{defn}\label{def:complete-semi-segal-type}
A Reedy fibrant semi-Segal type $X$ is said to be \emph{complete} if the function
$$
\Equiv^X \to X_0,
$$
that maps an equivalence to its first endpoint, is an equivalence of types.
\end{defn}

\begin{prop}\label{prop:universe-complete-segal}
$\TYPE$ is a complete semi-Segal type.
\end{prop}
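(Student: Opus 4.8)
The plan is to unwind the definition of $\TYPE$ as the nerve of the univalent universe $\U$, regarded as a strict category whose morphisms $\U(A,B)$ are the functions $A \to B$, reduce completeness to a contractibility statement, and then recognise that statement as univalence.

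First I would identify the low-dimensional data of $\TYPE$. By the analysis following \cref{lem:strict-cat-segal}, the nerve of a locally fibrant category has the same objects and morphisms as the category itself, so $\TYPE_0 \simeq \U$, and the fibration $\overline{\TYPE}_1$ over $\TYPE_0 \times \TYPE_0 \simeq \U \times \U$ has fibre $\overline{\TYPE}_1(A,B) \simeq (A \to B)$; hence $\TYPE_1 \simeq (A : \U) \times (B : \U) \times (A \to B)$. I also need that composition in $\TYPE$ agrees with composition of functions in $\U$, which is again part of that same analysis.

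Next I would pin down which morphisms are equivalences. By the discussion following \cref{def:segal-equivalence}, a morphism of the nerve of a strict category $\C$ is a semi-Segal equivalence exactly when it is a homotopy equivalence in $\C$; for $\C = \U$ with function morphisms this says precisely that $f : A \to B$ is a type-theoretic equivalence $A \simeq B$. If one prefers to argue directly, unfolding \cref{def:segal-equivalence} asks that for every $C : \U$ the maps $f \circ - : (C \to A) \to (C \to B)$ and $- \circ f : (B \to C) \to (A \to C)$ be equivalences; the standard $\hott$ characterisations of equivalences via pre- and post-composition give both implications. Since the semi-Segal predicate $\isequiv(f)$ and the predicate ``$f$ is a type equivalence'' are both mere propositions, this logical equivalence is an equivalence of types, yielding $\Equiv^{\TYPE} \simeq (A : \U) \times (B : \U) \times (A \simeq B)$.

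Finally, under this identification the completeness map $\Equiv^{\TYPE} \to \TYPE_0$ that returns the first endpoint becomes the first projection $(A,B,e) \mapsto A$. A map is an equivalence iff all of its fibres are contractible, and the fibre over $A : \U$ is $(B : \U) \times (A \simeq B)$; by univalence this is equivalent to the based path space $(B : \U) \times (A = B)$, which is contractible. Hence the projection is an equivalence, which is exactly completeness in the sense of \cref{def:complete-semi-segal-type}. The main obstacle I expect is the bookkeeping in the first step: checking that the Reedy fibrant replacement underlying the nerve genuinely leaves the objects, the morphism fibres, and composition equivalent to those of the pre-nerve, so that semi-Segal equivalences of $\TYPE$ line up with type-theoretic equivalences. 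Once that dictionary is in place, the reduction to univalence is immediate.
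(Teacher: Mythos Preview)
Your proposal is correct and follows essentially the same route as the paper's proof, which simply observes that a morphism in $\TYPE$ is a semi-Segal equivalence iff it is an equivalence of types, and then invokes univalence. You have spelled out in more detail the identification of the low-dimensional data, the reduction of completeness to contractibility of $(B : \U) \times (A \simeq B)$, and the bookkeeping about Reedy fibrant replacement, but the core argument is the same.
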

\begin{proof}
A function $f : X \to Y$ is an equivalence in $\TYPE$ if and only if it is an equivalence of types.  Therefore, completeness of $\TYPE$ follows immediately from univalence.
\end{proof}

A semi-Segal type does not have a built-in notion of identity, but nevertheless, certain morphisms can behave as identities:

\begin{defn}
Let $u : \overline X_1(x, x)$ be a morphism in a Reedy fibrant semi-Segal type $X$.  We say that $u$ is a \emph{unit} if for all $g : \overline X_1(y, x)$ we have that $u \circ g = g$, and for all $h : \overline X_1(x, z)$ we have that $h \circ u = h$.
\end{defn}

Interestingly, completeness is enough for a semi-Segal type to possess units.

\begin{prop}[see {\cite[Lemma 1.4.5]{harpaz:semi-segal}}]
Let $X$ be a complete semi-Segal type.  Then for all objects $x : X_0$ there exists a unit $u : \overline X_1(x,x)$.
\end{prop}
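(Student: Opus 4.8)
The plan is to extract from completeness a single equivalence incident to $x$ and then bootstrap a two-sided unit out of it by repeated cancellation, since in the identity-free setting we cannot simply write down $\id$. First I would invoke the completeness hypothesis: the map $\Equiv^X \to X_0$ is an equivalence, hence in particular surjective, so for the given $x$ there is an object $y$ and an equivalence $e : \overline X_1(x,y)$ with $x$ as its (first) endpoint. Because $e$ is an equivalence, post-composition $e \circ - : \overline X_1(x,x) \to \overline X_1(x,y)$ is an equivalence of fibrant types; I define $u : \overline X_1(x,x)$ to be the element with $e \circ u = e$ furnished by its inverse.

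Next I would establish the \emph{left} unit law and idempotence of $u$ purely by cancelling $e$. For any $g : \overline X_1(w,x)$, the weak associativity of \cref{sec:segal-prelim} gives $e \circ (u \circ g) = (e \circ u) \circ g = e \circ g$, and since $e \circ -$ induces equivalences on path spaces we may cancel to obtain $u \circ g = g$; taking $g = u$ yields $u \circ u = u$. The same cancellation shows that $u \circ - : \overline X_1(w,x) \to \overline X_1(w,x)$ is an equivalence, because $(e \circ -) \circ (u \circ -) = e \circ -$ and two-out-of-three applies.

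The substantial part is the \emph{right} unit law, where the asymmetry bites: completeness only handed us an equivalence \emph{out of} $x$, and the cancellation above is inherently one-sided. To remedy this I would construct an inverse. Using that $- \circ e : \overline X_1(y,x) \to \overline X_1(x,x)$ is an equivalence, let $\bar e : \overline X_1(y,x)$ satisfy $\bar e \circ e = u$. A symmetric argument on the target produces $v : \overline X_1(y,y)$ with $v \circ e = e$, which is a right unit at $y$, and cancelling $e$ on the right in $(e \circ \bar e) \circ e = e \circ u = e = v \circ e$ gives $e \circ \bar e = v$. Then the composites of $\bar e \circ -$ and $- \circ \bar e$ with the corresponding instances of $e$ reduce, via the unit laws just proved and the identity $e \circ \bar e = v$, to identity maps; by two-out-of-three this shows that $\bar e$ is itself an equivalence, now with target $x$.

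Finally, the equivalence $\bar e$ with target $x$ yields, by the endpoint-dual of the construction of $u$ (taking the preimage of $\bar e$ under $- \circ \bar e : \overline X_1(x,x) \to \overline X_1(y,x)$), a right unit $w : \overline X_1(x,x)$ satisfying $h \circ w = h$ for all $h : \overline X_1(x,z)$. With a left unit $u$ and a right unit $w$ both at $x$, the standard one-line argument closes the proof: $w = u \circ w = u$, so $u$ is a two-sided unit. I expect the only delicate points to be the bookkeeping of the two-out-of-three steps for $\bar e$ and keeping track of which endpoint of each edge is being cancelled; notably, all the equalities involved are equalities of morphisms, i.e.\ live in the fibrant hom-types, so no tower of higher coherences is required and the weak associativity already available suffices.
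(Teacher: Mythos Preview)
Your argument is correct, and the left unit law is handled exactly as in the paper. For the right unit law, however, you take a substantially longer detour than necessary. You write that ``the asymmetry bites'' because completeness only hands you an equivalence out of $x$, and you then invest effort in building an inverse $\bar e$, verifying it is an equivalence, extracting a separate right unit $w$, and finally collapsing $w = u \circ w = u$. All of this works, but it overlooks that \cref{def:segal-equivalence} is already two-sided: being an equivalence means that \emph{both} $e \circ -$ and $- \circ e$ are equivalences of types.

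The paper exploits this directly. Given $h : \overline X_1(x,z)$, the map $- \circ e : \overline X_1(y,z) \to \overline X_1(x,z)$ is an equivalence, so there is $h' : \overline X_1(y,z)$ with $h' \circ e = h$; then
\[
h \circ u = (h' \circ e) \circ u = h' \circ (e \circ u) = h' \circ e = h,
\]
and the right unit law is immediate. No inverse $\bar e$, no auxiliary right unit $w$, no two-out-of-three bookkeeping. Your approach buys nothing extra here; the paper's three-line argument is strictly simpler and uses precisely the same ingredients (associativity and the surjectivity on hom-types coming from the definition of equivalence) that you already invoked elsewhere.
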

\begin{proof}
By completeness, we get an object $y : X_0$, and an equivalence $f : \overline X_1(x, y)$.  Since $f$ is an equivalence, we can find $u : \overline X_1(x, x)$ such that $f \circ u = f$.  We will show that $u$ is a unit.

If $g : \overline X_1(z, x)$, then $f \circ u \circ g = f \circ g$.  Using the fact that $f$ is an equivalence again, we get that $u \circ g = g$.

Finally, let $h : \overline X_1(x, z)$.  We can find $h' : \overline X_1(y, z)$ such that $h' \circ f = h$. Then $h \circ u = h' \circ f \circ u = h' \circ f = h$, as required.
\end{proof}

\section{Further work}

We have only scratched the surface of what is possible to achieve in a two-level system.

In particular, the notion of semi-Segal type appears to be a quite promising candidate for the role of $(\infty,1)$-categories in type theory.  This thesis only presented the very basic definitions and result, but there is much left to be developed in this area.

\bibliographystyle{plain}
\bibliography{master}

\end{document}